\newif\iflncs
\newif\ifnotes
\newcommand{\remove}[1]{}
\newcommand{\hnote}[1]{[{\footnotesize \textsf{\color{purple}{\bf Haoxing:} { {#1}}}}]}
\definecolor{ao}{rgb}{0.0, 0.5, 0.0}
\newcommand{\pnote}[1]{[{\footnotesize \textsf{\color{cyan}{\bf Prashant:} { {#1}}}}]}
\newcommand{\hnote}[1]{}
\definecolor{ao}{rgb}{0.0, 0.5, 0.0}
\newcommand{\pnote}[1]{}
\definecolor{mygreen}{RGB}{0,128,0} 
\newcommand{\resolved}[1]{{\color{mygreen}{[Resolved]}}}
\newtheorem{informaltheorem}{Theorem}
\newtheorem{theorem}{Theorem}[section]
\newtheorem{claim}[theorem]{Claim}
\newtheorem{corollary}[theorem]{Corollary}
\newtheorem{fact}[theorem]{Fact}
\newtheorem{lemma}[theorem]{Lemma}
\newtheorem{proposition}[theorem]{Proposition}
\newtheorem{definition}[theorem]{Definition}
\newtheorem{remark}[theorem]{Remark}
\renewcommand{\paragraph}[1]{\;\newline \noindent \textbf{#1}}
\newenvironment{proofof}[1]{\begin{proof}[\textit{Proof of #1}]}{\end{proof}}
\newenvironment{proofsketchof}[1]{\begin{proof}[\textit{Proof Sketch of #1}]}{\end{proof}}
\crefname{definition}{Definition}{Definitions}
\crefname{sub-definition}{Definition}{Definitions}
\crefname{example}{Example}{Examples}
\crefname{exercise}{Exercise}{Exercises}
\crefname{property}{Property}{Properties}
\crefname{question}{Question}{Questions}
\crefname{solution}{Solution}{Solutions}
\crefname{theorem}{Theorem}{Theorems}
\crefname{informaltheorem}{Theorem}{Theorems}
\crefname{proposition}{Proposition}{Propositions}
\crefname{problem}{Problem}{Problems}
\crefname{lemma}{Lemma}{Lemmas}
\crefname{conjecture}{Conjecture}{Conjectures}
\crefname{corollary}{Corollary}{Corollaries}
\crefname{fact}{Fact}{Facts}
\crefname{claim}{Claim}{Claims}
\crefname{remark}{Remark}{Remarks}
\crefname{note}{Note}{Notes}
\crefname{figure}{Figure}{Figure}
\crefname{case}{Case}{Cases}
\crefname{proofsketch}{Proof Sketch}{Proof Sketches}
\def\ddefloop#1{\ifx\ddefloop#1\else\ddef{#1}\expandafter\ddefloop\fi}
\def\ddef#1{\expandafter\def\csname bb#1\endcsname{\ensuremath{\mathbb{#1}}}}
\def\ddef#1{\expandafter\def\csname c#1\endcsname{\ensuremath{\mathcal{#1}}}}
\def\ddef#1{\expandafter\def\csname v#1\endcsname{\ensuremath{\overline{#1}}}}
\newcommand{\algfont}[1]{\mathsf{#1}}
\def\ddef#1{\expandafter\def\csname alg#1\endcsname{\ensuremath{\algfont{#1}}}}
\newcommand{\langfont}[1]{\mathnormal{#1}}
\def\ddef#1{\expandafter\def\csname lang#1\endcsname{\ensuremath{\langfont{#1}}}}
\def\ddef#1{\expandafter\def\csname v#1\endcsname{\ensuremath{\boldsymbol{\csname #1\endcsname}}}}
\newcommand{\Nat}{\mathbb{N}}
\newcommand{\Int}{\mathbb{Z}}
\newcommand{\Reals}{\mathbb{R}}
\newcommand{\bset}{\set{0,1}}
\newcommand{\bell}{{\bar{\ell}}}
\newcommand{\ind}{\mathrm{Ind}}
\newcommand{\ra}{\rightarrow}
\newcommand{\Otilde}{\tilde{O}}
\newcommand{\Input}{\item[\textbf{Input:}]}
\newcommand{\Output}{\item[\textbf{Output:}]}
\newcommand{\FuncA}[2]{\mathbf{A}_{#1,#2}}
\newcommand{\FuncB}[2]{\mathbf{B}_{#1,#2}}
\newcommand{\FuncC}[2]{\mathbf{C}_{#1,#2}}
\newcommand{\FuncD}[2]{\mathbf{D}_{#1,#2}}
\newcommand{\FuncInd}[1]{\mathbbm{1}\left[#1\right]}
\newcommand{\Gammaone}[1]{\gamma_{1}(#1)}
\newcommand{\Gammazero}[1]{\gamma_{0}(#1)}
\newcommand{\Gamman}[1]{\gamma_{\neq}(#1)}
\DeclareMathOperator*{\Exp} {{\mathbf{E}}}
\newcommand{\pr}[1]         {\Pr\left[ #1 \right]}
\newcommand{\prob}[2]       {\Pr_{#1}\left[ #2 \right]}
\renewcommand{\exp}[1]      {\Exp\left[ #1 \right]}
\newcommand{\expec}[2]      {\Exp_{#1}\left[ #2 \right]}
\renewcommand{\poly}{\mathrm{poly}}
\renewcommand{\polylog}{\mathrm{polylog}}
\DeclareMathOperator{\mr}   {\Delta_{MR}}
\DeclareMathOperator{\diff} {\delta}
\newcommand{\set}[1]        {\left\{ #1 \right\}}
\newcommand{\abs}[1]        {\left| #1\right|}
\newcommand{\size}[1]       {\left| #1\right|}
\newcommand{\floor}[1]      {\left\lfloor #1 \right\rfloor}
\newcommand{\ceil}[1]       {\left\lceil #1 \right\rceil}
\newcommand{\range}[1]      {\langle #1 \rangle}
\newcommand{\ktree}         {\algfont{kTree}}
\title{On Wagner's $k$-Tree Algorithm Over Integers}
\author{Haoxing Lin\thanks{\orcidlink{0000-0001-9594-1871} National University of Singapore. Email: \texttt{\href{mailto:haoxingl@comp.nus.edu.sg}{haoxingl@comp.nus.edu.sg}}} \and Prashant Nalini Vasudevan\thanks{\orcidlink{0000-0001-6880-795X} National University of Singapore. Email: \texttt{\href{mailto:prashant@comp.nus.edu.sg}{prashant@comp.nus.edu.sg}}}}
\date{\today}
\begin{document}

\maketitle

\thispagestyle{empty}
\begin{abstract}
  The $k$-Tree algorithm~\cite{Wagner02} is a non-trivial algorithm for the average-case $k$-SUM problem that has found widespread use in cryptanalysis. Its input consists of $k$ lists, each containing $n$ integers from a range of size $m$.  Wagner's original heuristic analysis~\cite{Wagner02} suggested that this algorithm succeeds with constant probability if $n \approx m^{1/(\log{k}+1)}$, and that in this case it runs in time $O(kn)$. Subsequent rigorous analysis of the algorithm~\cite{Lyubashevsky05,Shallue08,JKL24} has shown that it succeeds with high probability if the input list sizes are significantly larger than this.
  
  We present a broader rigorous analysis of the $k$-Tree algorithm, showing upper and lower bounds on its success probability and complexity for any size of the input lists. Our results confirm Wagner's heuristic conclusions, and also give meaningful bounds for a wide range of list sizes that are not covered by existing analyses. We present analytical bounds that are asymptotically tight, as well as an efficient algorithm that computes (provably correct) bounds for a wide range of concrete parameter settings. We also do the same for the $k$-Tree algorithm over $\Int_m$. Finally, we present experimental evaluation of the tightness of our results. 
\end{abstract}



\newpage

\thispagestyle{empty}
\tableofcontents


\newpage
\section{Introduction}
\label{sec:intro}



The (average-case) $k$-SUM problem is a basic computational problem that comes up relatively often in cryptanalysis~\cite[\dots]{Schnorr01,Wagner02,DEFKLNS19,BLLOR22}. In this problem, given $k$ lists of $n$ integers each drawn uniformly at random from the range $\{-\floor{m/2},\dots,$ $\floor{m/2}\}$, the task is to find a set of $k$ integers, one from each list, such that their sum is $0$. Such sets exist with high probability once the list size $n$ is larger than $m^{1/k}$; and in this case there is a simple meet-in-the-middle algorithm that runs in time $\Otilde(m^{1/2})$. The only known algorithms that do significantly better than this are the $k$-Tree algorithm~\cite{Wagner02} and its extensions~\cite{MS12,NS15,Dinur19} and variants~\cite{BKW03,Lyubashevsky05,JKL24}. 

\medskip
The $k$-Tree algorithm, as formulated by Wagner~\cite{Wagner02}, is described in \cref{fig:ktree-intro}.\iflncs\else\footnote{See \cref{fig:ktree} in \cref{sec:analysis} for a more detailed presentation.}\fi \ Wagner provided heuristic arguments indicating that this algorithm succeeds with constant probability if the input lists are of size $n \approx m^{1/(\log{k}+1)}$, in which case it runs in time $O(k\cdot m^{1/(\log{k}+1)})$. Even for modest values of $k$ (say $4$ or $8$), this runtime is significantly better than the earlier $\Otilde(m^{1/2})$ for large values of $m$. For this reason, the $k$-Tree algorithm has been used repeatedly in the cryptanalysis of signatures~\cite{Wagner02,DEFKLNS19,BLLOR22}, hash functions~\cite{CJ04,BC22}, identification schemes~\cite{LF06}, symmetric-key encryption schemes~\cite{Joux03}, etc., to provide simple attacks that perform significantly better than brute force.

\begin{figure}[h!]
  \centering
  \begin{mdframed}
    \begin{center}
      \textbf{\underline{The $k$-Tree algorithm}}
    \end{center}
    \vspace{0.5em}

    \medskip
    \textbf{Parameters:} $k, n, m \in \Nat$, with $k$ being a power of $2$

    \medskip
    \textbf{Input:} Lists $L_1,\dots,L_k$, each consisting of $n$ integers from $\set{-\floor{m/2},\dots,\floor{m/2}}$

    \medskip
    \textbf{Output:} Indices $\ell_1,\dots,\ell_k\in[n]$, or a failure symbol $\bot$
    
    \medskip
    \textbf{Procedure:}
    \begin{enumerate}[topsep=5pt,itemsep=0pt]
      \item Initialization:
      \begin{itemize}[topsep=0pt,itemsep=0pt]
        \item Set $p = m^{1/(\log{k}+1)}$
        \item For each $i\in[k]$, denote the list $L_i$ by $L^0_i$
        \item Set $\tau \gets m/2$
      \end{itemize}
      \item For $d$ from $1$ to $\log{k}$:
      \begin{itemize}[topsep=0pt,itemsep=5pt]
        \item Set $\tau \gets p \cdot \tau$
        \item For $i \in \left[\frac{k}{2^d}\right]$: \newline
        compute $L^d_i \gets \set{ (a+b)\ \big|\ a\in L^{d-1}_{2i-1} \wedge b\in L^{d-1}_{2i} \wedge \abs{a+b} \leq \tau}$
      \end{itemize}
      \item If $L^{\log{k}}_1$ contains $0$, output the indices in the input lists that led to this sum. Otherwise output $\bot$.
    \end{enumerate}

    \medskip
  \end{mdframed}
  \caption{The $k$-Tree algorithm over Integers}
  \label{fig:ktree-intro}
\end{figure}

\medskip
In spite of such widespread relevance, the complexity of the $k$-Tree algorithm is still not well understood. Wagner's aforementioned heuristic analysis was based on assumptions that are not strictly true, though empirical evidence suggests that his conclusion is valid.\footnote{See, for instance, the results of our experiments presented in \iflncs the full version\else \cref{sec:experiments}\fi .} Rigorous analysis of the algorithm has since been carried out by Lyubashevsky~\cite{Lyubashevsky05}, Shallue~\cite{Shallue08}, and Joux et al.~\cite{JKL24}, but these have not been able to confirm this conclusion (see \cref{sec:related} for details). Further, we still do not have the means to provide good answers to basic questions like the following, even heuristically:
\begin{enumerate}[itemsep=0pt]
  \item What is the probability that the algorithm will succeed if the lists are of size $(10\cdot m^{1/(\log{k}+1)})$ or $(0.1 \cdot m^{1/(\log{k}+1)})$?
  \item What size of lists is necessary or sufficient for the algorithm to succeed with probability $0.01$?
  \item What is the complexity of the algorithm for these list sizes?
\end{enumerate}
Answers to these questions for some pairs of values $(m,k)$ for which these list sizes are not too large -- e.g., $(2^{64},4)$, $(2^{256},128)$ -- can be obtained empirically, but it is not clear how to extrapolate these to other settings that may come out of concrete parameter choices in constructions -- e.g., $(2^{256},4)$. Our objective in this work is to provide a tight analysis of the $k$-Tree algorithm that can help researchers answer questions like the ones above and thus understand more accurately the power of attacks that make use of the $k$-Tree algorithm.

\paragraph{Paper Outline.} We summarize our results in \cref{sec:results}, and provide an overview of our techniques in \cref{sec:overview}. 
In \cref{sec:related}, we describe prior work on this topic, present comparisons to our results, and discuss other related work.
\iflncs\else
In \cref{sec:analysis}, we provide the complete proof of our main theorem regarding the behavior of the $k$-Tree algorithm.
\fi
In \cref{sec:program}, we present pseudo-code for computing tighter bounds on the behavior of the algorithm.
\iflncs\else
In \cref{sec:experiments}, we present experimental evaluation of our bounds for concrete parameters.
\fi

\iflncs
Due to page limits, we defer the complete rigorous proofs of our theorems to the full version of this paper, noting that the overview in \cref{sec:overview} goes through most of their essential components. Presentation of our experimental evaluations of our bounds is also similarly deferred to the full version. We also defer our analysis of the $k$-Tree algorithm over $\Int_m$ to the full version, though the bounds in this case are very close those in the case of integers presented here.
\fi

\iflncs
\section{Results}
\else
\subsection{Our Results}
\fi
\label{sec:results}

Recall that we are interested in the performance of the $k$-Tree algorithm when given as input $k$ lists of $n$ numbers each drawn uniformly at random from $\set{-\floor{m/2},\dots,\floor{m/2}}$. Here $k$ is a power of $2$. With these parameters, we define the \emph{complexity} of the $k$-Tree algorithm to be the expected total size of all the lists that are generated in the course of its execution.\footnote{See \cref{rem:complexity} at the end of this subsection for a brief discussion of this complexity measure.}

\paragraph{Analytical Bounds.} Our first set of results are analytical bounds on the success probability and complexity of the $k$-Tree algorithm for any list size. 

\begin{informaltheorem}
  \label{infthm:ktree}
  Consider any $k, n, m\in \Nat$, where $k \geq 4$ is a power of $2$ and $m > 30^{\log{k}+1}$. Set $p = m^{\frac{-1}{\log{k}+1}}$ and $c = n / m^{\frac{1}{\log{k}+1}}$. The probability of success of the $k$-Tree algorithm is bounded as:
    \begin{align*}
      \frac{c^k}{1 + c^k\cdot \left(1+\frac{k}{n} \right)^k}\cdot (1-150p)^{k} \leq \pr{\text{$k$-Tree succeeds}} \leq c^k \cdot (1 + 37p)^k
    \end{align*}
    Its complexity is bounded as follows:
    \begin{align*}
      \text{Complexity of $k$-Tree} \in k n \cdot \left( 1 + \sum_{d\in[\log{k}]} \frac{c^{2^d-1}}{2^d} \right) \cdot \left(1 \pm 37p \right)^{k-1}
    \end{align*}
\end{informaltheorem}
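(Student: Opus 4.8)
The plan is to reduce the whole statement to a first- and second-moment computation over the leaves of the merge tree. Index the $k$ input lists by the leaves of a complete binary tree of depth $\log{k}$; a \emph{tuple} $t=(\ell_1,\dots,\ell_k)\in[n]^k$ selects one element from each list, and I call $t$ \emph{good} if the partial sum over the leaves below every internal node at level $d$ has absolute value at most $\tau_d$ (the threshold the algorithm uses at level $d$, satisfying $\tau_d=p\,\tau_{d-1}$ with $\tau_0=m/2$, so $\tau_d=p^d\cdot(m/2)$) and the total sum is exactly $0$. Letting $X$ be the number of good tuples, the algorithm succeeds exactly when $X\ge 1$, so $\pr{\text{success}}=\pr{X\ge1}$, and the complexity is $\sum_{d=0}^{\log{k}}(k/2^d)\,E_d$ with $E_d$ the expected size of a level-$d$ list. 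All of these are controlled by the \emph{subtree law} $f_d(s)$, the probability that a fixed block of $2^d$ independent uniform leaf values has partial sum $s$ and obeys every window constraint inside its subtree. This satisfies the convolve-and-restrict recursion $f_d(s)=\FuncInd{\abs{s}\le\tau_d}\sum_a f_{d-1}(a)\,f_{d-1}(s-a)$ with $f_0$ uniform on $\set{-\floor{m/2},\dots,\floor{m/2}}$, and one has $E_d=n^{2^d}W_d$ where $W_d:=\sum_s f_d(s)$, together with $\mathbb{E}[X]=n^k f_{\log{k}}(0)$.

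The heart of the argument is a tight inductive estimate of $W_d$ and of $f_d$ near $0$. First I would establish a \emph{regularity invariant}: each $f_d$ is symmetric, maximized at $0$, and \emph{nearly flat} on its window, i.e. $f_d(s)=\frac{W_d}{2\tau_d}(1\pm O(p))$ for all $\abs{s}\le\tau_d$. The inductive step uses that convolving two laws supported on $[-\tau_{d-1},\tau_{d-1}]$ and then restricting to the far smaller window $[-\tau_d,\tau_d]$ (with $\tau_d/\tau_{d-1}=p\ll1$ since $m>30^{\log{k}+1}$ forces $p<1/30$) smooths and re-flattens the profile, keeping the relative variation over the new window $O(p)$. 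Granting flatness, the mass recursion becomes $W_d=\sum_{\abs{s}\le\tau_d}(f_{d-1}*f_{d-1})(s)=2\tau_d\,\norm{f_{d-1}}_2^2\,(1\pm O(p))=p\,W_{d-1}^2\,(1\pm O(p))$, using $\norm{f_{d-1}}_2^2\approx W_{d-1}^2/(2\tau_{d-1})$. Solving from $W_0=1$ gives $W_d=p^{2^d-1}(1\pm O(p))^{2^d-1}$, hence $E_d=n^{2^d}p^{2^d-1}(1\pm O(p))^{2^d-1}=c^{2^d-1}n\,(1\pm O(p))^{2^d-1}$; likewise $f_{\log{k}}(0)=\norm{f_{\log{k}-1}}_2^2$ evaluates to $p^k(1\pm O(p))^k$, so $\mathbb{E}[X]=n^k f_{\log{k}}(0)=c^k(1\pm O(p))^k$. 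The success upper bound is then the union bound $\pr{X\ge1}\le\mathbb{E}[X]\le c^k(1+37p)^k$, and substituting the $E_d$ into $\sum_d(k/2^d)E_d$ and factoring out yields $kn\,(1+\sum_{d\in[\log{k}]}c^{2^d-1}/2^d)\,(1\pm37p)^{k-1}$, the exponent $k-1$ being the number of merges in the deepest subtree.

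For the lower bound on success I would apply the second-moment inequality $\pr{X\ge1}\ge(\mathbb{E}[X])^2/\mathbb{E}[X^2]$. Writing $\mathbb{E}[X^2]=\sum_{t,t'}\pr{t,t'\text{ both good}}$ and grouping pairs by the set of leaf positions on which $t$ and $t'$ coincide, the fully disjoint pairs contribute $(\mathbb{E}[X])^2(1\pm O(p))$, the diagonal $t=t'$ contributes $\mathbb{E}[X]$, and the partially overlapping pairs are estimated by applying the same flatness/regularity bounds to the correlated joint law. Careful bookkeeping of the overlap patterns yields $\mathbb{E}[X^2]\le\mathbb{E}[X]+(\mathbb{E}[X])^2(1+k/n)^k(1\pm O(p))$, from which $\pr{X\ge1}\ge\frac{c^k}{1+c^k(1+k/n)^k}(1-150p)^k$ follows once the explicit constants are tracked.

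The main obstacle, I expect, is the regularity invariant and its error control: one must show the deviation of $f_d$ from a flat window profile remains $O(p)$ rather than compounding across the $\log{k}$ convolution–restriction steps, and must pin the per-merge relative error down precisely enough to produce the stated constants and the clean $(1\pm\cdot)^{k}$ and $(1\pm\cdot)^{k-1}$ exponents. The second-moment overlap analysis is the secondary difficulty, since the tree correlates the partial sums of two tuples sharing leaves, so their joint survival probability does not factor and must be bounded node by node.
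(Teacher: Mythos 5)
Your proposal follows essentially the same route as the paper: it, too, counts solutions with an indicator sum over $k$-tuples of indices, computes the first moment level by level by showing each merged-and-filtered sum is multiplicatively close to uniform on its window, bounds the second moment by grouping pairs of tuples according to shared leaf positions and controlling the correlated partial sums node by node (with a separate estimate for the ``exactly one shared child'' case), and concludes with Markov for the upper bound and the second-moment (Paley--Zygmund) inequality for the lower bound; the complexity bound likewise comes from the same per-level expected-list-size computation, $\sum_d (k/2^d)\,n^{2^d}W_d$. The only real difference is packaging: you maintain a bottom-up subtree law $f_d$ with a flatness invariant, whereas the paper works top-down, replacing each conditional sum distribution by an exactly uniform one on the smaller range and paying a max-ratio-distance factor $\mr\left(U_{sp},\,2\cdot U_s|_{\range{sp}}\right) \leq \left(1-\tfrac{p}{2}\right)^{-1}\left(1+\tfrac{30}{s}\right)$ per internal node of the tree.

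One correction to your invariant as stated: the relative variation of $f_d$ on its window is \emph{not} uniformly $O(p)$ in $d$. If $\rho_d$ denotes the max/min ratio of $f_d$ on $[-\tau_d,\tau_d]$, the convolve-and-restrict step gives only $\rho_d \leq \rho_{d-1}^2\,(1-p/2)^{-1}$, so the error \emph{squares} at each level and compounds to $\rho_{\log k} \leq (1-p/2)^{-(k-1)}$, i.e.\ the right invariant at level $d$ is $(1\pm O(p))^{2^d-1}$, one factor per internal node of the subtree, not $1\pm O(p)$. This compounding is not the obstacle you fear but exactly the mechanism behind the stated constants: the per-node product is what produces the $(1+37p)^k$, $(1-150p)^k$, and $(1\pm 37p)^{k-1}$ shapes in the theorem (the paper's inductive claim accrues a factor $\left[p\,(1\pm p)\left(1\pm \tfrac{30}{mp^{\log k}}\right)\right]^{k/2^{d+1}}$ per level, and the exponents sum to $k-1$). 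So you should weaken your invariant to the per-node product form, after which your induction closes and your claimed bounds $\mathbb{E}[X]=c^k(1\pm O(p))^k$ and $W_d = p^{2^d-1}(1\pm O(p))^{2^d-1}$ are exactly the paper's. Your second-moment sketch also matches: the paper's ``careful bookkeeping'' is a recursion $T(k)=T(k/2)\left(T(k/2)/\nu + 2\mu\right)$ over overlap patterns, resolved by the bound $T(k)\leq (n\mu)^k(1+k/n)^k$, which is precisely the $(1+k/n)^k$ factor you state; the mixed-overlap nodes are handled by the analogues of your correlated-joint-law flatness bounds (the paper's estimates for $\prob{w,x,y\gets U_s}{w+x\in\range{sp}\wedge w+y\in\range{sp}}$ and the pair max-ratio distance).
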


Asymptotically, the following simpler bound can be inferred from the above theorem if $k$ is not too large in relation to the other parameters. \iflncs\else(For a more rigorous statement, see \cref{cor:ktree} in \cref{sec:analysis}.)\fi

\begin{corollary}
  \label{infcor:ktree}
  Consider the same parameters as in \cref{infthm:ktree}, and in addition suppose that $k = o(1/p)$ and $k = o(n^{1/2})$. Then we have:
    \begin{align*}
      \frac{c^k}{1 + c^k}\cdot (1-o(1)) \leq \pr{\text{$k$-Tree succeeds}} \leq c^k \cdot (1 + o(1))
    \end{align*}
\end{corollary}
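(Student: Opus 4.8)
The plan is to derive both inequalities directly from the precise bounds in \cref{infthm:ktree}, showing that each of the correction factors appearing there collapses to $1 \pm o(1)$ under the two extra hypotheses $k = o(1/p)$ and $k = o(n^{1/2})$. In other words, this corollary merely repackages the main theorem, and all the substantive work is assumed to have been done there.

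For the upper bound, I would start from $\pr{\text{$k$-Tree succeeds}} \leq c^k(1+37p)^k$ and argue that $(1+37p)^k = 1 + o(1)$. Since $kp = o(1)$ by the hypothesis $k = o(1/p)$, the elementary estimate $1 \leq (1+37p)^k \leq e^{37pk} = e^{o(1)} = 1 + o(1)$ yields the claim at once.

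For the lower bound I would control two factors separately. First, the analogous argument, now using Bernoulli's inequality $(1-150p)^k \geq 1 - 150pk$ together with $kp = o(1)$, gives $(1-150p)^k = 1 - o(1)$. Second, I would show that the denominator $1 + c^k(1+k/n)^k$ may be replaced by $1 + c^k$ at the cost of only a $(1-o(1))$ factor. The key observation is that $(1+k/n)^k \leq e^{k^2/n} = 1 + o(1)$, where $k^2/n = o(1)$ follows from $k = o(n^{1/2})$. Writing $(1+k/n)^k = 1 + \delta$ with $\delta = o(1)$ and $\delta \geq 0$, and using $0 \leq \frac{c^k\delta}{1+c^k} \leq \delta$, I would conclude $\frac{c^k}{1 + c^k(1+\delta)} = \frac{c^k}{1+c^k}\cdot\frac{1}{1 + c^k\delta/(1+c^k)} \geq \frac{c^k}{1+c^k}(1-o(1))$. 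Combining this with the $(1-150p)^k$ factor gives the stated lower bound.

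The derivation is essentially routine, with no single step posing a genuine obstacle. The only point requiring care is that the two $o(1)$ quantities originate from distinct hypotheses: $k = o(1/p)$ governs the multiplicative $(1\pm 37p)^k$ and $(1-150p)^k$ corrections, whereas $k = o(n^{1/2})$ governs the $(1+k/n)^k$ term inside the denominator. One must also ensure the estimates are uniform throughout the asymptotic regime in which these conditions are taken, but this is immediate given the explicit bounds of \cref{infthm:ktree}.
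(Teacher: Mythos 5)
Your proposal is correct and is precisely the routine derivation the paper intends: the paper states this corollary without an explicit proof, noting only that it "can be inferred from the above theorem," and your estimates $(1\pm O(p))^k = 1\pm o(1)$ via $kp = o(1)$ and $(1+k/n)^k \leq e^{k^2/n} = 1+o(1)$ via $k = o(n^{1/2})$, together with the uniform-in-$c$ bound $\frac{c^k\delta}{1+c^k} \leq \delta$ for replacing the denominator, are exactly the intended steps. The only minor point worth noting is that Bernoulli's inequality $(1-150p)^k \geq 1-150pk$ requires $150p \leq 1$, which holds for all sufficiently large $m$ since $p = o(1)$ under your hypotheses, so the asymptotic claim is unaffected.
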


The setting $c = 1$ corresponds to the choice of $n = m^{1/(\log{k}+1)}$ for the lists sizes suggested by Wagner's heuristic argument~\cite{Wagner02}. As long as $k$ is not too large, the above results confirm that in this case the $k$-Tree algorithm succeeds with some probability roughly between $1/2$ and $1$. In addition, they also allow one to compute list sizes and the resulting complexity that is sufficient for the algorithm to succeed with probability, say, $0.01$; or even the complexity that is \emph{necessary} for the algorithm to succeed with probability at least $0.01$. This enables answering the various kinds of questions about the algorithm presented earlier in this section.

\paragraph{Computational Bounds.} There are still some meaningful concrete values of $(m,k)$ (for example, $(2^{256},1024)$), however, for which \cref{infthm:ktree} does not give accurate bounds. This is due to two sources of inaccuracy in our results above: gaps that are inherent to our technique itself, and approximations that were made in the course of our proof to make intermediate expressions easier to handle analytically. Of these, the latter turn out to substantially dominate.

We eliminate this second source of inaccuracy by implementing our entire proof as a program that, instead of attempting to produce simple analytical expressions, explicitly computes the resulting probability and complexity bounds given values for the parameters $k$, $m$, and $n$. The pseudo-code for the relevant algorithms are presented in \cref{sec:program}, and our implementation of it is available \iflncs as an attachment to this submission \else in the associated repository~\cite{repo}\fi .

\begin{informaltheorem}
  \label{infthm:program}
  Consider any $k, n, m\in \Nat$, where $k \geq 4$ is a power of $2$, and let $p = m^{\frac{-1}{\log{k}+1}}$. The pair of values output by $\textsc{ProbBounds}(m,k,n,p)$ bound the success probability of the $k$-Tree algorithm, and those output by $\textsc{SizeBounds}(m,k,n,p)$ bound its complexity. Further, these algorithms run in time $\polylog(k,m,n)$.
\end{informaltheorem}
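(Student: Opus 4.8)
The plan is to exploit the fact that the proof of \cref{infthm:ktree} is entirely \emph{constructive}: it obtains the probability and complexity bounds by propagating a small collection of distributional quantities through the $\log k$ levels of the merge tree, and every estimate it makes along the way is either an exact identity or an explicit inequality. The two sources of slack identified in \cref{sec:results} correspond exactly to these two kinds of steps -- the inherent gap of the technique comes from the inequalities, whereas the dominant slack comes from the further closed-form approximations used to collapse the exact identities into the $(1\pm 37p)$-type expressions. The program keeps both the inequalities and the exact identities intact but, instead of approximating, evaluates each exact quantity numerically with rigorous two-sided rounding. This is precisely what $\textsc{ProbBounds}$ and $\textsc{SizeBounds}$ are designed to do, so proving \cref{infthm:program} amounts to (i) checking that they faithfully evaluate the expressions appearing in the proof of \cref{infthm:ktree}, and (ii) bounding their running time.

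For step (i) I would first isolate the recurrence driving the analysis: at each level $d$ the central objects are the \emph{survival probability} $q_d$ that merging two level-$(d-1)$ values produces a value within the threshold $\tau_d$, together with the density $f_d$ of such a surviving value near $0$. The expected size of each level-$d$ list is $n^{2^d}$ times a product of these survival probabilities, so the complexity -- a sum over levels weighted by the $k/2^d$ lists present at level $d$ -- is an explicit function of $n$ and the $q_d$. Likewise the expected number $\exp{X}$ of zero-summing combinations found by the algorithm is determined by the $q_d$ and the final density at $0$, with the upper bound on success being the first-moment bound $\pr{\text{success}} \le \exp{X}$ and the lower bound the second-moment bound $\pr{\text{success}} \ge \exp{X}^2/\exp{X^2}$, where $\exp{X^2}$ adds a contribution for each depth $d$ at which two combinations can share a subtree. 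Both $\textsc{ProbBounds}$ and $\textsc{SizeBounds}$ evaluate exactly these expressions from rigorously-bounded values of the $q_d$ and the densities, using the same inequalities as the analytic proof, so step (i) reduces to a line-by-line correspondence with \cref{sec:analysis}.

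Step (ii) is where the real difficulty lies, and it is the step I expect to be the main obstacle. Written out exactly, the densities that arise are truncated convolutions of discrete uniform distributions -- Irwin--Hall-type functions that are piecewise polynomial of degree growing like $2^d$ -- so representing them in full would cost $\poly(k)$ time and destroy the $\polylog$ bound. The key fact I would establish is that the bounds never need the whole density, only a constant number of functionals per level: essentially the peak value at $0$, the effective width of the support, and the normalization. Using the symmetry and unimodality of these truncated convolutions together with elementary inequalities -- Cauchy--Schwarz to bound the squared mass $\int f_{d-1}^2$ that governs the next convolution at $0$, and monotonicity to bound the density over the narrow window $[-\tau_d,\tau_d]$ -- these functionals yield rigorous two-sided bounds on the corresponding functionals one level up. Thus each of the $\log k$ levels, and each of the $O(\log k)$ shared-subtree cases in the second moment, costs only $O(1)$ arithmetic operations. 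The remaining subtlety is bit-length: to keep every operation within $\polylog(k,m,n)$ and every reported bound valid, the program carries out this arithmetic in interval form with directed rounding at a precision of $\polylog(k,m,n)$ bits, and I would verify that such rounding can only loosen the output intervals. Combining the $O(1)$ work per level with the $\log k$ levels then gives the claimed $\polylog(k,m,n)$ running time.
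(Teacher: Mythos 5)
Your high-level plan -- implement the constructive moment computation exactly, take Markov for the upper bound and Paley--Zygmund for the lower bound, and argue $O(1)$ work per level over $\log k$ levels -- matches the paper's proof in outline. But your step (ii) misidentifies the technical mechanism, and the machinery you substitute has a genuine gap. The paper never confronts Irwin--Hall-type truncated convolutions of degree $2^d$ at all: at every level, the conditional distribution of a surviving sum is \emph{replaced by the exact uniform distribution} over the next range $\range{mp^{d+1}}$, at the cost of a multiplicative max-ratio-distance factor, and both the survival probability and the MR distance for a sum of two uniforms -- as well as for the correlated pair $(w+x,w+y)$ sharing one variable -- have exact closed forms (\textsc{ProbSumToZ}, \textsc{ProbSumInRange}, \textsc{MRDistFromUnif}, \textsc{ProbSumWithTwoRVInRange}, \textsc{MRDistFromPairUnif}, mirroring \cref{prop:tools-2,prop:tools-3,prop:tools-4,prop:tools-5}). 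So the densities one level up are again exactly uniform, never convolutions, and polylog runtime per level is immediate with no functional-propagation scheme. Your proposed alternative -- tracking peak, width, and normalization of the true convolved densities via symmetry, unimodality, and Cauchy--Schwarz -- is asserted rather than proved: Cauchy--Schwarz bounds $\int f^2$ (hence the peak at $0$) from one side only, so two-sided propagation through a truncate-then-convolve step is not established; and these one-dimensional functionals do not cover the two-dimensional joint distribution of correlated sums that the second moment requires.

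The deeper omission is the part the paper flags as the ``careful re-framing'': making the \emph{second moment} computable in $\polylog$ time. The sum $\exp{C^2}$ ranges over all pairs of index tuples, stratified by the $2^k$ Hamming-difference patterns $s\in\bset^k$ -- not ``$O(\log k)$ shared-subtree cases'' -- and a naive evaluation is exponential in $k$. The paper's \textsc{SecondMomentUB} collapses this via the tree-extrapolation structure: aggregating, at each level, the three cases $s_{2i-1}=s_{2i}=0$, $s_{2i-1}=s_{2i}=1$, and $s_{2i-1}\neq s_{2i}$ with the Binomial Theorem, yielding the recursion $f(m,k,n,p)\le f\bigl(mp,\,k/2,\,((\mathbf{A}\mathbf{B}n^2)^2+2\mathbf{C}\mathbf{D}n^3)^{1/2},\,p\bigr)$ in which the entire pattern sum is absorbed into a modified list-size parameter. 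Your proposal contains no counterpart to this aggregation, so even granting your functional-propagation claims, the runtime argument for the lower bound does not go through. (Your interval-arithmetic remark is reasonable but peripheral; the paper treats numerical precision as an implementation concern, not a proof step.)
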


We find that this tightening does significantly strengthen the resulting bounds for various concrete values of $(m,k)$. We plot the bounds on success probability from both these theorems for a couple of pairs of values of $(m,k)$ in \cref{fig:intro-1} for illustration. For the rest of this section, we solely use the bounds from \cref{infthm:program} as they are more accurate.

\begin{figure}[h!]
  \centering
  \includegraphics[width=0.48\textwidth]{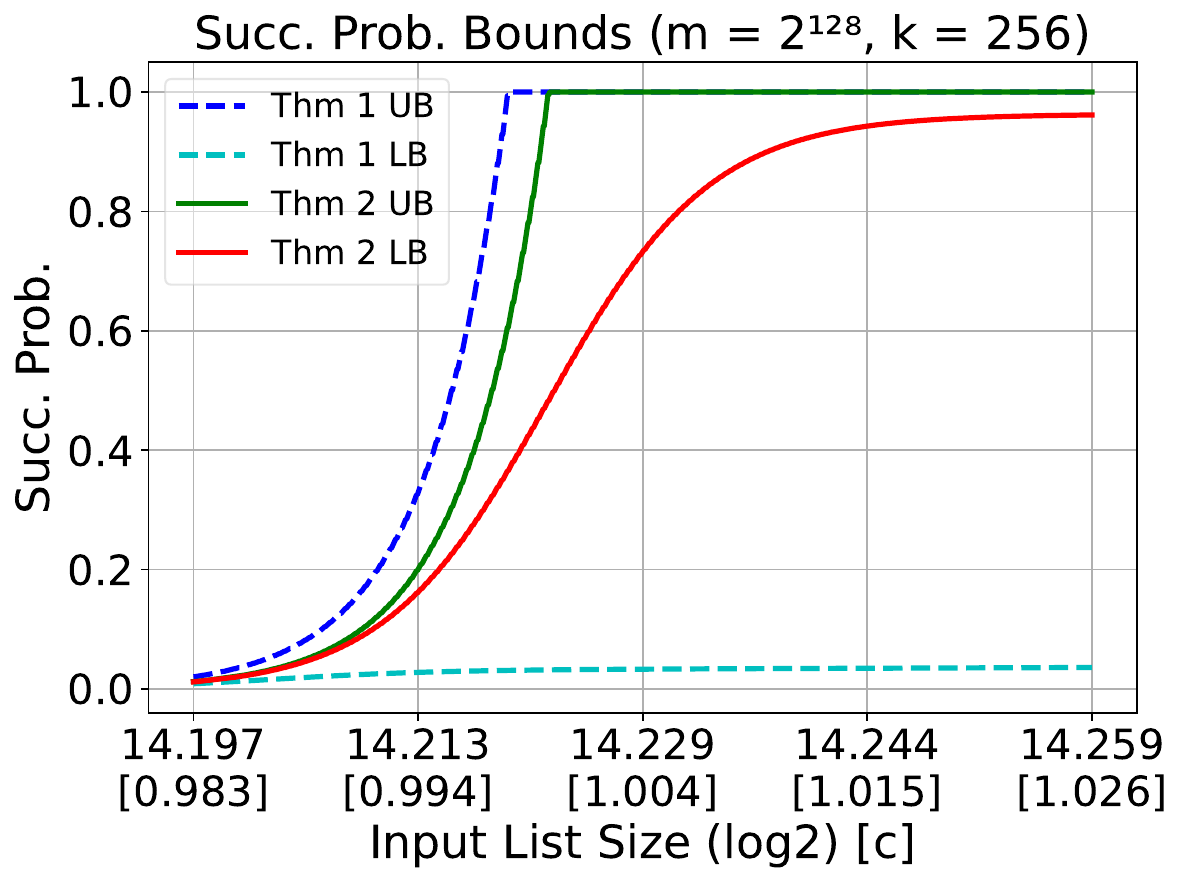}
  \hfill
  \includegraphics[width=0.48\textwidth]{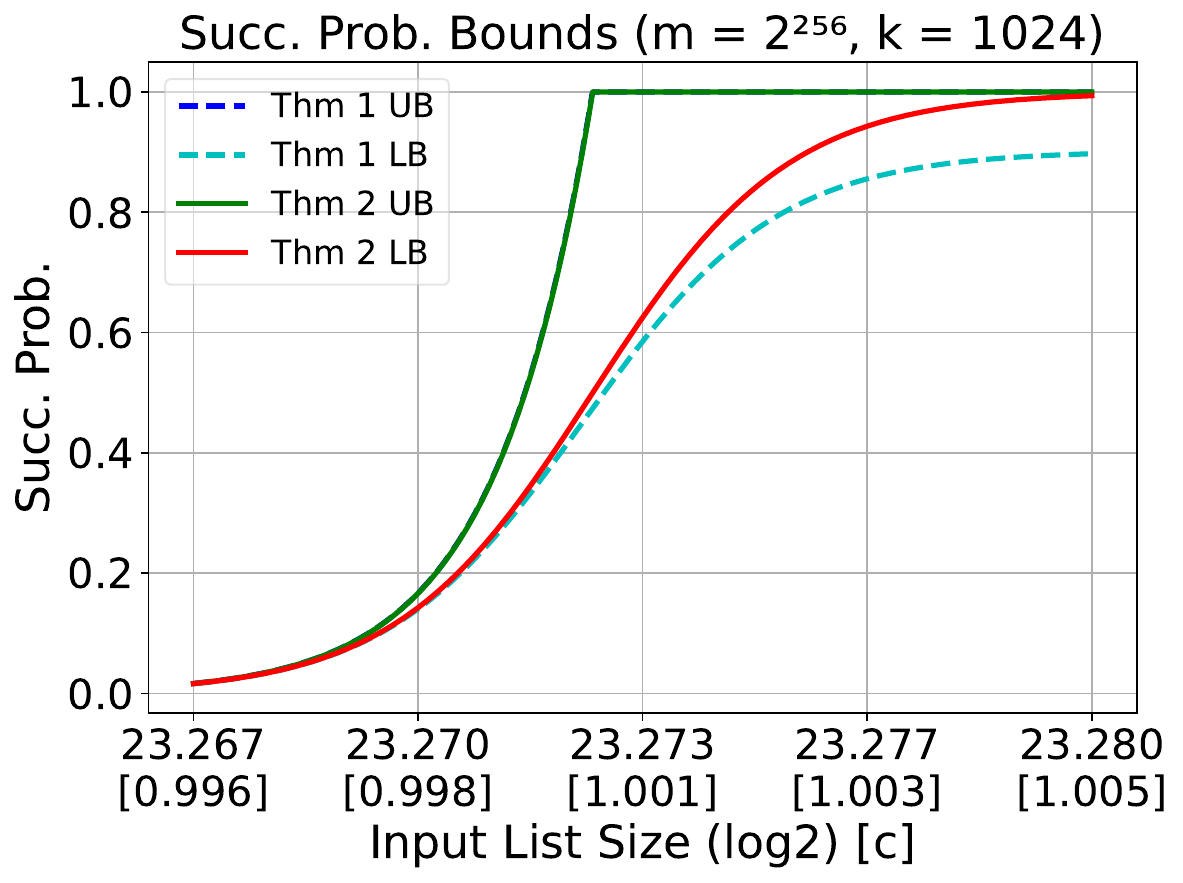}
  \caption{Plot of upper and lower bounds on success probabilities against the input list size $n$. Bounds from both \cref{infthm:ktree,infthm:program} are represented. The quantity $c$ is defined to be the ratio $n/m^{1/(\log{k}+1)}$. Labels on the x-axis are $\log_2(n)$, with the corresponding value of $c$ in square brackets.}
  \label{fig:intro-1}
\end{figure}

\paragraph{Using Our Bounds.} We expect our results to be useful in a few different ways. In applications of the $k$-Tree algorithm where the size of the range $m$ and the number of lists $k$ are fixed and not in the control of the algorithm designer (e.g. the attack in \cite{CJ04}, solving the ROS problem~\cite{Wagner02,BLLOR22}), our bounds can be used to determine the list size (and the corresponding complexity) that is sufficient to achieve a specific desired probability of success. This may be done analytically using \cref{infthm:ktree} if $k$ is relatively small, or computationally using \cref{infthm:program} (with a binary search for $n$) for larger values of $k$. 

In applications where the range $m$ is fixed but $k$ may be determined by the algorithm designer (e.g. attacks in \cite{DEFKLNS19,Joux03}, the attacks on hash functions in \cite{Wagner02}), our bounds can also be used to select the best value of $k$ together with the best value of $n$ (as the complexity depends on both) for a desired probability of success. In \cref{fig:intro-2}, we present some sample computations of this form, covering some of the parameter settings from the papers cited above as examples.

Finally, our upper bounds on the probability of success of the $k$-Tree algorithm can be used to obtain concrete lower bounds on the complexity of $k$-Tree-based attacks on candidate cryptographic constructions that achieve any given probability of success.

\begin{figure}[h!]
  \centering
  \includegraphics[width=0.48\textwidth]{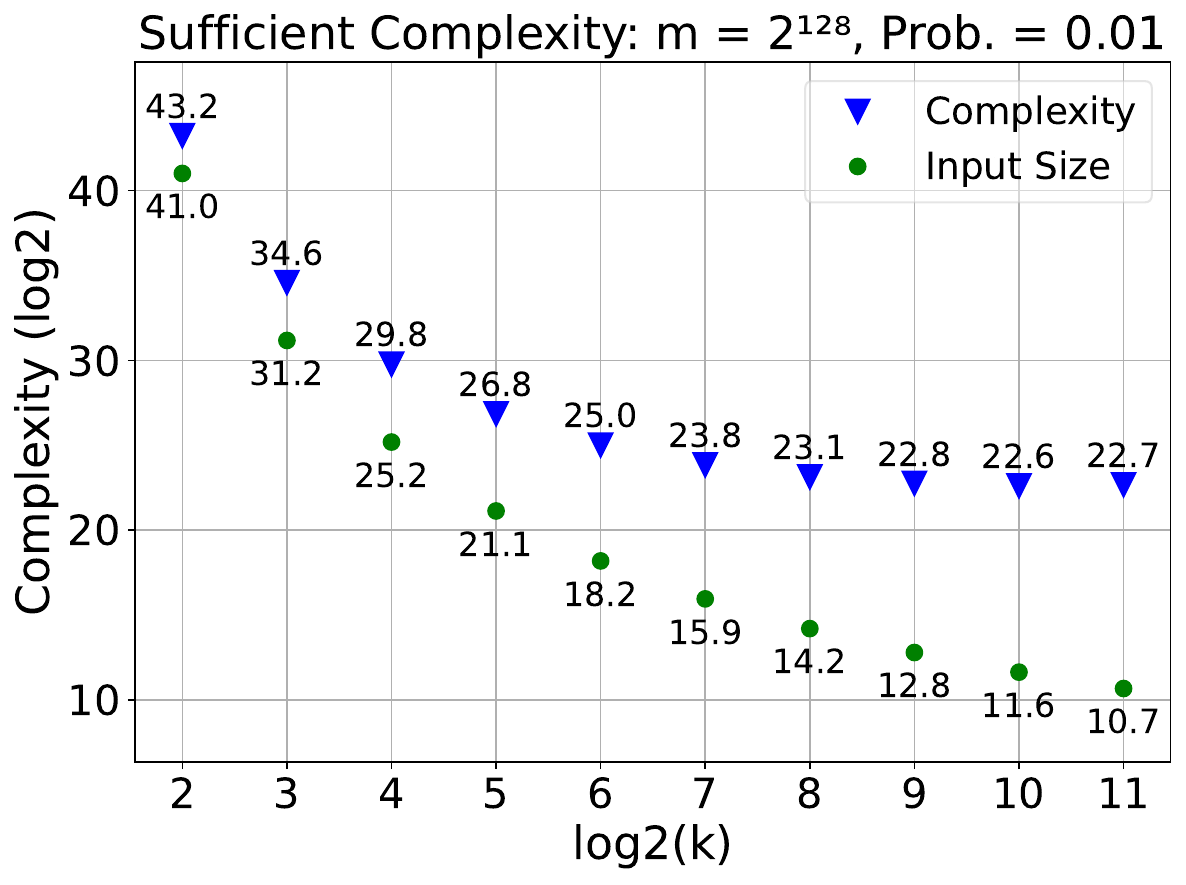}
  \hfill
  \includegraphics[width=0.48\textwidth]{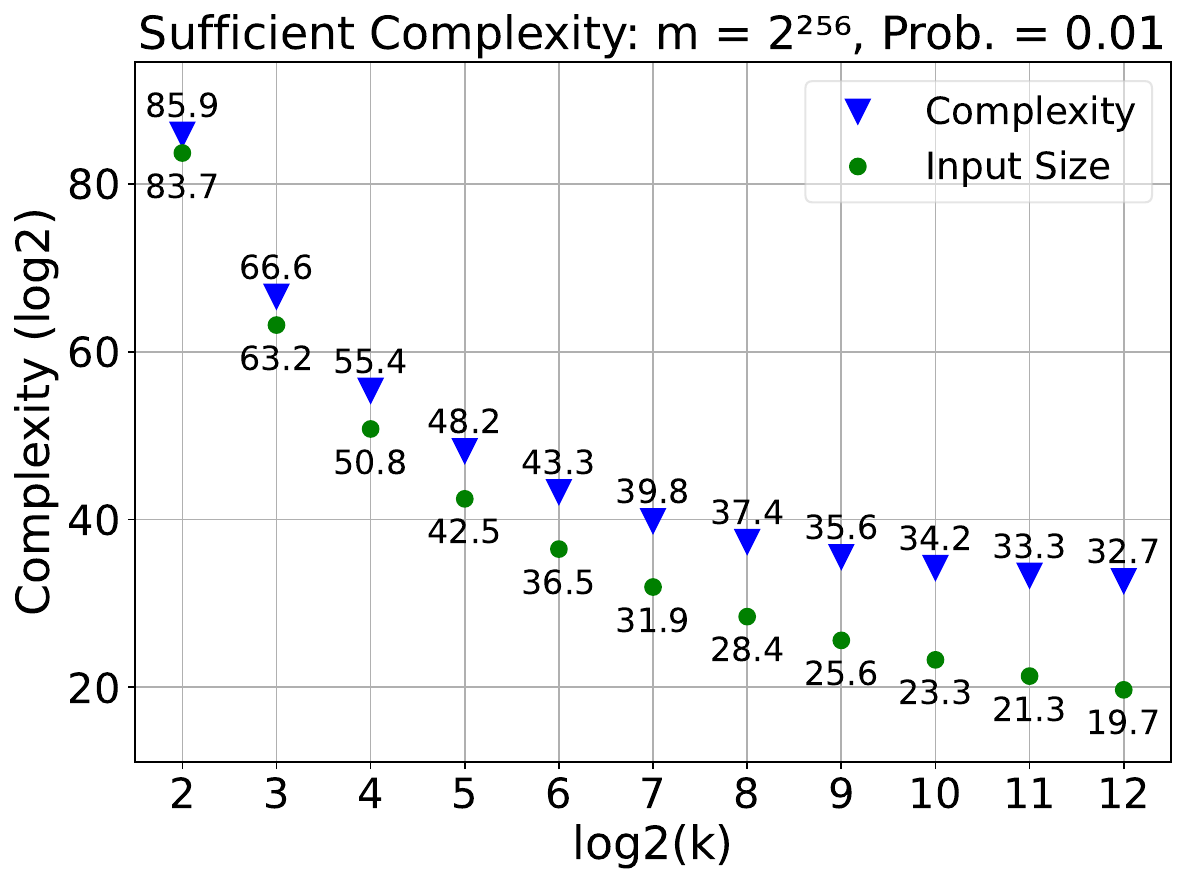}
  \caption{Plots of the complexity of the $k$-Tree algorithm that is sufficient to provably achieve success probability of $0.01$, against $k$. Input list size in each case is chosen so that the lower bound on success probability is slightly larger than $0.01$. Computed using \cref{infthm:program}.}
  \label{fig:intro-2}
\end{figure}

\paragraph{Experimental Evaluation.} To understand the tightness of our upper and lower bounds, we empirically estimate the values of the quantities bounded in the theorems above for various values of the parameters $k$, $m$, and $n$, and compare them to our predictions. We refer the reader to \iflncs the full version \else \cref{sec:experiments} \fi for details of these experiments, and a summary of their results and implications.

\begin{remark}[Simple Optimizations]
  \label{rem:optimisations}
  There are some simple optimizations to the $k$-Tree algorithm that can significantly improve its behavior in some parameter regimes. For instance, removing duplicates when computing the intermediate lists $L^d_i$ ensures that the complexity of the algorithm is at most $O(k(n+m))$, which is smaller than the bounds produced by \cref{infthm:ktree,infthm:program} for very large values of $c$.

  For large values of $c$, the lower bound on the success probability produced by these theorems is roughly $(1-c^{-k})$. But for such $c$, simply breaking up the lists into $\floor{c}$ smaller lists with $m^{1/(\log{k}+1)}$ elements each and running $k$-Tree on each separately (effectively with ``$c$''$=1$) results in an amplified success probability of $(1-e^{-\Omega(c)})$.
\end{remark}

\begin{remark}[Measure of Complexity]
  \label{rem:complexity}
  Depending on various incidental factors such as the hardware being used, its word-size in relation to $m$, etc., the actual running time of the optimal implementation of the $k$-Tree algorithm might be anything from a constant factor to a logarithmic factor larger than the total list size. However, in all cases, the total list size is the central quantity that determines the running time. So for the sake of generality and simplicity, we directly use this as our measure of complexity.
\end{remark}

\begin{remark}[Integers vs. $\Int_m$]
  \label{rem:zm}
  So far, we have been discussing the $k$-SUM problem where the addition is over integers. In reality, the problem that is most often useful in cryptanalysis, etc., is the problem where addition is done modulo $m$ (that is, over $\Int_m$), often for some prime number $m$. In this case, the $k$-Tree algorithm is modified to perform all its additions modulo $m$ instead of over integers, with $\Int_m$ being identified with the set $\set{-\floor{m/2},\dots,\floor{m/2}}$ in the natural manner. Prior analyses by Shallue~\cite{Shallue08} and Joux et al.~\cite{JKL24} were also for $k$-Tree over $\Int_m$. However, as demonstrated in \iflncs the full version\else \cref{sec:zm}\fi , the difference in the behavior of the $k$-Tree algorithm in these two cases is so small as to be beneath the precision with which we state our results in this section. So for simplicity, we continue to ignore this distinction in this section, though \iflncs the rigorous proofs in the full version \else later sections \fi will take it into account.
\end{remark}

\subsection{Technical Overview}
\label{sec:overview}

In this section, we present an overview of our proof of the bounds on the success probability of the $k$-Tree algorithm presented in \cref{infthm:ktree}. Our bounds on the complexity of the algorithm follow from similar arguments. The algorithms captured in \cref{infthm:program} follow the same high-level approach as this proof, but involve some careful re-framing of its terms in order to enable efficient computation of some intermediate values. The entire rigorous proof is presented in \iflncs the full version of this paper\else \cref{sec:analysis}\fi .

Our approach to these bounds is quite elementary -- we define a random variable corresponding to the number of occurrences of ``$0$'' in the final list computed by the $k$-Tree algorithm, compute bounds on its first and second moments, and then use standard second-moment-based tail bounds to bound the probability that this variable has non-zero value. The challenge lies in computing these moments, as this variable is somewhat complex.

\paragraph{Setup.} Recall that the input to the $k$-Tree algorithm is $k$ lists of $n$ integers, with each integer being drawn uniformly at random from the set $\set{-\floor{m/2},\dots,\floor{m/2}}$. Given these parameters, define the symbols $p = m^{-1/(\log{k}+1)}$, which is the filtering parameter in the algorithm, and $c = n\cdot p$, which represents the size of the lists relative to the ``default value'' of $m^{1/(\log{k}+1)}$ suggested by heuristic arguments. For any non-negative integer $s$, we will denote by $\range{s}$ such a set $\set{\floor{s/2},\dots,\floor{s/2}}$. 

\medskip
For simplicity, we will fix the number of lists $k$ to $4$ in this overview (and so $p = m^{-1/3}$). In this case, the $k$-Tree algorithm is given as inputs $k = 4$ lists $L_1,\dots,L_4$, each of size $n$, and proceeds as follows:
\begin{enumerate}
  \item Compute the following lists:
  \begin{itemize}
    \item $L^1_1 \gets \set{ (a+b)\ \big|\ a\in L_{1} \wedge b\in L_{2} \wedge (a+b) \in \range{mp}}$
    \item $L^1_2 \gets \set{ (a+b)\ \big|\ a\in L_{3} \wedge b\in L_{4} \wedge (a+b) \in \range{mp}}$
    \item $L^2_1 \gets \set{ (a+b)\ \big|\ a\in L^{1}_{1} \wedge b\in L^{1}_{2} \wedge (a+b) \in \range{mp^2}}$
  \end{itemize}
  \item Succeed if $L^2_1$ contains a $0$, fail otherwise.
\end{enumerate}
Above, take the lists $L^d_i$ to each be a multi-set instead of a set -- for instance, if a specific value of $(a+b)\in\range{mp}$ occurs twice as the sum of elements in $L_1$ and $L_2$, then two such values will be present in $L^1_1$. This makes the algorithm sub-optimal in terms of efficiency, but does not change its success probability and makes it easier to analyze. Define the following random variable as described earlier:
\begin{align*}
  C = \text{number of occurrences of } 0 \text{ in the list } L^{2}_1
\end{align*}
The algorithm succeeds exactly when the value of this non-negative variable $C$ is at least $1$. So the probability of this latter event is what we will be concerned with in our analysis.

For any set of input lists $\set{L_i}$, the value of $C$ may be computed by iterating through every possible value of the indices $\ell_1,\dots,\ell_4\in [n]$, and considering whether the tuple $(L_1[\ell_1],\dots,L_4[\ell_4])$ passes all the ``filters'' of the algorithm and also sums to $0$. For each such $\bell = (\ell_1,\dots,\ell_4)$, we capture this by defining a variable $C_\bell$ that is $1$ if all of the following events occur and is $0$ otherwise:
\begin{align*}
  E_1 &\equiv (L_1[\ell_1] + L_2[\ell_2] \in \range{mp})\\
  E_2 &\equiv (L_3[\ell_3] + L_4[\ell_4] \in \range{mp})\\
  E_3 &\equiv (L_1[\ell_1] + L_2[\ell_2] + L_3[\ell_3] + L_4[\ell_4] \in \range{mp^2})\\
  E_4 &\equiv (L_1[\ell_1] + L_2[\ell_2] + L_3[\ell_3] + L_4[\ell_4] = 0)
\end{align*}
Note that $C_\bell$ is a random variable, with randomness coming from the numbers in the input lists. The variable $C$ can now be written as follows:
\begin{align}
  \label{eq:intro-1}
  C = \sum_{\bell \in [n]^4} C_\bell
\end{align}
We will be interested in the moments of this random variable $C$, which are functions of $m$ and $n$ (since we have already fixed $k$).

\paragraph{Heuristic Analysis.} Wagner's heuristic analysis~\cite{Wagner02} of the $k$-Tree algorithm essentially relies on the following three assumptions (for any large enough integer $s$):
\begin{enumerate}[label=(\alph*),topsep=5pt]
  \item When $\exp{C} = 1$, the algorithm succeeds with constant probability.
  \item When $x$ and $y$ are sampled uniformly at random from $\range{s}$, their sum $(x+y)$ is contained in $\range{sp}$ with probability $p$.
  \item With $x$ and $y$ sampled uniformly at random from $\range{s}$, the distribution of $(x+y)$, when conditioned on being contained in $\range{sp}$, is uniform over $\range{sp}$
\end{enumerate}

\medskip
\noindent Fix some tuple of indices $\bell = (\ell_1,\dots,\ell_4)$. Under assumptions (b) and (c), the expectation of $C_\bell$ can be computed by computing the probability of events $E_1,\dots,E_4$ as follows:
\begin{itemize}[topsep=5pt,itemsep=0pt]
  \item Assumption (b) implies that events $E_1$ and $E_2$ each happens with probability $p$.
  \item Conditioned on these happening, assumption (c) implies that the sums $(L_1[\ell_1]+L_2[\ell_2])$ and $(L_3[\ell_3]+L_4[\ell_4])$ are uniformly distributed over $\range{mp}$.
  \item So by (b), $E_3$ also happens with probability $p$.
  \item Finally, appealing to (c) again, conditioned on the first three events happening, the entire sum is distributed uniformly over $\range{mp^2}$, and so $E_4$ happens with probability $(mp^2)^{-1}$.
\end{itemize}
Thus, we have:
\begin{align}
  \label{eq:intro-2}
  \exp{C_\bell} &= \pr{E_1 \wedge \cdots \wedge E_4}\nonumber\\
               &= \pr{E_1} \cdot \pr{E_2} \cdot \pr{E_3\ |\ E_1\wedge E_2} \cdot \pr{E_4\ |\ E_1\wedge E_2\wedge E_3}\nonumber\\
               &= p \cdot p \cdot p \cdot \frac{1}{mp^2} = \frac{p}{m} = p^4
\end{align}
The expected value of $C$ is then:
\begin{align}
  \label{eq:intro-3}
  \exp{C} = \sum_{\bell \in [n]^4} C_\bell = n^4 \cdot p^4 = c^4
\end{align}
Thus, when $c = 1$ (that is, $n = m^{1/3}$), this expectation is $1$, and by assumption (a), the algorithm works with constant probability. This was the conclusion in \cite{Wagner02}. In providing a rigorous analysis of the $k$-Tree algorithm, our objective is to remove the above assumptions. We show that assumptions (b) and (c), while not strictly true, are close enough to being true. Further, we show that a generalization of assumption (a) is true, which allows us obtain bounds on the success probability of the algorithm for a wide range of values of $c$ rather than just $c=1$.

\paragraph{Relaxing the Assumptions.} To compute the actual expectation of $C_\bell$, we start by first showing that slightly relaxed versions of assumptions (b) and (c) above are true. Starting with assumption (b), it can be shown using some elementary computation \iflncs\else(see \cref{sec:tools}) \fi that:
\begin{align}
  \label{eq:intro-4}
  \prob{x,y\gets \range{s}}{x+y \in \range{sp}}  \in \left(p - \frac{p^2}{4}\right) \pm O\left(\frac{1}{s}\right)
\end{align}
The right-hand side above is not exactly $p$, but for typical values of $p$ and $s$, is quite close to it.

\medskip
For assumption (c), we show that while the distribution of $(x+y)$ as described there is not actually uniform, it is close to the uniform distribution. The specific notion of distance we use, which we refer to as the \emph{Max-Ratio (MR) distance} from Uniform, is defined as follows\footnote{This is again a simplification. For the actual definition, see \iflncs the full version of this paper\else \cref{sec:tools}\fi .} for a distribution $D$ over a domain $S$:
\begin{align*}
  \mr(U,D) = \frac{\max_{x\in S} D(x)}{\min_{x\in S} D(x)}
\end{align*}
where $D(x)$ is the probability mass placed on $x$ by the distribution $D$. This distance is at least $1$ (which is achieved if $D$ is the uniform distribution over $S$), and is potentially unbounded.

This notion of distance is particularly beneficial for our analysis for a couple of reasons. First, for any event $E$ over the domain $S$, we can bound the probability of $E$ happening under $D$ in terms of the probability of $E$ happening under the uniform distribution over $S$ as follows:
\begin{align}
  \label{eq:intro-5}
  \mr(U,D)^{-1} \cdot \prob{x\gets S}{E(x)} \leq \prob{x\gets D}{E(x)} \leq \mr(U,D) \cdot \prob{x\gets S}{E(x)}
\end{align}
Second, this distance is easy to compute for distributions defined under conditioning, which can otherwise be difficult to reason about. In particular, suppose $D$ is the distribution of $(x+y)$ conditioned on being in $\range{sp}$ when $x$ and $y$ are uniformly drawn from $\range{s}$. Then, this distance is as follows:
\begin{align}
  \label{eq:intro-6}
  \mr(U,D) &= \frac{\max_{z\in \range{sp}} \prob{x,y\gets\range{s}}{x+y=z\ |\ x+y\in\range{sp}}}{\min_{z\in \range{sp}} \prob{x,y\gets\range{s}}{x+y=z\ |\ x+y\in\range{sp}}}\nonumber\\
           &= \frac{\max_{z\in \range{sp}} \prob{x,y\gets\range{s}}{x+y=z}\cdot \pr{x+y\in\range{sp}}^{-1}}{\min_{z\in \range{sp}} \prob{x,y\gets\range{s}}{x+y=z}\cdot \pr{x+y\in\range{sp}}^{-1}}\nonumber\\
           &= \frac{\max_{z\in \range{sp}} \prob{x,y\gets\range{s}}{x+y=z}}{\min_{z\in \range{sp}} \prob{x,y\gets\range{s}}{x+y=z}}\nonumber\\
           &\leq (1+p) + O\left(\frac{1}{s}\right)
\end{align}
where the second equality follows from the Bayes theorem, and the inequality again follows from elementary computations. Again, for typical values of $p$ and $s$, the right-hand side above is quite close to $1$, indicating that the distribution is close to uniform. Putting together (\ref{eq:intro-5}) and (\ref{eq:intro-6}), with some further computation, we get the following effective relaxation of assumption (c). For any event $E(z)$ defined over domain $\range{sp}$, we have:
\begin{align}
  \label{eq:intro-7}
  \prob{x,y\gets\range{s}}{E(x+y)\ |\ (x+y)\in\range{sp}} \in \prob{z\gets\range{sp}}{E(z)} \cdot \left[ 1 \pm \left(p + O\left(\frac{1}{s}\right) \right) \right]
\end{align}
For simplicity, in the rest of this overview we will ignore the $O(1/s)$ parts in the expressions (\ref{eq:intro-4}) and (\ref{eq:intro-7}) above. It is not a crucial part of the big picture, and in most reasonable parameter settings is much smaller than $p$ anyway.

\paragraph{Computing the Expectation.} With (\ref{eq:intro-4}) and (\ref{eq:intro-7}) in hand, we can now compute the expectation of $C_\bell$. We essentially follow the earlier heuristic argument step-by-step, replacing the assumptions there with their true but relaxed versions.

Denote by $\vx$ the vector $(x_1,\dots,x_4)$, where $x_i\in\range{m}$ will later be identified with $L_i[\ell_i]$; the variables $x^1_1$, $x^1_2$, and $x^2_1$ below will similarly initially be identified with $(x_1+x_2)$, $(x_3+x_4)$, and $(x^1_1+x^1_2)$, respectively. We re-state the events in definition of $C_\bell$ to be parameterized as follows for ease of manipulation:
\begin{align*}
  E_1(x_1,x_2) &\equiv (x_1 + x_2 \in \range{mp})\\
  E_2(x_3,x_4) &\equiv (x_3 + x_4 \in \range{mp})\\
  E_3(x^1_1,x^1_2) &\equiv (x^1_1 + x^1_2 \in \range{mp^2})\\
  E_4(x^2_1) &\equiv (x^2_1 = 0)
\end{align*}
Along the lines of (\ref{eq:intro-2}), the expectation of $C_\bell$ for any $\bell$ can be written as follows:
\begin{align}
  \label{eq:intro-8}
  \exp{C_\bell} &= \prob{x_1,\dots,x_4\gets\range{m}}{\substack{E_1(x_1,x_2) \wedge E_2(x_3,x_4) \wedge \\ E_3(x_1+x_2,x_3+x_4) \wedge E_4(x_1+\cdots+x_4)}}\nonumber\\
  &= \prob{x_1,x_2\gets\range{m}}{E_1(x_1,x_2)} \cdot \prob{x_3,x_4\gets\range{m}}{E_2(x_3,x_4)} \nonumber\\ &\quad\quad \cdot \prob{x_1,\dots,x_4\gets\range{m}}{\substack{E_3(x_1+x_2,x_3+x_4) \wedge E_4(x_1+\cdots+x_4)\ \\|\ E_1(x_1,x_2) \wedge E_2(x_3,x_4)}}
\end{align}
The first two terms in the product above can be bounded using (\ref{eq:intro-4}) as follows:
\begin{align}
  \label{eq:intro-9}
  \prob{x_1,x_2\gets\range{m}}{E_1(x_1,x_2)} = \prob{x_3,x_4\gets\range{m}}{E_2(x_3,x_4)} \approx \left(p - \frac{p^2}{4}\right)
\end{align}
The last term is more complex, but a useful observation here is that the events in the probability expression there only depend on the sums $(x_1+x_2)$ and $(x_3+x_4)$ rather than on the $x_i$'s directly. Further, these events are conditioned on these sums being contained in $\range{mp}$ (that is, conditioned on the events $E_1$ and $E_2$). If assumption (c) had been true, it would have implied that these sums are uniformly distributed over $\range{mp}$, simplifying the probability expression considerably. We can do the same using (\ref{eq:intro-7}), albeit with some loss as follows:
\begin{align}
  \label{eq:intro-10}
  &\prob{x_1,x_2,x_3,x_4\gets\range{m}}{\substack{E_3(x_1+x_2,x_3+x_4) \wedge E_4((x_1+x_2)+(x_3+x_4))\ \\ |\ E_1(x_1,x_2) \wedge E_2(x_3,x_4)}}\nonumber\\
  &\quad\in \prob{x^1_1,x^2_1\gets\range{mp}}{E_3(x^1_1,x^1_2) \wedge E_4(x^1_1+x^1_2)} \cdot (1 \pm p)^2
\end{align}
Putting together (\ref{eq:intro-8}-\ref{eq:intro-10}), we get:
\begin{align}
  \label{eq:intro-11}
  \exp{C_\bell} \in p^2 \cdot (1\pm p)^4 \cdot \prob{x^1_1,x^2_1\gets\range{mp}}{E_3(x^1_1,x^1_2) \wedge E_4(x^1_1+x^1_2)}
\end{align}
The probability expression above is similar to the one we started with in (\ref{eq:intro-8}), except that it is for the case of $k=2$ and the range $\range{mp}$ instead of $k=4$ and range $\range{m}$. For general $k$, we get a similar recursive expression in terms of $k/2$ that enables us to compute the overall bound efficiently, both analytically and computationally. In the present case, we simply have to apply (\ref{eq:intro-4}) and (\ref{eq:intro-7}) in turn once more to get the following final bound:
\begin{align}
  \label{eq:intro-12}
  \exp{C_\bell} \in p^3 \cdot (1\pm p)^6 \cdot \frac{1}{mp^2} \subseteq p^4 \cdot (1\pm O(p))
\end{align}
The above is again quite close to its heuristic evaluation in (\ref{eq:intro-2}). The expectation of $C$ can then be computed as in (\ref{eq:intro-3}) to get:
\begin{align}
  \label{eq:intro-13}
  \exp{C} \in c^4 \cdot (1\pm O(p))
\end{align}
The upper bound on expectation above already gives us an upper bound on the success probability of the $k$-Tree algorithm using the Markov inequality. It remains, then, to show a corresponding lower bound.

\paragraph{Concentration.} Next we move on to proving that (a generalization of) assumption (a) is valid. Our hope here is to show that when the expectation of $C$ is significantly larger than $0$, then with somewhat high probability, we have $C \neq 0$ -- that is, the algorithm succeeds. Clearly, this is not true of arbitrary random variables -- if $C$ was $0$ with probability $(1-n^{-4})$ and $n^4$ with probability $n^{-4}$, it would still have an expected value of $1$. So to show this, we will need to rely on additional properties of $C$.

The property we will use is that $C$ is the sum of the $n^4$ identically distributed indicator random variables $C_\bell$. If this set of random variables had been independent, we could have used a Chernoff-Hoeffding bound to show that $C$ strongly concentrates around its expectation, and thus if its expectation is sufficiently larger than $0$, then it will take non-zero values with large probability. However, these variables are not independent.

Consider, for example, tuples of indices $\bell = (\ell_1,\dots,\ell_4)$ and $\bell' = (\ell_1',\dots,\ell_4')$ that differ only on the value of their fourth element. If $C_\bell = 1$, this implies that $(L_1[\ell_1']+L_2[\ell_2']) = (L_1[\ell_1]+L_2[\ell_2]) \in \range{mp}$. Thus, the tuple $(L_1[\ell_1'],\dots,L_4[\ell_4'])$ already passes one of the filters of the algorithm, and so $C_{\bell'}$ is now more likely to be $1$ than it would have been without conditioning on $C_\bell = 1$. In fact, this example illustrates that the set of random variables $\set{C_\bell}$ are not even pairwise independent. This lack of independence is perhaps the most significant challenge in analyzing the performance of the $k$-Tree algorithm.

Our way around this is the following set of observations. The degree of correlation between $C_\bell$ and $C_{\bell'}$ is proportional to the number of co-ordinates on which $\bell$ and $\bell'$ agree. For instance, if $\bell$ and $\bell'$ do not agree on any co-ordinate, then $C_\bell$ and $C_{\bell'}$ are, in fact, independent. Fortunately, for $t\in[0,4]$, the number of pairs $\bell$ and $\bell'$ that agree on $t$ co-ordinates decreases as $t$ (and thus the correlation) increases. Out of the $n^8$ possible pairs, only $O(n^7)$ pairs agree on $1$ co-ordinate, $O(n^6)$ agree on $2$, and $O(n^5)$ agree on $3$.

Taking advantage of this, we are able to carefully bound the second moment of $C$. Then, we use second-moment-based concentration bounds to show that if the expectation of $C$ is large enough, it will indeed be non-zero with substantial probability.

\paragraph{Computing the Second Moment.} The second moment of $C$ can be written as follows:
\begin{align}
  \label{eq:intro-14}
  \exp{C^2} = \sum_{\bell,\bell'\in[n]^4} \exp{C_\bell\cdot C_{\bell'}} = \sum_{\bell,\bell'\in[n]^4} \pr{C_\bell = 1 \wedge C_{\bell'} = 1}
\end{align}
Due to symmetry, the term corresponding to any $\bell$ and $\bell'$ in the sum above is fully determined by the set of co-ordinates that $\bell$ and $\bell'$ agree on (that is, it does not matter what specific values the $\ell_i$'s and $\ell_i'$'s take once the set of $i$'s on which they agree is determined). To capture this, we define the string $\delta(\bell,\bell')\in\bset^4$ whose $i^{\text{th}}$ co-ordinate is defined as follows:
\begin{align*}
  (\delta(\bar{\ell},\bar{\ell}'))_i = \begin{cases}
    0 &\text{if } \ell_i = \ell_i'\\
    1 &\text{if } \ell_i \neq \ell_i'
  \end{cases}
\end{align*}
We then segregate the pairs $(\bell,\bell')$ in the sum in (\ref{eq:intro-14}) according to the value of $\delta(\bell,\bell')$. For any $s\in\bset^4$, the the number of such pairs with $\delta(\bell,\bell') = s$ is $n^4(n-1)^{wt(s)} \approx n^{4+wt(s)}$, where $wt(s)$ is the Hamming weight of $s$. We can then re-write the second moment as follows:
\begin{align}
  \label{eq:intro-15}
  \exp{C^2} &= \sum_{s\in\bset^4} \sum_{\substack{\bell,\bell'\in[n]^4\\s.t.\ \delta(\bell,\bell') = s}} \pr{C_\bell = 1 \wedge C_{\bell'} = 1}\nonumber\\
            &\approx \sum_{s\in\bset^4} n^{4+wt(s)} \cdot f(s)
\end{align}
where $f(s)$ is defined to be the probability that $C_{\bell} = C_{\bell'} = 1$ for any $\bell$ and $\bell'$ such that $\delta(\bell,\bell') = s$. In computing $f(s)$, our broad approach is similar to the one we took when computing the first moment -- to repeatedly replace intermediate list elements with uniformly random values while computing the probabilities that the filters of the algorithm are passed. Only now, we need to do this simultaneously for two partially dependent executions of the algorithm.

\medskip
We start first with the events $E_1$ as defined above, which capture the first filter applied to the first two elements in the lists. The probability that this event happens in both executions is:
\begin{align}
  \label{eq:intro-16}
  \prob{L_1,L_2}{(L_1[\ell_1]+L_2[\ell_2]) \in \range{mp} \wedge (L_1[\ell_1']+L_2[\ell_2']) \in \range{mp}}
\end{align}
Let $s = \delta(\bell,\bell')$, and denote its bits by $s_1,\dots,s_4$. If $s_1 = s_2 = 0$, then $\ell_1 = \ell_1'$ and $\ell_2=\ell_2'$, meaning the two events above are the same and so by (\ref{eq:intro-9}) the above probability is $\approx p$ (ignoring factors of $(1\pm O(p))$ for now). Similarly if $s_1 = s_2 = 1$, the two events are independent and the probability is $\approx p^2$. If exactly one of $s_1$ and $s_2$ is $1$, then again the probability is roughly $\leq p^2$, due the following fact, which again follows from elementary computations:
\begin{align}
  \label{eq:intro-17}
  \prob{w,x,y\gets\range{s}}{w+x\in\range{sp} \wedge w+y\in\range{sp}} \leq p^2 \cdot \left(1 + O\left(\frac{1}{sp}\right) \right)^2
\end{align}
Overall, we have approximately the following (again ignoring $(1\pm O(p))$ factors):
\begin{align}
  \label{eq:intro-18}
  \prob{L_1,L_2}{E_1(L_1[\ell_1],L_2[\ell_2]) \wedge E_1(L_1[\ell_1'],L_2[\ell_2'])} \leq p^{1+\max(s_0,s_1)} = p^{1+s^1_1}
\end{align}
where we define $s^1_1$ to be $\max(s_1,s_2)$.

\medskip
Next, we look at the joint distribution of the sums $(L_1[\ell_1]+L_2[\ell_2], L_1[\ell_1']+L_2[\ell_2'])$. If $s^1_1 = 0$, then these two sums are equal, and by (\ref{eq:intro-6}) are each $(1+O(p))$-close to being uniformly distributed over $\range{sp}$. If $s_1 = s_2 = 1$, then these are independent and so the joint distribution is now close to being uniform over $\range{mp}\times\range{mp}$. We show by computing the MR distances explicitly that even in the case where exactly one of $s_1$ and $s_2$ is $1$, this joint distribution is close to being uniform over $\range{mp}\times\range{mp}$.

So overall, if $s^1_1 = 0$, this pair of sums is equal with close-to-uniform marginals, and if $s^1_1 = 1$, the pair is close to being independently uniformly distributed over $\range{mp}$. Notice that this is exactly the relationship that the pair $(L_1[\ell_1],L_1[\ell_1'])$ had to $s_1$, or $(L_2[\ell_2],L_2[\ell_2'])$ had to $s_2$, except that the marginal distributions there were over $\range{m}$. The same sequence of arguments can be made with the events $E_2$, where in place of $s^1_1$ we use $s^1_2 = \max(s_3,s_4)$. In effect, this lets us set up a recursive argument where we can replace the intermediate sums with uniform elements from $\range{mp}$, and the string $s$ with $(s^1_1,s^1_2)$.

\medskip
We can then repeat the entire argument above once more (essentially with $k=2$ now), with $s^2_1 = \max(s^1_1,s^1_2)$, to show that the probability of each of the pairs of events corresponding to $E_3$ and $E_4$ happening is roughly bounded by $p^{1+s^2_1}$. Overall, we end up with the following bound:
\begin{align}
  \label{eq:intro-19}
  \pr{C_\bell=1 \wedge C_{\bell'}=1} = f(s) &\leq p^{1+s^1_1} \cdot p^{1+s^1_2} \cdot p^{1+s^2_1} \cdot p^{1+s^2_1} \cdot (1+O(p)) \nonumber\\ &= p^{4+ s^1_1+s^1_2+ 2s^2_1} \cdot (1+O(p))
\end{align}
Combining this bound with (\ref{eq:intro-15}), we get:
\begin{align}
  \label{eq:intro-20}
  \exp{C^2} &\leq \sum_{s\in\bset^4} n^{4+wt(s)} \cdot p^{4+ s^1_1+s^1_2+ 2s^2_1} \cdot (1+O(p))\nonumber\\
            &= c^4 \cdot (1+O(p)) \cdot \sum_{s\in\bset^4} c^{wt(s)} \cdot p^{s^1_1+s^1_2+2s^2_1 - wt(s)}
\end{align}
Using graph-theoretic arguments, we can show that for any $s \not\in \set{0^4,1^4}$, the difference $(s^1_1+s^1_2+2s^2_1 - wt(s))$ is at least $1$, and so $p^{s^1_1+s^1_2+2s^2_1 - wt(s)} \leq p$. For $s \in \set{0^4,1^4}$, this difference is $0$. So we can bound the above sum as follows:
\begin{align}
  \label{eq:intro-21}
  \sum_{s\in\bset^4} c^{wt(s)} \cdot p^{s^1_1+s^1_2+2s^2_1 - wt(s)} &\leq 1 + c^4 + \sum_{s\in\bset^4\setminus \set{0^4,1^4}} c^{wt(s)}\cdot p \nonumber\\
                                                                    &\leq 1 + c^4 + p \cdot \sum_{s\in\bset^4} c^{wt(s)}\nonumber\\ 
                                                                    &= 1 + c^4 + p\cdot (1+c)^4
\end{align}
Overall, from (\ref{eq:intro-20}) and (\ref{eq:intro-21}), we get the following bound on the second moment:
\begin{align}
  \label{eq:intro-22}
  \exp{C^2} \leq c^4 \cdot \left(1 + c^4 + p\cdot (1+c)^4 \right) \cdot (1+O(p))  
\end{align}

\paragraph{Computing the Lower Bound.} With the bounds on the first two moments from (\ref{eq:intro-13}) and (\ref{eq:intro-22}), the required lower bound can be obtained using the Paley-Zygmund inequality, which states that for any non-negative random variable $Z$ and any $\theta\in [0,1]$:
\begin{align*}
  \pr{Z > \theta \exp{Z}} \geq (1-\theta)^2 \frac{\exp{Z}^2}{\exp{Z^2}}
\end{align*}
Applying this bound to $C$ with $\theta = 0$, we get:
\begin{align}
  \label{eq:intro-23}
  \pr{\text{$k$-Tree succeeds}} = \pr{C > 0 } &\geq \frac{\exp{C}^2}{\exp{C^2}} \nonumber\\ &\geq \frac{c^8\cdot (1-O(p))}{c^4 (1+c^4+p(1+c)^4) (1+O(p))} \nonumber\\ &\geq \frac{c^4}{1+c^4+p\cdot (1+c)^4} \cdot (1-O(p))
\end{align}

\paragraph{Tightening the Bound.} The above approach naturally generalizes to any $k$ that is a power of $2$ to give the following lower bound:
\begin{align*}
  \pr{\text{$k$-Tree succeeds}} \geq \frac{c^k}{1+c^k+p\cdot (1+c)^k} \cdot (1-O(kp))
\end{align*}
While their asymptotics for any fixed $k$ are the same, for moderately large values of $k$ and some reasonable concrete values of $m$ and $n$, the above bound ends up being much weaker than the lower bound actually stated in \cref{infthm:ktree}. This is because, even when $c = 1$, the $p(1+c)^k$ term in the denominator quickly starts to dominate. The bound stated in \cref{infthm:ktree} remains meaningful for a wider range of concrete values of the parameters, and is obtained by performing a more careful analysis of the sum in (\ref{eq:intro-20}), using a recursive argument to bound the entire sum together rather than each term separately.

The constants in the statement of \cref{infthm:ktree} are obtained by carefully tracking the constants that come up in the course of the above analysis. Even so, the constants stated there are sub-optimal due to limits on human tolerance, and the algorithmic implementation of this proof (as captured in \cref{infthm:program}) improves upon them significantly.


\subsection{Related Work}
\label{sec:related}

The worst-case version of the $k$-SUM problem has been studied extensively in the literature on algorithms~\cite[\dots]{HS74,BDP08,DSW18,Chan20}, data structures~\cite[\dots]{KP19,GGHPV20,CL23}, and complexity theory~\cite[\dots]{GO95,BHP01,SEO03,nseth,ksum_decision_tree,Erickson95,AC05,Pat10,PW10,GP18,ABHS19}. In the study of the fine-grained complexity of problems within $\P$, in particular, connections have been discovered between the complexity of this problem and those of various important problems from computational geometry, graph theory, data structures, etc.~\cite{GO95,BHP01,SEO03,Pat10,AW14,KPP16}. A simple meet-in-the-middle algorithm solves this problem in $\Otilde(n^{\ceil{k/2}})$ time~\cite{HS74}. The best algorithms known are faster than this by only a small polylog factor~\cite{BDP08,GP18,Chan20}, or run in time $\Otilde(m+n)$~\cite{Bri17,JW19}. It has been conjectured that the best worst-case algorithm for this problem runs in time $n^{\ceil{k/2}-o(1)}$~\cite{GO95,AL13}.

\medskip
For average-case $k$-SUM, non-trivial algorithms (beyond known worst-case algorithms) have so far been restricted to Wagner's $k$-Tree algorithm~\cite{Wagner02} and its extensions to using smaller lists (at the cost of running time)~\cite{MS12}, to values of $k$ that are not power of $2$~\cite{NS15,Dinur19}, to generalizations with better time-memory tradeoffs~\cite{NS15,BK17,Dinur19}, and to the quantum setting~\cite{GNS18}. These algorithms have found extensive use across cryptography and cryptanalysis~\cite[\dots]{Schnorr01,Wagner02,DEFKLNS19,LS19,BLLOR22}. They are also closely related to some of the best algorithms for the Learning Parity with Noise problem~\cite{BKW03}.

Lattice-based conditional lower bounds are known that indicate that the $k$-Tree algorithm for $k$-SUM is optimal in its asymptotic dependence on $k$~\cite{BSV21}. Some conditional lower bounds are also known for the complexity of the $k$-SUM problem given lists of size close to $m^{1/k}$~\cite{DKK21,ASSVV24}.

\paragraph{Comparison with Existing Analyses.} As noted earlier, Wagner's original argument~\cite{Wagner02} of the correctness of the $k$-Tree algorithm was heuristic and relied on assumptions regarding the distribution and independence of elements in intermediate lists computed by the algorithm. The conclusion of this argument was that the algorithm works with constant success probability when given input lists of size $m^{1/(\log{k}+1)}$, and that in this case it runs in time $O(k\cdot m^{1/(\log{k}+1)})$.

\medskip
This argument was made rigorous by Minder and Sinclair~\cite{MS12} for the variant of the $k$-Tree algorithm that sovles the $k$-SUM problem over the group $\Int_2^m$ (in which case it is referred to as the $k$-XOR problem). In this case, Wagner's assumption regarding the uniform distribution of intermediate list elements is immediately seen to be true. Minder and Sinclair also follow the approach (described in \cref{sec:overview}) of then computing the first two moments of the random variable counting the number of solutions found, and then using these to prove concentration bounds. In the case of $k$-XOR, the process of computing these moments turns out to be much simpler due to the fact that the XOR of any arbitrary vector with a uniformly random vector results again in a uniformly random vector. Some of the questions that come up during this analysis of $k$-Tree over $\Int_2^m$ also come up in the analysis of an approach to hashing called Simple Tabulation Hashing (see, for example, \cite{PT12}), but the implications of the results there for Minder and Sinclair's analysis are not immediately clear to us.

\medskip
The first rigorous analysis of the $k$-Tree algorithm over integers\footnote{We continue to ignore the distinction between $k$-Tree over integers and over $\Int_m$, following the discussion in \cref{rem:zm}.} was by Lyubashevsky~\cite{Lyubashevsky05}, who showed that the algorithm works with high probability if the input lists are of size at least $\Omega(k^2\log{k} \cdot m^{2/\log{k}})$. This was improved by Shallue~\cite{Shallue08} to lists of size at least $\Omega(k\cdot m^{1/\log{k}})$. Both of these actually study a variant of the $k$-Tree algorithm where the factor by which the permitted range shrinks at each level of the tree is set to $p = m^{-1/\log{k}}$ instead of $m^{-1/(\log{k}+1)}$. In this case, the final list is only permitted to contain $0$'s, and when the input list size is $m^{1/\log{k}}$, the expected number of solutions found by the algorithm is also $\Theta(m^{1/\log{k}})$. Both Lyubashevsky and Shallue were interested in using the $k$-Tree algorithm for very large values of $k$, and so the difference between $\log{k}$ and $(\log{k}+1)$ in the exponent were not significant in their applications. It is possible that Shallue's analysis can be made to work for the original $k$-Tree algorithm, but this is to be verified and even then it is not clear whether the bounds will work for list sizes smaller than $k\cdot m^{1/(\log{k}+1)}$.

Both of the above papers take similar approaches in their analysis, and we briefly describe Shallue's here. Similar to us, Shallue starts by showing that the elements in the intermediate lists are close to uniform; the distance measure used there is also an element-wise bound on differences in probability mass, but the whereas MR distance measures relative difference, they measure absolute difference. We believe our choice is better for concrete parameters, but asymptotically these choices are likely equivalent. The larger difference is in how they deal with the dependence between list elements. They start with the observation that even though there are dependencies between list elements, it is possible to find a large enough subset of elements in each list such that all correlations among them are relatively small. They then use martingale-based concentration bounds to recursively bound the size of each intermediate list.

Their utilization of correlation bounds on larger sets of variables eventually results in a better dependence of the probability bound on the list-size overhead. They show that if the input lists are of size $c\cdot m^{1/\log{k}}$ for some $c \geq k$, then the probability of failure is roughly at most $(m^{1/\log{k}} e^{-c/1024})$, whereas our bound for lists of size $c \cdot m^{1/(\log{k}+1)}$ is around $c^{-k}$. For very large values of $c$, the former will be much smaller, but for moderate values $c= \Theta(1)$, the latter bound is better.

\medskip
A different approach to analyzing the $k$-Tree algorithm was taken by Joux, Kippen, and Loss~\cite{JKL24}. They get around the uniformity and independence issues by modifying the $k$-Tree algorithm in such a way that these properties actually hold as assumed by the heuristic analysis. This makes the algorithm slightly worse than the original, but easier to analyze. They then show that this modified algorithm works with probability roughly larger than $(1-e^{-\poly(k)})$ if the input list sizes are $n = \beta(k) \cdot m^{1/(\log{k}+1)}$, where $\beta(k) \approx (1.26)^{\frac{\ell-1}{\ell+1}} \cdot 6^{\frac{(\ell-1)(\ell+2)}{(\ell+1)}}$, with $\ell = \log{k}$. Further, their modified algorithm runs in time $\Theta(kn)$ in this case. This again results in very good bounds on the probability of success, but the bounds only work for large values of $c$, especially for larger values of $k$ -- for instance, $\beta(4) \approx 3.563$ and $\beta(1024) \approx 7964$. 

\medskip
\noindent In summary, the following are the most significant points of comparison between our results and the existing analyses of Shallue~\cite{Shallue08} and Joux et al.~\cite{JKL24} of the $k$-Tree algorithm over integers:
\begin{itemize}
  \item The results in the other papers do not give meaningful bounds for $c \leq 1$ -- that is, when the size of the lists is $n = m^{1/(\log{k}+1)}$, as suggested by Wagner's heuristic analysis, or smaller -- while ours do. This is also true for some range of values of $c$ larger than $1$, with the range depending on $k$.
  \item For sufficiently larger list sizes ($c \gg 1$), the approaches in the other papers can likely show lower bounds on the success probability that are much closer to $1$ than our bounds can. This gap can be closed using certain optimizations to the $k$-Tree algorithm as described in \cref{rem:optimisations}, though that does not vindicate our analysis itself.
\end{itemize}
We present comparisons of our results with these in \cref{fig:comparison}, making optimistic assumptions regarding the constants in the bounds of the other papers, as well as regarding the regime of their validity, wherever relevant.

\begin{figure}[h!]
  \centering
  \includegraphics[width=0.8\textwidth]{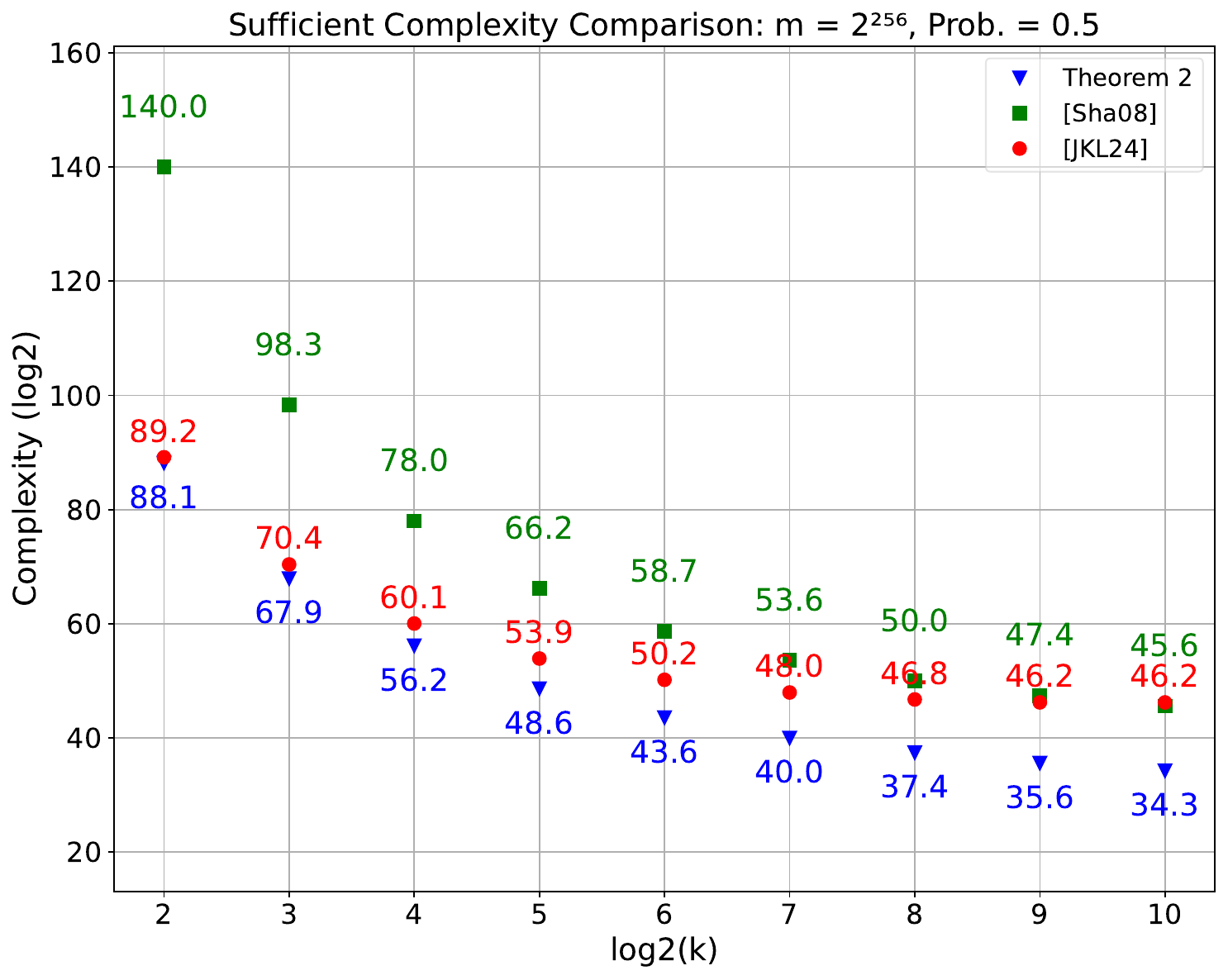}
  \caption{Comparison of bounds produced by \cref{infthm:program} with those of \cite{Shallue08} and \cite{JKL24}. For the other papers, we plot the minimum complexity of $k$-Tree for which they show non-trivial lower bounds on success probability (usually close to $1$). In our case, we plot the smallest complexity of $k$-Tree for which the lower bound on success probability produced by \cref{infthm:program} is at least $0.5$.}
  \label{fig:comparison}
\end{figure}


\section{Analysis}
\label{sec:analysis}


In this section, we present our analysis of Wagner's $k$-Tree algorithm for the $k$-SUM problem over integers~\cite{Wagner02}. We present the algorithm in full detail in \cref{fig:ktree}. It has four parameters -- the number of lists ($k$), the size of each list ($n$), the range the numbers are drawn in ($m$), and an internal filtering parameter $p$.
We then state our main claim about its success probability in \cref{thm:ktree}, state its significant corollaries, and then prove the theorem in the rest of the section.

\paragraph{Notation.} For any $m\in\Nat$, we denote by $[m]$ the set of natural numbers $\set{1,2,\dots,m}$, and by $\range{m}$ the set of integers $\set{-\floor{\frac{m}{2}},\dots,\floor{\frac{m}{2}}}$. Below, the term ``list'' simply refers to an array indexed by natural numbers in some range $[n]$. A list might contain duplicate entries, and in particular is to be distinguished from a set.

\begin{figure}[h!]
  \centering
  \begin{mdframed}
    \begin{center}
      \textbf{\underline{The $k$-Tree algorithm}}
    \end{center}
    \vspace{0.5em}

    \medskip
    \textbf{Parameters:} $k, n, m \in \Nat$, $p\in (0,1]$, with $k$ being a power of $2$

    \medskip
    \textbf{Input:} Lists $L_1,\dots,L_k$, each consisting of $n$ integers from $\range{m}$

    \medskip
    \textbf{Output:} Indices $\ell_1,\dots,\ell_k\in[n]$, or a failure symbol $\bot$

    \medskip
    \textbf{Subroutines:}
    \begin{itemize}[topsep=5pt,itemsep=0pt]
      \item Merge($L_a$, $L_b$, $\tau$): On input lists $L_a$ and $L_b$ of integers and threshold $\tau\in\Reals$, outputs a list consisting of all $(a+b)$ such that $a\in L_a$, $b\in L_b$, and $\abs{a+b} \leq \tau$. This output maintains duplicates -- if multiples pairs $(a\in L_a,b\in L_b)$ have the same sum, a copy of that sum is included in the output list for each pair.
    \end{itemize}
    
    \medskip
    \textbf{Procedure:}
    \begin{enumerate}[topsep=5pt,itemsep=0pt]
      \item For each $i\in[k]$, denote the list $L_i$ by $L^0_i$
      \item Set $\tau \gets m/2$
      \item For $d$ from $1$ to $\log{k}$:
      \begin{itemize}[topsep=0pt,itemsep=5pt]
        \item Set $\tau \gets p \cdot \tau$
        \item For $i \in \left[\frac{k}{2^d}\right]$: set $L^d_i \gets \mathrm{Merge}(L^{d-1}_{2i-1},L^{d-1}_{2i},\tau)$
      \end{itemize}
      \item If $L^{\log{k}}_1$ contains $0$, output the indices in the input lists that led to this sum. Otherwise output $\bot$.
    \end{enumerate}

    \medskip
    \textbf{Remarks:}
    \begin{itemize}[topsep=5pt,itemsep=0pt]
      \item In order to perform its last step, the algorithm additionally needs to keep track of which elements of the input lists contribute to each sum in the intermediate lists. We leave this bookkeeping out of our description for simplicity.
      
      \item Note that if any list considered in the course of the algorithm is empty, the algorithm is eventually bound to output $\bot$.
      \item Above, we only describe the behavior of the $\mathrm{Merge}$ algorithm rather than specifying the procedure it follows because it can be implemented in various ways, with the most efficient choice depending on the parameters of the problem and the execution environment. Also see \cref{rem:complexity}.
    \end{itemize}
    
    \medskip
  \end{mdframed}
  \caption{The $k$-Tree algorithm over Integers}
  \label{fig:ktree}
\end{figure}

\begin{theorem}
  \label{thm:ktree}
  Consider any $k, n, m\in \Nat$, where $k \geq 4$ is a power of $2$ and $m > 30^{\log{k}+1}$. Set $p = m^{\frac{-1}{\log{k}+1}}$ and $c = p\cdot n$. Consider $k$ lists $L_1,\dots,L_k$, each consisting of $n$ uniformly random integers from the range $\range{m}$. The $k$-Tree algorithm (as in \cref{fig:ktree}) with these parameters, denoted by $\ktree$, satisfies the following:
  \begin{itemize}
    \item \textbf{Success Probability.} Its probability of success is bounded as follows:
    \begin{align*}
      \frac{1}{c^{-k} + \left(1+\frac{k}{n} \right)^k}\cdot (1-150p)^{k} \leq \prob{\substack{L_1,\dots,L_k}}{\substack{\ktree(L_1,\dots,L_k)\\\text{ outputs } (\ell_1,\dots,\ell_k)\\\text{such that } \sum_i L_i[\ell_i] = 0 }} \leq c^k \cdot (1 + 37p)^k
    \end{align*}
    \item \textbf{Complexity.} Its expected complexity is bounded as follows:
    \begin{align*}
      \expec{\substack{L_1,\dots,L_k}}{\substack{\text{Total size of all lists}\\\text{involved in}\\\ktree(L_1,\dots,L_k)}} \in k n \cdot \left( 1 + \sum_{d\in[\log{k}]} \frac{c^{2^d-1}}{2^d} \right) \cdot \left(1 \pm 37p \right)^{k-1}
    \end{align*}
  \end{itemize}
\end{theorem}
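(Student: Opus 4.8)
The plan is to study the nonnegative integer random variable $C$ counting the occurrences of $0$ in the final list $L^{\log k}_1$, noting that $\ktree$ succeeds exactly when $C>0$. Writing $C=\sum_{\bell\in[n]^k}C_\bell$ as a sum of indicators over index tuples, the upper bound on success probability follows immediately from Markov's inequality, $\Pr[C>0]\le\mathbb{E}[C]$, and the lower bound from the Paley–Zygmund inequality at $\theta=0$, $\Pr[C>0]\ge\mathbb{E}[C]^2/\mathbb{E}[C^2]$. Thus the whole theorem reduces to sharp two-sided control of $\mathbb{E}[C]$, a sharp upper bound on $\mathbb{E}[C^2]$, and an analogous first-moment computation for each intermediate list to handle the complexity claim. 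I would first isolate the two elementary tools underlying everything: that the sum of two independent uniform draws from $\range{s}$ lands in $\range{sp}$ with probability $(p-p^2/4)\pm O(1/s)$ (as in \eqref{eq:intro-4}), and that this sum, conditioned on landing in $\range{sp}$, is within Max-Ratio distance $(1+p)+O(1/s)$ of uniform (as in \eqref{eq:intro-6}), so any intermediate sum may be replaced by a fresh uniform draw at multiplicative cost $1\pm(p+O(1/s))$ (as in \eqref{eq:intro-7}).

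For the first moment I would walk up the merge tree, applying these tools once at each of the $k-1$ range filters and once at the final equality filter, to obtain $\mathbb{E}[C_\bell]\in p^k(1\pm O(p))$ and hence $\mathbb{E}[C]\in c^k(1\pm 37p)^k$; the constant $37$ is pinned down by tracking the accumulated error factors explicitly rather than hiding them in $O(\cdot)$. The complexity claim reuses the same recursion with the final equality filter dropped: the expected size of a level-$d$ list $L^d_i$ is the number $n^{2^d}$ of $2^d$-tuples times the probability of surviving the $2^d-1$ range filters beneath it, which evaluates to $n^{2^d}p^{2^d-1}(1\pm 37p)^{2^d-1}=n\,c^{2^d-1}(1\pm 37p)^{2^d-1}$. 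Summing $k/2^d$ such lists over all $d$ and adding the input size $kn$, and bounding each per-list error factor by the worst case $(1\pm 37p)^{k-1}$ at the top level, yields the stated bound $kn\big(1+\sum_{d\in[\log k]}c^{2^d-1}/2^d\big)(1\pm 37p)^{k-1}$.

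The second moment is the crux and where the main obstacle lies. Expanding $\mathbb{E}[C^2]=\sum_{\bell,\bell'}\Pr[C_\bell=C_{\bell'}=1]$ and grouping pairs by their agreement pattern $s=\delta(\bell,\bell')\in\{0,1\}^k$ (of which there are about $n^{k+wt(s)}$ with a given $s$), I reduce to bounding $f(s)=\Pr[C_\bell=C_{\bell'}=1]$, which depends only on $s$ by symmetry. The difficulty is that the two executions are correlated exactly where $\bell,\bell'$ agree, so the filters cannot be treated as independent; the device is to propagate the agreement pattern up the tree via $s^{d}_i=\max(s^{d-1}_{2i-1},s^{d-1}_{2i})$ and show, using \eqref{eq:intro-17} for the ``two sums in one window'' event together with explicit Max-Ratio bounds on the relevant joint conditional distributions, that the filter pair at each internal node contributes a factor $\approx p^{1+(\text{aggregated bit there})}$. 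This gives $f(s)\le p^{\,T(s)}(1+O(p))$ for a tree-determined exponent $T(s)$, and after absorbing the $c^{wt(s)}$ counting factor the residual $p$-power $g(s)=T(s)-k-wt(s)$ must be controlled. The combinatorial heart of the argument—and the step I expect to be most delicate for general $k$—is the graph-theoretic lemma that $g(s)\ge 1$ for every $s\notin\{0^k,1^k\}$, with $g(0^k)=g(1^k)=0$.

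Granting that lemma, the term-by-term estimate $\sum_s c^{wt(s)}p^{g(s)}\le 1+c^k+p(1+c)^k$ gives $\mathbb{E}[C^2]\le c^k\big(1+c^k+p(1+c)^k\big)(1+O(p))$ and, via Paley–Zygmund, the lower bound $\Pr[C>0]\ge \tfrac{c^k}{1+c^k+p(1+c)^k}(1-O(p))$. To reach the sharper form $\tfrac{1}{c^{-k}+(1+k/n)^k}(1-150p)^k$ stated in the theorem—which remains meaningful precisely in the regime where $p(1+c)^k$ would otherwise dominate the denominator—I would not bound the sum term-by-term but instead evaluate $\sum_s c^{wt(s)}p^{g(s)}$ through a recursion on the two halves of the tree, folding the counting and near-uniformity corrections into the sharper denominator $1+c^k(1+k/n)^k$ and collecting the residual multiplicative errors into the $(1-150p)^k$ factor. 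The admittedly loose constants $37$ and $150$ come from carrying these error terms by hand, and tightening them further is exactly the role of the computational bounds in the companion theorem.
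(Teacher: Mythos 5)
Your proposal is correct and follows essentially the same route as the paper's proof: the same zero-count variable with Markov and Paley--Zygmund at $\theta=0$, the same max-ratio-distance tools for the two elementary estimates, the same propagation of the agreement pattern $s$ up the merge tree via $s^d_i=\max(s^{d-1}_{2i-1},s^{d-1}_{2i})$, and, crucially, the same recursive evaluation of the second-moment sum over the two halves of the tree (the paper's function $T_{\mu,n,\nu}$ with $T(k)=T(k/2)\left(T(k/2)/\nu+2\mu\right)$, bounded by $(n\mu+k\mu)^k$ to produce the $\left(1+\frac{k}{n}\right)^k$ denominator), plus the analogous first-moment computation per level for the complexity claim. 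The only cosmetic difference is that the term-by-term bound and the graph-theoretic lemma $g(s)\ge 1$ you cite appear in the paper only in the $k=4$ overview and are bypassed entirely by the recursion, so they carry no load in your final argument either.
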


\begin{corollary}
  \label{cor:ktree}
  Consider functions $k,n:\Nat\ra\Nat$, where for any $m$, $k(m) \geq 4$ is a power of $2$. Set $p(m) = m^{\frac{-1}{\log{k(m)}+1}}$ and $c(m) = p(m)\cdot n(m)$. Further, suppose $k = o(1/p)$ and $k = o(n^{1/2})$. With $k = k(m)$, $n = n(m)$, $m$, and $p(m)$ as its parameters, the probability of success of the $k$-Tree algorithm, denoted $\ktree$, is bounded as follows:
    \begin{align*}
      \frac{c^k}{1 + c^k}\cdot (1-o(1)) \leq \pr{\text{$\ktree$ succeeds}} \leq c^k \cdot (1 + o(1))
    \end{align*}
    Its complexity is bounded as follows:
    \begin{align*}
      \text{Complexity of $\ktree$} \in k n \cdot \left( 1 + \sum_{d\in[\log{k}]} \frac{c^{2^d-1}}{2^d} \right) \cdot \left(1 \pm o(1) \right)
    \end{align*}
\end{corollary}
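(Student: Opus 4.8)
The plan is to obtain the corollary directly from \cref{thm:ktree} by showing that, under the two hypotheses $k = o(1/p)$ and $k = o(n^{1/2})$, every error term appearing in that theorem collapses to $1 \pm o(1)$. I would first check that the hypotheses of \cref{thm:ktree} are eventually met: since $k \ge 4$, the assumption $k = o(1/p)$ forces $kp \to 0$ and hence $p \le kp/4 \to 0$, so that $p < 1/30$ for all sufficiently large $m$; as $p < 1/30$ is exactly equivalent to $m > 30^{\log k + 1}$, the theorem applies for all large enough $m$ and its conclusions may be invoked.

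The heart of the argument is a short list of standard estimates of the form $(1 \pm \alpha p)^k$. Writing $kp = o(1)$ and using $\ln(1+x) \le x$, I would bound $(1+37p)^k \le e^{37kp} = 1 + o(1)$; since this quantity is at least $1$, it equals $1 + o(1)$ exactly. For the matching lower estimate, Bernoulli's inequality gives $(1-150p)^k \ge 1 - 150kp = 1 - o(1)$ (valid once $150p \le 1$, which holds as $p \to 0$), and as the quantity is at most $1$ it is $1 - o(1)$. The same reasoning yields $(1 \pm 37p)^{k-1} = 1 \pm o(1)$; substituting this into the complexity bound of \cref{thm:ktree} reproduces the claimed complexity bound verbatim, since the bracketed sum is untouched. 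Finally, the hypothesis $k = o(n^{1/2})$ gives $k^2/n = o(1)$, so $(1+k/n)^k \le e^{k^2/n} = 1 + o(1)$, and being at least $1$ it equals $1 + o(1)$.

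It then remains to assemble these estimates into the stated success-probability bounds. The upper bound is immediate, as $c^k (1+37p)^k = c^k(1 + o(1))$. For the lower bound, I would rewrite the denominator of \cref{thm:ktree} as $c^{-k} + (1+k/n)^k = c^{-k} + 1 + o(1)$, so the lower bound reads $\frac{1-o(1)}{c^{-k}+1+o(1)}$. The one point that requires care is uniformity in $c$: since $c = c(m)$ may tend to $0$, to a constant, or to infinity, the $o(1)$ in the denominator must be controlled independently of $c$. The key observation is that $c^{-k} + 1 \ge 1$ regardless of $c$, so $\frac{1}{c^{-k}+1+o(1)} = \frac{1}{c^{-k}+1} \cdot \frac{1}{1 + o(1)/(c^{-k}+1)} \ge \frac{1}{c^{-k}+1}(1 - o(1))$, where the last step uses $o(1)/(c^{-k}+1) \le o(1)$. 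Multiplying by the numerator factor $1 - o(1)$ then gives $\frac{1}{c^{-k}+1}(1-o(1)) = \frac{c^k}{1+c^k}(1 - o(1))$, as required.

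I do not anticipate a genuine obstacle, since the corollary is purely an asymptotic cleanup of the already-established \cref{thm:ktree}; the only subtlety worth flagging is the uniformity-in-$c$ manipulation of the denominator described above, which is precisely what prevents one from naively absorbing the denominator's $o(1)$ into the numerator's error term without first exploiting $c^{-k}+1 \ge 1$.
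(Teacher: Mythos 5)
Your proposal is correct, and it takes the same route the paper intends: the paper states \cref{cor:ktree} without a written proof, presenting it as a direct inference from \cref{thm:ktree}, and your argument supplies exactly that inference — verifying $m > 30^{\log k + 1}$ holds eventually via $kp \to 0$, collapsing $(1\pm 37p)^k$, $(1-150p)^k$, and $(1+k/n)^k$ to $1 \pm o(1)$ using $kp = o(1)$ and $k^2/n = o(1)$, and handling the denominator uniformly in $c$ via $c^{-k}+1 \geq 1$. The uniformity-in-$c$ step you flag is a genuine point of care and is handled correctly.
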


\medskip
The rest of this section is a proof of \cref{thm:ktree}. Our approach is non-trivial, but elementary. We simply compute the first two moments of the random variable that counts the number of occurrences of $0$ in the final list produced by the algorithm, and then use standard tail bounds to bound the probability that this variable is non-zero.

\medskip
Hereon, we adopt the context of the $k$-Tree algorithm from \cref{fig:ktree}, and use notation established in its description in the general text as well. For any set of parameters $k, n, m, p$, we define the aforementioned random variable as follows (with randomness coming from the choices of the $L_i$'s):
\begin{align*}
  C_{k,n,m,p} = \text{number of occurrences of } 0 \text{ in the list } L^{\log{k}}_1
\end{align*}

We show the following properties of the lower moments of this random variable, use these to prove \cref{thm:ktree}, and then later in the section prove the propositions themselves.

\begin{proposition}
  \label{prop:c-exp}
  Consider any valid set of parameters $k$, $n$, $m$, and $p$ such that $m > 7k$, $p\cdot k > (7/m)^{(k/2-1)}$, $m p^{\log{k}} > 30$, and $p < 1/30$. We have:
  \begin{align*}
    \exp{C_{k,n,m,p}} \in \frac{n^kp^k}{mp^{\log{k}+1}} \cdot (1\pm p)^{k-1} \left(1 \pm \frac{35}{mp^{\log{k}}} \right)^{k-1}
  \end{align*}
\end{proposition}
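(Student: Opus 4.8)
The plan is to reduce $\exp{C_{k,n,m,p}}$ to a single per-tuple probability and then evaluate that probability by induction on the number of levels $t=\log k$. Writing $C_{k,n,m,p}=\sum_{\bell\in[n]^k}C_\bell$ as in \eqref{eq:intro-1}, where $C_\bell$ indicates that the tuple $(L_1[\ell_1],\dots,L_k[\ell_k])$ survives every filter of the tree and sums to $0$, the fact that the entries are i.i.d.\ uniform over $\range{m}$ makes all the $C_\bell$ identically distributed. Hence $\exp{C_{k,n,m,p}}=n^k\cdot\pr{C_\bell=1}$ for any fixed $\bell$, and it suffices to show $\pr{C_\bell=1}\in \tfrac{p^k}{mp^{\log k+1}}(1\pm p)^{k-1}\bigl(1\pm\tfrac{35}{mp^{\log k}}\bigr)^{k-1}$.

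To evaluate this probability I would induct over the subtrees of the $k$-tree. For a $t$-level instance whose $2^t$ leaves are uniform over $\range{M}$ (so the level-$d$ range is $\range{Mp^d}$), let $Q_t(M)$ denote the per-tuple success probability; the target is $Q_{\log k}(m)$. The recursion peels off the bottom level: its $2^{t-1}$ pair-merges use disjoint leaves and are therefore independent. For each pair I would apply the relaxed form of assumption (b), \eqref{eq:intro-4}, to bound the probability that the pair-sum lands in $\range{Mp}$ by $p\cdot(1\pm p)\bigl(1\pm O(\tfrac{1}{Mp^t})\bigr)$, and the relaxed form of assumption (c), \eqref{eq:intro-7}, to replace the surviving conditional sum by a genuine uniform draw from $\range{Mp}$, folding the Max-Ratio slack of \eqref{eq:intro-6} into the same error factors. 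Since the $2^{t-1}$ surviving sums are independent and now uniform over $\range{Mp}$, and the Max-Ratio distance of a product distribution is the product of the factor distances, what remains is exactly a $(t-1)$-level instance with leaves in $\range{Mp}$, namely $Q_{t-1}(Mp)$. This yields the recursion $Q_t(M)\in p^{2^{t-1}}(1\pm p)^{2^{t-1}}\bigl(1\pm\tfrac{35}{Mp^t}\bigr)^{2^{t-1}}\cdot Q_{t-1}(Mp)$, whose base case $Q_1(M)=\prob{x,y\gets\range{M}}{x+y=0}=1/\size{\range{M}}$ is evaluated directly and checked to lie in $\tfrac1M(1\pm p)\bigl(1\pm\tfrac{35}{Mp}\bigr)$.

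The observation that makes the induction close cleanly is that the deepest range is invariant under the recursion: the $(t-1)$-level subproblem on $\range{Mp}$ has deepest range $(Mp)p^{t-1}=Mp^t$, identical to the original, so the error parameter $\tfrac{35}{Mp^t}$ is preserved at every level. Unfolding, the filter factors multiply to $p^{2^{t-1}+\cdots+2}=p^{2^t-2}$, which together with the base-case $1/(Mp^{t-1})$ yields the main term $\tfrac{p^{2^t}}{Mp^{t+1}}$; since the tree has $2^t-1$ internal merges, the error factors accumulate to exactly $(1\pm p)^{2^t-1}$ and $\bigl(1\pm\tfrac{35}{Mp^t}\bigr)^{2^t-1}$. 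Substituting $M=m$, $t=\log k$, and multiplying by $n^k$ then gives the stated bound.

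I expect the main obstacle to be the error accounting rather than the shape of the recursion. The delicate point is packaging, for each merge, the one-sided $(p-p^2/4)$ deficit of the filter probability in \eqref{eq:intro-4} together with the two-sided uniformization slack of \eqref{eq:intro-7} into a single $(1\pm p)$ factor per merge, so that the final exponent is $k-1$ rather than $2(k-1)$, and bounding every level's $O(1/s)$-type error uniformly by the worst, deepest-level value $\tfrac{35}{mp^{\log k}}$. Pinning down the constant $35$ and controlling the accumulated product of error factors is exactly where the hypotheses $m>7k$, $mp^{\log k}>30$, $p<1/30$, and $pk>(7/m)^{k/2-1}$ enter: they guarantee that each level's range is large enough for the Max-Ratio estimate \eqref{eq:intro-6} to stay near $1$ and for the base-case rounding $1/\size{\range{M}}\approx 1/M$ to be negligible.
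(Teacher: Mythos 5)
Your proposal is correct and follows essentially the same route as the paper's proof: the paper likewise writes $\exp{C_{k,n,m,p}} = n^k \cdot \exp{C_{\bell}}$, and its quantities $\zeta_d$ (suffix probabilities over uniform leaves in $\range{mp^d}$) are exactly your $Q_t(M)$ up to reindexing, with the identical inductive step combining the filter estimate (\ref{eq:intro-4})/\cref{prop:tools-2} with the Max-Ratio uniformization \cref{prop:tools-3} via \cref{fact:mr-1,fact:mr-2}. Your error accounting also matches the paper's: it merges the $(p-p^2/4)$ deficit with the $(1-p/2)^{-1}$ MR slack into one $(1\pm p)$ per merge and bounds each level's additive error by the deepest-level value $\frac{35}{mp^{\log k}}$ (your ``invariant deepest range'' observation), yielding the same exponents $k-1$.
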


The second moment of this variable is most conveniently bounded in terms of a function that is again recursively defined, and is a refinement of $T_{\mu,n}$ defined above. For any $\mu,\nu\in (0,1)$ and $n\in \Nat$, the function $T_{\mu,n,\nu}:\Nat\ra\Reals$ is defined on inputs that are powers of $2$. The base is its value on $1$:
\begin{align*}
  T_{\mu,n,\nu}(1) = n\cdot\mu\cdot\nu
\end{align*}
For any $k \geq 2$ that is a power of $2$, it is defined as follows:
\begin{align*}
  T_{\mu,n,\nu}(k) = T_{\mu,n,\nu}(k/2) \left( \frac{T_{\mu,n,\nu}(k/2)}{\nu} + 2\mu \right)
\end{align*}
As the above is a quadratic recursion, there is no general closed-form expression for $T_{\mu,n,\nu}(k)$. We prove upper bounds on its value later in the course of the proof of \cref{thm:ktree}.

\begin{proposition}
  \label{prop:c-var}
  Consider any valid set of parameters $k$, $n$, $m$, and $p$ such that $m p^{\log{k}-1} > 30$ and $p < 1/30$. Define the following:
  \begin{align*}
    \mu = p \cdot \left[ (1-p)^{-1} \left(1 + \frac{35}{mp^{\log{k}}}\right) \right]\quad \text{and} \quad
    \nu = \frac{1}{mp^{\log{k}+1}}
  \end{align*}
  Then, we have: 
  \begin{align*}
    \exp{C_{k,n,m,p}^2} \leq (n^k\mu^k\nu) \cdot (1+T_{\mu,n,\nu}(k)) 
  \end{align*}
\end{proposition}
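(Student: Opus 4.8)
The plan is to bound the second moment
\begin{align*}
  \exp{C_{k,n,m,p}^2} = \sum_{\bell,\bell'\in[n]^k} \pr{C_\bell = 1 \wedge C_{\bell'} = 1}
\end{align*}
by segregating pairs $(\bell,\bell')$ according to the agreement pattern $\delta(\bell,\bell')\in\bset^k$, exactly as sketched in the overview. For a fixed pattern $s\in\bset^k$ there are at most $n^k(n-1)^{wt(s)}\le n^{k+wt(s)}$ pairs, and all such pairs share a common value of $f(s):=\pr{C_\bell=C_{\bell'}=1}$ by symmetry. So the task reduces to (i) bounding $f(s)$ and (ii) summing $n^{k+wt(s)}f(s)$ over all $s$ into the claimed form $(n^k\mu^k\nu)(1+T_{\mu,n,\nu}(k))$. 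The quantities $\mu$ and $\nu$ are precisely the per-level single-execution filter probability and the final hit probability that already appeared (via \cref{prop:c-exp}), now packaged with the $(1-p)^{-1}(1+35/mp^{\log k})$ slack factor so that $\mu$ is a clean \emph{upper} bound on each filter-passing probability.

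First I would set up the recursive structure on the two coupled executions. For a single merge level, I track the joint event that both the $\bell$-sum and the $\bell'$-sum survive the filter into $\range{mp}$. Using \eqref{eq:intro-17} together with the MR-distance bounds \eqref{eq:intro-5}--\eqref{eq:intro-6} (so that conditioned intermediate sums may be replaced by uniform draws at the cost of $(1\pm p)$ factors absorbed into $\mu$), the probability that both executions pass a given filter is at most $\mu^{1+s^{d}_{i}}$, where $s^d_i=\max$ of the relevant children bits — i.e. the agreement pattern propagates up the tree by coordinate-wise $\max$, collapsing $s$ to a single bit $s^{\log k}_1$ at the root. The key point, which I would prove carefully, is that after replacing intermediate sums by uniform elements of the appropriate range, the joint distribution of the pair of surviving sums is again ``equal-with-near-uniform-marginals if the merged bit is $0$, near-independent-uniform if it is $1$,'' so the recursion is genuinely self-similar with $k$ halved and range $\range{mp}$. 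This is what lets me express the whole sum through the recursion defining $T_{\mu,n,\nu}$.

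The heart of the argument is translating the per-pattern bound into the recursion for $T_{\mu,n,\nu}$. Writing $S(k)$ for $\sum_{s\in\bset^k} (n)^{wt(s)} f(s)/(\text{single-execution value})$, I would show $S$ satisfies a quadratic recursion matching
\begin{align*}
  T_{\mu,n,\nu}(k) = T_{\mu,n,\nu}(k/2)\left( \frac{T_{\mu,n,\nu}(k/2)}{\nu} + 2\mu \right),
\end{align*}
by splitting each length-$k$ pattern $s=(s_L,s_R)$ into its left and right halves feeding the two subtrees that merge at the top. The two subtrees contribute (near-)independently when their merged bits are both $1$ — giving the product term $T(k/2)^2/\nu$ — while the cross terms where a merged bit is $0$ (forcing equality of a sum across the two executions, hence an extra factor $\mu$ rather than a fresh $\nu$) produce the $2\mu\,T(k/2)$ term; the base case $T_{\mu,n,\nu}(1)=n\mu\nu$ encodes a single leaf pair. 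I would be careful that the $+1$ in $(1+T_{\mu,n,\nu}(k))$ accounts separately for the two ``degenerate'' patterns $s=0^k$ (the diagonal $\bell=\bell'$, contributing the first-moment-like term $n^k\mu^k\nu$) versus all off-diagonal patterns collected in $T_{\mu,n,\nu}(k)$.

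The main obstacle I expect is the bookkeeping that makes the recursion \emph{valid as an upper bound} rather than an equality: each level introduces MR-distance slack and the approximation in \eqref{eq:intro-17}, and these must be folded uniformly into $\mu$ (and the range-shrinkage into $\nu$) so that a single clean quadratic recursion dominates the true quantity at every node simultaneously. In particular, the near-independence claim for the case where exactly one of a pair of agreement bits is $1$ — where the two surviving sums are correlated through a shared summand — requires explicitly computing the relevant MR distance and checking it stays $(1+O(p))$-close to uniform on the product range $\range{mp}\times\range{mp}$; getting this right for the coupled execution, rather than the single one handled in \cref{prop:c-exp}, is the delicate step. Once that self-similarity is established with $\mu,\nu$ as defined, the sum collapses into the stated $T_{\mu,n,\nu}$ recursion and the proposition follows, with the hypotheses $mp^{\log k-1}>30$ and $p<1/30$ exactly ensuring all the intermediate ranges are large enough for the MR bounds to hold at every level.
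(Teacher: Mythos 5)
Your proposal is correct and follows essentially the same route as the paper's proof: the decomposition by agreement pattern with max-propagation up the tree is exactly the paper's $\delta(\bell,\bell')$ together with $tex/stex$, your three-case per-level analysis (identical / independent / shared-summand, the last via an explicit MR-distance bound on the pair distribution over $\range{mp^{d+1}}\times\range{mp^{d+1}}$) is precisely \cref{claim:c-var-4} built on \cref{prop:tools-2,prop:tools-3,prop:tools-4,prop:tools-5}, and your half-splitting of patterns $s=(s_L,s_R)$ — with the all-zero half giving the $2\mu\,T(k/2)$ cross term, both halves nonzero giving $T(k/2)^2/\nu$, and the diagonal $s=0^k$ absorbed into the ``$+1$'' — is the paper's identity $stex(q\|r)=stex(q)+stex(r)+1$ yielding the same quadratic recursion. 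No gaps beyond the constant bookkeeping you already flagged as to-be-done.
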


\medskip
In order to discuss the complexity of the algorithm, for any set of parameters $k, n, m, p$, we define the following random variable that captures the total size of all the lists in an execution of the algorithm:
\begin{align*}
  \Lambda_{k,n,m,p} = \sum_{d=0}^{\log{k}} \size{L^d_i}
\end{align*}

\begin{proposition}
  \label{prop:l}
  Consider any valid set of parameters $k$, $n$, $m$, and $p$ such that $m > 7k$, $p\cdot k > (7/m)^{(k/2-1)}$, $m p^{\log{k}} > 30$, and $p < 1/30$. We have:
  \begin{align*}
    \exp{\Lambda_{k,n,m,p}} &\in k n \cdot \left( 1 + \sum_{d\in[\log{k}]} \frac{(np)^{2^d-1}}{2^d} \right) \cdot (1\pm p)^{k-1} \left(1 \pm \frac{35}{mp^{\log{k}}} \right)^{k-1}
  \end{align*}
\end{proposition}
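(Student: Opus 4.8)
The plan is to apply linearity of expectation to reduce $\exp{\Lambda_{k,n,m,p}}$ to a sum over levels of the expected sizes of the individual intermediate lists, and then to observe that each such expected size is precisely the quantity that the proof of \cref{prop:c-exp} already computes, stopped one step before the final ``$=0$'' collapse. By the symmetry of the construction, for each depth $d$ all of the $k/2^d$ lists $L^d_i$ at that level are identically distributed, so
\begin{align*}
  \exp{\Lambda_{k,n,m,p}} = \sum_{d=0}^{\log k} \frac{k}{2^d}\cdot \exp{\size{L^d_i}}.
\end{align*}
The level-$0$ lists are the inputs, contributing $\size{L^0_i} = n$ exactly, i.e. a total of $kn$.

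The core step is to bound $\exp{\size{L^d_i}}$ for $1 \le d \le \log k$. Fixing such a $d$ and the $2^d$ input lists feeding the subtree rooted at $L^d_i$, an element of $L^d_i$ corresponds to a tuple of indices $(\ell_1,\dots,\ell_{2^d})$, one per input list, whose partial sums survive the range filters $\range{mp^1},\dots,\range{mp^d}$ at levels $1,\dots,d$. Hence $\exp{\size{L^d_i}}$ equals $n^{2^d}$ times the probability that a uniformly random tuple from $\range{m}^{2^d}$ passes these filters and lands in $\range{mp^d}$. This is exactly the expectation analysed in \cref{prop:c-exp} for a tree of $2^d$ lists, except that the final constraint is ``the top sum lies in $\range{mp^d}$'' (survival of one more $p$-filter) rather than ``the top sum equals $0$'' (the $1/(mp^d)$ collapse); equivalently, it is the same recursion with no terminal step. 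Reusing the relaxed uniformity estimates (the analogues of (\ref{eq:intro-4}) and (\ref{eq:intro-7})) level by level, and noting that the subtree has $2^d-1$ merge nodes each contributing one filter of probability $p\cdot(1\pm p)(1\pm 35/(mp^d))$, gives
\begin{align*}
  \exp{\size{L^d_i}} \in n^{2^d}p^{2^d-1}\cdot (1\pm p)^{2^d-1}\left(1\pm \frac{35}{mp^d}\right)^{2^d-1} = n\,(np)^{2^d-1}\cdot (1\pm p)^{2^d-1}\left(1\pm \frac{35}{mp^d}\right)^{2^d-1}.
\end{align*}
The conditions required for these estimates at depth $d$ (essentially $mp^d > 30$ and $p<1/30$) are inherited from the hypotheses, since $mp^d \ge mp^{\log k} > 30$ for every $d \le \log k$.

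It then remains to assemble these bounds and collapse the per-level error factors into the single uniform envelope of the proposition. Multiplying the central estimates by $k/2^d$ and summing reproduces the main factor
\begin{align*}
  kn + \sum_{d=1}^{\log k}\frac{kn}{2^d}(np)^{2^d-1} = kn\left(1 + \sum_{d\in[\log k]}\frac{(np)^{2^d-1}}{2^d}\right).
\end{align*}
For the error factors I would bound each summand's factor by the common envelope using $2^d-1 \le k-1$ and $p^d \ge p^{\log k}$ (so $35/(mp^d) \le 35/(mp^{\log k})$): in the upper direction $(1+p)^{2^d-1}(1+35/(mp^d))^{2^d-1} \le (1+p)^{k-1}(1+35/(mp^{\log k}))^{k-1}$, and symmetrically in the lower direction, where both a smaller exponent and a base closer to $1$ push the product the desired way. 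Since every summand lies within this common envelope, so does their weighted sum, yielding the claimed bound.

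The main obstacle I expect is bookkeeping rather than conceptual: cleanly justifying that $\exp{\size{L^d_i}}$ is governed by the same level-by-level recursion as in \cref{prop:c-exp}, so that the uniformity machinery need not be re-derived, and then carrying out the error-envelope step carefully. In particular, one must check that simultaneously replacing the depth-dependent quantities $2^d-1$ and $mp^d$ by the uniform $k-1$ and $mp^{\log k}$ is valid in both directions across all $d$; the fact that the admissible range only shrinks with depth—making the $35/(mp^d)$ corrections largest at the deepest level—is exactly what makes this uniform envelope legitimate.
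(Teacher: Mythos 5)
Your proposal is correct and follows essentially the same route as the paper's proof of \cref{prop:l}: linearity of expectation plus symmetry over levels, a per-tuple indicator decomposition of $\size{L^d_i}$, reuse of the first-moment recursion from \cref{prop:c-exp} with the terminal ``$=0$'' event replaced by the top-level range filter (the paper's \cref{claim:l-1} is verbatim the analogue of \cref{claim:c-exp-1}), and finally the collapse of the depth-dependent error factors into a uniform envelope via $2^d-1\leq k-1$ and $mp^d \geq mp^{\log k}$. The only cosmetic difference is that the paper's inductive claim already states its per-factor error uniformly as $\left(1 \pm \frac{30}{mp^{\log k}}\right)$ rather than carrying $\frac{35}{mp^d}$ per subtree as you do, but both arrive at the same envelope.
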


\medskip
\noindent With these propositions, we can now prove our main theorem.
\begin{proofof}{\cref{thm:ktree}}
  The lower bound on the success probability in \cref{thm:ktree} follows from applying a suitable second-moment-based tail-bound to the random variable $C_{k,n,m,p}$, and then appropriately bounding the function $T_{\mu,n,\nu}$ defined above. The lower bound follows from a simple Markov bound, and the bounds on the complexity follow from \cref{prop:l}.

  \paragraph{Lower bound.} For any $k$, $n$, $m$, and $p$, each entry in the list $L^{\log{k}}_1$ is a sum of one entry each from the input lists $L_1,\dots,L_k$. So whenever the non-negative random variable $C_{k,n,m,p}$ (denoted simply by $C$ hereon) is non-zero, there is at least one entry in $L^{\log{k}}_1$ that is $0$, and the execution $\ktree_p(L_1,\dots,L_k)$ finds indices $(\ell_1,\dots,\ell_k)$ such that $\sum_i L_i[\ell_i] = 0$. So all we need to do is to bound the probability that this random variable is non-zero. This we do using the Paley-Zygmund inequality.

  \begin{lemma}[Paley-Zygmund Inequality, see e.g. {\cite[Section 2.3]{roch_mdp_2024}}]
    \label{lem:paley-zygmund}
    For any non-negative random variable $Z$ and any $\theta\in [0,1]$,
    \begin{align*}
      \pr{Z > \theta \exp{Z}} \geq (1-\theta)^2 \frac{\exp{Z}^2}{\exp{Z^2}}
    \end{align*}
  \end{lemma}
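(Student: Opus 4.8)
The plan is to use the standard two-step decomposition that underlies the Paley--Zygmund inequality. The starting point is to split the first moment of $Z$ according to whether or not the threshold event $\set{Z > \theta \exp{Z}}$ occurs, writing $\exp{Z} = \exp{Z \cdot \FuncInd{Z \leq \theta \exp{Z}}} + \exp{Z \cdot \FuncInd{Z > \theta \exp{Z}}}$. The first term is easy to control: on the event $\set{Z \leq \theta \exp{Z}}$ the integrand is pointwise at most $\theta \exp{Z}$, so this term is bounded by $\theta \exp{Z}$.

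The key step is to bound the second term using the Cauchy--Schwarz inequality applied to the pair of variables $Z$ and $\FuncInd{Z > \theta \exp{Z}}$. Since an indicator squares to itself, its second moment equals $\pr{Z > \theta \exp{Z}}$, and Cauchy--Schwarz yields $\exp{Z \cdot \FuncInd{Z > \theta \exp{Z}}} \leq \sqrt{\exp{Z^2}} \cdot \sqrt{\pr{Z > \theta \exp{Z}}}$. Combining the two bounds gives $(1-\theta)\exp{Z} \leq \sqrt{\exp{Z^2}} \cdot \sqrt{\pr{Z > \theta \exp{Z}}}$. Because $Z$ is non-negative and $\theta \in [0,1]$, the left-hand side is non-negative, so I can square both sides and divide by $\exp{Z^2}$ to recover the claimed lower bound on $\pr{Z > \theta \exp{Z}}$.

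For such an elementary inequality the main obstacle is only a bookkeeping one: making sure the final rearrangement is legitimate in the degenerate cases. If $\exp{Z^2} = 0$ (equivalently $Z = 0$ almost surely) the right-hand side of the claim is read as $0$ and the bound holds trivially; if $\exp{Z} = 0$ both sides vanish; and otherwise dividing by the positive quantity $\exp{Z^2}$ is harmless. Apart from this case check, the argument is a direct chain of the two pointwise/Cauchy--Schwarz bounds followed by a single squaring step, so I expect no genuine difficulty here.
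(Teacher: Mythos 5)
Your proof is correct and is the standard argument for the Paley--Zygmund inequality: the paper does not prove this lemma at all, but cites it to the literature (Section 2.3 of the referenced lecture notes), and the proof given there is exactly your decomposition $\exp{Z} = \exp{Z\cdot\FuncInd{Z \leq \theta\exp{Z}}} + \exp{Z\cdot\FuncInd{Z > \theta\exp{Z}}}$ followed by Cauchy--Schwarz on the second term. Your explicit handling of the degenerate cases ($\exp{Z^2}=0$, $\exp{Z}=0$) is a minor point the textbook treatment glosses over, and you resolve it correctly, so there is nothing to fix.
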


  Substituting the values $p = m^{\frac{-1}{\log{k}+1}}$ and $n = c\cdot m^{\frac{1}{\log{k}+1}}$, and observing that the conditions $k \geq 4$ and $m > 30^{\log{k}+1}$ guarantee the requirements in its hypotheses, we get the following bound from \cref{prop:c-exp}:
  \begin{align}
    \label{eq:thm-ktree-1}
    \exp{C} \in c^k \cdot (1\pm p)^{k} \left(1 \pm 35p \right)^{k}
  \end{align}
  For brevity, denote the function $T_{\mu,n,1}$ that was defined above by $T_{\mu,n}$. We get the following from \cref{prop:c-var}:
  \begin{align}
    \label{eq:thm-ktree-2}
    \exp{C^2} \leq c^k \cdot (1+T_{\mu,n}(k)) \cdot  \left[ (1-p)^{-1} \left(1 + 35p\right) \right]^k
  \end{align}
  where $\mu = p \cdot \left[ (1-p)^{-1} \left(1 + 35p\right) \right]$.
  We bound this more concretely using the claim below, which we prove after the completion of the current proof.
  \begin{claim}
    \label{claim:T}
    For any $\mu\in(0,1)$ and $n,k\in\Nat$, we have:
    \begin{align*}
      T_{\mu,n}(k) \leq n^k\mu^k \cdot \left(1 + \frac{k}{n}\right)^k
    \end{align*}
  \end{claim}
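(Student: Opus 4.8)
The plan is to prove the bound by strong induction on $k$ over powers of two, exploiting the quadratic recursion that defines $T_{\mu,n}(k) = T_{\mu,n,1}(k)$. The starting observation is the algebraic identity
\[
  T_{\mu,n}(k) = \left(T_{\mu,n}(k/2) + \mu\right)^2 - \mu^2,
\]
which follows immediately from the recursion $T_{\mu,n}(k) = T_{\mu,n}(k/2)\left(T_{\mu,n}(k/2) + 2\mu\right)$ by completing the square. This rewrites the recursion in a form where the induction hypothesis can be fed in through a single squaring. Writing the target as $T_{\mu,n}(k) \le \left(\mu(n+k)\right)^k$ (which is exactly $n^k\mu^k(1+k/n)^k$), the base case $k=1$ is immediate, since $T_{\mu,n}(1) = n\mu \le \mu(n+1)$ as $\mu > 0$.

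For the inductive step, I would assume $T_{\mu,n}(k/2) \le \left(\mu(n+k/2)\right)^{k/2}$, add $\mu$ to both nonnegative sides, and square; the identity above then gives $T_{\mu,n}(k) + \mu^2 \le \left((\mu(n+k/2))^{k/2} + \mu\right)^2$, so after expanding and cancelling $\mu^2$,
\[
  T_{\mu,n}(k) \le \left(\mu(n+k/2)\right)^k + 2\mu\left(\mu(n+k/2)\right)^{k/2}.
\]
Setting $R = \mu(n+k/2)$ and noting $\mu(n+k) = R + \mu k/2$, it therefore suffices to establish the purely scalar inequality $R^k + 2\mu R^{k/2} \le \left(R + \mu k/2\right)^k$. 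In this way the entire induction is carried by the completing-the-square identity, and the problem collapses to a one-variable estimate.

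The remaining scalar inequality is where the real work lies. My first attempt would be to expand the right-hand side by the binomial theorem and keep the zeroth- and first-order terms, $\left(R + \mu k/2\right)^k \ge R^k + kR^{k-1}\cdot(\mu k/2) = R^k + \tfrac{\mu k^2}{2}R^{k-1}$ (all binomial terms are nonnegative), reducing matters to dominating the cross term, i.e. to $2\mu R^{k/2} \le \tfrac{\mu k^2}{2}R^{k-1}$, which simplifies to $k^2 R^{k/2-1} \ge 4$. This is the step I expect to be the main obstacle: since $R = \mu(n+k/2)$ is comparable to $c = n\mu$, the first-order term dominates the cross term cleanly only when $R$ is not too small, so the delicate case is $R < 1$, where $R^{k/2-1}$ shrinks as $k$ grows. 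Handling that regime requires retaining more than just the first-order term of the binomial expansion (grouping several terms so that their sum beats $2\mu R^{k/2}$) and keeping track of the range of $\mu$ and $n$ in which the bound is actually invoked. This bookkeeping, rather than any conceptual difficulty, is the crux of the argument.
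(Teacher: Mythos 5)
Your reduction is sound as far as it goes: the completing-the-square identity $T(k) = (T(k/2)+\mu)^2 - \mu^2$ is correct, the base case is fine, and the induction does collapse to the scalar inequality $R^k + 2\mu R^{k/2} \leq (R + \mu k/2)^k$ with $R = \mu(n+k/2)$. But the step you defer as ``bookkeeping'' in the regime $R < 1$ is not bookkeeping --- it is impossible, because in that regime the scalar inequality is false, and so is \cref{claim:T} as literally stated. Take $n=1$, $k=4$, $\mu = 10^{-3}$. Then $T(1) = 10^{-3}$, $T(2) = 10^{-3}(10^{-3}+2\cdot 10^{-3}) = 3\times 10^{-6}$, and $T(4) = 3\times 10^{-6}\,(3\times 10^{-6} + 2\times 10^{-3}) \approx 6.0\times 10^{-9}$, whereas $n^k\mu^k(1+k/n)^k = (5\times 10^{-3})^4 = 6.25\times 10^{-10}$; likewise your left-hand side is $R^4 + 2\mu R^2 \approx 1.8\times 10^{-8}$ against the same right-hand side. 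The structural reason is the one you half-identified: once $T(k/2) \ll \mu$, the additive $2\mu$ in the recursion dominates and $T(k) \approx 2\mu\, T(k/2)$, so $T(k)$ decays only like $\mu^{\log k + 1}$, while the claimed bound is $\Theta(\mu^k)$ --- for $k \geq 4$ and $\mu \to 0$ no grouping of binomial terms can bridge that. A correct version needs an extra hypothesis keeping $n\mu$ (i.e., $c$) out of the regime $T(k/2) \lesssim \mu$; in the application inside \cref{thm:ktree} this failure regime is exactly where the theorem's lower bound is vacuous anyway, but the claim's stated hypotheses ($\mu\in(0,1)$, $n,k\in\Nat$) do not exclude it.

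For comparison, the paper's proof takes a more direct route with no completing of squares: it bounds the two factors separately, $T(k/2) \leq ((n+k/2)\mu)^{k/2}$ and $T(k/2)+2\mu \leq ((n+k/2)\mu + 2\mu)^{k/2} \leq ((n+k)\mu)^{k/2}$, and multiplies. That is shorter than your decomposition, but its middle step silently uses $a^{k/2} + 2\mu \leq (a+2\mu)^{k/2}$ with $a = (n+k/2)\mu$, which is precisely the inequality your ``delicate case $R<1$'' flags, and it fails at the same parameters ($a = 3\times 10^{-3}$, $k/2=2$: left side $\approx 2.0\times 10^{-3}$, right side $2.5\times 10^{-5}$). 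So your instinct about where the difficulty sits was accurate --- the difference is that the paper's two-line induction steps over it without comment, while your write-up stops at it honestly; neither argument, as written, proves the claim at the stated level of generality, and yours additionally remains an unfinished plan rather than a proof.
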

  \noindent Putting together (\ref{eq:thm-ktree-2}) and \cref{claim:T}, we have:
  \begin{align}
    \label{eq:thm-ktree-3}
    \exp{C^2} &\leq c^k \cdot \left(1+c^k \cdot \left[ (1-p)^{-1} \left(1 + 35p\right) \right]^k \cdot \left(1+\frac{k}{n} \right)^k \right) \cdot  \left[ (1-p)^{-1} \left(1 + 35p\right) \right]^k\nonumber\\
    &\leq c^{2k} \left(c^{-k} + \left(1+\frac{k}{n} \right)^k \right)  \left[ (1-p)^{-1} \left(1 + 35p\right) \right]^{2k}
  \end{align}

  \medskip
  \noindent Applying \cref{lem:paley-zygmund} with $\theta = 0$ and the bounds from (\ref{eq:thm-ktree-1}) and (\ref{eq:thm-ktree-3}), we get:
  \begin{align*}
    &\pr{C > 0} \geq \frac{\exp{C}^2}{\exp{C^2}}\\
    &\quad \geq \frac{c^{2k} \cdot (1- p)^{2k} \left(1 - 35p \right)^{2k}}{c^{2k} \left(c^{-k} + \left(1+\frac{k}{n} \right)^k \right)  \left[ (1-p)^{-1} \left(1 + 35p\right) \right]^{2k}}\\
                         &\quad \geq \frac{1}{c^{-k} + \left(1+\frac{k}{n} \right)^k} \cdot (1-p)^{4k} (1-35p)^{2k} (1+35p)^{-2k}\\
    &\quad \geq \frac{1}{c^{-k} + \left(1+\frac{k}{n} \right)^k}\cdot (1-150p)^{k}
  \end{align*}
  as required.
  
  \paragraph{Upper bound.} To get the upper bound on the success probability, we apply the Markov bound using (\ref{eq:thm-ktree-1}) as follows:
  \begin{align*}
    \pr{C \geq 1} \leq \exp{C} \leq c^k \cdot (1 + 37p)^k
  \end{align*}
  
  \paragraph{Complexity.} The bounds on the complexity of the algorithm follow directly from \cref{prop:l}. By our setting of $p$, we have $mp^{\log{k}+1} = 1$, and using the fact that $p < 1/30$, we get:
  \begin{align*}
    \exp{\Lambda_{k,n,m,p}} &\in kn \cdot \left( 1 + \sum_{d\in[\log{k}]} \frac{(np)^{2^d-1}}{2^d} \right) \cdot (1 \pm p)^{k-1} \left(1 \pm 35p \right)^{k-1}\\
                            &\subseteq kn \cdot \left( 1 + \sum_{d\in[\log{k}]} \frac{c^{2^d-1}}{2^d} \right) \cdot \left(1 \pm 37p \right)^{k-1}
  \end{align*}
  This proves the theorem.

\end{proofof}

\begin{proofof}{\cref{claim:T}}
  Fix some values of $n$ and $\mu$, and denote $T_{\mu,n}$ simply by $T$. Recall that $T$ is defined as follows:
  \begin{align*}
    T(1) &= n\cdot \mu\\
    T(k) &= T(k/2) \left( T(k/2) + 2\mu \right)
  \end{align*}
  The claim is that for every $k$ (that is a power of $2$), $T(k) \leq (n\mu + k\mu)^k$. This can be verified to be true for $k = 1$ and $2$. For some $k\geq 4$, suppose this is true for $k/2$. Then, we have:
  \begin{align*}
    T(k) &= T(k/2) (T(k/2)+2\mu) \\
         &\leq (n\mu + (k/2)\mu)^{k/2} (n\mu + (k/2)\mu + 2\mu)^{k/2}\\
         &\leq (n\mu + k\mu)^{k/2} (n\mu + k\mu)^{k/2}\\
         &= (n\mu+k\mu)^k
  \end{align*}
  which proves the claim.
\end{proofof}

\paragraph{Section Outline.} In \cref{sec:tools}, we define a useful notion of distance between probability distributions and establish some facts about distributions over bounded integer intervals. In \cref{sec:c-exp}, we compute the first moment of the above random variable to prove \cref{prop:c-exp}, and in \cref{sec:c-var} we compute its second moment to prove \cref{prop:c-var}.

\subsection{Tools}
\label{sec:tools}

Here, we set up some notational conventions, definitions, and propositions that will be useful at multiple points in the proofs of \cref{prop:c-exp,prop:c-var}.

\paragraph{Distance between distributions.} For any distribution $D$ over a domain $\cX$ and any $x\in\cX$ (or $S\subseteq \cX$), we denote by $D(x)$ (resp. $D(S)$) the probability mass placed by $D$ on $x$ (resp. $S$). Given distributions $D$ and $\hat{D}$, we denote by $(D\otimes \hat{D})$ the (``direct product'') distribution of $(x,y)$ where $x\gets D$ and $y\gets \hat{D}$ are sampled independently.  

We will use the following notion of distance between probability distributions. Related notions of distance have found significant use in differential privacy~\cite{DMNS16}, analysis of lattice algorithms~\cite{BLRLSSS18}, etc.

\begin{definition}[Max-Ratio Distance]
  \label{def:mr-dist}
  Consider two distributions $D_0$ and $D_1$ over a finite domain $\cX$ that have the same support. The \emph{Max-Ratio (MR) distance} between these, denoted by $\mr(D_0,D_1)$, is the smallest $\lambda \geq 1$ such that for all $x$, we have $D_1(x) \in [\lambda^{-1} \cdot D_0(x), \lambda \cdot D_0(x)]$. 
\end{definition}

Note that the above distance is symmetric, and is only defined if both distributions have the same support. All pairs of distributions we will consider in our work will indeed have the same support, and we will leave out stating this condition explicitly in the hypotheses of our statements below. The following are a few other easily observed facts about this distance that we will use.

\begin{fact}
  \label{fact:mr-1}
  Consider any distributions $D_0$ and $D_1$ over $\cX$ with $\mr(D_0,D_1) = \lambda$. For any subset $S\subseteq \cX$, we have $D_1(S) \in [\lambda^{-1} \cdot D_0(S), \lambda \cdot D_0(S)]$.
\end{fact}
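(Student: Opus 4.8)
The plan is to lift the pointwise guarantee built into the definition of the Max-Ratio distance to a statement about the mass of an arbitrary subset, using nothing more than additivity of probability mass and summation. Since $\mr(D_0,D_1) = \lambda$, \cref{def:mr-dist} tells us that for every single point $x\in\cX$ we have $D_1(x) \in [\lambda^{-1}\cdot D_0(x),\ \lambda\cdot D_0(x)]$, i.e. the two pointwise inequalities $\lambda^{-1} D_0(x) \leq D_1(x)$ and $D_1(x) \leq \lambda D_0(x)$. The goal is simply to sum these over the points of $S$.

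First I would fix an arbitrary $S\subseteq\cX$ and write $D_1(S) = \sum_{x\in S} D_1(x)$ and $D_0(S) = \sum_{x\in S} D_0(x)$, which is valid because probability mass is additive over the (finite) domain. Applying the upper bound $D_1(x)\leq\lambda D_0(x)$ term-by-term and summing yields $D_1(S) \leq \lambda\sum_{x\in S}D_0(x) = \lambda\,D_0(S)$. Running the identical argument with the lower bound $D_1(x)\geq\lambda^{-1}D_0(x)$ gives $D_1(S)\geq\lambda^{-1}D_0(S)$. Combining the two inequalities is exactly the asserted containment $D_1(S)\in[\lambda^{-1}\cdot D_0(S),\ \lambda\cdot D_0(S)]$.

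There is no genuine obstacle here; this is a routine consequence of the definition, and the only points worth noting are bookkeeping ones. Because $\lambda\geq 1$, both multipliers $\lambda$ and $\lambda^{-1}$ are well-defined positive reals, so multiplying each nonnegative term $D_0(x)$ by them preserves the direction of the inequalities when summed. The hypothesis that $D_0$ and $D_1$ share a common support ensures the summation ranges over the same set of points in both cases, so no term is mishandled. This fact is then exactly what is needed to convert relative (MR) closeness of distributions into the two-sided bounds on event probabilities used throughout the analysis.
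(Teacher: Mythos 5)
Your proof is correct and is exactly the routine argument the paper has in mind: it states \cref{fact:mr-1} without proof as an ``easily observed'' consequence of \cref{def:mr-dist}, and summing the pointwise bounds $\lambda^{-1}D_0(x) \leq D_1(x) \leq \lambda D_0(x)$ over $x\in S$ is the intended derivation. Your bookkeeping remarks (nonnegativity of the terms, $\lambda \geq 1$, and the common-support hypothesis) are all apt and nothing is missing.
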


\begin{fact}
  \label{fact:mr-2}
  Consider any distributions $D_0$, $D_1$, $\hat{D}_0$, and $\hat{D}_1$ over appropriate domains. We have:
  \begin{align*}
    \mr(D_0\otimes \hat{D}_0, D_1 \otimes \hat{D}_1) \leq \mr(D_0,D_1) \cdot \mr(\hat{D}_0,\hat{D}_1)
  \end{align*}
\end{fact}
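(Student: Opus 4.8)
The plan is to reduce the claim to the multiplicativity of the per-coordinate sandwich bounds that define the Max-Ratio distance. Write $\lambda = \mr(D_0, D_1)$ and $\hat\lambda = \mr(\hat D_0, \hat D_1)$. By \cref{def:mr-dist}, these are exactly the smallest constants for which the masses are sandwiched pointwise: for every $x$ in the (common) support we have $D_1(x) \in [\lambda^{-1} D_0(x), \lambda D_0(x)]$, and for every $y$ we have $\hat D_1(y) \in [\hat\lambda^{-1} \hat D_0(y), \hat\lambda \hat D_0(y)]$.

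First I would record that the left-hand distance is well-defined: since $D_0, D_1$ share a support and $\hat D_0, \hat D_1$ share a support, the product distributions $D_0 \otimes \hat D_0$ and $D_1 \otimes \hat D_1$ share the product support, so \cref{def:mr-dist} applies to the pair.

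The main step is to multiply the two sandwich bounds at an arbitrary product point $(x,y)$. Because the product distribution factorizes, $(D_1 \otimes \hat D_1)(x,y) = D_1(x) \cdot \hat D_1(y)$, and all masses involved are nonnegative, so multiplying the two inequalities preserves them and yields
\[
(D_1 \otimes \hat D_1)(x,y) \in \left[ (\lambda\hat\lambda)^{-1} \cdot (D_0 \otimes \hat D_0)(x,y), \; (\lambda\hat\lambda) \cdot (D_0 \otimes \hat D_0)(x,y) \right].
\]
This exhibits $\lambda\hat\lambda$ as one valid sandwich constant for the pair of product distributions. Since $\mr$ is defined as the \emph{smallest} such constant, it follows that $\mr(D_0 \otimes \hat D_0, D_1 \otimes \hat D_1) \leq \lambda\hat\lambda = \mr(D_0, D_1) \cdot \mr(\hat D_0, \hat D_1)$, as required.

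There is no genuine obstacle here; the only point worth flagging is that the conclusion is an inequality rather than an equality precisely because the extreme ratios in the two coordinates need not be attained at the same pair $(x,y)$, so $\lambda\hat\lambda$ need not be the tightest constant. For the stated upper bound this is harmless, and the ``smallest $\lambda$'' clause in \cref{def:mr-dist} is exactly what converts the explicit witness $\lambda\hat\lambda$ into the desired bound.
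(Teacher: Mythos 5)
Your proof is correct: the paper states \cref{fact:mr-2} without proof as an ``easily observed'' fact, and your argument --- multiplying the pointwise sandwich bounds at each product point $(x,y)$ and invoking the minimality clause in \cref{def:mr-dist} --- is exactly the intended routine verification. Your remark on why the bound is an inequality rather than an equality is also accurate.
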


The following proposition will be especially useful for us since we will be repeatedly bounding the distance of distributions from the uniform distribution over their support.

\begin{proposition}
  \label{prop:unif-dist}
  Consider any distribution $D$, denote its support by $S$, and let $U$ be the uniform distribution over $S$. Then, we have:
  \begin{align*}
    \mr(U,D) \leq \frac{\max_{x\in S} D(x)}{\min_{x\in S} D(x)}
  \end{align*}
\end{proposition}

\begin{proofof}{\cref{prop:unif-dist}}
  Observe that for any $x\in S$, we necessarily have $U(x) \geq min_{y\in S}D(y)$ and $U(x) \leq \max_{z\in S}D(z)$. If either of these is not the case, the sum of the values of $D(x)$ over all values of $x\in S$ cannot be equal to $1$. Thus, for any $x\in S$,
  \begin{align*}
    D(x) = U(x) \cdot \frac{D(x)}{U(x)} \in \left[ U(x) \cdot \frac{\min_{y\in S} D(y)}{\max_{z\in S} D(z)}, U(x) \cdot \frac{\max_{z\in S} D(z)}{\min_{y\in S} D(y)} \right]
  \end{align*}
  This proves the proposition. Note that above we rely on the fact that $U(x)$ is non-zero for $x\in S$.
\end{proofof}

\paragraph{Distributions of Sums.} For any distribution $D$ over integers (or reals), denote by $(2\cdot D)$ the distribution sampled by taking two independent samples $x$ and $y$ from $D$ and outputting $(x+y)$. Denote by $D|_{\range{s}}$ the distribution $D$ conditioned on the samples being contained in $\range{s}$. For any $s > 0$, let $U_s$ be the uniform distribution over $\range{s}$. That is, $U_s(x)$ is $1/\left(2\cdot\floor{s/2}+1\right)$ if $x\in \range{s}$, and is $0$ otherwise.

We now look at what happens to the distributions of elements in the lists in a single merge step of the $k$-Tree algorithm. The distribution of the sum of two numbers uniform in some range is captured by the following.

\begin{claim}
  \label{claim:tools-1}
  For any $s > 10$ and any integer $z\in [-s,s]$,
  \begin{align*}
    \prob{x,y\gets U_s}{x+y = z} \in \left(\frac{1}{s} - \frac{\abs{z}}{s^2}\right) \pm \frac{5}{s^2}
  \end{align*}
\end{claim}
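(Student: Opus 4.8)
The plan is to reduce the probability to an exact lattice-point count and then compare that count against the claimed affine target. First I would set $N = 2\floor{s/2}+1$, the number of integers in $\range{s}$, so that $U_s$ places mass $1/N$ on each point, and observe that $N$ equals $s$ when $s$ is odd and $s+1$ when $s$ is even; in either case $N\in\{s,s+1\}$. The event $x+y=z$ amounts to choosing $x\in\range{s}$ with $z-x\in\range{s}$, i.e. $\max(-\floor{s/2},z-\floor{s/2})\le x\le \min(\floor{s/2},z+\floor{s/2})$; counting the admissible integers $x$ gives exactly $\max(0,\,N-\abs{z})$ ordered pairs. Hence $\prob{x,y\gets U_s}{x+y=z} = \max(0,\,N-\abs{z})/N^2$.

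Next I would treat the main range $\abs{z}\le N-1$, where this evaluates to $\frac{1}{N}-\frac{\abs{z}}{N^2}$, and bound its deviation from the target $\frac{1}{s}-\frac{\abs{z}}{s^2}$. Writing the difference as $\left(\frac1N-\frac1s\right)-\abs{z}\left(\frac1{N^2}-\frac1{s^2}\right)$, the case $N=s$ (odd $s$) contributes exactly zero. In the case $N=s+1$ (even $s$, which forces $\abs{z}\le 2\floor{s/2}=s$), the difference equals $-\frac{1}{s(s+1)}+\abs{z}\cdot\frac{2s+1}{s^2(s+1)^2}$; using $\abs{z}\le s$ one checks the two terms are bounded by $\frac{1}{s^2}$ and $\frac{2}{s^2}$ respectively, so the total deviation is at most $\frac{3}{s^2}<\frac{5}{s^2}$, as required.

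Finally I would dispatch the single boundary case in which the true probability is genuinely $0$ while $\abs{z}\le s$. This occurs only when $s$ is odd and $\abs{z}=s$ (so $\abs{z}=N>N-1$), and there the target value is $\frac1s-\frac{s}{s^2}=0$, which already matches the true value; the stated $\pm\frac{5}{s^2}$ slack is then satisfied trivially.

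The computations throughout are entirely elementary, so I expect no genuine analytic obstacle. The only point demanding care is the bookkeeping around $\floor{s/2}$ and the parity of $s$, since these simultaneously fix the value of $N$ and delimit the exact range of $z$ on which the linear count $\max(0,N-\abs{z})$ is valid; getting these case boundaries right (rather than any inequality) is the main minor hurdle. The hypothesis $s>10$ is comfortably more than enough to absorb the gap between the $3/s^2$ bound actually obtained and the $5/s^2$ slack claimed.
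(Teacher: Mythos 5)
Your opening move is exactly the paper's: both proofs reduce the probability to the exact convolution count, $\prob{x,y\gets U_s}{x+y=z} = \bigl(N-\abs{z}\bigr)/N^2$ with $N = 2\floor{s/2}+1$, and then bound the deviation of $\frac{1}{N}-\frac{\abs{z}}{N^2}$ from the target $\frac{1}{s}-\frac{\abs{z}}{s^2}$. For \emph{integer} $s$ your parity analysis is correct and even sharper than the paper's ($3/s^2$ versus $5/s^2$). But there is a genuine gap: you assume $s$ is an integer, and the claim does not. The statement only requires $z$ to be an integer; $s$ is an arbitrary real number greater than $10$, and this generality is load-bearing, since downstream the claim is applied (via \cref{prop:tools-2,prop:tools-3}) with $s = mp^d$ where $p = m^{-1/(\log{k}+1)}$, which is essentially never an integer. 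For non-integer $s$ your dichotomy ``$N=s$ when $s$ is odd, $N=s+1$ when $s$ is even'' is false: $N$ is merely an odd integer in the interval $(s-1, s+1]$, and it can lie strictly below $s$ (e.g.\ $s = 11.9$ gives $N = 11$), so neither of your two cases applies, the ``contributes exactly zero'' branch disappears, and your two explicit difference formulas no longer describe the deviation.

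The repair is small and is precisely what the paper's proof does: discard parity and use only $N \in s \pm 1$, which yields $\abs{\frac{1}{s} - \frac{1}{N}} \leq \frac{1}{s(s-1)}$ and $\abs{z}\cdot\abs{\frac{1}{s^2} - \frac{1}{N^2}} \leq \frac{\abs{z}(2s+1)}{s^2(s-1)^2} \leq \frac{3}{(s-1)^2}$; for $s > 10$ these sum to less than $5/s^2$ (at $s=10$ the total is about $4.82/s^2$), which is why the claim carries the looser constant $5$ rather than your $3$. Your other bookkeeping survives this generalization: the count is indeed $\max\bigl(0, N-\abs{z}\bigr)$, and since $\abs{z} \leq \floor{s} \leq N$ for every admissible $z$, the expression $N - \abs{z}$ never goes negative, so no separate boundary case is needed even for non-integer $s$.
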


\begin{proofof}{\cref{claim:tools-1}}
  This follows from the following calculation:
  \begin{align}
    \label{eq:tools-1}
    \prob{x,y\gets U_s}{x+y = z} &= \sum_{x = -\floor{s/2}}^{\floor{s/2}} U_s(x) \cdot U_s(z-x) \nonumber\\
                                 &= \sum_{x = -\floor{s/2}+\max(0,z)}^{\floor{s/2}+\min(0,z)} \frac{1}{(2\floor{s/2}+1)^2}\nonumber\\
                                 &= (2\floor{s/2}+1 - \abs{z}) \cdot \frac{1}{(2\floor{s/2}+1)^2}\nonumber\\
                                 &= \frac{1}{(2\floor{s/2}+1)} - \frac{\abs{z}}{(2\floor{s/2}+1)^2}
  \end{align}
  Using the fact that $(2\floor{s/2}+1) \in s \pm 1$ and $\abs{z} \leq s$, we can compute the ``errors'' in each term as follows:
  \begin{align*}
    \abs{\frac{1}{s} - \frac{1}{2\floor{s/2}+1}} &\leq \frac{1}{s(s-1)}\\
    \abs{\frac{\abs{z}}{s^2} - \frac{\abs{z}}{(2\floor{s/2}+1)^2}} &\leq \frac{\abs{z} (2s+1)}{s^2(s-1)^2} \leq \frac{3}{(s-1)^2}\\
  \end{align*}
  Putting these together with (\ref{eq:tools-1}) and some simple approximations using the fact that $s>10$ gives the claim.
\end{proofof}

\begin{proposition}
  \label{prop:tools-2}
  For any $s > 10$ and $p \in [0,1]$,
  \begin{align*}
    \prob{x,y\gets U_s}{x+y \in \range{sp}} \in \left(p - \frac{p^2}{4}\right) \pm \frac{7}{s}
  \end{align*}
\end{proposition}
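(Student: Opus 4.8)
The plan is to expand the target probability as a sum over the integers in $\range{sp}$ and apply \cref{claim:tools-1} term by term. Since the events $\set{x+y = z}$ for $z \in \range{sp}$ partition the event $\set{x+y \in \range{sp}}$, I would first write $\prob{x,y\gets U_s}{x+y \in \range{sp}} = \sum_{z\in\range{sp}} \prob{x,y\gets U_s}{x+y=z}$. Because $p \le 1$, every $z$ in this range satisfies $\abs{z} \le \floor{sp/2} \le s/2 < s$, so \cref{claim:tools-1} applies to each summand, giving a main contribution $(1/s - \abs{z}/s^2)$ and an error of at most $5/s^2$ per term.

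Next I would evaluate the main sum in closed form. Writing $q = \floor{sp/2}$ so that the range contains $N = 2q+1$ integers, the main contribution is
\begin{align*}
  \sum_{z\in\range{sp}} \left(\frac{1}{s} - \frac{\abs{z}}{s^2}\right) = \frac{N}{s} - \frac{1}{s^2}\sum_{z=-q}^{q}\abs{z} = \frac{2q+1}{s} - \frac{q(q+1)}{s^2}.
\end{align*}
I would then use $q \in (sp/2 - 1,\, sp/2]$ to compare this with the target $p - p^2/4$. The first term gives $\abs{(2q+1)/s - p} \le 1/s$, and writing $q = sp/2 - \delta$ with $\delta \in [0,1)$, a short expansion of $q(q+1)/s^2$ shows it equals $p^2/4$ up to an additive error of at most $p/s + 1/s^2$ (using $p \le 1$ and $s > 10$).

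Finally I would account for the accumulated error from the $N$ applications of \cref{claim:tools-1}, which is at most $5N/s^2 \le 5(sp+1)/s^2 = 5p/s + 5/s^2$; this turns out to be the dominant error term. Summing the floor-rounding errors from the previous step with this accumulated error and simplifying under $p \le 1$ and $s > 10$ yields a total deviation bounded by $7/s$, which is exactly the claimed bound.

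The main obstacle I anticipate is purely the error bookkeeping rather than any conceptual difficulty. The per-term error $5/s^2$ accumulates over roughly $sp$ terms into a contribution of order $p/s$, which is the same order as the entire error budget $7/s$; one must therefore combine it carefully with the $O(1/s)$ errors coming from rounding $sp/2$ down to $q$ in both $N$ and in $q(q+1)$, keeping track of signs and the factor $p \le 1$, so that the constant lands at $7$ rather than something larger.
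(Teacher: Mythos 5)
Your proposal is correct and takes essentially the same route as the paper's proof: expand the probability as $\sum_{z=-\floor{sp/2}}^{\floor{sp/2}}\prob{x,y\gets U_s}{x+y=z}$, apply \cref{claim:tools-1} termwise, evaluate the main sum in closed form as $\frac{2\floor{sp/2}+1}{s}-\frac{\floor{sp/2}(\floor{sp/2}+1)}{s^2}$, and then carry out exactly the error bookkeeping you describe. The only refinement needed is the one you yourself flag: your stated bound $p/s+1/s^2$ for the second term is a factor of two looser than what the expansion actually gives, namely $p/(2s)$ up to an $O(1/s^2)$ correction, and it is this tighter value that lets the total budget $\frac{1}{s}+\frac{p}{2s}+\frac{5p}{s}+\frac{5}{s^2}$ land at $7/s$ under $p\le 1$ and $s>10$, as in the paper.
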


\begin{proofof}{\cref{prop:tools-2}}
  Using \cref{claim:tools-1}, we can calculate the relevant probability as follows:
  \begin{align*}
    \prob{x,y\gets U_s}{x+y \in \range{sp}} &= \sum_{z = -\floor{sp/2}}^{\floor{sp/2}} \prob{x,y\gets U_s}{x+y=z}\\
                                            &\in \sum_{z = -\floor{sp/2}}^{\floor{sp/2}} \frac{1}{s} - \frac{\abs{z}}{s^2} \pm \frac{5}{s^2} \\
                                            &= \frac{2\floor{sp/2}+1}{s} - \frac{1}{s^2} \cdot \floor{\frac{sp}{2}} \left(\floor{\frac{sp}{2}} + 1 \right) \pm \frac{5 \cdot (2\floor{sp/2}+1)}{s^2} \\
                                            &\subseteq \left(p \pm \frac{1}{s}\right) - \left(\frac{p^2}{4} \pm \frac{p}{2s}\right) \pm \left(\frac{5p}{s} + \frac{5}{s^2} \right) \\
    &\subseteq \left(p - \frac{p^2}{4}\right) \pm \frac{7}{s}
  \end{align*}
  which proves the proposition.
\end{proofof}

\begin{proposition}
  \label{prop:tools-3}
  For any $s > 20$ and $p\in [0,1]$,
  \begin{align*}
    \mr\left(U_{sp}, 2\cdot U_s|_{\range{sp}} \right) \leq \left(1-\frac{p}{2}\right)^{-1} \left(1+\frac{30}{s} \right)
  \end{align*}
\end{proposition}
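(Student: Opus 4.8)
The plan is to reduce the claim to \cref{prop:unif-dist} and then feed in the pointwise estimate from \cref{claim:tools-1}. Write $D = 2\cdot U_s|_{\range{sp}}$ for the conditional distribution whose distance from uniform we want to bound. First I would pin down the support of $D$: every $z\in\range{sp}$ satisfies $\abs{z}\leq \floor{sp/2}\leq \floor{s/2}$, so it is attainable as a sum $x+y$ with $x,y\in\range{s}$, and \cref{claim:tools-1} shows the corresponding probability is strictly positive once $s>20$ and $p\leq 1$. With the support confirmed to be exactly $\range{sp}$, \cref{prop:unif-dist} gives
\[
\mr\left(U_{sp}, D\right) \leq \frac{\max_{z\in\range{sp}} D(z)}{\min_{z\in\range{sp}} D(z)}.
\]

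The key simplification, exactly as in the overview, is that passing to the conditional distribution only rescales every mass by the common factor $\prob{x,y\gets U_s}{x+y\in\range{sp}}^{-1}$, which cancels in the ratio. Hence it suffices to bound
\[
\frac{\max_{z\in\range{sp}} \prob{x,y\gets U_s}{x+y=z}}{\min_{z\in\range{sp}} \prob{x,y\gets U_s}{x+y=z}}.
\]
Now I would invoke \cref{claim:tools-1}. The numerator is at most the largest value of the upper estimate $\frac{1}{s}-\frac{\abs{z}}{s^2}+\frac{5}{s^2}$, attained at $z=0$, giving $\frac{1}{s}+\frac{5}{s^2}$; the denominator is at least the smallest value of the lower estimate $\frac{1}{s}-\frac{\abs{z}}{s^2}-\frac{5}{s^2}$, attained at $\abs{z}=\floor{sp/2}$, giving $\frac{1}{s}-\frac{\floor{sp/2}}{s^2}-\frac{5}{s^2}$. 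Using $\floor{sp/2}\leq sp/2$ to enlarge the denominator, this yields
\[
\mr\left(U_{sp}, D\right) \leq \frac{1+\frac{5}{s}}{1-\frac{p}{2}-\frac{5}{s}}.
\]

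It remains to massage this into the target $\left(1-\frac{p}{2}\right)^{-1}\left(1+\frac{30}{s}\right)$. Cross-multiplying, the desired inequality becomes, after setting $a=p/2$ and $b=5/s$, the statement $b\,(4-5a-6b)\geq 0$; since $b>0$ this reduces to verifying $\frac{5p}{2}+\frac{30}{s}\leq 4$, which holds because $p\leq 1$ gives $\frac{5p}{2}\leq \frac{5}{2}$ and $s>20$ gives $\frac{30}{s}<\frac{3}{2}$. I expect this final algebraic verification, together with confirming that the denominator stays strictly positive (again guaranteed by $p\leq 1$ and $s>20$, which force $1-\frac{p}{2}-\frac{5}{s}>\frac{1}{4}$, so that the min in \cref{prop:unif-dist} is indeed nonzero), to be the only delicate part. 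Everything upstream is a direct application of the two preceding results, and the genuine conceptual obstacle was already removed by the choice of MR distance, which makes the normalization factor cancel and turns the problem into a clean max-over-min estimate of the sum probabilities.
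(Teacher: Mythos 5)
Your proposal is correct and takes essentially the same route as the paper's proof: reduce via \cref{prop:unif-dist} to the max-over-min ratio, cancel the Bayes normalization factor, and plug in the pointwise estimates from \cref{claim:tools-1}, arriving at the same intermediate bound $\frac{1+5/s}{1-p/2-5/s}$. The only (cosmetic) difference is the final algebra — you cross-multiply directly where the paper factors out $\left(1-\frac{p}{2}\right)$ and bounds $\left(1+\frac{5}{s}\right)\left(1-\frac{10}{s}\right)^{-1} \leq 1+\frac{30}{s}$ — and your explicit verification that the support is exactly $\range{sp}$ makes precise a point the paper leaves implicit.
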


\begin{proofof}{\cref{prop:tools-3}}
  Following \cref{prop:unif-dist}, as $U_{sp}$ and $(2\cdot U_s|_{\range{sp}})$ have the same support (that is, $\range{sp}$), it is sufficient to bound the ratio between the maximum and minimum probability masses of the latter to bound its distance from uniform. By Bayes's theorem, the probability mass placed on any $z\in\range{sp}$ by this distribution is as follows:
  \begin{align*}
    (2\cdot U_s|_{\range{sp}})(z) = \prob{x,y\gets U_s}{x+y=z\ |\ x+y\in\range{sp}} = \frac{\prob{x,y\gets U_s}{x+y=z}}{\prob{x,y\gets U_s}{x+y\in\range{sp}}}
  \end{align*}
  As the normalizing factor in the denominator appears in all probability mass values, it is sufficient to bound the ratio between the maximum and minimum values of the numerator above. That is, \cref{prop:unif-dist} implies the following:
  \begin{align*}
    \mr\left(U_{sp}, 2\cdot U_s|_{\range{sp}} \right) \leq \frac{\max_{z\in\range{sp}} \prob{x,y\gets U_s}{x+y=z}}{\min_{z\in\range{sp}} \prob{x,y\gets U_s}{x+y=z}}
  \end{align*}
  We can bound this ratio using \cref{claim:tools-1} as follows:
  \begin{align*}
    \frac{\max_{z\in\range{sp}} \prob{x,y\gets U_s}{x+y=z}}{\min_{z\in\range{sp}} \prob{x,y\gets U_s}{x+y=z}} &\leq \frac{1/s + 5/s^2}{1/s-p/2s-5/s^2}\\
                                                                                                              &\leq \left(1-\frac{p}{2}\right)^{-1} \left(1+\frac{5}{s} \right) \left( 1 - \frac{10}{s} \right)^{-1}\\
    &\leq \left(1-\frac{p}{2}\right)^{-1} \left(1+\frac{30}{s} \right) 
  \end{align*}
  where the second inequality follows from the observation that $(1-p/2) \geq 1/2$, and the third uses the hypothesis that $s>20$. This proves the proposition.
\end{proofof}

Next we show some bounds involving pairs of sums of dependent random variables that come in useful when computing second moments later. Just upper bounds turn out to be sufficient here since we are only interested in upper-bounding these second moments.

\begin{proposition}
  \label{prop:tools-4}
  For any $s > 10$ and $p \in [0,1]$,
  \begin{align*}
    \prob{w,x,y\gets U_s}{w+x \in \range{sp} \wedge w+y \in \range{sp}} \leq p^2 \cdot \left(1 + \frac{3}{sp} \right)^2
  \end{align*}
\end{proposition}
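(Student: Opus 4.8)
The plan is to exploit the conditional independence of the two events once the shared summand $w$ is fixed. I would first condition on the value of $w$: since $x$ and $y$ are independent of each other and of $w$, the events $\{w+x\in\range{sp}\}$ and $\{w+y\in\range{sp}\}$ become independent once $w$ is fixed, and by the i.i.d.\ assumption each has the same conditional probability. Writing $g(w) = \prob{x\gets U_s}{w+x\in\range{sp}}$, this yields
\[
\prob{w,x,y\gets U_s}{w+x\in\range{sp} \wedge w+y\in\range{sp}} = \expec{w\gets U_s}{g(w)^2}.
\]

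Next I would bound $g(w)$ uniformly over $w$. The event $w+x\in\range{sp}$ asks $x$ to land in the shifted window $\range{sp}-w$, which contains at most $2\floor{sp/2}+1$ integers, each carrying mass $1/(2\floor{s/2}+1)$ under $U_s$. Hence, for every $w$,
\[
g(w) \leq \frac{2\floor{sp/2}+1}{2\floor{s/2}+1} \leq \frac{sp+1}{s} = p + \frac{1}{s},
\]
using $2\floor{sp/2}+1 \leq sp+1$ and $2\floor{s/2}+1 \geq s$. (One could instead read such a bound off \cref{claim:tools-1}, but the direct counting argument is cleaner and avoids carrying the additive error terms.)

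Finally, since $g\geq 0$, I would bound the expectation of the square simply by the square of the maximum:
\[
\expec{w\gets U_s}{g(w)^2} \leq \left(\max_w g(w)\right)^2 \leq \left(p + \frac{1}{s}\right)^2 \leq \left(p + \frac{3}{s}\right)^2 = p^2\left(1 + \frac{3}{sp}\right)^2,
\]
which is exactly the claimed bound, with room to spare.

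The one point that actually requires care is the choice made in this last step. The tempting alternative of writing $\expec{}{g^2} \leq (\max_w g)\cdot\expec{}{g}$ and then invoking the sharper first-moment estimate of \cref{prop:tools-2} for $\expec{}{g}$ loses too much: it produces a coefficient of $8$ on the $p/s$ term rather than the $6$ that the target $(p+3/s)^2$ affords, and so fails precisely when $p \gg 1/s$. Bounding by $(\max_w g)^2$ directly is what makes the stated constants go through. Everything else is elementary integer counting, and the hypothesis $s>10$ is needed only to ensure the relevant ranges are nonempty and the estimates are meaningful.
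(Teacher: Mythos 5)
Your proposal is correct and follows essentially the same route as the paper's proof: condition on the shared summand $w$, use the conditional independence of $x$ and $y$ given $w$, bound the window probability uniformly over $w$ by counting the interval $\range{sp}-w$ against $\range{s}$, and square the resulting pointwise maximum. The only nitpick is that your step $2\floor{s/2}+1 \geq s$ can fail when $s$ is not an integer (and in this paper $s = mp^d$ is generally non-integral); the paper instead uses $\size{\range{s}} \geq s-1$, yielding $g(w) \leq (sp+1)/(s-1)$, which for $s>10$ and $p\leq 1$ still satisfies $(sp+1)/(s-1) \leq p + 3/s$, so your final bound goes through unchanged.
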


\begin{proofof}{\cref{prop:tools-4}}
  We can write this probability as follows:
  \begin{align}
    \prob{w,x,y\gets U_s}{w+x \in \range{sp} \wedge w+y \in \range{sp}} \nonumber &= \sum_{w = -\floor{s/2}}^{\floor{s/2}} U_s(w) \cdot \prob{x,y\gets U_s}{w+x\in\range{sp} \wedge w+y\in\range{sp}}\\ \nonumber
                                                                        &\leq \max_{w} \prob{x,y\gets U_s}{w+x\in\range{sp} \wedge w+y\in\range{sp}}\\ \nonumber
                                                                        &= \max_{w} \prob{x\gets U_s}{w+x\in\range{sp}}\cdot \prob{y\gets U_s}{w+y\in\range{sp}}\\ \nonumber
                                                                        &= \max_{w} \prob{x\gets U_s}{x\in -w+\range{sp}}\cdot \prob{y\gets U_s}{y\in -w+\range{sp}}\\ \label{eq:prop-tools-4-ineq-2}
                                                                    &\leq \left( \frac{\size{\range{sp}}}{\size{\range{s}}} \right)^2\\ \nonumber
                                                                        &\leq \left( \frac{sp+1}{s-1} \right)^2\\
                                                                        &\leq p^2 \cdot \left( 1 + \frac{3}{sp} \right)^2 \nonumber
  \end{align}
  where the second equality follows from the independence of $x$ and $y$, and the last inequality uses the hypothesis that $s > 10$ and the fact that $p \leq 1$.
\end{proofof}

\begin{proposition}
  \label{prop:tools-5}
  Consider any $s > 20$ and $p\in [0,1/2]$, and let $D$ be the distribution over $\Int\times\Int$ sampled as follows: \textnormal{(Sample $w,x,y\gets U_s$ conditioned on $((w+x),(w+y)\in\range{sp})$, and output $(w+x,w+y)$)}. Then,
  \begin{align*}
    \mr\left(U_{sp}\otimes U_{sp}, D \right) \leq \left(1-p\right)^{-1} \left(1+\frac{4}{s} \right)
  \end{align*}
\end{proposition}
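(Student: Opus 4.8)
The plan is to mirror the template used in the proof of \cref{prop:tools-3}: reduce the Max-Ratio computation to bounding the ratio between the largest and smallest probability masses of $D$, and then compute those masses explicitly. Since $U_{sp}\otimes U_{sp}$ and $D$ share the support $\range{sp}\times\range{sp}$, \cref{prop:unif-dist} applies, so it suffices to bound $\max_{(z_1,z_2)} D(z_1,z_2) / \min_{(z_1,z_2)} D(z_1,z_2)$ over this box. By Bayes's theorem, $D(z_1,z_2)$ equals the unconditioned joint mass $q(z_1,z_2) := \prob{w,x,y\gets U_s}{w+x=z_1 \wedge w+y=z_2}$ divided by the constant $\prob{w,x,y\gets U_s}{(w+x),(w+y)\in\range{sp}}$. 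As this normalizing factor is common to every point, it cancels, leaving me to bound $\max q/\min q$.

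The key step is to evaluate $q(z_1,z_2)$ exactly. Conditioning on $w$, the events $w+x=z_1$ and $w+y=z_2$ force $x=z_1-w$ and $y=z_2-w$, each occurring with probability $1/N$ (where $N = 2\floor{s/2}+1 = \size{\range{s}}$) precisely when the relevant value lies in $\range{s}$. The admissible values of $w$ therefore form the intersection of the three length-$N$ integer intervals centered at $0$, $z_1$, and $z_2$, which contains exactly $N - g(z_1,z_2)$ integers, where $g(z_1,z_2) = \max(0,z_1,z_2) - \min(0,z_1,z_2)$ is the spread of $\set{0,z_1,z_2}$. Hence
\begin{align*}
  q(z_1,z_2) = \frac{1}{N^3}\cdot\size{\set{w\in\range{s} : z_1-w\in\range{s} \wedge z_2-w\in\range{s}}} = \frac{N - g(z_1,z_2)}{N^3}.
\end{align*}
Over $\range{sp}\times\range{sp}$, the spread $g$ is minimized at $(0,0)$, giving $\max q = 1/N^2$, and maximized at opposite corners such as $(\floor{sp/2},-\floor{sp/2})$, giving $g = 2\floor{sp/2}$ and $\min q = (N-2\floor{sp/2})/N^3$.

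It then remains to bound $\max q/\min q = N/(N-2\floor{sp/2})$. Using $2\floor{sp/2}\le sp$ and $N = 2\floor{s/2}+1 \ge s$, this is at most $N/(N-sp) = (1-sp/N)^{-1} \le (1-p)^{-1}$, which already implies the claimed bound $(1-p)^{-1}(1+4/s)$; the extra $(1+4/s)$ factor is slack that would only be needed if the exact count above were replaced by the approximate masses of \cref{claim:tools-1}. I do not anticipate a genuine obstacle: the only points requiring care are verifying that the admissible-$w$ interval has length exactly $N-g$ (which uses $p\le 1/2$ to keep $\range{sp}$ well inside $\range{s}$, so that the support is all of $\range{sp}\times\range{sp}$), and correctly locating the extremes of $g$ over the box, including the contribution of the fixed point $0$ to the spread.
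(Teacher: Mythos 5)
Your proof is correct and follows the paper's argument essentially step for step: both reduce via \cref{prop:unif-dist} and Bayes to the ratio of the extreme joint masses $q(z_1,z_2)$, identify the admissible $w$'s as the intersection of the three length-$N$ intervals $\range{s}$, $z_1-\range{s}$, $z_2-\range{s}$, and extremize over the box at $(0,0)$ and at opposite corners $(\pm\floor{sp/2},\mp\floor{sp/2})$ (your exact count $N-g(z_1,z_2)$ is just a sharper form of the paper's bound $\size{\range{s}}-2\floor{sp/2}$ on the minimizing intersection). One small caveat: your claim that the ratio is already at most $(1-p)^{-1}$ uses $N=2\floor{s/2}+1\geq s$, which holds only for integer $s$; since $s=mp^d$ is non-integral where this proposition is applied, one must fall back on $N\geq s-1$, giving $N/(N-sp)\leq (s-1)/((s-1)-sp)$ -- which is precisely why the paper's final bound carries the $(1+4/s)$ factor you dismissed as slack, and with that substitution your argument yields exactly the stated bound.
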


\begin{proofof}{\cref{prop:tools-5}}
  Following \cref{prop:unif-dist}, as $U_{sp}^{\otimes 2}$ (denoting $U_{sp}\otimes U_{sp}$) and $D$ have the same support (that is, $\range{sp}\times\range{sp}$), and $U_{sp}^{\otimes 2}$ is uniform over this support, it is sufficient to bound the ratio between the maximum and minimum probability masses of $D$ to bound this distance. We can bound this ratio as follows:
  \begin{align*}
    \mr\left(U_{sp}^{\otimes 2}, D \right) \leq \frac{\max_{z_1,z_2\in\range{sp}} \prob{w,x,y\gets U_s}{w+x=z_1\wedge w+y=z_2}}{\min_{z_1,z_2\in\range{sp}} \prob{w,x,y\gets U_s}{w+x=z_1\wedge w+y=z_2}}
  \end{align*}
  where again we have ignored the Bayes normalizing factor as it appears in both the numerator and denominator. We can then express this as follows:
  \begin{align*}
    \frac{\max_{z_1,z_2\in\range{sp}} \prob{w,x,y\gets U_s}{w+x=z_1\wedge w+y=z_2}}{\min_{z_1,z_2\in\range{sp}} \prob{w,x,y\gets U_s}{w+x=z_1\wedge w+y=z_2}} &= \frac{\max_{z_1,z_2\in\range{sp}} \sum_{w\in\range{s}} U_s(w) U_s(z_1-w) U_s(z_2-w)}{\min_{z_1,z_2\in\range{sp}} \sum_{w\in\range{s}} U_s(w) U_s(z_1-w) U_s(z_2-w)} \\
     &= \frac{\max_{z_1,z_2\in\range{sp}} \size{\range{s} \cap (z_1-\range{s}) \cap (z_2-\range{s})}}{\min_{z_1,z_2\in\range{sp}} \size{\range{s} \cap (z_1-\range{s}) \cap (z_2-\range{s})}}\\
    &\leq \frac{\size{\range{s}}}{\size{\range{s} \cap (\floor{sp/2}-\range{s}) \cap (-\floor{sp/2}-\range{s})}}\\
    &\leq \frac{\size{\range{s}}}{\size{\range{s}}-2\floor{sp/2}}\\
    &\leq \frac{(s-1)}{(s-1)-sp}\\
    &= (1-p)^{-1} \left(1 - \frac{p}{(1-p)(s-1)} \right)^{-1}\\
    &\leq (1-p)^{-1} \left(1 + \frac{4}{s} \right)
  \end{align*}
  where the second equality follows from the fact that the summand corresponding to each $w$ is a fixed value when all of $w$, $(z_1-w)$, and $(z_2-w)$ are contained in $\range{s}$, and is $0$ otherwise. The first inequality follows by noting that the denominator is the size of the intersection of three intervals of integers, and is minimized when two of them are translated as far as possible in either direction. The last inequality uses the fact that $p \leq 1/2$ and $s > 20$.
\end{proofof}

\subsection{Proof of \cref{prop:c-exp}}
\label{sec:c-exp}

Fix any set of values for $k$, $n$, $m$, and $p$ satisfying the hypothesis of \cref{prop:c-exp}. For each $d\in [0,\log{k}]$ and $i\in [k/2^d]$, denote by $L^d_i$ the corresponding list computed by the $k$-Tree algorithm as described in \cref{fig:ktree}, with $L^0_1, \dots, L^0_k$ being the input lists, each of which contains $n$ uniformly distributed integers from $\range{m}$. Recall that we defined the variable $C_{k,n,m,p}$ to be the number of occurrences of $0$ in $L^{\log{k}}_1$; denote this variable by $C$ for brevity. We will write this $C$ as a sum of indicator variables that indicate whether each tuple of elements in the $k$ input lists pass all the filters of the algorithm.

Each element of $L^{\log{k}}_1$ corresponds to the sum $(L^0_1[\ell_1]+\cdots + L^0_k[\ell_k])$ for some $\ell_1,\dots,\ell_k\in[n]$. For each $\ell_1,\dots,\ell_k \in [n]$, denote the tuple $(\ell_1,\dots,\ell_k)$ by $\bar{\ell}$, and define the following variable:
\begin{align*}
  C_{\bar{\ell}} = \begin{cases}
    1 & \text{if $(L^0_1[\ell_1] + \cdots + L^0_k[\ell_k])$ appears in $L^{\log{k}}_1$, \emph{and} this sum is $0$}\\
    0 & \text{otherwise}
  \end{cases}
\end{align*}
Above, the phrase ``$(L^0_1[\ell_1] + \cdots + L^0_k[\ell_k])$ appears in $L^{\log{k}}_1$'' is to be taken symbolically. That is, it means the following:
\begin{align*}
  \forall d\in[\log{k}]\ \forall i\in \left[ \frac{k}{2^d} \right]: \left( \sum_{j= (i-1)\cdot 2^d+1}^{i\cdot 2^d} L^0_j[\ell_j] \right) \in L^d_i
\end{align*}
We can now write $C$ as:
\begin{align}
  \label{eq:c-exp-0}
  C = \sum_{\bar{\ell}\in[n]^k} C_{\bar{\ell}}
\end{align}
Computing the expectation of each $C_{\bar{\ell}}$ then gives us the expectation of $C$. We do this as follows.

\medskip
Fix any value of $\bar{\ell} = (\ell_1,\dots,\ell_k)$. For each $i\in[k]$ denote by $x^0_i$ the value of $L^0_i[\ell_i]$. When the input lists are uniformly random, each $x^0_i$ is also uniformly random over $\range{m}$. For each $d\in [\log{k}]$ and $i\in[k/2^d]$, we also set up the following notation for the partial sums being considered at each step in the $k$-Tree algorithm:
\begin{align*}
  x^d_i = x^{d-1}_{2i-1} + x^{d-1}_{2i}
\end{align*}
We will later override some of these $x^d_i$'s by sampling them afresh rather than computing them as above, but for any $x^d_i$ for which we do not explicitly say otherwise, the above is to be taken to be its definition. For each $d\in[\log{k}]$, define the following event that captures the set of checks made by the calls to the Merge function in the $d^{th}$ iteration of the algorithm:
\begin{align*}
  E_d(x_1, \dots, x_{k/2^{d-1}}) \equiv \left( \forall i\in \left[ \frac{k}{2^d} \right]: (x_{2i-1} + x_{2i}) \in \range{mp^d} \right) 
\end{align*}
Finally, define the following event that captures the property we are interested in elements of the final list:
\begin{align*}
  E_{\log{k}+1}(x) \equiv \left( x = 0 \right)
\end{align*}
We can now write the expectation of $C_{\bar{\ell}}$ as follows:
\begin{align*}
  \exp{C_{\bar{\ell}}} = \prob{x^0_1,\dots,x^0_k\gets U_m}{ E_1\left(x^0_1,\dots,x^0_k\right) \wedge E_2\left(x^1_1,\dots,x^1_{\frac{k}{2}}\right) \wedge \cdots \wedge E_{\log{k}}\left(x^{\log{k}-1}_1,x^{\log{k}-1}_2\right) \wedge E_{\log{k}+1}\left(x^{\log{k}}_1\right) }
\end{align*}
We define the following sequence of probabilites of subsets of the events in the expression above that we then bound inductively to eventually arrive at a bound for the above expectation. For $d\in[0,\log{k}]$, define the following:
\begin{align*}
  \zeta_d = \prob{x^d_1,\dots,x^d_{k/2^d}\gets U_{mp^d}}{ E_{d+1}\left(x^d_1,\dots,x^d_{k/2^d}\right) \wedge E_{d+2}\left(x^{d+1}_1,\dots,x^{d+1}_{k/2^{d+1}}\right) \wedge \cdots \wedge E_{\log{k}+1}\left(x^{\log{k}}_1\right) }
\end{align*}
From the above two expressions, we have:
\begin{align}
  \label{eq:c-exp-2}
  \exp{C_{\bar{\ell}}} = \zeta_0
\end{align}
The base of our induction is the following observation that follows from noting that for any $z$, $\floor{z} \in (z \pm 1)$:
\begin{align}
  \label{eq:c-exp-3}
  \zeta_{\log{k}} = \prob{x^{\log{k}}_1 \gets U_{mp^{\log{k}}}}{E_{\log{k}+1}(x^{\log{k}}_1)} = \prob{x \gets U_{mp^{\log{k}}}}{x=0} \in \frac{1}{mp^{\log{k}}} \left( 1 \pm \frac{1}{mp^{\log{k}}-1} \right)
\end{align}
The inductive step uses the following claim.
\begin{claim}
  \label{claim:c-exp-1}
  For $d\in [0,\log{k}-1]$, we have: $\zeta_d \in \zeta_{d+1} \cdot \left[p \cdot (1\pm p) \left(1 \pm \frac{30}{mp^{\log{k}}} \right) \right]^{k/2^{d+1}}$
\end{claim}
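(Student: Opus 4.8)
The plan is to establish the claim via the chain rule, separating out the filters applied at level $d+1$ from the remaining filters, and then recognizing the conditional probability of the latter as a mildly perturbed copy of $\zeta_{d+1}$. Throughout, I would write $s = mp^d$ and $N = k/2^{d+1}$, so that $\range{mp^{d+1}} = \range{sp}$, the number of pairs merged at this level is $N$, and (by the definition of the events $E_d$) the event $E_{d+1}$ applied to $x^d_1,\dots,x^d_{k/2^d}$ is exactly $\bigwedge_{i\in[N]} \bigl( x^d_{2i-1} + x^d_{2i} \in \range{sp} \bigr)$. With the sums $x^{d+1}_i := x^d_{2i-1} + x^d_{2i}$, the chain rule gives
\[ \zeta_d = \Pr\!\left[E_{d+1}\right] \cdot \Pr\!\left[ E_{d+2} \wedge \cdots \wedge E_{\log{k}+1} \,\middle|\, E_{d+1} \right], \]
where all probabilities are taken over $x^d_1,\dots,x^d_{k/2^d} \gets U_s$.

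For the first factor, the $N$ pairs $(x^d_{2i-1},x^d_{2i})$ involve disjoint independent variables, so $E_{d+1}$ factorizes across pairs and $\Pr[E_{d+1}] = \bigl( \prob{x,y\gets U_s}{x+y\in\range{sp}} \bigr)^{N}$. I would bound the per-pair probability directly by \cref{prop:tools-2}, obtaining $\bigl(p - p^2/4\bigr) \pm 7/s$. For the second factor, the same independence shows that, conditioned on $E_{d+1}$, the vector $(x^{d+1}_1,\dots,x^{d+1}_N)$ is distributed exactly as the product $D := (2\cdot U_s|_{\range{sp}})^{\otimes N}$, since each coordinate is an independent copy of $x^d_{2i-1}+x^d_{2i}$ conditioned on lying in $\range{sp}$. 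The events $E_{d+2},\dots,E_{\log{k}+1}$ depend only on this vector, and under the genuinely uniform product $U_{sp}^{\otimes N}$ their joint probability is, by the definition of $\zeta_{d+1}$, precisely $\zeta_{d+1}$ (recall $sp = mp^{d+1}$). So all that is needed is to transfer between $D$ and $U_{sp}^{\otimes N}$.

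To make that transfer I would bound the Max-Ratio distance. \cref{prop:tools-3} gives $\mr\bigl(U_{sp},\,2\cdot U_s|_{\range{sp}}\bigr) \le (1-p/2)^{-1}(1+30/s)$, and \cref{fact:mr-2} tensorizes this into $\mr\bigl(U_{sp}^{\otimes N}, D\bigr) \le \bigl[(1-p/2)^{-1}(1+30/s)\bigr]^{N}$; then \cref{fact:mr-1} yields $\Pr_D[E_{d+2}\wedge\cdots\wedge E_{\log{k}+1}] \in \zeta_{d+1}\cdot\bigl[(1-p/2)^{-1}(1+30/s)\bigr]^{\pm N}$. Multiplying the two factors gives
\[ \zeta_d \in \zeta_{d+1}\cdot\left[ \Bigl(\bigl(p-\tfrac{p^2}{4}\bigr)\pm\tfrac{7}{s}\Bigr)\cdot\bigl((1-\tfrac{p}{2})^{-1}(1+\tfrac{30}{s})\bigr)^{\pm 1} \right]^{N}, \]
so it remains only to check that the per-pair bracket lies in $p\,(1\pm p)\,(1\pm 30/(mp^{\log{k}}))$.

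This last verification is the delicate point, and I expect it to be the main obstacle. The difficulty is that \cref{prop:tools-2} supplies an \emph{additive} error while the target form and \cref{prop:tools-3} are \emph{multiplicative}, so I would first factor out $p$ and rewrite the per-pair bracket as $p\,(1-p/4\pm 7/(sp))\cdot((1-p/2)^{-1}(1+30/s))^{\pm1}$. The genuinely multiplicative corrections $(1-p/4)$ and $(1-p/2)^{\mp1}$ are each within $(1\pm p)$ using $p<1/30$, and the small terms $7/(sp)$ and $30/s$ are each at most $30/(mp^{\log{k}})$ because $sp = mp^{d+1}\ge mp^{\log{k}}$ and $s = mp^d\ge mp^{\log{k}}$ (both from $d+1\le\log{k}$ and $p<1$). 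The remaining care is to confirm that the product of all these $(1\pm p)$-type and $(1\pm 30/(mp^{\log{k}}))$-type corrections collapses into a single factor of each, rather than into squares; the slack in the constants (the $30$ against the hypothesis $mp^{\log{k}}>30$, and the single $(1\pm p)$) is exactly what accommodates the cross terms.
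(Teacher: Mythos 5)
Your proposal is correct and follows essentially the same route as the paper's proof of \cref{claim:c-exp-1}: the same Bayes decomposition into $\Pr[E_{d+1}]$ times the conditional tail, \cref{prop:tools-2} applied per pair for the first factor, and the transfer of the conditional probability to $\zeta_{d+1}$ via \cref{prop:tools-3} tensorized through \cref{fact:mr-1,fact:mr-2}. Even the final bookkeeping matches in substance --- the paper likewise converts the additive $\pm 7/(mp^d)$ error to a multiplicative one (using $(p-p^2/4)>p/2$, yielding $(1\pm 14/(mp^{d+1}))$), absorbs $(1-p/4)(1-p/2)^{\mp 1}$ into $(1\pm p)$, and uses $mp^{d+1}\geq mp^{\log k}>30$ together with $p<1/30$ so that the cross terms fit inside the single $(1\pm 30/(mp^{\log k}))$ factor, exactly the slack you identified.
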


\noindent We now complete the proof of \cref{prop:c-exp} and then prove \cref{claim:c-exp-1} below.

\begin{proofof}{\cref{prop:c-exp}}
  Inductively applying \cref{claim:c-exp-1} with $d$ going from $(\log{k}-1)$ to $0$, we get the following:
\begin{align}
  \label{eq:c-exp-8}
  \zeta_0 \in \zeta_{\log{k}} \cdot \left[p \cdot (1\pm p) \left(1 \pm \frac{30}{mp^{\log{k}}} \right) \right]^{\sum_{d=0}^{\log{k}-1} k/2^{d+1}} = \zeta_{\log{k}} \cdot \left[p \cdot (1\pm p) \left(1 \pm \frac{30}{mp^{\log{k}}} \right) \right]^{k-1}
\end{align}
Putting together (\ref{eq:c-exp-2}), (\ref{eq:c-exp-3}) and (\ref{eq:c-exp-8}), we get:
\begin{align}
  \label{eq:c-exp-9}
  \exp{C_{\bar{\ell}}} = \zeta_0 &\in \frac{p^k}{mp^{\log{k}+1}} \cdot (1\pm p)^{k-1} \left(1 \pm \frac{30}{mp^{\log{k}}} \right)^{k-1}\left( 1 \pm \frac{1}{mp^{\log{k}}-1} \right)\nonumber\\
  &\subseteq \frac{p^k}{mp^{\log{k}+1}} \cdot (1\pm p)^{k-1} \left(1 \pm \frac{35}{mp^{\log{k}}} \right)^{k-1}
\end{align}

\medskip
\noindent Putting together (\ref{eq:c-exp-0}) and (\ref{eq:c-exp-9}), we get:
\begin{align}\label{eq:c-exp-sum-over-all-k-tuples}
  \exp{C} = \sum_{\bar{\ell}\in[n]^k} \exp{C_{\bar{\ell}}} \in \frac{n^kp^k}{mp^{\log{k}+1}} \cdot (1\pm p)^{k-1} \left(1 \pm \frac{35}{mp^{\log{k}}} \right)^{k-1}
\end{align}
as required in the statement of the proposition. Finally, we prove \cref{claim:c-exp-1}, completing our proof of \cref{prop:c-exp}.
\end{proofof}

\begin{proofof}{\cref{claim:c-exp-1}}
Expanding the expression that defines $\zeta_d$ using Bayes's theorem, we get: 
\begin{align}
  \label{eq:c-exp-4}
  \zeta_d &= \prob{x^d_1,\dots,x^d_{k/2^d}\gets U_{mp^d}}{ E_{d+1}\left(x^d_1,\dots,x^d_{k/2^d}\right) \wedge E_{d+2}\left(x^{d+1}_1,\dots,x^{d+1}_{k/2^{d+1}}\right) \wedge \cdots \wedge E_{\log{k}+1}\left(x^{\log{k}}_1\right) }\nonumber\\
  &= \prob{x^d_1,\dots,x^d_{k/2^d}\gets U_{mp^d}}{ E_{d+1}\left(x^d_1,\dots,x^d_{k/2^d}\right)} \cdot \nonumber\\
  &\qquad \prob{x^d_1,\dots,x^d_{k/2^d}\gets U_{mp^d}}{E_{d+2}\left(x^{d+1}_1,\dots,x^{d+1}_{k/2^{d+1}}\right) \wedge \cdots \ \Big|\ E_{d+1}\left(x^d_1,\dots,x^d_{k/2^d} \right) } \nonumber\\
  &= \prob{x^d_1,\dots,x^d_{k/2^d}\gets U_{mp^d}}{ E_{d+1}\left(x^0_1,\dots,x^0_{k/2^d}\right)} \cdot \nonumber\\
  &\qquad \prob{x^d_1,\dots,x^d_{k/2^d}\gets U_{mp^d}}{E_{d+2}\left(x^{d+1}_1,\dots,x^{d+1}_{k/2^{d+1}}\right) \wedge \cdots \ \Big|\ \forall i\in \left[\frac{k}{2^{d+1}}\right]: (x^d_{2i-1}+x^d_{2i}) \in \range{mp^{d+1}} } \nonumber\\
  &= \prob{x^d_1,\dots,x^d_{k/2^d}\gets U_{mp^d}}{ E_{d+1}\left(x^0_1,\dots,x^0_{k/2^d}\right)} \cdot \nonumber\\
  &\qquad \prob{x^{d+1}_1,\dots,x^{d+1}_{k/2^{d+1}}\gets 2\cdot U_{mp^d}}{E_{d+2}\left(x^{d+1}_1,\dots,x^{d+1}_{k/2^{d+1}}\right) \wedge \cdots \ \Big|\ \forall i\in \left[\frac{k}{2^{d+1}}\right]: x^{d+1}_{i} \in \range{mp^{d+1}} }
\end{align}
where the third equality follows from the definition of $E_{d+1}$, and the fourth from the definition of $x^{d+1}_i$ as being $(x^d_{2i-1}+x^d_{2i})$. We can bound the first term above directly using \cref{prop:tools-2} as follows, noting that for the various values of $i$, the distributions of $(x^d_{2i-1}+x^d_{2i})$ are independent:
\begin{align}
  \label{eq:c-exp-5}
  \prob{x^d_1,\dots,x^d_{k/2^d}\gets U_{mp^d}}{ E_{d+1}\left(x^0_1,\dots,x^0_k\right)} &= \prob{x^d_1,\dots,x^d_{k/2^d}\gets U_{mp^d}}{ \forall i\in \left[\frac{k}{2^{d+1}}\right]: (x^d_{2i-1}+x^d_{2i}) \in \range{mp^{d+1}} } \nonumber\\
  &\in \left( \left(p - \frac{p^2}{4}\right) \pm \frac{7}{mp^d} \right)^{k/2^{d+1}} \nonumber\\
  &\subseteq \left(p - \frac{p^2}{4}\right)^{k/2^{d+1}} \left( 1 \pm \frac{14}{mp^{d+1}} \right)^{k/2^{d+1}}
\end{align}
where the first containment follows from \cref{prop:tools-2} (setting $s$ there to be $mp^d$), and the second from the fact that $(p-p^2/4) > p/2$.

The second term in (\ref{eq:c-exp-4}) can be upper-bounded using \cref{fact:mr-1,fact:mr-2} and \cref{prop:tools-3} (and the hypothesis that $mp^d \geq mp^{\log{k}} > 30$) as follows:
\begin{align}
  \label{eq:c-exp-6}
  &\prob{x^{d+1}_1,\dots,x^{d+1}_{k/2^{d+1}}\gets 2\cdot U_{mp^d}}{E_{d+2}\left(x^{d+1}_1,\dots,x^{d+1}_{k/2^{d+1}}\right) \wedge \cdots \ \Big|\ \forall i\in \left[\frac{k}{2^{d+1}}\right]: x^{d+1}_{i} \in \range{mp^{d+1}} }  \nonumber\\
  &\quad = \prob{x^{d+1}_1,\dots,x^{d+1}_{k/2^{d+1}}\gets 2\cdot U_{mp^d}|_{\range{mp^{d+1}}}}{E_{d+2}\left(x^{d+1}_1,\dots,x^{d+1}_{k/2^{d+1}}\right) \wedge \cdots}  \nonumber\\
  &\quad \leq \mr\left( U_{mp^{d+1}}, 2\cdot U_{mp^d}|_{\range{mp^{d+1}}} \right)^{k/2^{d+1}} \cdot \prob{x^{d+1}_1,\dots,x^{d+1}_{k/2^{d+1}}\gets U_{mp^{d+1}}}{E_{d+2}\left(x^{d+1}_1,\dots,x^{d+1}_{k/2^{d+1}}\right) \wedge \cdots } \nonumber\\
  &\quad = \mr\left( U_{mp^{d+1}}, 2\cdot U_{mp^d}|_{\range{mp^{d+1}}} \right)^{k/2^{d+1}} \cdot \zeta_{d+1}  \nonumber\\
  &\quad \leq \left[ \left(1 - \frac{p}{2}\right)^{-1} \left(1+\frac{30}{mp^d}\right) \right]^{k/2^{d+1}} \cdot \zeta_{d+1}  
\end{align}
We can similarly lower-bound it by:
\begin{align}
  \label{eq:c-exp-7}
  \mr\left( U_{mp^{d+1}}, 2\cdot U_{mp^d}|_{\range{mp^{d+1}}} \right)^{-k/2^{d+1}} \cdot \zeta_{d+1} \geq \left[ \left(1 - \frac{p}{2}\right) \left(1+\frac{30}{mp^d}\right)^{-1} \right]^{k/2^{d+1}} \cdot \zeta_{d+1}  
\end{align}

\medskip
\noindent Putting together (\ref{eq:c-exp-4}), (\ref{eq:c-exp-5}), and (\ref{eq:c-exp-6}), we get the following upper bound:
\begin{align*}
  \zeta_d &\leq \zeta_{d+1} \cdot \left[ \left(p - \frac{p^2}{4}\right) \left(1-\frac{p}{2}\right)^{-1}\left( 1 + \frac{14}{mp^{d+1}} \right) \left(1+\frac{30}{mp^d}\right) \right]^{k/2^{d+1}}\\
  &\leq \zeta_{d+1} \cdot \left[ p (1+p) \left(1 + \frac{29}{mp^{d+1}} \right) \right]^{k/2^{d+1}}
\end{align*}
where the second inequality uses the fact that $(1-p/4)(1-p/2)^{-1} < (1+p)$, and the hypotheses that $mp^d \geq mp^{\log{k}} > 30$ and $p < 1/30$. Similarly, putting together (\ref{eq:c-exp-4}), (\ref{eq:c-exp-5}), and (\ref{eq:c-exp-7}), we get:
\begin{align*}
  \zeta_d &\geq \zeta_{d+1} \cdot \left[ \left(p - \frac{p^2}{4}\right) \left(1-\frac{p}{2}\right)\left( 1 - \frac{14}{mp^{d+1}} \right) \left(1+\frac{30}{mp^d}\right)^{-1} \right]^{k/2^{d+1}}\\
  &\geq \zeta_{d+1} \cdot \left[ p (1-p) \left(1 - \frac{15}{mp^{d+1}} \right) \right]^{k/2^{d+1}}
\end{align*}
where we use the fact that $(1-p/4)(1-p/2) > (1-p)$, and the hypothesis that $p < 1/30$. The above two expressions, along with the fact that $d$ is at most $(\log{k}-1)$, now prove the claim.
\end{proofof}


\subsection{Proof of \cref{prop:c-var}}
\label{sec:c-var}

Here, we prove \cref{prop:c-var}, which bounds the second moment of the variable $C_{k,n,m,p}$ defined earlier. Fix any set of values for $k$, $n$, $m$, and $p$ satisfying the hypothesis of \cref{prop:c-var}. We will continue to use the notation set up in \cref{sec:c-exp}, including denoting $C_{k,n,m,p}$ by $C$. The second moment of $C$ can be written as:
\begin{align*}
  \exp{C^2} = \exp{\left(\sum_{\bar{\ell}\in[n]^k} C_{\bar{\ell}}\right)^2} = \sum_{\bar{\ell},\bar{\ell}'\in[n]^k}{ \exp{C_{\bar{\ell}} \cdot C_{\bar{\ell}'}}}
\end{align*}
For any tuples $\bar{\ell},\bar{\ell}' \in [n]^k$, denote by $\diff(\bar{\ell},\bar{\ell}')$ the \emph{Hamming difference} between them -- that is, $\delta:[n]^k\times [n]^k \ra \bset^k$ outputs a string of length $k$ whose $i^{\text{th}}$ bit is defined as follows:
\begin{align*}
  (\delta(\bar{\ell},\bar{\ell}'))_i = \begin{cases}
    0 &\text{if } \ell_i = \ell_i'\\
    1 &\text{if } \ell_i \neq \ell_i'
  \end{cases}
\end{align*}
Then, we can write the above expectation as:
\begin{align}
  \label{eq:c-var-2}
  \exp{C^2} = \sum_{\bar{\ell},\bar{\ell}'\in[n]^k}{ \exp{C_{\bar{\ell}} \cdot C_{\bar{\ell}'}}} = \sum_{s\in \bset^k} \sum_{\bar{\ell},\bar{\ell}': \diff(\bar{\ell},\bar{\ell}') = s}{\exp{C_{\bar{\ell}} \cdot C_{\bar{\ell}'}}}
\end{align}
Our approach now is to first obtain symbolic bounds on the expectation in the sum for each value of $\diff(\bell,\bell')$, and then bound the sum itself inductively using some convenient structure that it possesses. The extremes have slightly different bounds from the general case, so we first separate them as follows.

\begin{claim}
  \label{claim:c-var-1}
  For any $\bell, \bell' \in [n]^k$ such that $\diff(\bell,\bell') = 0^k$,
  \begin{align*}
    \exp{C_\bell \cdot C_\bell'} = \exp{C_\bell} \in \frac{p^k}{mp^{\log{k}+1}} \cdot (1\pm p)^{k-1} \left(1 \pm \frac{35}{mp^{\log{k}}} \right)^{k-1}
  \end{align*}
\end{claim}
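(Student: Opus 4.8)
The plan is to observe that the hypothesis $\diff(\bell,\bell') = 0^k$ forces the two index tuples to coincide. By the definition of the Hamming difference, each bit $(\diff(\bell,\bell'))_i = 0$ means $\ell_i = \ell_i'$, so $\diff(\bell,\bell') = 0^k$ is equivalent to $\bell = \bell'$. Consequently the product $C_\bell \cdot C_{\bell'}$ collapses to $C_\bell^2$. Since each $C_\bell$ is a $\set{0,1}$-valued indicator variable (it is $1$ exactly when the tuple $(L^0_1[\ell_1],\dots,L^0_k[\ell_k])$ passes every filter and sums to $0$, and $0$ otherwise), we have the pointwise identity $C_\bell^2 = C_\bell$. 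Taking expectations gives $\exp{C_\bell \cdot C_{\bell'}} = \exp{C_\bell^2} = \exp{C_\bell}$, which establishes the equality asserted in the claim.

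It then remains only to invoke the first-moment computation already carried out in the proof of \cref{prop:c-exp}. There, for an arbitrary fixed tuple $\bell$, the expectation $\exp{C_\bell} = \zeta_0$ was bounded in (\ref{eq:c-exp-9}) as
\begin{align*}
  \exp{C_\bell} \in \frac{p^k}{mp^{\log{k}+1}} \cdot (1\pm p)^{k-1} \left(1 \pm \frac{35}{mp^{\log{k}}} \right)^{k-1},
\end{align*}
which is precisely the range stated in the claim. I would note that the hypotheses of \cref{prop:c-var} (namely $mp^{\log{k}-1} > 30$ and $p < 1/30$) are at least as strong as those required for the first-moment bound above, so the computation of (\ref{eq:c-exp-9}) applies verbatim and no re-derivation is needed.

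In short, there is essentially no obstacle here: the entire analytic content — the inductive unwinding of the merge filters via \cref{claim:c-exp-1}, the approximations from \cref{prop:tools-2,prop:tools-3}, and the Max-Ratio distance bounds — was discharged in proving \cref{prop:c-exp}. This claim serves purely as a bookkeeping step that isolates the ``diagonal'' contribution (pairs with $\bell = \bell'$) to the second-moment sum in (\ref{eq:c-var-2}), reducing it to the already-known first moment. The genuinely new work for the second moment lies in the complementary cases $\diff(\bell,\bell') \neq 0^k$, where the two executions of the algorithm are only partially dependent and the recursive argument sketched in \cref{sec:overview} must handle correlated sums; those cases are treated in the claims that follow.
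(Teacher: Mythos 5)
Your proposal is correct and matches the paper's proof essentially verbatim: the paper likewise notes that $\diff(\bell,\bell') = 0^k$ forces $\bell = \bell'$, uses that $C_\bell$ and $C_{\bell'}$ are Boolean to get $C_\bell \cdot C_{\bell'} = C_\bell$, and cites the first-moment bound (\ref{eq:c-exp-9}) from the proof of \cref{prop:c-exp}. One small inaccuracy in your side remark: the hypotheses of \cref{prop:c-var} ($mp^{\log{k}-1} > 30$, $p < 1/30$) are in fact \emph{weaker} than those of \cref{prop:c-exp} (which assumes $mp^{\log{k}} > 30$, among others), not stronger — though the paper makes the same citation without comment, and in the application within \cref{thm:ktree} one has $mp^{\log{k}} = 1/p > 30$, so nothing breaks.
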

\begin{proofof}{\cref{claim:c-var-1}}
  Because $\bell = \bell'$ and $C_\bell$ and $C_{\bell'}$ are Boolean variables, $C_\bell\cdot C_{\bell'} = C_\bell$, the expectation of which is bounded in (\ref{eq:c-exp-9}) from \cref{sec:c-exp}.
\end{proofof}

The bounds on the expectations in the general case are most conveniently expressed in terms of quantities that are defined on top of the difference $\delta(\bell,\bell')$ in a tree structure that reflects the processing of the list elements by the algorithm. To capture these, we define a ``tree extrapolation'' function $tex:\bset^k\ra\bset^{k/2}\times\bset^{k/4}\times \cdots \times \bset$ that on input an $s^0\in\bset^k$, outputs a tuple of strings $(s^1,\dots, s^{\log{k}})$, where for $d\in[\log{k}]$, the string $s^d$ is contained in $\bset^{k/2^d}$, and its $i^{\text{th}}$ bit is defined as follows for $i\in[k/2^d]$:
\begin{align*}
  s^d_i = \max(s^{d-1}_{2i-1},s^{d-1}_{2i})
\end{align*}
Further, we define the function $stex:\bset^k\ra\Nat$ that on input $s\in\bset^k$, computes the tree extrapolation $(s^1, \dots, s^{\log{k}}) \gets tex(s)$, and then outputs the sum $\sum_{d\in[\log{k}]} \sum_{i\in[k/2^d]} s^d_i$.

\begin{claim}
  \label{claim:c-var-2}
  Consider any $s\in\bset^k\setminus\set{0^k}$, and let $\sigma\gets stex(s)$. For any $\bell, \bell' \in [n]^k$ such that $\diff(\bell,\bell') = s$, we have:
  \begin{align*}
    \exp{C_\bell \cdot C_{\bell'}} \leq \frac{p^{k+\sigma+1}}{(mp^{\log{k}+1})^2} \cdot \left[ (1-p)^{-1} \left(1 + \frac{30}{mp^{\log{k}}}\right) \right]^{(k+\sigma+1)}
  \end{align*}
\end{claim}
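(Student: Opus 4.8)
The plan is to follow the same inductive strategy used to prove \cref{prop:c-exp}, but now tracking two coupled executions of the algorithm simultaneously --- one indexed by $\bell$ and one by $\bell'$. For each $i\in[k]$ write $x^0_i = L^0_i[\ell_i]$ and $y^0_i = L^0_i[\ell'_i]$, and propagate partial sums up the tree as $x^d_i = x^{d-1}_{2i-1}+x^{d-1}_{2i}$ and $y^d_i = y^{d-1}_{2i-1}+y^{d-1}_{2i}$. The joint input distribution is governed by $s=\diff(\bell,\bell')$: when $s_i=0$ we have $\ell_i=\ell'_i$, so $x^0_i=y^0_i$ is a single uniform draw, and when $s_i=1$ the pair $(x^0_i,y^0_i)$ is an independent pair of uniform draws. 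The first step I would establish is the structural fact that this ``coincide/independent'' pattern is propagated exactly by the tree extrapolation: for every level $d$ and position $i$, one has $x^d_i=y^d_i$ if and only if $s^d_i=0$, where $(s^1,\dots,s^{\log k})=tex(s)$. This follows by induction from $s^d_i=\max(s^{d-1}_{2i-1},s^{d-1}_{2i})$, since a level-$d$ sum coincides across the two executions exactly when both of its incoming summands do.

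With this in hand I would define inductive quantities $\xi_d$ in analogy with the $\zeta_d$ of \cref{sec:c-exp}: $\xi_d$ is the joint probability that all filter events of both executions at levels $d+1,\dots,\log k$ pass and that both final sums equal $0$, computed under the idealized distribution in which the level-$d$ partial sums are uniform over $\range{mp^d}$, with two coordinates forced equal whenever the corresponding $s^d$ bits are $0$ and drawn independently otherwise. Since the genuine inputs at level $0$ are exactly uniform with exactly this coincidence structure, $\exp{C_\bell\cdot C_{\bell'}}=\xi_0$ with no correction at the base. The base of the induction is the final equality event: because $s\neq 0^k$, the root bit $s^{\log k}_1=\max_i s_i$ equals $1$, so $x^{\log k}_1$ and $y^{\log k}_1$ are idealized as independent and $\xi_{\log k}\approx (mp^{\log k})^{-2}$. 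This squared factor is exactly what produces the denominator $(mp^{\log k+1})^{-2}$ of the claim, in contrast with the single power in \cref{claim:c-var-1}, where $s=0^k$ forces the two final sums to coincide.

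The inductive step relating $\xi_d$ to $\xi_{d+1}$ is where the real work lies, and it splits each merge $i\in[k/2^{d+1}]$ into three cases according to the incoming bits $(s^d_{2i-1},s^d_{2i})$. When both are $0$ (so $s^{d+1}_i=0$) the two executions form the identical sum, the filter is a single event of probability $\approx p$ by \cref{prop:tools-2}, and the distribution update to a uniform element of $\range{mp^{d+1}}$ incurs one factor from \cref{prop:tools-3}. When both are $1$ the two sums are built from four independent summands, so the joint filter probability is $\approx p^2$ and the update incurs two \cref{prop:tools-3} factors combined through \cref{fact:mr-2}. The intermediate case, where exactly one incoming bit is $1$, is the main obstacle: the two sums $x^{d+1}_i,y^{d+1}_i$ then share a common summand, so they are neither equal nor independent. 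This is precisely the configuration handled by \cref{prop:tools-4}, which bounds the joint filter probability by $p^2(1+O(1/(mp^{d+1})))^2$, and by \cref{prop:tools-5}, which bounds the max-ratio distance of the shared-summand pair from $U_{mp^{d+1}}\otimes U_{mp^{d+1}}$. Its careful treatment --- including verifying that the resulting level-$(d+1)$ pair again has the required idealized form, so that the recursion can continue --- is the crux of the argument.

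Finally I would unroll the recursion from $d=\log k-1$ down to $0$. Each merge contributes a filter factor of $p^{1+s^{d+1}_i}$, so these multiply to $p^{(k-1)+\sigma}$, using $\sum_{d}\sum_i 1=\sum_d k/2^d = k-1$ and $\sum_{d}\sum_i s^d_i=\sigma=stex(s)$; together with the base case $(mp^{\log k})^{-2}$ this is exactly the main term $p^{k+\sigma+1}/(mp^{\log k+1})^2$. Since we only need an upper bound and every correction factor is at least $1$, I would bound all of them uniformly by $(1-p)^{-1}(1+30/(mp^{\log k}))$ and count at most one such factor per distinct partial sum formed in the tree --- there are $(k-1)+\sigma$ of these --- plus two for the two independent final sums, yielding the stated exponent $k+\sigma+1$. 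The genuinely fiddly part is tracking these multiplicative corrections carefully enough to collapse them into the single clean base factor while keeping the count at $k+\sigma+1$; the hypotheses $mp^{\log k-1}>30$ and $p<1/30$ are exactly what make each per-step factor controllable.
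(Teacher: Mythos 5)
Your proposal is correct and follows essentially the same route as the paper's proof: the coincide/independent pattern propagated by $tex(s)$ is exactly the paper's indicator structure $\ind^d_i$, your $\xi_d$ matches the paper's conditioned quantities $\chi_d$, the three-case merge analysis via \cref{prop:tools-2,prop:tools-3,prop:tools-4,prop:tools-5} is precisely the content of the paper's \cref{claim:c-var-4}, and your exponent accounting (filter factors $p^{(k-1)+\sigma}$ times the squared base $(mp^{\log k})^{-2}$, with $(k-1)+\sigma$ correction factors from merges plus two absorbed from the base) reproduces the stated bound. No gaps to report.
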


We use these claims to prove \cref{prop:c-var}, and then prove \cref{claim:c-var-2}.

\begin{proofof}{\cref{prop:c-var}} 
  Fix any valid set of parameters $k$, $n$, $m$, and $p$ that satisfy the hypothesis of the proposition. Define the following symbols for convenience:
  \begin{align*}
    \mu &= p \cdot \left[ (1-p)^{-1} \left(1 + \frac{35}{mp^{\log{k}}}\right) \right]\\
    \nu &= \frac{1}{mp^{\log{k}+1}}
  \end{align*}
  For $s\in\bset^k$, denote by $(s^1,\dots,s^{\log{k}})$ the outputs of the tree extension function $tex(s)$. Note that the bit $s^{\log{k}}_1$ is $0$ if and only if $s = 0^k$. For any $s$ and $\bell,\bell'$ such that $\diff(\bell,\bell') = s$, we have the following by \cref{claim:c-var-1,claim:c-var-2} (and noting that $(1+p) \leq (1-p)^{-1}$):
  \begin{align}
    \label{eq:prop-c-var-1}
    \exp{C_\bell \cdot C_{\bell'}} \leq \mu^{k+stex(s)} \cdot \nu \cdot (\mu\nu)^{s^{\log{k}}_1}
  \end{align}
  For $s\in\bset^k$, denote by $wt(s)$ the Hamming weight of $s$ -- that is, the sum $\sum_{i\in[k]}s_i$. For any $s\in\bset^k$, the number of pairs $\bar{\ell},\bar{\ell}'\in[n]^k$ such that $\diff(\bar{\ell},\bar{\ell}') = s$ is given by:
  \begin{align}
    \label{eq:prop-c-var-2}
    n^k \cdot (n-1)^{wt(s)} \leq n^{k+wt(s)}
  \end{align}
  Putting together (\ref{eq:c-var-2}, \ref{eq:prop-c-var-1}, \ref{eq:prop-c-var-2}), we have:
  \begin{align}
    \label{eq:prop-c-var-3}
    \exp{C^2} &= \sum_{s\in \bset^k} \sum_{\bar{\ell},\bar{\ell}': \diff(\bar{\ell},\bar{\ell}') = s}{\exp{C_{\bar{\ell}} \cdot C_{\bar{\ell}'}}} \nonumber\\
              &\leq \sum_{s\in\bset^k} n^{k+wt(s)} \cdot \mu^{k+stex(s)} \cdot \nu \cdot (\mu\nu)^{s^{\log{k}}_1} \nonumber\\
              &= (n^k\mu^k\nu) \cdot \sum_{s\in\bset^k} n^{wt(s)} \cdot \mu^{stex(s)} \cdot (\mu\nu)^{s^{\log{k}}_1}
  \end{align}
  We show that the sum in the above expression can be simplified to a recursively defined function of $k$, and to do so we define the following function for every $k \geq 2$ that is a power of $2$:
  \begin{align*}
    T(k) = \sum_{s\in\bset^k} n^{wt(s)} \cdot \mu^{stex(s)} \cdot (\mu\nu)^{s^{\log{k}}_1} - 1 = (\mu\nu) \cdot \sum_{s\in\bset^k: s\neq0^k} n^{wt(s)} \mu^{stex(s)}
  \end{align*}
  where the second equality follows from the fact that $wt(0^k) = stex(0^k) = (0^k)^{\log{k}}_1 = 0$. We also define $T(1)$ to be $(n\cdot \mu \cdot \nu)$. For any $k\geq 2$, by (\ref{eq:prop-c-var-3}) we have:
  \begin{align}
    \label{eq:prop-c-var-4}
    \exp{C^2} \leq (n^k\mu^k\nu) \cdot (1+T(k)) 
  \end{align}
  It may be verified by computation that:
  \begin{align}
    \label{eq:prop-c-var-5}
    T(2) = 2n\mu^2\nu + n^2\mu^2\nu = T(1) \cdot \left(\frac{T(1)}{\nu}+2\mu\right)
  \end{align}
  We show that a similar relation holds between $T(k)$ and $T(k/2)$ for other values of $k$ as well. Generalize the definitions of $wt$ and $stex$ to input strings of length $k/2$ in the natural manner. We then do this as follows:
  \begin{align}
    \label{eq:prop-c-var-6}
    T(k) &= \sum_{s\in\bset^k} n^{wt(s)} \cdot \mu^{stex(s)} \cdot (\mu\nu)^{s^{\log{k}}_1} - 1 \nonumber\\
         &= \sum_{s\in\bset^k:s\neq 0^k} n^{wt(s)} \cdot \mu^{stex(s)} \cdot (\mu\nu)\nonumber\\
         &= \sum_{q,r\in\bset^{k/2}: (q||r) \neq 0^k} n^{wt(q||r)} \cdot \mu^{stex(q||r)} \cdot (\mu\nu)\nonumber\\
         &= \sum_{q,r\in\bset^{k/2}: (q||r) \neq 0^k} n^{wt(q)+wt(r)} \cdot \mu^{stex(q)+stex(r)+1} \cdot (\mu\nu)\nonumber\\
         &= (\mu^2\nu) \cdot \sum_{q,r\in\bset^{k/2}: (q||r) \neq 0^k} n^{wt(q)+wt(r)} \cdot \mu^{stex(q)+stex(r)}\nonumber\\
         &= (\mu^2\nu) \cdot \left( \sum_{q,r\in\bset^{k/2}} n^{wt(q)+wt(r)} \cdot \mu^{stex(q)+stex(r)} - 1 \right)\nonumber\\
         &= (\mu^2\nu) \cdot \left( \Big(\sum_{q\in\bset^{k/2}} n^{wt(q)} \cdot \mu^{stex(q)}\Big)^2 - 1 \right)\nonumber\\
         &= (\mu^2\nu) \cdot \left( \left(\frac{T(k/2)}{\mu\nu}+1\right)^2 - 1 \right)\nonumber\\
         &= \mu \cdot T(k/2) \cdot \left( \frac{T(k/2)}{\mu\nu} + 2  \right)\nonumber\\
         &= T(k/2) \cdot \left( \frac{T(k/2)}{\nu} + 2 \mu \right)
  \end{align}
  Above, in the second, sixth, and eighth equalities, we use the fact that $wt(0^{k}) = stex(0^k) = 0$. The fourth equality follows from the fact that $wt$ is additive under concatenation of strings, and $stex(q||r)$, because of its tree-based definition, is equal to $stex(q)+stex(r)+1$ if $(q||r)$ is not $0^{k}$. The rest are straightforward algebraic manipulations. The definition of $T(1)$ and (\ref{eq:prop-c-var-5},\ref{eq:prop-c-var-6}) imply that this function $T$ is exactly the function $T_{\mu,\nu,n}$ specified in the statement of the proposition. This observation together with (\ref{eq:prop-c-var-4}) now proves \cref{prop:c-var}.
\end{proofof}

\begin{proofof}{\cref{claim:c-var-2}}
  Fix any $s\in \bset^k\setminus\set{0^k}$ with $\sigma \gets stex(s)$, and tuples $\bell^x, \bell^y \in [n]^k$ such that $\diff(\bell^x,\bell^y) = s$. We set up notation similar to that used in \cref{sec:c-exp}. For each $i\in [k]$, let:
  \begin{align*}
    x^i_0 &= L_i[\ell^x_i]\\
    y^i_0 &= L_i[\ell^y_i]
  \end{align*}
  For each $d\in [\log{k}]$ and $i\in[k/2^d]$, unless overridden, define the following:
  \begin{align*}
    x^d_i &= x^{d-1}_{2i-1} + x^{d-1}_{2i}\\
    y^d_i &= y^{d-1}_{2i-1} + y^{d-1}_{2i}
  \end{align*}
  We sometimes denote by $\bar{x}^d$ (similarly $\bar{y}^d$) the tuple $(x^d_1,\dots,x^d_{k/2^d})$. In our proof, we will rely on the fact that for $i \in [k]$ on which $\bell^x$ and $\bell^y$ differ, the values of $x^0_i = L_i[\ell^x_i]$ and $y^0_i = L_i[\ell^y_i]$ are independent. To enable us to argue fluidly about this, we define the following indicator constants\footnote{These are constants because we have already fixed $\bell^x$ and $\bell^y$.} for $i\in[k]$:
  \begin{align*}
    \ind^0_i = \begin{cases}
      1 &\text{if } \ell^x_i \neq \ell^y_i\\
      0 &\text{if } \ell^x_i = \ell^y_i
    \end{cases}
  \end{align*}
  Similarly, to capture the independence of intermediate values computed in the course of the $k$-Tree algorithm, we extend this to define the following for each $d\in [\log{k}]$ and $i\in[k/2^d]$:
  \begin{align*}
    \ind^d_i &= \max(\ind^{d-1}_{2i-1},\ind^{d-1}_{2i})
  \end{align*}
  That is, $\ind^d_i$ indicates whether the values computed at the $i^{\text{th}}$ ``node'' at depth $d$ of the $k$-Tree corresponding to the two tuples of indices $\bell^x$ and $\bell^y$ (that is, $x^d_i$ and $y^d_i$) are symbolically equal (in which case it is $0$). Note that the $\ind^d_i$'s correspond directly to bits in the ``tree extrapolation'' $tex(s)$ of $s = \delta(\bell^x,\bell^y)$ as defined before the statement of \cref{claim:c-var-2}. In particular, we have the following:
  \begin{align}
    \label{eq:c-var-sigma}
    \sum_{d\in[\log{k}]} \sum_{i\in[k/2^d]} \ind^d_i = stex(s) = \sigma
  \end{align}
  As in \cref{sec:c-exp}, for each $d\in[\log{k}]$, define the following event:
  \begin{align*}
    E_d(x_1, \dots, x_{k/2^{d-1}}) \equiv \left( \forall i\in \left[ \frac{k}{2^d} \right]: (x_{2i-1} + x_{2i}) \in \range{mp^d} \right) 
  \end{align*}
  And the following event that captures the property we are interested in elements of the final list:
  \begin{align*}
    E_{\log{k}+1}(x) \equiv \left( x = 0 \right)
  \end{align*}
  Finally, we define the following events that enforce the conditions of independence captured by the $\ind^d_i$'s. For each $d\in[0,\log{k}]$:
  \begin{align*}
    V_d \equiv \left( \forall i \in \left[ \frac{k}{2^d} \right] \text{ such that } (\ind^d_i = 0): x^d_i = y^d_i \right)
  \end{align*}
  The expectation we care about can now be written as follows:
  \begin{align}
    \label{eq:c-var-claim-1}
    \exp{C_{\bell^x} \cdot C_{\bell^y}} = \prob{\bar{x}^0,\bar{y}^0\gets (U_m)^k}{\left(E_1(\bar{x}^0) \wedge E_1(\bar{y}^0)\right) \wedge  \cdots \wedge \left( E_{\log{k}+1}(\bar{x}^{\log{k}}) \wedge E_{\log{k}+1}(\bar{y}^{\log{k}}) \right) \ \big|\ V_0}
  \end{align}
  As in \cref{sec:c-exp}, we will bound this quantity inductively. For $d\in[0,\log{k}]$, define the following:
  \begin{align*}
    \chi_d = \prob{\bar{x}^d,\bar{y}^d\gets (U_{mp^d})^{k/2^d}}{ \left( E_{d+1}(\bar{x}^d) \wedge E_{d+1}( \bar{y}^d ) \right) \wedge \left( E_{d+2}(\bar{x}^{d+1}) \wedge E_{d+2}( \bar{y}^{d+1} ) \right) \wedge \cdots\ \big|\ V_d}
  \end{align*}
  By definition, we have:
  \begin{align}
    \label{eq:c-var-claim-2}
    \exp{C_{\bell^x} \cdot C_{\bell^y}} = \chi_0
  \end{align}
  As the base for our induction, we will show a bound on $\chi_{\log{k}}$. Note that as $\diff(\bell^x,\bell^y) \neq 0^k$, the value of $\ind^{\log{k}}_1$ is always $1$, and so $V_{\log{k}}$ is vacuously a tautology. So we have:
  \begin{align}
    \label{eq:c-var-claim-3}
    \chi_{\log{k}} &= \prob{{x}^{\log{k}}_1,{y}^{\log{k}}_1\gets U_{mp^{\log{k}}}}{E_{\log{k}+1}({x}^{\log{k}}_1) \wedge E_{\log{k}+1}({y}^{\log{k}}_1)\ \big|\ V_{\log{k}} } \nonumber\\
                   &= \prob{{x}^{\log{k}}_1,{y}^{\log{k}}_1\gets U_{mp^{\log{k}}}}{E_{\log{k}+1}({x}^{\log{k}}_1) \wedge E_{\log{k}+1}({y}^{\log{k}}_1) } \nonumber\\
                   &= \prob{{x}^{\log{k}}_1\gets U_{mp^{\log{k}}}}{E_{\log{k}+1}({x}^{\log{k}}_1)} \cdot \prob{{y}^{\log{k}}_1\gets U_{mp^{\log{k}}}}{E_{\log{k}+1}({y}^{\log{k}}_1) } \nonumber\\
                   & \leq \frac{1}{(mp^{\log{k}}-1)^2} = \frac{1}{(mp^{\log{k}})^2} \left( 1 + \frac{1}{mp^{\log{k}}-1} \right)^2
  \end{align}
  The following claim enables our induction.
  \begin{claim}
    \label{claim:c-var-4}
    For $d\in[0,\log{k}-1]$, we have:
    \begin{align*}
      \chi_d \leq \chi_{d+1} \cdot \left[ p (1-p)^{-1} \left(1 + \frac{30}{mp^{\log{k}}}\right) \right]^{\left(\frac{k}{2^{d+1}} + \sum_{i=1}^{k/2^{d+1}} \ind^{d+1}_i \right)}
    \end{align*}
  \end{claim}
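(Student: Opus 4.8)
The plan is to mirror the proof of \cref{claim:c-exp-1} from \cref{sec:c-exp}, but now tracking the two coupled executions indexed by $\bell^x$ and $\bell^y$ simultaneously. First I would expand $\chi_d$ via Bayes's theorem into a product of two factors: the probability that both filter events $E_{d+1}(\bar{x}^d)$ and $E_{d+1}(\bar{y}^d)$ hold (conditioned on $V_d$), and the conditional probability of all remaining events given that these filters pass. I then analyze the first factor node-by-node over $i\in[k/2^{d+1}]$, and reduce the second factor to $\chi_{d+1}$ by replacing the conditioned distributions of the partial sums $(x^{d+1}_i,y^{d+1}_i)$ with the appropriate uniform distributions, paying a Max-Ratio cost each time via \cref{fact:mr-1,fact:mr-2}.

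The crux is that each node $i$ falls into one of three cases, determined by the children's indicators $\ind^d_{2i-1},\ind^d_{2i}$. When $\ind^{d+1}_i=0$ (both children symbolically equal across the two executions, so $x^{d+1}_i=y^{d+1}_i$ is forced by $V_d$), the two filter events coincide, contributing a single factor of roughly $p$ by \cref{prop:tools-2}, and the shared conditioned sum is close to $U_{mp^{d+1}}$ by \cref{prop:tools-3}. When $\ind^{d+1}_i=1$ with both children independent, the two filter events are independent and each contributes $p$, and the pair of sums is close to $U_{mp^{d+1}}\otimes U_{mp^{d+1}}$ by applying \cref{prop:tools-3} to each coordinate and combining via \cref{fact:mr-2}. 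The delicate case is $\ind^{d+1}_i=1$ with exactly one shared child: the two sums then share a common summand, the joint filter probability is bounded by $p^2(1+3/(mp^{d+1}))^2$ using \cref{prop:tools-4}, and the joint distribution of $(x^{d+1}_i,y^{d+1}_i)$ is shown close to $U_{mp^{d+1}}\otimes U_{mp^{d+1}}$ by \cref{prop:tools-5}. In all three cases the filter contributes $p^{1+\ind^{d+1}_i}$, so across all nodes it yields exactly $p^{\,k/2^{d+1}+\sum_i \ind^{d+1}_i}$, matching the claimed exponent.

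For the second factor, I would use \cref{fact:mr-1,fact:mr-2} to convert the accumulated Max-Ratio distances into a single multiplicative factor on $\chi_{d+1}$: up to this cost the re-sampled values $(\bar{x}^{d+1},\bar{y}^{d+1})$ are independent uniform draws on the nodes with $\ind^{d+1}_i=1$ and equal uniform draws on those with $\ind^{d+1}_i=0$, hence they satisfy $V_{d+1}$ and realize precisely the distribution defining $\chi_{d+1}$. The rest is bookkeeping on error terms. Each $\ind^{d+1}_i=0$ node contributes one \cref{prop:tools-3} Max-Ratio factor $(1-p/2)^{-1}(1+30/(mp^d))$ together with one \cref{prop:tools-2} relative error, while each $\ind^{d+1}_i=1$ node contributes (in effect) two such factors. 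Using $mp^d=mp^{d+1}/p$ with $p<1/30$, the level-$d$ error $30/(mp^d)$ is dominated by $30/(mp^{d+1})$, and since $d+1\le\log{k}$ gives $mp^{d+1}\ge mp^{\log{k}}$, every error is absorbed into $(1+30/(mp^{\log{k}}))$, the generous constant $30$ providing the needed slack (exactly as in the upper-bound computation following \cref{claim:c-exp-1}). Collecting a factor of $[\,p(1-p)^{-1}(1+30/(mp^{\log{k}}))\,]$ per unit of the exponent $k/2^{d+1}+\sum_i\ind^{d+1}_i$ then yields the claim.

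The main obstacle I anticipate is the one-shared-child case: it is the only place requiring the two specialized bounds (\cref{prop:tools-4} for the filter and \cref{prop:tools-5} for the joint distribution) rather than the single-coordinate arguments, and one must verify that even under this correlation the resulting pair $(x^{d+1}_i,y^{d+1}_i)$ is close to product-uniform. This is exactly what makes the recursion self-consistent: it preserves the ``diagonal-or-independent'' structure of the node pairs from level $d$ to level $d+1$, with the indicator string evolving precisely as $\ind^{d+1}_i=\max(\ind^d_{2i-1},\ind^d_{2i})$, so that $\chi_{d+1}$ has the same form as $\chi_d$ and the induction closes.
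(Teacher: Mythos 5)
Your proposal is correct and follows essentially the same route as the paper's own proof: the Bayes decomposition of $\chi_d$ into a filter factor and a conditional factor, the node-by-node three-case analysis (using \cref{prop:tools-2,prop:tools-4} for the filter probabilities and \cref{prop:tools-3,prop:tools-5} with \cref{fact:mr-1,fact:mr-2} for the Max-Ratio reduction to $\chi_{d+1}$), and the absorption of all error terms into $(1+30/(mp^{\log{k}}))$ via $mp^d \geq mp^{\log{k}} > 30$ and $p < 1/30$. Your error bookkeeping, including the observation that the one-shared-child case costs two effective factors and that the recursion preserves the ``diagonal-or-independent'' structure with $\ind^{d+1}_i = \max(\ind^d_{2i-1},\ind^d_{2i})$, matches the paper's argument in all essentials.
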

  We continue with the proof of \cref{claim:c-var-2} now, and then prove \cref{claim:c-var-4} later below. Applying \cref{claim:c-var-4} repeatedly, we get the following relationship:
  \begin{align}
    \label{eq:c-var-claim-4}
    \chi_0 &\leq \chi_{\log{k}} \cdot \left[ p (1-p)^{-1} \left(1 + \frac{30}{mp^{\log{k}}}\right) \right]^{\sum_{d=0}^{\log{k}-1}\left(\frac{k}{2^{d+1}} + \sum_{i=1}^{k/2^{d+1}} \ind^{d+1}_i \right)} \nonumber\\
           &= \chi_{\log{k}} \cdot \left[ p (1-p)^{-1} \left(1 + \frac{30}{mp^{\log{k}}}\right) \right]^{\left((k-1) + \sum_{d=0}^{\log{k}-1}\sum_{i=1}^{k/2^{d+1}} \ind^{d+1}_i \right)} \nonumber\\
           &= \chi_{\log{k}} \cdot \left[ p (1-p)^{-1} \left(1 + \frac{30}{mp^{\log{k}}}\right) \right]^{\left((k-1) + \sum_{d=1}^{\log{k}}\sum_{i=1}^{k/2^{d}} \ind^{d}_i \right)} \nonumber\\
           &= \chi_{\log{k}} \cdot \left[ p (1-p)^{-1} \left(1 + \frac{30}{mp^{\log{k}}}\right) \right]^{\left(k-1 + \sigma \right)}
  \end{align}
  where the last equality follows from (\ref{eq:c-var-sigma}). Putting together (\ref{eq:c-var-claim-3}) and (\ref{eq:c-var-claim-4}), we have:
  \begin{align*}
    \chi_0 &\leq \chi_{\log{k}} \cdot \left[ p (1-p)^{-1} \left(1 + \frac{30}{mp^{\log{k}}}\right) \right]^{(k+\sigma-1)}\\
    &\leq \frac{p^{k+\sigma-1}}{(mp^{\log{k}})^2} \cdot \left[ (1-p)^{-1} \left(1 + \frac{30}{mp^{\log{k}}}\right) \right]^{(k+\sigma-1)} \left( 1 + \frac{1}{mp^{\log{k}}-1} \right)^2\\
    &\leq \frac{p^{k+\sigma+1}}{(mp^{\log{k}+1})^2} \cdot \left[ (1-p)^{-1} \left(1 + \frac{30}{mp^{\log{k}}}\right) \right]^{(k+\sigma+1)} 
  \end{align*}
  Together with (\ref{eq:c-var-claim-2}), this proves \cref{claim:c-var-2}.
\end{proofof}

\begin{proofof}{\cref{claim:c-var-4}}
  As in the proof of \cref{claim:c-exp-1}, we can write $\chi_d$ as follows:
  \begin{align}
    \label{eq:c-var-4-1}
    \chi_d &= \prob{\bar{x}^d,\bar{y}^d\gets (U_{mp^d})^{k/2^d}}{ \left( E_{d+1}(\bar{x}^d) \wedge E_{d+1}( \bar{y}^d ) \right) \wedge \left( E_{d+2}(\bar{x}^{d+1}) \wedge E_{d+2}( \bar{y}^{d+1} ) \right) \wedge \cdots\ \big|\ V_d} \nonumber\\
           &= \prob{\bar{x}^d,\bar{y}^d\gets (U_{mp^d})^{k/2^d}}{ E_{d+1}(\bar{x}^d) \wedge E_{d+1}( \bar{y}^d )\ \big|\ V_d } \cdot \nonumber\\
           &\qquad\prob{\bar{x}^d,\bar{y}^d\gets (U_{mp^d})^{k/2^d}}{ \left( E_{d+2}(\bar{x}^{d+1}) \wedge E_{d+2}( \bar{y}^{d+1} ) \right) \wedge \cdots\ \big|\ V_d \wedge \left( E_{d+1}(\bar{x}^d) \wedge E_{d+1}( \bar{y}^d ) \right)}
  \end{align}
  As in the proof of \cref{claim:c-exp-1}, we would like to bound the two terms in the right-hand size of (\ref{eq:c-var-4-1}) separately. In each case, we do this by looking at the various nodes at level $(d+1)$ of the $k$-Tree separately. As the values of $x^{d+1}_i$ and $x^{d+1}_{i'}$ are independent for distinct values of $i$ and $i'$ (and the same with the $y$'s), the first term can be written as the following product:
  \begin{align}
    \label{eq:c-var-4-2}
    &\prob{\bar{x}^d,\bar{y}^d\gets (U_{mp^d})^{k/2^d}}{ E_{d+1}(\bar{x}^d) \wedge E_{d+1}( \bar{y}^d )\ \big|\ V_d } \nonumber\\
    &\quad= \prob{\bar{x}^d,\bar{y}^d\gets (U_{mp^d})^{k/2^d}}{\forall i\in \left[ \frac{k}{2^{d+1}} \right]: (x^d_{2i-1} + x^d_{2i}) \in \range{mp^{d+1}} \wedge (y^d_{2i-1} + y^d_{2i}) \in \range{mp^{d+1}} \ \big|\ V_d}\nonumber\\
    &\quad= \prod_{i=0}^{k/2^{d+1}} \prob{x^d_{2i-1},x^d_{2i},y^d_{2i-1},y^d_{2i}\gets U_{mp^d}}{(x^d_{2i-1} + x^d_{2i}) \in \range{mp^{d+1}} \wedge (y^d_{2i-1} + y^d_{2i}) \in \range{mp^{d+1}} \ \big|\ V_d}
  \end{align}
  We bound the terms in the product in (\ref{eq:c-var-4-2}) separately. For $i\in[k/2^{d+1}]$ such that $\ind^{d+1}_i = 0$, due to the conditioning on $V_d$, the events $(x^d_{2i-1} + x^d_{2i}) \in \range{mp^{d+1}}$ and $(y^d_{2i-1} + y^d_{2i}) \in \range{mp^{d+1}}$ are identical, and the probability them both occuring is bounded as follows:
  \begin{align}
    \label{eq:c-var-4-3}
    &\prob{x^d_{2i-1},x^d_{2i},y^d_{2i-1},y^d_{2i}\gets U_{mp^d}}{(x^d_{2i-1} + x^d_{2i}) \in \range{mp^{d+1}} \wedge (y^d_{2i-1} + y^d_{2i}) \in \range{mp^{d+1}} \ \big|\ V_d}\nonumber\\
    &\quad= \prob{x^d_{2i-1},x^d_{2i}\gets U_{mp^d}}{(x^d_{2i-1} + x^d_{2i}) \in \range{mp^{d+1}}} \nonumber\\
    &\quad\leq \left(p - \frac{p^2}{4} \right) + \frac{7}{mp^d} \leq p\cdot \left( 1 + \frac{7}{mp^{d+1}} \right)
  \end{align}
  where the last inequality follows from \cref{prop:tools-2}. Next, consider the $i$ such that $\ind^{d+1}_i = \ind^{d}_{2i-1} = \ind^{d}_{2i} = 1$. In this case, conditioning on $V_d$ does not affect the independence of the $x$ and $y$ variables, and we can write the above probability as follows:
  \begin{align}
    \label{eq:c-var-4-4}
    &\prob{x^d_{2i-1},x^d_{2i},y^d_{2i-1},y^d_{2i}\gets U_{mp^d}}{(x^d_{2i-1} + x^d_{2i}) \in \range{mp^{d+1}} \wedge (y^d_{2i-1} + y^d_{2i}) \in \range{mp^{d+1}} \ \big|\ V_d}\nonumber\\
    &\quad= \prob{x^d_{2i-1},x^d_{2i}\gets U_{mp^d}}{(x^d_{2i-1} + x^d_{2i}) \in \range{mp^{d+1}}} \cdot \prob{y^d_{2i-1},y^d_{2i}\gets U_{mp^d}}{(y^d_{2i-1} + y^d_{2i}) \in \range{mp^{d+1}}} \nonumber\\
    &\quad\leq p^2 \cdot \left( 1 + \frac{7}{mp^{d+1}} \right)^2
  \end{align}
  where the last inequality is from \cref{prop:tools-2}. Finally, consider the $i$'s for which $\ind^{d+1}_i = 1$ and exactly one of $\ind^d_{2i-1}$ and $\ind^d_{2i}$ is $1$. Suppose, without loss of generality, that $\ind^d_{2i-1} = 1$. Then, conditioning on $V_d$, the variables $x^d_{2i-1}$ and $y^d_{2i-1}$ are equal and uniformly distributed, whereas $x^d_{2i}$ and $y^d_{2i}$ are independent and uniformly distributed. Thus, in this case, we can use \cref{prop:tools-4} to bound the probability as follows:
  \begin{align}
    \label{eq:c-var-4-5}
    &\prob{x^d_{2i-1},x^d_{2i},y^d_{2i-1},y^d_{2i}\gets U_{mp^d}}{(x^d_{2i-1} + x^d_{2i}) \in \range{mp^{d+1}} \wedge (y^d_{2i-1} + y^d_{2i}) \in \range{mp^{d+1}} \ \big|\ V_d}\nonumber\\
    &\quad= \prob{x^d_{2i-1},x^d_{2i},y^d_{2i}\gets U_{mp^d}}{(x^d_{2i-1} + x^d_{2i}) \in \range{mp^{d+1}} \wedge (x^d_{2i-1} + y^d_{2i}) \in \range{mp^{d+1}}}\nonumber\\
    &\quad\leq  p^2 \cdot \left( 1 + \frac{3}{mp^{d+1}} \right)^2 \leq p^2 \cdot \left( 1 + \frac{7}{mp^{d+1}} \right)^2
  \end{align}
  Putting together (\ref{eq:c-var-4-2}-\ref{eq:c-var-4-5}), we get:
  \begin{align}
    \label{eq:c-var-4-6}
    \prob{\bar{x}^d,\bar{y}^d\gets (U_{mp^d})^{k/2^d}}{ E_{d+1}(\bar{x}^d) \wedge E_{d+1}( \bar{y}^d )\ \big|\ V_d } \leq \left[ p \cdot \left(1 + \frac{7}{mp^{d+1}} \right) \right]^{\left(\frac{k}{2^{d+1}} + \sum_{i=1}^{k/2^{d+1}} \ind^{d+1}_i \right)}
  \end{align}

  \medskip
  \noindent Next we bound the second term in the right-hand side of (\ref{eq:c-var-4-1}), which, to recall, is the following:
  \begin{align}
    \label{eq:c-var-4-7}
    \prob{\bar{x}^d,\bar{y}^d\gets (U_{mp^d})^{k/2^d}}{ \left( E_{d+2}(\bar{x}^{d+1}) \wedge E_{d+2}( \bar{y}^{d+1} ) \right) \wedge \cdots\ \big|\ V_d \wedge \left( E_{d+1}(\bar{x}^d) \wedge E_{d+1}( \bar{y}^d ) \right)}
  \end{align}
  As in the proof of \cref{claim:c-exp-1}, we bound this in terms of $\chi_{d+1}$ and the distance between the distributions of $x^{d+1}_i$'s and $y^{d+1}_i$'s as sampled in the above expression and the corresponding uniform distributions. Note that $\chi_{d+1}$ is the following:
  \begin{align}
    \label{eq:c-var-4-8}
    \chi_{d+1} = \prob{\bar{x}^{d+1},\bar{y}^{d+1}\gets (U_{mp^{d+1}})^{k/2^{d+1}}}{ \left( E_{d+2}(\bar{x}^{d+1}) \wedge E_{d+2}( \bar{y}^{d+1} ) \right) \wedge \cdots\ \big|\ V_{d+1}}
  \end{align}
  By \cref{fact:mr-1}, the (multiplicative) difference between (\ref{eq:c-var-4-7}) and $\chi_{d+1}$ is bounded by the distance between the conditioned distributions from which $(\bar{x}^{d+1},\bar{y}^{d+1})$ is sampled in the two expressions. These distributions can be written in terms of the following two distributions defined for each $i\in[k/2^{d+1}]$:
  \begin{align*}
    D^1_i &\equiv \left[\begin{array}{l} \text{Sample $x^d_{2i-1}, x^d_{2i}, y^d_{2i-1}, y^d_{2i} \gets U_{mp^d}$}\\\text{If $\ind^d_{2i-1} = 0$, reassign $y^d_{2i-1} = x^d_{2i-1}$}\\\text{If $\ind^d_{2i} = 0$, reassign $y^d_{2i} = x^d_{2i}$}\\\text{Set $x^{d+1}_i = (x^d_{2i-1}+x^d_{2i})$ and $y^{d+1}_i = (y^d_{2i-1}+y^d_{2i})$}\\\text{Condition on $x^{d+1}_i, y^{d+1}_i \in \range{mp^{d+1}}$}\\\text{Output $(x^{d+1}_i,y^{d+1}_i)$}\end{array}\right]
  \end{align*}
  \begin{align*}
    D^0_i &\equiv \left[\begin{array}{l} \text{Sample $x^{d+1}_{i}, y^{d+1}_{i} \gets U_{mp^{d+1}}$}\\\text{If $\ind^{d+1}_{i} = 0$, reassign $y^{d+1}_{i} = x^{d+1}_{i}$}\\\text{Output $(x^{d+1}_i,y^{d+1}_i)$}\end{array}\right]
  \end{align*}
  By the definitions of the $E_d$'s and $V_d$'s, it may be seen that the distribution of $(\bar{x}^{d+1},\bar{y}^{d+1})$ in (\ref{eq:c-var-4-7}) is essentially $(D^1_1\otimes \cdots \otimes D^1_{k/2^{d+1}})$, while that in (\ref{eq:c-var-4-8}) is essentially $(D^0_1\otimes \cdots \otimes D^0_{k/2^{d+1}})$. Thus, if we bound the distance between $D^0_i$ and $D^1_i$ for each $i$, the multiplicative difference between these expressions, by \cref{fact:mr-1,fact:mr-2} will be bounded by the product of these distances.

  Consider any $i\in[k/2^{d+1}]$ for which $\ind^{d+1}_i = 0$, which implies that $\ind^d_{2i-1} = \ind^d_{2i} = 0$, and so the samples of both distributions are fully determined by the $x$ variables. Then, by the definitions of the distributions, for such $i$ we have:
  \begin{align}
    \label{eq:c-var-4-9}
    \mr(D^0_i,D^1_i) &= \mr\left(U_{mp^{d+1}}, 2\cdot U_{mp^d}|_{\range{mp^{d+1}}} \right) \nonumber\\
                     &\leq \left(1 - \frac{p}{2}\right)^{-1}\left(1+\frac{30}{mp^d}\right) \nonumber\\
                     &\leq \left(1 -p\right)^{-1}\left(1+\frac{30}{mp^d}\right)
  \end{align}
  where the first inequality follows from \cref{prop:tools-3}. Next, consider an $i$ such that $\ind^{d+1}_i = \ind^d_{2i-1} = \ind^d_{2i} = 1$. In this case, the $x$ variables and the $y$ variables are independent, and each behaves exactly like the $x$ variables in the previous case. So, using \cref{fact:mr-2}, for such $i$ we have:
  \begin{align}
    \label{eq:c-var-4-10}
    \mr(D^0_i,D^1_i) &= \mr\left(U_{mp^{d+1}}^{\otimes 2}, 2\cdot U_{mp^d}|_{\range{mp^{d+1}}}^{\otimes 2} \right) \nonumber\\
                     &\leq \mr\left(U_{mp^{d+1}}, 2\cdot U_{mp^d}|_{\range{mp^{d+1}}}\right)^2 \nonumber\\
                     &\leq \left[ \left(1 -p\right)^{-1}\left(1+\frac{30}{mp^d}\right) \right]^2
  \end{align}
  Finally, consider $i$ such that $\ind^{d+1}_i = 1$ and exactly one of $\ind^d_{2i-1}$ and $\ind^d_{2i}$ is $1$; without loss of generality, suppose $\ind^d_{2i-1} = 0$. Then, in $D^1_i$, $x^d_{2i-1} = y^d_{2i-1}$ is sampled from $U_{mp^d}$, and is added to $x^d_{2i}$ and $y^d_{2i}$, which are sampled independently from $U_{mp^d}$, to get the outputs, whereas $D^0_i$ simply outputs independent samples from $U_{mp^{d+1}}$. We can then use \cref{prop:tools-5} to bound the distance between these as:
  \begin{align}
    \label{eq:c-var-4-11}
    \mr(D^0_i,D^1_i) &\leq (1-p)^{-1} \left( 1 + \frac{4}{mp^d} \right) \leq \left[ (1-p)^{-1} \left(1 + \frac{30}{mp^d} \right) \right]^2
  \end{align}
  Putting together (\ref{eq:c-var-4-8}-\ref{eq:c-var-4-11}) and applying \cref{fact:mr-1,fact:mr-2}, we get:
  \begin{align}
    \label{eq:c-var-4-12}
    &\prob{\bar{x}^d,\bar{y}^d\gets (U_{mp^d})^{k/2^d}}{ \left( E_{d+2}(\bar{x}^{d+1}) \wedge E_{d+2}( \bar{y}^{d+1} ) \right) \wedge \cdots\ \big|\ V_d \wedge \left( E_{d+1}(\bar{x}^d) \wedge E_{d+1}( \bar{y}^d ) \right)} \nonumber\\
    &\qquad\qquad\leq \chi_{d+1} \cdot \left[ (1-p)^{-1} \left(1 + \frac{30}{mp^d} \right) \right]^{\left(\frac{k}{2^{d+1}} + \sum_{i=1}^{k/2^{d+1}} \ind^{d+1}_i \right)}
  \end{align}

  \medskip
  Putting together (\ref{eq:c-var-4-1}), (\ref{eq:c-var-4-6}), and (\ref{eq:c-var-4-12}), we get:
  \begin{align*}
    \chi_d &\leq \chi_{d+1} \cdot \left[ p (1-p)^{-1} \left(1 + \frac{7}{mp^{d+1}}\right) \left(1 + \frac{30}{mp^d} \right) \right]^{\left(\frac{k}{2^{d+1}} + \sum_{i=1}^{k/2^{d+1}} \ind^{d+1}_i \right)}\\
    &\leq \chi_{d+1} \cdot \left[ p (1-p)^{-1} \left(1 + \frac{30}{mp^{\log{k}}}\right) \right]^{\left(\frac{k}{2^{d+1}} + \sum_{i=1}^{k/2^{d+1}} \ind^{d+1}_i \right)}
  \end{align*}
  where the second inequality uses the hypotheses that $mp^d \geq mp^{\log{k}} \geq 30$, and $p < 1/30$. This proves the claim.
\end{proofof}


\subsection{Proof of \cref{prop:l}}
\label{sec:l}

Much of this proof proceeds along the lines of that of \cref{prop:c-exp}, and parts of the proof below are reproduced from there nearly verbatim. Fix any set of values for $k$, $n$, $m$, and $p$ satisfying the hypothesis of \cref{prop:l}. For each $d\in [0,\log{k}]$ and $i\in [k/2^d]$, denote by $L^d_i$ the corresponding list computed by the $k$-Tree algorithm as described in \cref{fig:ktree}, with $L^0_1, \dots, L^0_k$ being the input lists, each of which contains $n$ uniformly distributed integers from $\range{m}$. Recall that we defined the variable $\Lambda_{k,n,m,p}$ to be the total size of all of these lists; denote this variable by $\Lambda$ for brevity. For each $d\in[0,\log{k}]$, denote by $\Lambda_d$ the size of the list $L^d_1$. For any $d$, due to symmetry, the expected size of all lists $L^d_i$ is the same. Thus, we have:
\begin{align}
  \label{eq:l-1}
  \exp{\Lambda} = \exp{\sum_{d\in[0,\log{k}]} \sum_{i\in[k/2^d]} \size{L^d_i}} = \sum_{d\in[0,\log{k}]} \sum_{i\in[k/2^d]} \exp{\size{L^d_i}} =  \sum_{d\in[0,\log{k}]} \frac{k}{2^d} \cdot \exp{\Lambda_d}
\end{align}
We will then bound each $\Lambda_d$ separately and use the above sum to bound $\Lambda$. By design, for $d = 0$, we have:
\begin{align}
  \label{eq:l-2}
  \exp{\Lambda_0} = n
\end{align}
Fix any $d\in[1,\log{k}]$. As in the proof of \cref{prop:c-exp}, we will write $\Lambda_d$ as a sum of variables that indicate whether each tuple of elements in the $2^d$ input lists that contribute to the list $L^d_1$ pass all the filters of the algorithm until that point. Each element of $L^{d}_1$ corresponds to the sum $(L^0_1[\ell_1]+\cdots + L^0_{2^d}[\ell_{2^d}])$ for some $\ell_1,\dots,\ell_{2^d}\in[n]$. For each $\ell_1,\dots,\ell_{2^d} \in [n]$, denote the tuple $(\ell_1,\dots,\ell_{2^d})$ by $\bar{\ell}$, and define the following variable:
\begin{align*}
  \Lambda_{d,\bar{\ell}} = \begin{cases}
    1 & \text{if $ (L^0_1[\ell_1] + \cdots + L^0_{2^d}[\ell_{2^d}]) $ appears in $ L^{d}_1$ }\\
    0 & \text{otherwise}
  \end{cases}
\end{align*}
We can now write $\Lambda_d$ as:
\begin{align}
  \label{eq:l-3}
  \Lambda_d = \sum_{\bar{\ell}\in[n]^{2^d}} \Lambda_{d,\bar{\ell}}
\end{align}
Computing the expectation of each $\Lambda_{d,\bar{\ell}}$ then gives us the expectation of $\Lambda_d$. We do this as follows.

\medskip
Fix any value of $\bar{\ell} = (\ell_1,\dots,\ell_{2^d})$. For each $i\in[2^d]$ denote by $x^0_i$ the value of $L^0_i[\ell_i]$. When the input lists are uniformly random, each $x^0_i$ is also uniformly random over $\range{m}$. For each $t\in [d]$ and $i\in[2^{d-t}]$, we also set up the following notation for the partial sums being considered at each step in the $k$-Tree algorithm:
\begin{align*}
  x^t_i = x^{t-1}_{2i-1} + x^{t-1}_{2i}
\end{align*}
We will later override some of these $x^t_i$'s by sampling them afresh rather than computing them as above, but for any $x^t_i$ for which we do not explicitly say otherwise, the above is to be taken to be its definition. For each $t\in[d]$, define the following event that captures the set of checks made by the calls to the Merge function in the $t^{th}$ iteration of the algorithm:
\begin{align*}
  E_t(x_1, \dots, x_{2^{d-t+1}}) \equiv \left( \forall i\in \left[ 2^{d-t} \right]: (x_{2i-1} + x_{2i}) \in \range{mp^t} \right) 
\end{align*}
We can now write the expectation of $\Lambda_{d,\bar{\ell}}$ as follows:
\begin{align*}
  \exp{\Lambda_{d,\bar{\ell}}} = \prob{x^0_1,\dots,x^0_{2^d}\gets U_m}{ E_1\left(x^0_1,\dots,x^0_{2^d}\right) \wedge E_2\left(x^1_1,\dots,x^1_{2^{d-1}}\right) \wedge \cdots \wedge E_{d}\left(x^{d-1}_1,x^{d-1}_2\right) }
\end{align*}
We define the following sequence of probabilites of subsets of the events in the expression above that we then bound inductively to eventually arrive at a bound for the above expectation. For $t\in[0,d-1]$, define the following:
\begin{align*}
  \xi_t = \prob{x^t_1,\dots,x^t_{2^{d-t}}\gets U_{mp^t}}{ E_{t+1}\left(x^t_1,\dots,x^t_{2^{d-t}}\right) \wedge E_{t+2}\left(x^{t+1}_1,\dots,x^{t+1}_{2^{d-t-1}}\right) \wedge \cdots \wedge E_{d}\left(x^{d-1}_1,x^{d-1}_2\right) }
\end{align*}
From the above two expressions, we have:
\begin{align}
  \label{eq:l-4}
  \exp{\Lambda_{d,\bar{\ell}}} = \xi_0
\end{align}
The base of our induction is the following bound that follows from \cref{prop:tools-2}:
\begin{align}
  \label{eq:l-5}
  \xi_{d-1} &= \prob{x^{d-1}_1, x^{d-1}_2 \gets U_{mp^{d-1}}}{E_{d}(x^{d-1}_1,x^{d-1}_2)}\nonumber\\
            &= \prob{x_1,x_2 \gets U_{mp^{d-1}}}{x_1+x_2\in \range{mp^d}}\nonumber\\
            &\in \left( p - \frac{p^2}{4} \right) \pm \frac{7}{mp^{d-1}}
\end{align}
The inductive step uses the following claim.
\begin{claim}
  \label{claim:l-1}
  For $t\in [0,d-2]$, we have: $\xi_t \in \xi_{t+1} \cdot \left[p \cdot (1\pm p) \left(1 \pm \frac{30}{mp^{\log{k}}} \right) \right]^{2^{d-t-1}}$
\end{claim}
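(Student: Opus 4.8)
The plan is to replay the argument for \cref{claim:c-exp-1} almost verbatim, since $\xi_t$ plays exactly the role here that $\zeta_d$ played in the first-moment computation: the subtree rooted at $L^d_1$, built from $2^d$ input lists, takes the place of the full $k$-tree, and the level index $t$ takes the place of $d$. The only bookkeeping difference is that the $(t+1)$-th merge level of this subtree has $2^{d-t-1}$ nodes (rather than $k/2^{d+1}$), which is what appears as the exponent in the claim.

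First I would expand $\xi_t$ using Bayes's theorem into a product of two factors: the probability that $E_{t+1}(\bar{x}^t)$ holds, times the conditional probability of $E_{t+2} \wedge \cdots \wedge E_d$ given $E_{t+1}(\bar{x}^t)$. For the first factor, $E_{t+1}$ asks that each of the $2^{d-t-1}$ sums $(x^t_{2i-1}+x^t_{2i})$ land in $\range{mp^{t+1}}$; since the $x^t_i \gets U_{mp^t}$ are independent, these events are independent across $i$, and \cref{prop:tools-2} (with $s = mp^t$) bounds each one, giving
\[
  \prob{}{E_{t+1}(\bar{x}^t)} \in \left( \left(p - \tfrac{p^2}{4}\right) \pm \tfrac{7}{mp^t} \right)^{2^{d-t-1}}.
\]

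For the second factor, conditioning on $E_{t+1}$ makes each $x^{t+1}_i = x^t_{2i-1}+x^t_{2i}$ distributed as $2\cdot U_{mp^t}|_{\range{mp^{t+1}}}$. By \cref{prop:tools-3} this is MR-close to $U_{mp^{t+1}}$, and by \cref{fact:mr-1,fact:mr-2} I can replace all $2^{d-t-1}$ of these conditioned coordinates by independent uniform draws from $U_{mp^{t+1}}$; this turns the conditional probability into exactly the defining expression for $\xi_{t+1}$, at a two-sided multiplicative cost of $\mr\big(U_{mp^{t+1}}, 2\cdot U_{mp^t}|_{\range{mp^{t+1}}}\big)^{2^{d-t-1}} \leq \big[(1-\tfrac{p}{2})^{-1}(1+\tfrac{30}{mp^t})\big]^{2^{d-t-1}}$ in each direction.

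Finally I would multiply the two factors and simplify. Using $(1-\tfrac{p}{4})(1-\tfrac{p}{2})^{-1} < 1+p$ for the upper bound and $(1-\tfrac{p}{4})(1-\tfrac{p}{2}) > 1-p$ for the lower bound collapses the $(p - p^2/4)$ and MR factors into $p(1\pm p)$, while the residual $O(1/(mp^t))$ error terms are absorbed into $(1 \pm \tfrac{30}{mp^{\log k}})$. Here I would invoke that $t \leq d-2 \leq \log{k}-2$, so $mp^t \geq mp^{\log k} > 30$ (as $p \leq 1$), together with $p < 1/30$, exactly as in the closing lines of the proof of \cref{claim:c-exp-1}. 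I do not expect a genuine conceptual obstacle: the argument is structurally identical to \cref{claim:c-exp-1}, so the only things to get right are the node count $2^{d-t-1}$ at level $t+1$ and confirming $mp^t \geq mp^{\log k}$, which is what validates stating the error in terms of $mp^{\log k}$.
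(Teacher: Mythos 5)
Your proposal is correct and takes essentially the same approach as the paper: indeed, the paper omits the proof of \cref{claim:l-1} entirely, stating that it is identical to that of \cref{claim:c-exp-1}, and your replay (Bayes decomposition of $\xi_t$, \cref{prop:tools-2} for the $2^{d-t-1}$ independent filter events at level $t+1$, \cref{prop:tools-3} together with \cref{fact:mr-1,fact:mr-2} to swap the conditioned sums for uniform draws from $U_{mp^{t+1}}$, and the final simplification via $(1-\tfrac{p}{4})(1-\tfrac{p}{2})^{\mp 1}$ and $mp^{t+1} \geq mp^{\log{k}} > 30$) is exactly that substitution carried out with the correct node count. No gaps.
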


\noindent The proof of \cref{claim:l-1} is identical to that of \cref{claim:c-exp-1}, so we leave it out and refer the reader to the earlier proof. If $d = 1$, we already have the following from (\ref{eq:l-4},\ref{eq:l-5}):
\begin{align}
  \label{eq:l-6}
  \exp{\Lambda_{1,\bell}} \in \left( p - \frac{p^2}{4} \right) \pm \frac{7}{m} \subseteq p \cdot (1\pm p) \left(1 \pm \frac{35}{mp^{\log{k}}} \right)
\end{align}
where in the last bound we use the hypothesis that $mp^{\log{k}} > 30$ and $p \leq 1$. Suppose $d\in[2,\log{k}]$. Inductively applying \cref{claim:l-1} with $t$ going from $(d-2)$ to $0$, we get the following:
\begin{align}
  \label{eq:l-7}
  \xi_0 \in \xi_{d-1} \cdot \left[p \cdot (1\pm p) \left(1 \pm \frac{30}{mp^{\log{k}}} \right) \right]^{\sum_{t=0}^{d-2} 2^{d-t-1}} = \xi_{d-1} \cdot \left[p \cdot (1\pm p) \left(1 \pm \frac{30}{mp^{\log{k}}} \right) \right]^{2^{d}-2}
\end{align}
Putting together (\ref{eq:l-4}), (\ref{eq:l-5}) and (\ref{eq:l-7}), we get:
\begin{align}
  \label{eq:l-8}
  \exp{\Lambda_{d,\bar{\ell}}} = \xi_0 &\in \left(p - \frac{p^2}{4} \pm \frac{7}{mp^{d-1}} \right) \left(p \cdot (1\pm p) \left(1 \pm \frac{30}{mp^{\log{k}}} \right) \right)^{2^{d}-2} \nonumber\\
  &\subseteq {p^{2^{d-1}}} \cdot (1\pm p)^{2^{d-1}} \left(1 \pm \frac{35}{mp^{\log{k}}} \right)^{2^{d}-1}
\end{align}
Putting together (\ref{eq:l-3},\ref{eq:l-6},\ref{eq:l-8}), and using the shorthand $\alpha = (1\pm p) \left(1 \pm \frac{35}{mp^{\log{k}}} \right)$ we get the following for any $d\in[\log{k}]$:
\begin{align}
  \label{eq:l-9}
  \exp{\Lambda_d} = \sum_{\bar{\ell}\in[n]^{2^d}} \exp{\Lambda_{d,\bar{\ell}}} \in n^{2^{d}} p^{2^{d}-1} \alpha^{2^{d}-1} = n \cdot (np)^{2^d-1} \cdot \alpha^{2^d-1}
\end{align}
Putting together (\ref{eq:l-1},\ref{eq:l-2},\ref{eq:l-9}), we get:
\begin{align}
  \label{eq:l-10}
  \exp{\Lambda} =  \sum_{d\in[0,\log{k}]} \frac{k}{2^d} \cdot \exp{\Lambda_d} &\in k\cdot n + \sum_{d\in[\log{k}]} \frac{k}{2^d} \cdot n \cdot (np)^{2^{d}-1} \alpha^{2^{d}-1} \nonumber\\
  &= kn \cdot \sum_{d\in[0,\log{k}]} \frac{(np)^{2^d-1} \alpha^{2^d-1}}{2^d}\nonumber\\
  &\subseteq kn \cdot \alpha^{k-1} \sum_{d\in[0,\log{k}]} \frac{(np)^{2^d-1}}{2^d}
\end{align}
which proves the proposition.


\subsection{$k$-Tree over $\Int_m$}
\label{sec:zm}

In this section, we present our analysis of the $k$-Tree algorithm (from \cref{fig:ktree}) for the $k$-SUM problem over $\Int_m$ rather than over the integers. For convenience, we will restrict $m$ to be odd. We will identify the elements of $\Int_m$ with integers in the range $\range{m} = \set{-\floor{\frac{m}{2}}, \dots, \floor{\frac{m}{2}}}$ in the natural manner.

The only modification to the $k$-Tree algorithm itself is that the addition operation (as used by the Merge procedure) is addition modulo $m$ rather than addition over integers. A key observation here is that if $p < 1/2$, then this difference in the addition operation is only relevant in the first iteration of the algorithm's loop in step 3 -- that is, only when the initial input lists are merged. Thereafter, all the numbers in all the lists have absolute value at most $mp$, and thus adding them will never cause a ``wrap-around'' where the modulus operation with $m$ actually makes a difference. This lets us adapt our earlier analysis to this case with minimal modification. This is captured by the following theorem. As its proof is almost the same as that of \cref{thm:ktree}, we only provide a sketch that highlights and follows the differences between the proofs.

\begin{theorem}
  \label{thm:ktree-zm}
  Consider any $k, n, m\in \Nat$, where $k \geq 4$ is a power of $2$ and $m > 30^{\log{k}+1}$ is an odd number. Set $p = m^{\frac{-1}{\log{k}+1}}$ and $c = p\cdot n$. Consider $k$ lists $L_1,\dots,L_k$, each consisting of $n$ uniformly random elements from the range $\range{m}$. The $k$-Tree algorithm (as in \cref{fig:ktree}) over $\Int_m$ with these parameters, denoted by $\ktree_m$, satisfies the following:
  \begin{itemize}
    \item \textbf{Success Probability.} Its probability of success is bounded as follows:
    \begin{align*}
      \frac{1}{c^{-k} + \left(1+\frac{k}{n} \right)^k}\cdot (1-150p)^{k} \leq \prob{\substack{L_1,\dots,L_k}}{\substack{\ktree(L_1,\dots,L_k)\\\text{ outputs } (\ell_1,\dots,\ell_k)\\\text{such that } \sum_i L_i[\ell_i] = 0 }} \leq c^k \cdot (1 + 37p)^k
    \end{align*}
    \item \textbf{Complexity.} Its expected complexity is bounded as follows:
    \begin{align*}
      \expec{\substack{L_1,\dots,L_k}}{\substack{\text{Total size of all lists}\\\text{involved in}\\\ktree(L_1,\dots,L_k)}} \in k n \cdot \left( 1 + \sum_{d\in[\log{k}]} \frac{c^{2^d-1}}{2^d} \right) \cdot \left(1 \pm 37p \right)^{k-1}
    \end{align*}
  \end{itemize}
\end{theorem}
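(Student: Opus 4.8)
The plan is to leverage the observation, already highlighted in the text preceding \cref{thm:ktree-zm}, that modular and integer addition produce identical intermediate lists except at the very first merge. Since $p = m^{-1/(\log k+1)} < 1/30 < 1/2$ under the hypotheses, after the first iteration every surviving value lies in $\range{mp}$ and so has absolute value at most $\floor{mp/2}$; the sum of any two such values has absolute value at most $2\floor{mp/2} \le mp < \floor{m/2}$, so no wrap-around ever occurs at levels $d \ge 2$, and reduction modulo $m$ acts as the identity there. Consequently the entire proof of \cref{thm:ktree} carries over once we replace the single bottom-level step of each inductive argument (the step that merges the input lists, whose elements are the only ones ranging over all of $\range{m}\cong\Int_m$) with a modular analog. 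A found $0$ in the final list still certifies a genuine solution, since the final integer value is congruent modulo $m$ to $\sum_j L_j[\ell_j]$.

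The modular analogs of the tool propositions in \cref{sec:tools} are in fact cleaner than their integer counterparts, thanks to the exact uniformity of modular sums: if $x,y \gets U_m$ are independent then $(x+y)\bmod m$ is exactly uniform over $\Int_m\cong\range{m}$. This immediately gives the analog of \cref{prop:tools-2}, namely $\prob{x,y \gets U_m}{(x+y)\bmod m \in \range{mp}} = \size{\range{mp}}/m \in p \pm 1/m$; and the analog of \cref{prop:tools-3}, namely that the distribution of $(x+y)\bmod m$ conditioned on lying in $\range{mp}$ is \emph{exactly} $U_{mp}$, so its MR distance from uniform is $1$. For the second-moment tools, fixing $w$ one uses that $\big((w+x)\bmod m,\,(w+y)\bmod m\big)$ is uniform over $\Int_m\times\Int_m$ for independent $x,y\gets U_m$: this yields the analog of \cref{prop:tools-4}, i.e. $\prob{w,x,y\gets U_m}{(w+x)\bmod m\in\range{mp}\wedge (w+y)\bmod m\in\range{mp}} = (\size{\range{mp}}/m)^2 \le p^2(1+1/(mp))^2$, and the analog of \cref{prop:tools-5}, i.e. that the pair conditioned on both coordinates landing in $\range{mp}$ is a uniform mixture over $w$ of distributions uniform on $\range{mp}\times\range{mp}$, hence itself uniform, with MR distance $1$ from $U_{mp}\otimes U_{mp}$. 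Each bound is at least as tight as the integer bound it replaces.

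With these analogs in hand I would re-run the three moment computations. In the proof of \cref{prop:c-exp} only the $d=0$ instance of \cref{claim:c-exp-1} touches the input lists; replacing the applications of \cref{prop:tools-2,prop:tools-3} there by the modular analogs produces, for that single step, a factor $p(1\pm 1/(mp))$ with no conditional-distribution loss, and since $1/(mp) < 1/(mp^{\log k})$ this sits inside the envelope $p\,(1\pm p)(1\pm 35/(mp^{\log k}))$ already used for every step. Identically, in the proof of \cref{prop:c-var} the only modular step is the $d=0$ instance of \cref{claim:c-var-4}, where the three cases are bounded by the modular analogs of \cref{prop:tools-2,prop:tools-4} and where the distances $\mr(D^0_i,D^1_i)$ are all exactly $1$ by the modular analogs of \cref{prop:tools-3,prop:tools-5}; again this is strictly tighter than the integer factors. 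The same substitution applies to the bottom ($t=0$) step of \cref{claim:l-1} underlying \cref{prop:l}. Since the replaced factors lie within the intervals stated there, \cref{prop:c-exp,prop:c-var,prop:l} hold with their stated bounds for $\ktree_m$, and \cref{thm:ktree-zm} follows by exactly the Paley--Zygmund lower bound, Markov upper bound, and direct substitution used in the proof of \cref{thm:ktree}.

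The main obstacle is largely bookkeeping: one must verify carefully that the exact-uniformity argument is valid in the joint (second-moment) setting, in particular for the ``split'' case where one summand index is shared between $\bell^x$ and $\bell^y$ — there one must check that averaging the per-$w$ uniform-on-$\range{mp}\times\range{mp}$ conditional law over the shared variable $w$ still yields the uniform law, so that $\mr(D^0_i,D^1_i)=1$ at $d=0$. The remaining care is in confirming the no-wrap-around claim quantitatively (namely $2\floor{mp/2}\le\floor{m/2}$, which holds since $p<1/30$) so that levels $d\ge 2$ are genuinely identical to the integer case and the already-proved integer tool bounds can be reused verbatim.
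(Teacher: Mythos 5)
Your proposal is correct and follows essentially the same route as the paper's own proof sketch of \cref{thm:ktree-zm}: the paper likewise observes that since $p < 1/2$ wrap-around can only occur in the first merge step, and replaces \cref{prop:tools-2,prop:tools-3,prop:tools-4,prop:tools-5} at that level with exactly the modular analogs you give (exact uniformity of modular sums giving probability $p \pm 1/m$, MR distances exactly $1$, and the $p^2(1+1/(mp))^2$ bound for the shared-summand case), after which \cref{prop:c-exp,prop:c-var,prop:l} and hence the theorem follow verbatim via Paley--Zygmund and Markov. Your added quantitative checks (the no-wrap-around bound and the mixture-over-$w$ verification in the split case) are correct and consistent with the paper's argument.
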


\begin{proofsketchof}{\cref{thm:ktree-zm}}
  The proofs of \cref{prop:c-exp,prop:c-var,prop:l} (which together imply the bounds in \cref{thm:ktree}) are built solely on top of \cref{prop:tools-2,prop:tools-3,prop:tools-4,prop:tools-5}, following by appropriate manipulations of the bounds provided by the latter. As observed above, looking at the iterations in step 3 of the $k$-Tree algorithm (as in \cref{fig:ktree}), all iterations from $d = 2$ onwards are identical for the algorithm over integers and over $\Int_m$, as there is no wrap-around in the addition operations involved, and that part of the analysis from the proof of \cref{thm:ktree} can be used as is.

  For the first iteration, the primary differentiating factor in the analysis of the algorithms in the two cases is that addition modulo $m$ of two uniformly random numbers from $\range{m}$ is slightly more likely to lead to an element in $\range{mp}$ than addition of such numbers over the integers. So the statements of \cref{prop:tools-2,prop:tools-3,prop:tools-4,prop:tools-5} might no longer be true. To redeem the proof, we instead provide the following alternatives to each of the propositions that may be used instead in the analysis of the first iteration. It may be verified that the bounds they provide suffice to make the proofs of \cref{prop:c-exp,prop:c-var,prop:l} continue to work even in the case of addition modulo $m$, which can then be used to prove \cref{thm:ktree-zm} in the same manner as \cref{thm:ktree}.

  \medskip
  \noindent\emph{Replacing \cref{prop:tools-2}:} Note that the sum of two uniformly random elements of $\Int_m$ is also a uniformly random element of $\Int_m$. This lets us replace the use of \cref{prop:tools-2} in the analysis of the first iteration with the following statement for any $m$ and $p\in[0,1]$:
  \begin{align}
    \label{eq:zm-1}
    \prob{x,y\gets \range{m}}{(x+y) (\text{mod } m) \in \range{mp}} = \frac{\size{\range{mp}}}{\size{\range{m}}} = \frac{2\floor{mp/2}+1}{m} \in \left( p \pm \frac{1}{m} \right)
  \end{align}

  \medskip
  \noindent\emph{Replacing \cref{prop:tools-3}:} As noted above, the sum of two uniformly random elements of $\Int_m$ is also a uniformly random element of $\Int_m$, and conditioning on this sum being in $\range{mp}$ results in a uniform distribution over $\range{mp}$. Thus the MR distance between this conditioned sum distribution and the uniform distribution over $\range{mp}$ is actually \emph{equal to} $1$, which also happens to satisfy the bound in \cref{prop:tools-3}.

  \medskip
  \noindent\emph{Replacing \cref{prop:tools-4}:} If $w$, $x$, and $y$ are sampled uniformly at random from $\Int_m$, the sums $(w+x)$ and $(w+y)$ are, in fact, independent and uniformly random over $\Int_m$. Thus, we have the following using (\ref{eq:zm-1}):
  \begin{align*}
    &\prob{w,x,y\gets \range{m}}{(w+x) (\text{mod } m) \in \range{mp} \wedge (w+y) (\text{mod } m) \in \range{mp}}\\
    &\quad = \prob{w,x\gets \range{m}}{(w+x) (\text{mod } m) \in \range{mp}} \cdot \prob{w,y\gets \range{m}}{(w+y) (\text{mod } m) \in \range{mp}}\\
    &\quad \leq p^2 \cdot \left( 1 + \frac{1}{mp} \right)^2
  \end{align*}
  which also satisfies the bound in \cref{prop:tools-4}.
  
  \medskip
  \noindent\emph{Replacing \cref{prop:tools-5}:} Following the arguments above, the values $(w+x)$ and $(w+y)$ conditioned on both being in $\range{mp}$ are also independent and uniformly distributed over $\range{mp}$. Thus the MR distance between this joint distribution and the uniform distribution over $\range{mp}\times\range{mp}$ is again equal to $1$, which also satisfies the bound in \cref{prop:tools-5}.
\end{proofsketchof}


\section{Computed Bounds}
\label{sec:program}



\noindent

The theoretical analysis presented in the previous sections provides asymptotically tight bounds for the $k$-Tree algorithm. However, for some practical parameter values that modern computers can handle, these bounds may not be as precise as one might desire due to the approximations used in the proofs. While these approximations do not affect the asymptotic tightness, they can have significant impact on the bounds for the parameter ranges we wish to evaluate empirically.

To address this limitation, we have implemented a set of computer programs that compute these bounds without the approximations used in the theoretical analysis. This approach allows us to obtain much tighter bounds for smaller parameter values, which are particularly relevant for practical applications and our experimental evaluations.

In this section, we present the computational methods used to calculate these precise bounds. We provide pseudo-code for each key function, along with corresponding theorem statements that relate these computations to our theoretical results and the $k$-Tree algorithm. By removing the approximations used in the manual proofs, these computational methods provide a bridge between our asymptotic analysis and the actual performance of the algorithm.
The bounds computed by these programs serve two crucial purposes:
\begin{itemize}
    \item Offer more accurate predictions of the algorithm's behavior for practical parameter ranges.
    \item Provide a means to validate our theoretical predictions against experimental measurements.
\end{itemize}

In the following subsections, we will detail each computational method, its relationship to the theoretical claims, propositions and theorems, and how it contributes to our understanding of the $k$-Tree algorithm's performance.

\subsubsection*{Notations}

To ensure consistency throughout this section, we use the following notations and definitions:

\begin{itemize}
    \item $k$: The number of input lists (always a power of 2).
    \item $n$: The size of each input list.
    \item $m$: The range from which the numbers in the lists are drawn.
    \item $p$: The $p$ parameter in the statement of~\cref{infthm:ktree}, computed as $p = m^{-\frac{1}{\log k + 1}}$.
\end{itemize}

\subsection{Algorithms}
\label{sec:pseudocode}

Here we present the algorithms that implement our approach to proving bounds on the $k$-Tree algorithm. We start with sub-routines that compute basic probabilities and distances before proceeding to the ones that actually implement the proof using them. The final bounds on the success probability are computed by the function \textsc{ProbBounds} (\cref{alg:prob_bounds}), and the bounds on the complexity by \textsc{SizeBounds} (\cref{alg:size_bounds}) \iflncs\else In each case, we state a claim about the behavior of the sub-routine, and if applicable refer to the corresponding analogue in \cref{sec:analysis} in this statement. We provide proof sketches for these claims except in cases where the implementation departs significantly from the proof in \cref{sec:analysis}, in which case we provide more detailed proofs.\fi

\begin{algorithm}[h!]
\caption{\textsc{ProbSumToZ}}
\label{alg:prob_sum_to_z}
\begin{algorithmic}[1]
\Input $s$, $z$ \Comment{Range size and target sum}
\Output Probability of the sum of two random variables in $\left<s\right>$ equaling $z$
\State $d \gets 2 \cdot \lfloor s/2 \rfloor + 1$
\State \Return $\frac{1}{d} - \frac{|z|}{d^2}$
\end{algorithmic}
\end{algorithm}

\iflncs\else
\begin{claim}[Claim~\ref{claim:tools-1}]
\label{prop:alg_prob_sum_to_z}
Given parameters $s$ and an integer $z \in [-s, s]$ as inputs, \textsc{ProbSumToZ} (Algorithm~\ref{alg:prob_sum_to_z}) outputs the exact probability $\Pr_{x,y \leftarrow U_s}[x + y = z]$, where $U_s$ is the uniform distribution over $\range{s}$. And the running time of the algorithm is $O(\polylog(s))$.
\end{claim}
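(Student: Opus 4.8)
The plan is to observe that this claim simply asserts the \emph{exactness} of the closed form already derived midway through the proof of \cref{claim:tools-1}, before any approximations were introduced. Recall that in (\ref{eq:tools-1}), a direct counting argument establishes the exact identity
\[
  \prob{x,y\gets U_s}{x+y=z} = \frac{1}{2\floor{s/2}+1} - \frac{\abs{z}}{(2\floor{s/2}+1)^2},
\]
with no error terms. The subsequent approximations in that proof were only used to massage this into the cleaner bound stated in \cref{claim:tools-1}; the underlying equality is exact. So the first step is to invoke (\ref{eq:tools-1}) directly and match it against what the algorithm returns.

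For correctness, I would identify $d = 2\floor{s/2}+1 = \size{\range{s}}$, the cardinality of the support of $U_s$. With this substitution the quantity returned by \textsc{ProbSumToZ}, namely $\tfrac{1}{d} - \tfrac{\abs{z}}{d^2}$, is \emph{syntactically} the right-hand side of (\ref{eq:tools-1}). Hence the output equals $\prob{x,y\gets U_s}{x+y=z}$ exactly. The only point requiring a moment's care is the boundary behavior of the hypothesis $z\in[-s,s]$: the counting in (\ref{eq:tools-1}) is valid when $\abs{z} \le 2\floor{s/2}$, and one should check the residual case $\abs{z} = s$ with $s$ odd (where $2\floor{s/2} = s-1 = d-1$). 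There the true probability is $0$, and the formula gives $\tfrac{1}{s} - \tfrac{s}{s^2} = 0$ as well, so the returned value remains correct throughout the stated input range.

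For the running time, I would note that the algorithm performs only a constant number of elementary arithmetic operations: one floor-division to form $\floor{s/2}$, a multiplication and addition to form $d$, and then the evaluation of $\tfrac{1}{d} - \tfrac{\abs{z}}{d^2}$ via a square, a multiplication, a division, and a subtraction. Since $\abs{z}\le s$, all operands have bit-length $O(\log s)$, so each operation costs $\polylog(s)$, giving an overall $O(\polylog(s))$ bound.

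I do not expect any genuine obstacle here: the content of the claim is essentially bookkeeping, reconciling the pseudocode with the exact pre-approximation identity. The single subtlety worth flagging explicitly is the word ``exact'' in the claim statement — one must be sure to cite (\ref{eq:tools-1}) rather than the approximate form of \cref{claim:tools-1}, and to verify the odd-$s$, $\abs{z}=s$ boundary so that the stated domain $z\in[-s,s]$ is fully covered.
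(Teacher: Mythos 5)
Your proposal is correct and follows essentially the same route as the paper's own proof sketch, which likewise just observes that \textsc{ProbSumToZ} returns verbatim the exact pre-approximation identity (\ref{eq:tools-1}) from the proof of \cref{claim:tools-1}. Your additional checks — the $\abs{z}=s$, odd-$s$ boundary (where both the true probability and the formula give $0$) and the explicit $O(\polylog(s))$ accounting of the arithmetic — are sound and in fact slightly more careful than the paper's one-line sketch.
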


\begin{proof}[Proof Sketch]
The computation is exactly the result of the proof for Claim~\ref{claim:tools-1} up to (\ref{eq:tools-1}):
\[
\Pr_{x,y \leftarrow U_s}[x + y = z] = \frac{1}{(2\floor{s/2}+1)} - \frac{\abs{z}}{(2\floor{s/2}+1)^2}
\]
\end{proof}
\fi


\begin{algorithm}[h!]
\caption{\textsc{ProbSumInRange}}
\label{alg:prob_sum_in_range}
\begin{algorithmic}[1]
\Input $s$, $p$ \Comment{Range size and $p$}
\Output Probability of sum being in specified range
\State $t \gets \lfloor (sp) / 2 \rfloor$
\State $d \gets 2 \cdot \lfloor s/2 \rfloor + 1$
\State $r \gets  \frac{2 \cdot t+1}{d} -\frac{t^2 + t}{d^2}$
\State \Return $r$
\end{algorithmic}
\end{algorithm}

\iflncs\else
\begin{proposition}[Proposition~\ref{prop:tools-2}]
\label{prop:alg_prob_sum_in_range}
For any $s$ and $p \in [0, 1]$ as inputs, \textsc{ProbSumInRange} (Algorithm~\ref{alg:prob_sum_in_range}) computes the probability $\Pr_{x,y \leftarrow U_s}[x + y \in \range{sp}]$ with running time being $O(\polylog(s))$.
\end{proposition}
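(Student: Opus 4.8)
The plan is to recognize that \textsc{ProbSumInRange} merely evaluates, in closed form, the finite sum over all integer targets in $\range{sp}$ of the exact point masses already pinned down in \cref{claim:tools-1}. Setting $t = \floor{sp/2}$, the range $\range{sp}$ is exactly the integer set $\set{-t,\dots,t}$, so the quantity of interest factors as $\prob{x,y\gets U_s}{x+y\in\range{sp}} = \sum_{z=-t}^{t}\prob{x,y\gets U_s}{x+y=z}$. By the exact expression (\ref{eq:tools-1}) underlying \cref{claim:tools-1}, each summand equals $\frac{1}{d}-\frac{\abs{z}}{d^2}$ with $d = 2\floor{s/2}+1$, which is precisely the value of $d$ the algorithm forms. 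Thus the whole argument reduces to summing this expression over $z\in\set{-t,\dots,t}$ and matching the result to the returned value $r$.

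First I would check that (\ref{eq:tools-1}) is applicable for every $z$ in the sum: since $p\leq 1$, each such $z$ satisfies $\abs{z}\leq t = \floor{sp/2} \leq \floor{s/2} \leq s$, so it lies in the support of $x+y$ and within the interval $[-s,s]$ on which the exact formula holds. Next I would split the sum into its two pieces. The constant piece contributes $\sum_{z=-t}^{t}\frac{1}{d} = \frac{2t+1}{d}$, as there are exactly $2t+1$ integers in $\set{-t,\dots,t}$. The $\abs{z}$-weighted piece is an elementary arithmetic series, $\sum_{z=-t}^{t}\abs{z} = 2\sum_{z=1}^{t} z = t(t+1)$, contributing $\frac{t^2+t}{d^2}$. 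Subtracting yields exactly $\frac{2t+1}{d}-\frac{t^2+t}{d^2}$, which is the value $r$ returned by \textsc{ProbSumInRange}; this establishes that the output equals the target probability exactly.

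For the running time, I would observe that the routine performs only a constant number of arithmetic operations --- two floor-divisions to form $t$ and $d$, and a handful of multiplications, additions, and divisions to assemble $r$ --- on integers bounded by $\poly(s)$ and hence of bit-length $O(\log s)$. Each such operation costs $\polylog(s)$, so the total running time is $\polylog(s)$, as claimed.

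No real obstacle arises here, since the statement is just a finite summation of the already-established exact point masses; the only care required is the bookkeeping of the arithmetic series and the verification that the per-point formula applies uniformly over the summation range (guaranteed by $p\leq 1$). In contrast to \cref{prop:tools-2}, no approximation is introduced, so the proposition genuinely asserts \emph{exact} computation of the probability rather than an error bound.
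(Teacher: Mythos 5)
Your proof is correct and follows essentially the same route as the paper's own proof sketch: both evaluate the exact point-mass formula (\ref{eq:tools-1}) from \cref{claim:tools-1} in closed form over $z \in \range{sp}$, arriving at $\frac{2\floor{sp/2}+1}{2\floor{s/2}+1} - \frac{\floor{sp/2}(\floor{sp/2}+1)}{(2\floor{s/2}+1)^2}$, which is precisely the value the algorithm returns. Your write-up is in fact slightly more careful than the paper's, since you explicitly verify the per-point formula applies over the whole summation range (via $p \leq 1$) and spell out the arithmetic-series bookkeeping and the bit-complexity of the operations.
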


\begin{proof}[Proof Sketch]
The function calculates this probability by summing the exact probabilities of each possible sum within the specified range as in Proposition~\ref{prop:tools-2}. We simplify it a bit by directly calculating the closed-form sum of $\Pr_{x,y \leftarrow U_s}[x + y = z]$ over $z$ instead of iteratively calling Algorithm~\ref{alg:prob_sum_to_z}:
\[
\Pr_{x,y \leftarrow U_s}[x + y \in \range{sp}] = \sum_{z = -\floor{sp/2}}^{\floor{sp/2}} \prob{x,y\gets U_s}{x+y=z} = \frac{(2\floor{sp/2}+1)}{(2\floor{s/2}+1)} - \frac{\floor{sp/2}(\floor{sp/2}+1)}{(2\floor{s/2}+1)^2}.
\]
\end{proof}
\fi


\begin{algorithm}[h!]
\caption{\textsc{MRDistFromUnif}}
\label{alg:mr_dist_from_unif}
\begin{algorithmic}[1]
\Input $s$, $p$ \Comment{Range size and probability}
\Output Max-ratio distance from uniform distribution
\State $s' \gets sp$
\State $\alpha \gets \textsc{ProbSumToZ}(s, 0)/\textsc{ProbSumInRange}(s, p)$
\State $\beta \gets \textsc{ProbSumToZ}(s, \lfloor s'/2 \rfloor)/\textsc{ProbSumInRange}(s, p)$
\State $u \gets \frac{1}{2 \cdot \lfloor s'/2 \rfloor + 1}$
\State \Return $\max (\alpha/u, u/\beta)$ 
\end{algorithmic}
\end{algorithm}

\iflncs\else
\begin{proposition}[Proposition~\ref{prop:tools-3}]
\label{prop:alg_mr_dist_from_unif}
For any $s$ and $p \in [0, 1]$ as inputs, \textsc{MRDistFromUnif} (Algorithm~\ref{alg:mr_dist_from_unif}) computes $\Delta_{MR}(U_{sp}, 2 \cdot U_s|_{\range{sp}})$ -- the max-ratio distance between the uniform distribution over $\range{sp}$ and the distribution of the sum of two uniform random variables in $\range{s}$ conditioned on being in $\range{sp}$. And the running time of the algorithm is $O(\polylog(s))$.
\end{proposition}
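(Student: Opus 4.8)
The plan is to show that the value returned by \textsc{MRDistFromUnif} is \emph{exactly} $\mr(U_{sp}, 2\cdot U_s|_{\range{sp}})$, rather than the weaker ratio upper bound used in the analytical \cref{prop:tools-3}. The key structural observation is that when one of the two distributions is uniform, the MR distance has a particularly simple closed form. Writing $D = 2\cdot U_s|_{\range{sp}}$ and $u = 1/(2\floor{sp/2}+1)$ for the constant mass that $U_{sp}$ places on each point of its support $\range{sp}$, I would unfold \cref{def:mr-dist}: the smallest $\lambda \ge 1$ with $D(z) \in [\lambda^{-1}u, \lambda u]$ for every $z \in \range{sp}$ is $\lambda = \max\!\big(1,\ \max_z D(z)/u,\ \max_z u/D(z)\big)$. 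Since $\sum_z D(z) = 1$ over a support of size $2\floor{sp/2}+1$, the average value of $D$ equals $u$, so $\max_z D(z) \ge u$ and the trailing $1$ in the maximum is redundant. Hence $\mr(U_{sp}, D) = \max\!\big(D_{\max}/u,\ u/D_{\min}\big)$, where $D_{\max}$ and $D_{\min}$ denote the largest and smallest masses of $D$.

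Next I would pin down $D_{\max}$ and $D_{\min}$ and match them to the quantities $\alpha$ and $\beta$ computed by the algorithm. By Bayes's theorem, for $z \in \range{sp}$ we have $D(z) = \prob{x,y\gets U_s}{x+y=z} / \prob{x,y\gets U_s}{x+y\in\range{sp}}$, whose numerator is computed exactly by \textsc{ProbSumToZ}$(s,z)$ and whose (constant-in-$z$) denominator is computed exactly by \textsc{ProbSumInRange}$(s,p)$. By \cref{claim:tools-1}, this numerator equals $\tfrac{1}{d} - \tfrac{|z|}{d^2}$ with $d = 2\floor{s/2}+1$, which is strictly decreasing in $|z|$; since the denominator does not depend on $z$, $D(z)$ is itself decreasing in $|z|$ over $\range{sp}$. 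Therefore $D_{\max}$ is attained at $z = 0$ and $D_{\min}$ at the endpoints $z = \pm\floor{sp/2}$. With $s' = sp$, these correspond exactly to the quantities assigned in the pseudo-code, where $\alpha$ is \textsc{ProbSumToZ}$(s,0)$ over \textsc{ProbSumInRange}$(s,p)$ and $\beta$ is \textsc{ProbSumToZ}$(s,\floor{s'/2})$ over the same denominator; hence $\alpha = D_{\max}$ and $\beta = D_{\min}$.

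Combining the two steps, the returned value $\max(\alpha/u, u/\beta) = \max(D_{\max}/u, u/D_{\min})$ equals the MR distance, proving correctness. For the running time, each of the two sub-routines runs in $O(\polylog(s))$ by the preceding claims and is invoked a constant number of times; the remaining steps are $O(1)$ arithmetic operations on rationals of magnitude $\poly(s)$ with bit-length $O(\polylog(s))$, so the total is $O(\polylog(s))$.

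The main subtlety to get right is the distinction between the \emph{exact} MR distance this routine computes and the cruder ratio bound $D_{\max}/D_{\min}$ of \cref{prop:unif-dist}: I must derive the uniform-reference closed form $\max(D_{\max}/u, u/D_{\min})$ directly from \cref{def:mr-dist} rather than reusing that earlier bound, and verify the elementary averaging fact $D_{\min} \le u \le D_{\max}$ so that both one-sided maxima are individually at least $1$ and the $\lambda \ge 1$ constraint is automatically met. The only other point to confirm is that $U_{sp}$ and $D$ share the support $\range{sp}$, which is immediate and makes the distance well-defined.
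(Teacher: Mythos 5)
Your proof is correct and takes essentially the same route as the paper's own sketch: both unfold \cref{def:mr-dist} against the constant-mass uniform reference to obtain the exact closed form $\max(D_{\max}/u,\, u/D_{\min})$, locate the maximum of $2\cdot U_s|_{\range{sp}}$ at $z=0$ and the minimum at the support endpoints via the exact formula $\frac{1}{d}-\frac{|z|}{d^2}$ underlying \cref{claim:tools-1}, and match these quantities to the algorithm's $\alpha$, $\beta$, and $u$. Your writeup simply fills in what the paper leaves as ``one can verify'' --- the Bayes step, the strict monotonicity in $|z|$, and the averaging fact $D_{\min} \le u \le D_{\max}$ that makes the $\lambda \ge 1$ constraint automatic --- and in doing so correctly identifies the minimizers as $z = \pm\floor{sp/2}$, where the paper's sketch mistypes them as $\pm\floor{sp}$.
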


\begin{proof}[Proof Sketch]
The computation follows the exact definition of Max-Ratio Distance:
\[
\Delta_{MR}(U_{sp}, 2 \cdot U_s|_{\range{sp}}) = \max \left( \frac{\max_{z\in\range{sp}} 2 \cdot U_s|_{\range{sp}}(z)}{U_{sp}}, \frac{U_{sp}}{\min_{z\in\range{sp}} 2 \cdot U_s|_{\range{sp}}(z)} \right)
.\]
And one can verify that $2 \cdot U_s|_{\range{sp}}(z)$ is maximized when $z = 0$ and minimized when $z = \pm \floor{sp}.$
\end{proof}
\fi

\begin{algorithm}[h!]
\caption{\textsc{ProbSumWithTwoRVInRange}}
\label{alg:prob_sum_with_two_rv_in_range}
\begin{algorithmic}[1]
\Input $s$, $p$ \Comment{Range size and probability}
\Output Probability of summing one r.v. with two other r.v.'s being in specified range respectively
\Require $0 \leq p \leq 1$
\State $\alpha = 2 \cdot (\floor{s/2} - \floor{sp/2}) \cdot \frac{|\range{sp}|^2}{|\range{s}|^3}$
\State $x = |\range{sp}| - 1$
\State $y = |\range{sp}| - 1 - \floor{sp/2}$
\State $\beta = \frac{2}{|\range{s}|^3} \left( x(x+1)(2x+1) - y(y+1)(2y+1) \right)/6$
\State \Return $\alpha + \beta$
\end{algorithmic}
\end{algorithm}

\iflncs\else
\begin{proposition}[Proposition~\ref{prop:tools-4}]
\label{prop:alg_prob_sum_with_two_rv_in_range}
    For any $s$ and $p \in [0, 1]$ as inputs, \textsc{ProbSumWithTwoRVInRange} (Algorithm~\ref{alg:prob_sum_with_two_rv_in_range}) computes the probability $\prob{w,x,y\gets U_s}{w+x \in \range{sp} \wedge w+y \in \range{sp}}$. And the running time of the algorithm is $O(\polylog(s))$.
\end{proposition}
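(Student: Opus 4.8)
The plan is to exploit the conditional independence of the two events once the shared summand $w$ is fixed. First I would condition on $w\gets U_s$: given $w$, the variables $x$ and $y$ are independent and identically distributed, so the events $w+x\in\range{sp}$ and $w+y\in\range{sp}$ become independent. Writing $q(w) = \size{\range{s}\cap(\range{sp}-w)}$ for the number of $x\in\range{s}$ with $w+x\in\range{sp}$, each event has conditional probability $q(w)/\size{\range{s}}$, so that
\[
  \prob{w,x,y\gets U_s}{w+x\in\range{sp}\wedge w+y\in\range{sp}} = \frac{1}{\size{\range{s}}}\sum_{w\in\range{s}} \left(\frac{q(w)}{\size{\range{s}}}\right)^2 = \frac{1}{\size{\range{s}}^3}\sum_{w\in\range{s}} q(w)^2 .
\]
The whole task thus reduces to evaluating $\sum_{w} q(w)^2$ in closed form and checking that it equals $\size{\range{s}}^3\cdot(\alpha+\beta)$.

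Next I would pin down $q(w)$ explicitly. Since $\range{sp}-w$ is an interval of $\size{\range{sp}}$ consecutive integers centered at $-w$, its intersection with $\range{s}$ is full (size $\size{\range{sp}}$) precisely when the shifted interval stays inside $\range{s}$, i.e. when $\abs{w}\leq \floor{s/2}-\floor{sp/2}$, and it loses one element for each further unit $w$ moves toward $\pm\floor{s/2}$. Hence $q(w)$ is a symmetric trapezoidal function: it equals $\size{\range{sp}}$ on a central band, and on each of the two tails it runs through every value from $\size{\range{sp}}-1$ down to $\floor{sp/2}+1$ exactly once. This is exactly the central-plus-two-tails decomposition mirrored by the summands $\alpha$ and $\beta$ in the pseudocode.

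With the shape in hand, $\sum_w q(w)^2$ splits into a central contribution and two identical tail contributions. The central band contributes (its width) times $\size{\range{sp}}^2$, which after dividing by $\size{\range{s}}^3$ is the term $\alpha$. Each tail contributes $\sum q^2$ over the consecutive integers $q$ attains there, which I would evaluate with $\sum_{t=1}^{N}t^2 = N(N+1)(2N+1)/6$: the tail sum over $\{\floor{sp/2}+1,\dots,\size{\range{sp}}-1\}$ is the difference of two such cubic expressions evaluated at $x=\size{\range{sp}}-1$ and $y=\size{\range{sp}}-1-\floor{sp/2}$, which is precisely the bracketed quantity in $\beta$. Summing the central and tail parts and dividing by $\size{\range{s}}^3$ recovers $\alpha+\beta$, giving the exact probability; the running time is $\polylog(s)$ since this is a constant number of arithmetic operations on integers of size $\poly(s)$.

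The hard part will be the boundary bookkeeping introduced by the floor functions when $sp$ is not an even integer. I expect the delicate points to be the exact width of the full-overlap band and the exact smallest value of $q$ in each tail, so that the index ranges of the two sum-of-squares terms line up with $x$ and $y$ as defined, and so that the symmetry of the two tails holds even at the extreme positions $w=\pm\floor{s/2}$. Getting these endpoints right is precisely where an off-by-one can creep in, so before trusting the general formula I would validate the decomposition against a small explicit instance (e.g. $s=5$, $sp=3$) and confirm that the central-band count matched by $\alpha$ and the tail ranges matched by $\beta$ reproduce $\sum_w q(w)^2$ term by term.
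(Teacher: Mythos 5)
Your proposal takes exactly the same route as the paper's own proof: condition on the shared summand $w$, use the conditional independence of $x$ and $y$ to square the one-dimensional count $q(w)$, and split $\sum_w q(w)^2$ into a central full-overlap band (the term $\alpha$) plus two symmetric tails evaluated by the closed-form sum of squares (the term $\beta$), with the tail range $\set{\floor{sp/2}+1,\dots,\size{\range{sp}}-1}$ and the endpoints $x$ and $y$ identified just as in the paper. So in approach and in all essentials the decomposition is correct and matches.

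One substantive remark: the boundary bookkeeping you flagged as the risky step is precisely where the printed pseudocode stumbles, and your planned sanity check would catch it. The full-overlap band $\abs{w}\leq\floor{s/2}-\floor{sp/2}$ contains $2\left(\floor{s/2}-\floor{sp/2}\right)+1$ values of $w$, whereas $\alpha$ in Algorithm~\ref{alg:prob_sum_with_two_rv_in_range} multiplies by $2\left(\floor{s/2}-\floor{sp/2}\right)$. On your own suggested test case $s=5$, $sp=3$, direct enumeration gives $q(-2)=q(2)=2$ and $q(-1)=q(0)=q(1)=3$, hence probability $35/125$, while the pseudocode returns $\alpha+\beta = 18/125 + 8/125 = 26/125$ --- exactly one central term $\size{\range{sp}}^2/\size{\range{s}}^3$ short. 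The tails, and hence $\beta$, are correct as you describe. The paper's proof sketch glosses over this by not stating the band's cardinality explicitly (and its displayed sum also has a typo in the limits, summing $w$ over $\range{sp}$ rather than $\range{s}$), so the off-by-one lives in the transcription of $\alpha$, not in your decomposition.
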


\begin{proof}[Proof Sketch]
    The algorithm computes the probability following the first equality of (\ref{eq:prop-tools-4-ineq-2}):
    \[
        \prob{w,x,y\gets U_s}{w+x \in \range{sp} \wedge w+y \in \range{sp}} = \sum_{w = -\floor{sp/2}}^{\floor{sp/2}} U_s(w) \cdot \prob{x,y\gets U_s}{w+x\in\range{sp} \wedge w+y\in\range{sp}}.
    \]

    The calculation considers two cases:
    \begin{enumerate}
    \item When $w \in [-\floor{s/2} + \floor{sp/2}, \floor{s/2} - \floor{sp/2}]$, each term on the RHS equals
    $$U_s(w) \cdot \prob{x,y\gets U_s}{w+x\in\range{sp} \wedge w+y\in\range{sp}} = \frac{|\range{sp}|^2}{|\range{s}|^3},$$
    and $\alpha$ is summing this value over all such $w$.
    \item When $w \in [-\floor{s/2}, -\floor{s/2} + \floor{sp/2} - 1]$ or $w \in [\floor{s/2} - \floor{sp/2} + 1, \floor{s/2}]$, let $i = |w| - \floor{s/2} + \floor{sp/2}$. Then the probability on the RHS is
    \[
        \prob{x,y\gets U_s}{w+x\in\range{sp} \wedge w+y\in\range{sp}} = (\frac{|\range{sp}| - i}{|\range{s}|})^2.
    \]
    Summing all these values over $w$ results in a sum of some squares, which is computed as $\beta$ in the algorithm.
    \end{enumerate}
\end{proof}
\fi

\begin{algorithm}[h!]
\caption{\textsc{MRDistFromPairUnif}}
\label{alg:mr_dist_from_pair_unif}
\begin{algorithmic}[1]
\Input $s$, $p$ \Comment{Range size and probability}
\Output Max-ratio distance from uniform distribution for pairs
\State $\alpha \gets \frac{1}{(|\range{s}|)^2 \cdot \textsc{ProbSumWithTwoRVInRange}(s, p)}$
\State $\beta \gets \frac{2(\floor{s/2} - \floor{sp/2})}{(|\range{s}|)^3 \cdot \textsc{ProbSumWithTwoRVInRange}(s, p)}$
\State $u \gets \frac{1}{(|\range{sp}|)^2}$
\State \Return $\max(\alpha/u, u/\beta)$
\end{algorithmic}
\end{algorithm}

\iflncs\else
\begin{proposition}[Proposition~\ref{prop:tools-5}]
\label{prop:alg_mr_dist_from_pair_unif}
For any $s$ and $p \in [0, 0.5]$ as inputs, \textsc{MRDistFromPairUnif} (Algorithm~\ref{alg:mr_dist_from_pair_unif}) computes $\mr\left(U_{sp}\otimes U_{sp}, D \right)$ as defined in Proposition~\ref{prop:tools-5}. And the running time of the algorithm is $O(\polylog(s))$.
\end{proposition}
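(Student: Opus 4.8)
The plan is to show that the scalars $\alpha$ and $\beta$ computed by \textsc{MRDistFromPairUnif} are \emph{exactly} the maximum and minimum probability masses that $D$ places on a point of its support $\range{sp}\times\range{sp}$, and that $u$ is the constant mass of the uniform distribution $U_{sp}\otimes U_{sp}$ on that same support; the exact value of the Max-Ratio distance then drops out of \cref{def:mr-dist}. First I would recall, exactly as in the proof of \cref{prop:tools-5}, that by Bayes's theorem every pair $z=(z_1,z_2)\in\range{sp}\times\range{sp}$ satisfies
\begin{align*}
  D(z) = \frac{\prob{w,x,y\gets U_s}{w+x=z_1\wedge w+y=z_2}}{\prob{w,x,y\gets U_s}{w+x\in\range{sp}\wedge w+y\in\range{sp}}} = \frac{\size{\range{s}\cap(z_1-\range{s})\cap(z_2-\range{s})}}{\size{\range{s}}^3\cdot P},
\end{align*}
where $P=\textsc{ProbSumWithTwoRVInRange}(s,p)$ is the exact normalizer furnished by \cref{prop:alg_prob_sum_with_two_rv_in_range}. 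Since $U_{sp}\otimes U_{sp}$ is uniform over the shared support with mass $u=1/\size{\range{sp}}^2$, unwinding \cref{def:mr-dist} with a uniform first argument yields the exact identity $\mr(U_{sp}\otimes U_{sp},D)=\max\!\big(\max_z D(z)/u,\ u/\min_z D(z)\big)$, which is precisely the quantity $\max(\alpha/u,u/\beta)$ returned by the algorithm, provided we establish $\alpha=\max_z D(z)$ and $\beta=\min_z D(z)$.

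The heart of the argument is the extremal analysis of the triple intersection $N(z_1,z_2)=\size{\range{s}\cap(z_1-\range{s})\cap(z_2-\range{s})}$. Writing $\range{s}=[-\floor{s/2},\floor{s/2}]$ and $z_j-\range{s}=[z_j-\floor{s/2},z_j+\floor{s/2}]$, the intersection is again a contiguous integer interval, and a short computation shows its size equals $2\floor{s/2}+1-\big(\max(0,z_1,z_2)-\min(0,z_1,z_2)\big)$; thus $N$ is a decreasing function of the spread of the three-point set $\{0,z_1,z_2\}$. The spread is zero exactly when $z_1=z_2=0$, so $N$ is maximized there with value $\size{\range{s}}$, giving $\max_z D=1/(\size{\range{s}}^2P)=\alpha$. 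The spread is largest at the antipodal endpoints $z_1=\floor{sp/2},\ z_2=-\floor{sp/2}$ of $\range{sp}$, where evaluating $N$ and dividing by $\size{\range{s}}^3P$ produces exactly $\beta$; the max case is immediate since the three intervals coincide, while the min case requires carefully tracking the floor terms to pin down the size of the intersection of the two oppositely shifted copies with $\range{s}$.

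The running-time bound is then immediate: \textsc{MRDistFromPairUnif} performs a constant number of additions, multiplications, floors and comparisons on integers of bit-length $O(\log s)$, plus a single call to \textsc{ProbSumWithTwoRVInRange} which runs in $O(\polylog(s))$ by \cref{prop:alg_prob_sum_with_two_rv_in_range}, so the total is $O(\polylog(s))$. I expect the min-mass computation to be the main obstacle: proving that $N$ attains its minimum at the corner $(\floor{sp/2},-\floor{sp/2})$ rather than elsewhere in the support, and that the exact floor bookkeeping there matches the numerator of $\beta$, is the one step that is genuinely delicate, whereas the maximum and the reduction through \cref{def:mr-dist} are routine.
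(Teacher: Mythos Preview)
Your proposal is correct and follows essentially the same approach as the paper's proof sketch: both identify $\alpha$ and $\beta$ as the maximum and minimum of $D$ over its support, locate these extrema at $(0,0)$ and $(\floor{sp/2},-\floor{sp/2})$ respectively, and then read off the MR distance directly from \cref{def:mr-dist} using the constant mass $u$ of $U_{sp}\otimes U_{sp}$. Your explicit spread formula $N(z_1,z_2)=\size{\range{s}}-\big(\max(0,z_1,z_2)-\min(0,z_1,z_2)\big)$ is a slight sharpening of what the paper states (the same intersection argument appears in the proof of \cref{prop:tools-5}), and you also spell out the running-time bound, but the structure of the argument is the same.
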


\begin{proof}[Proof Sketch]
The computation follows the definition of Max-Ratio Distance:
\[
    \mr\left(U_{sp}\otimes U_{sp}, D \right) = \max \left( \frac{\max_{(z_1, z_2)\in\range{sp}^2} D(z_1, z_2)}{U_{sp}\otimes U_{sp}}, \frac{U_{sp}\otimes U_{sp}}{\min_{(z_1, z_2)\in\range{sp}^2} D(z_1, z_2)} \right).
\]
The algorithm computes:
\begin{enumerate}
    \item The maximum probability $\max_{(z_1, z_2)\in\range{sp}^2} D(z_1, z_2)$ ($\alpha$) occurs when $(z_1, z_2) = (0, 0)$.
    \item The minimum probability $\min_{(z_1, z_2)\in\range{sp}^2} D(z_1, z_2)$ ($\beta$) occurs when $$(z_1, z_2) = (-\floor{sp/2}, \floor{sp/2})\textnormal{ or }(\floor{sp/2}, -\floor{sp/2}).$$
    \item The uniform probability ($u$) over $\range{sp}^2$.
\end{enumerate}
It then calculates the results by comparing these terms. The use of \textsc{ProbSumWithTwoRVInRange} in the denominator of $\alpha$ and $\beta$ accounts for the conditioning on the sums being in $\range{sp}$.
\end{proof}
\fi

\begin{algorithm}[h!]
\caption{\textsc{FirstMomentInductFactors}}
\label{alg:first_moment_induct_factors}
\begin{algorithmic}[1]
\Input $m$, $k$, $p$, $d$ \Comment{Input parameters}
\Output $\text{UB}$, $\text{LB}$ \Comment{Upper and lower bounds}
\State $s \gets m \cdot p^d$ 
\State $\alpha \gets \left(\textsc{ProbSumInRange}(s, p)\right)^{\frac{k}{2^{d+1}}}$
\State $\beta \gets \left(\textsc{MRDistFromUnif}(s, p)\right)^{\frac{k}{2^{d+1}}}$
\State $ \text{UB} \gets \alpha \cdot \beta$
\State $ \text{LB} \gets \alpha / \beta$
\State \Return $ \text{UB}$, $ \text{LB}$
\end{algorithmic}
\end{algorithm}

\iflncs\else
\begin{claim}[Claim~\ref{claim:c-exp-1}]
\label{claim:alg_first_moment_induct_factors}
For any $m$ and $k$ being powers of $2$, $p \in [0, 1]$ and $d \in [0, \log k - 1]$ as inputs, \textsc{FirstMomentInductFactors} (Algorithm~\ref{alg:first_moment_induct_factors}) computes factors $UB = \alpha \cdot \beta$ and $LB = \alpha/\beta$ such that $\zeta_{d+1} \cdot \text{LB} \leq \zeta_d \leq \zeta_{d+1} \cdot \text{UB}$, where:
\begin{align*}
    \alpha &= \prob{x^d_1,\dots,x^d_{k/2^d}\gets U_{mp^d}}{ \forall i\in \left[\frac{k}{2^{d+1}}\right]: (x^d_{2i-1}+x^d_{2i}) \in \range{mp^{d+1}} }, \\
    \beta &= \Delta_{MR}(U_{mp^{d+1}}, 2 \cdot U_{mp^d}|_{\range{mp^{d+1}}})^{\frac{k}{2^{d+1}}},
\end{align*}
and $\zeta_d$ is as defined in the proof of Proposition~\ref{prop:c-exp}. And the algorithm runs in $O(\polylog (m,k))$.
\end{claim}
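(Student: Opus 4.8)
The plan is to recognize this claim as the exact, computational counterpart of \cref{claim:c-exp-1}: whereas that claim approximated the per-level multiplicative factor by $\left[p(1\pm p)\left(1\pm \tfrac{30}{mp^{\log{k}}}\right)\right]^{k/2^{d+1}}$, here we keep in closed form the two quantities whose product yields that factor. So I would reuse the structure of the proof of \cref{claim:c-exp-1} almost verbatim, simply declining to perform the final approximation steps. Concretely, by \cref{prop:alg_prob_sum_in_range}, the call $\textsc{ProbSumInRange}(s,p)$ with $s = mp^d$ returns \emph{exactly} $\prob{x,y\gets U_{mp^d}}{x+y \in \range{mp^{d+1}}}$, and by \cref{prop:alg_mr_dist_from_unif}, $\textsc{MRDistFromUnif}(s,p)$ returns exactly $\mr\!\left(U_{mp^{d+1}}, 2\cdot U_{mp^d}|_{\range{mp^{d+1}}}\right)$. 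Hence the $\alpha$ and $\beta$ computed by the algorithm are precisely the $\alpha$ and $\beta$ named in the claim statement.

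Next I would establish the two-sided bound. Starting from the Bayes factorization of $\zeta_d$ in (\ref{eq:c-exp-4}), one writes $\zeta_d$ as the product of $\pr{E_{d+1}(\bar{x}^d)}$ and a conditional probability. The first factor equals $\alpha$ exactly: the $k/2^{d+1}$ pairs $(x^d_{2i-1}+x^d_{2i})$ are mutually independent, so the probability that all of them land in $\range{mp^{d+1}}$ is a product of $k/2^{d+1}$ identical per-pair probabilities, each equal to $\textsc{ProbSumInRange}(mp^d,p)$. For the conditional factor I would invoke exactly the argument of (\ref{eq:c-exp-6}) and (\ref{eq:c-exp-7}): by \cref{fact:mr-1,fact:mr-2}, replacing each conditioned sum distribution $2\cdot U_{mp^d}|_{\range{mp^{d+1}}}$ by the uniform distribution $U_{mp^{d+1}}$ costs a multiplicative factor of at most $\beta$ and at least $\beta^{-1}$, turning the conditional factor into $\zeta_{d+1}$ up to a multiplier in $[\beta^{-1},\beta]$.

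Combining the two factors gives $\zeta_d \in \left[\alpha\beta^{-1}\zeta_{d+1},\; \alpha\beta\zeta_{d+1}\right] = \left[\zeta_{d+1}\cdot\text{LB},\; \zeta_{d+1}\cdot\text{UB}\right]$, which is exactly the assertion of the claim. For the running time, the at most two calls to $\textsc{ProbSumInRange}$ and $\textsc{MRDistFromUnif}$ each run in $O(\polylog(m,k))$ by \cref{prop:alg_prob_sum_in_range,prop:alg_mr_dist_from_unif} (noting $s = mp^d \leq m$), and raising their outputs to the integer power $k/2^{d+1}\leq k$ can be done by repeated squaring in $O(\polylog{k})$ multiplications, so the total cost is $O(\polylog(m,k))$.

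The only genuinely delicate point, and hence the part I would treat most carefully, is confirming that the $\beta$-factor argument remains a valid \emph{two-sided} multiplicative bound once we refuse to approximate. In the analytic proof the MR distance was immediately upper-bounded by the clean expression of \cref{prop:tools-3}; here we must instead argue directly that applying \cref{fact:mr-1,fact:mr-2} to the product over all $k/2^{d+1}$ nodes at level $d+1$ yields the exact power $\mr\!\left(U_{mp^{d+1}}, 2\cdot U_{mp^d}|_{\range{mp^{d+1}}}\right)^{k/2^{d+1}}$ as both the upper multiplier $\beta$ and (via the inverse distance) the lower multiplier $\beta^{-1}$. This is clean in the first-moment setting precisely because every node at this level is of the same ``single-sum'' type, so \cref{fact:mr-2} collapses the product of per-node distances to $\beta$ exactly; this is what distinguishes the computational claim from \cref{claim:c-exp-1}, and everything else is a transcription of the existing proof.
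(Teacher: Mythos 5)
Your proposal is correct and takes essentially the same route as the paper's own (much terser) proof sketch: both instantiate the Bayes factorization (\ref{eq:c-exp-4}) with the exact values returned by \textsc{ProbSumInRange} and \textsc{MRDistFromUnif}, and apply \cref{fact:mr-1,fact:mr-2} as in (\ref{eq:c-exp-6}) and (\ref{eq:c-exp-7}) to obtain the two-sided multiplier in $[\beta^{-1},\beta]$, declining only the final numerical approximations of \cref{claim:c-exp-1}. You merely make explicit details the paper leaves implicit, such as the per-pair independence giving $\alpha$ exactly and the repeated-squaring runtime bound.
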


\begin{proof}[Proof Sketch]
Algorithm~\ref{alg:first_moment_induct_factors} computes the exact values of the factors described in the proof of Proposition~\ref{prop:c-exp}. It calls \textsc{ProbSumInRange} to compute $\alpha$ as in (\ref{eq:c-exp-5}), and \textsc{MRDistFromUnif} to compute $\beta$ as in (\ref{eq:c-exp-6}). Then it computes the upper and lower bound factors accordingly.
\end{proof}
\fi


\begin{algorithm}[h!]
\caption{\textsc{FirstMomentBounds}}
\label{alg:first_moment_bounds}
\begin{algorithmic}[1]
\Input $m$, $k$, $n$, $p$ \Comment{Input parameters}
\Output $\text{UB}$, $\text{LB}$ \Comment{Upper and lower bounds}
\State $\text{UB} \gets n^k \cdot \frac{1}{2 \cdot \floor{mp^{\log{k}}} - 1}$
\State $\text{LB} \gets \text{UB}$
\For{$d \gets 0$ to $\log k - 1$}
    \State $\alpha_u, \alpha_l \gets \textsc{FirstMomentInductFactors}(m, k, p, d)$
    \State $\text{UB} \gets \alpha_u \cdot \text{UB}$
    \State $\text{LB} \gets \alpha_l \cdot \text{LB}$
\EndFor
\State \Return $\text{UB}$, $\text{LB}$
\end{algorithmic}
\end{algorithm}

\iflncs\else
\begin{proposition}[Proposition~\ref{prop:c-exp}]
\label{prop:alg_first_moment_bounds}
Given parameters $m$, $k$, $n$ and $p$ satisfying the hypotheses of Proposition~\ref{prop:c-exp} as inputs, \textsc{FirstMomentBounds} (Algorithm~\ref{alg:first_moment_bounds}) computes upper and lower bounds UB and LB such that:
\[
\text{LB} \leq \mathbb{E}[C] \leq \text{UB},
\]
where $C$ is the number of $k$-tuples $(l_1, \ldots, l_k) \in [n]^k$ such that $\sum_{i=1}^k L_i[l_i] = 0$, and $L_1, \ldots, L_k$ are the uniformly random input lists. And UB, LB are at least as tight as the bounds in Proposition~\ref{prop:c-exp}. And the running time of the algorithm is $O(\polylog (m,k,n))$.
\end{proposition}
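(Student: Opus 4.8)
The plan is to mirror the structure of the proof of Proposition~\ref{prop:c-exp}, replacing every analytic approximation by the corresponding \emph{exact} quantity that the subroutines compute. First I would record the identity $\mathbb{E}[C] = \sum_{\bar{\ell}\in[n]^k}\mathbb{E}[C_{\bar{\ell}}] = n^k\cdot\zeta_0$, which follows from (\ref{eq:c-exp-0}) and (\ref{eq:c-exp-2}) together with the symmetry making $\mathbb{E}[C_{\bar{\ell}}]=\zeta_0$ independent of $\bar{\ell}$. Thus it suffices to track two-sided bounds on $\zeta_0$ and multiply by $n^k$; this is exactly what \textsc{FirstMomentBounds} does, baking $n^k$ into the running $\text{UB}/\text{LB}$ at initialization and then applying only the per-level multiplicative factors inside the loop.

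Next I would set up the induction over levels. The base case is $\zeta_{\log k}=\Pr_{x\gets U_{mp^{\log k}}}[x=0]=1/\card{\range{mp^{\log k}}}$, an exact value loaded identically into both $\text{UB}$ and $\text{LB}$; so at initialization $\text{UB}=\text{LB}=n^k\zeta_{\log k}$. For the inductive step I would invoke Claim~\ref{claim:alg_first_moment_induct_factors}, which guarantees that the factors $\alpha_u=\alpha\beta$ and $\alpha_l=\alpha/\beta$ returned by \textsc{FirstMomentInductFactors}$(m,k,p,d)$ satisfy $\zeta_{d+1}\cdot\alpha_l\le\zeta_d\le\zeta_{d+1}\cdot\alpha_u$, where $\alpha$ is the exact probability of (\ref{eq:c-exp-5}) and $\beta$ the exact MR-distance power of (\ref{eq:c-exp-6}). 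Since all quantities are positive and the loop multiplies $\text{UB}$ by $\alpha_u$ and $\text{LB}$ by $\alpha_l$ for $d$ from $0$ to $\log k-1$, composing the per-level inequalities telescopes to $\text{LB}\le n^k\zeta_0=\mathbb{E}[C]\le\text{UB}$. A clean observation worth highlighting is that the $\alpha$ part is common to $\alpha_u$ and $\alpha_l$, so the entire gap between the final bounds is the accumulated product of the $\beta^2$ factors coming from the MR distances. The running time is then immediate: the loop runs $\log k$ times, each iteration issuing a constant number of calls to subroutines that run in $O(\polylog(m,k,n))$ time (by Propositions~\ref{prop:alg_prob_sum_in_range} and~\ref{prop:alg_mr_dist_from_unif}) plus a constant number of arithmetic operations on numbers of magnitude $\poly(m,k,n)$, for $O(\polylog(m,k,n))$ total.

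For the tightness claim I would argue monotone, factor-by-factor containment. The analytic bound of Proposition~\ref{prop:c-exp} is obtained by replacing, at each level $d$, the exact $\alpha$ by the interval $(p-p^2/4)\pm 7/(mp^d)$ of Proposition~\ref{prop:tools-2}, the exact $\beta$ by the upper bound of Proposition~\ref{prop:tools-3}, and the exact base by the bracket in (\ref{eq:c-exp-3}). Because \textsc{ProbSumInRange} and \textsc{MRDistFromUnif} return precisely the exact values that these propositions only bracket, each exact factor lies inside the corresponding analytic interval; and since $\text{UB}$ (resp. $\text{LB}$) is a product of positive per-level factors each no larger (resp. no smaller) than its analytic counterpart, the interval $[\text{LB},\text{UB}]$ is contained in the one produced by Proposition~\ref{prop:c-exp}.

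The main obstacle I anticipate is the tightness argument rather than correctness, which is a mechanical telescoping induction. Tightness requires care in two places: first, confirming that the exact base value and every exact per-level factor genuinely lie inside the intervals used analytically, i.e. that no approximation in the proof of Proposition~\ref{prop:c-exp} ever points in the ``wrong direction''; and second, justifying that containment of each factor's interval propagates to containment of the product, which rests on all factors being positive and on the purely multiplicative way the bounds compose. A subtle point to verify is that the analytic proof occasionally \emph{loosens} intervals for algebraic convenience (for instance the step $(1-p/4)(1-p/2)^{-1}<(1+p)$ in the proof of Claim~\ref{claim:c-exp-1}); I would check that each exact factor already respects the original, un-loosened bracket, so that composing exact $\le$ un-loosened $\le$ loosened makes the ``at least as tight'' conclusion honest.
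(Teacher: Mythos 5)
Your proposal is correct and follows essentially the same route as the paper's (much terser) proof sketch: initialize $\text{UB}=\text{LB}$ to $n^k$ times the exact base probability $\zeta_{\log k}$, telescope the per-level factors from \textsc{FirstMomentInductFactors} via Claim~\ref{claim:alg_first_moment_induct_factors}, and obtain tightness from the fact that the subroutines compute exactly the quantities that Propositions~\ref{prop:tools-2} and~\ref{prop:tools-3} only bracket. Your factor-by-factor containment argument for tightness, including the check that the loosening steps in the proof of Claim~\ref{claim:c-exp-1} only ever widen the analytic interval, is a careful elaboration of what the paper asserts in a single sentence.
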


\begin{proof}[Proof Sketch]
The algorithm first computes the base values of the upper and lower bounds as in (\ref{eq:c-exp-3}) and multiplies them with the scaling factor $n^k$ as in (\ref{eq:c-exp-sum-over-all-k-tuples}). Then it iteratively computes the induction factors using \textsc{FirstMomentInductFactors} as in (\ref{eq:c-exp-8}) and (\ref{eq:c-exp-9}). As the computations in the subroutines have no approximations involved, the bounds computed by Algorithm~\ref{alg:first_moment_bounds} are at least as tight as the ones in Proposition~\ref{prop:c-exp}.
\end{proof}
\fi

\begin{algorithm}[h!]
    \caption{\textsc{SecondMomentUB}}
    \label{alg:second_moment_ub}
    \begin{algorithmic}[1]
        \Input $m$, $k$, $n$, $p$ \Comment{Input parameters}
        \Output Recursive result of the second moment bound
        \If{$k = 1$}
        \State $u \gets \frac{1}{2\cdot\floor{m} + 1}$
        \State \Return $(n \cdot u) \cdot (n \cdot u + 1)$
        \Else
        \State $m' = m \cdot p$
        \State $k' = k/2$
        \State $n' = (\left(\textsc{ProbUnifXMRDist}(m, p)\right)^2 \cdot n^4 + 2 \cdot \left(\textsc{ProbUnifXMRDistPair}(m, p)\right) \cdot n^3)^\frac{1}{2}$
        \State \Return $\textsc{SecondMomentUB}(m', k', n', p)$
        \EndIf
    \end{algorithmic}
\end{algorithm}

\begin{algorithm}[h!]
\caption{\textsc{ProbUnifXMRDist}}
\label{alg:prob_unif_x_mr_dist}
\begin{algorithmic}[1]
\Input $m$, $p$ \Comment{Input parameters}
\Output Product of uniform probability and max-ratio distance
\State \Return $\textsc{ProbSumInRange}(m, p) \cdot \textsc{MRDistFromUnif}(m, p)$
\end{algorithmic}
\end{algorithm}

\begin{algorithm}[h!]
\caption{\textsc{ProbUnifXMRDistPair}}
\label{alg:prob_unif_x_mr_dist_pair}
\begin{algorithmic}[1]
\Input $m$, $p$ \Comment{Input parameters}
\Output Product of uniform probability and max-ratio distance for pairs
\State \Return $\textsc{ProbSumWithTwoRVInRange}(m, p) \cdot \textsc{MRDistFromPairUnif}(m, p)$
\end{algorithmic}
\end{algorithm}

\iflncs\else
\begin{proposition}[Proposition~\ref{prop:c-var}]
\label{prop:alg_second_moment_ub}
Given parameters $m$, $k$, $n$ and $p$ satisfying the hypotheses of Proposition~\ref{prop:c-var}
as inputs, \textsc{SecondMomentUB} (Algorithm~\ref{alg:second_moment_ub}) computes an upper bound on the second moment of the number of solutions, such that:
\[
\mathbb{E}[C^2] \leq \text{UB}
\]
and UB is at least as tight as the bound in Proposition~\ref{prop:c-var}. And the running time of the algorithm is $O(\polylog (m,k,n))$.
\end{proposition}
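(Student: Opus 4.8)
The plan is to show that \textsc{SecondMomentUB} is a faithful, approximation-free transcription of the recursive second-moment bound derived in the proof of \cref{prop:c-var}, so that its output is simultaneously a valid upper bound on $\exp{C^2}$ and no larger than the bound stated there. I would begin by recording that, by \cref{prop:alg_prob_sum_in_range,prop:alg_mr_dist_from_unif,prop:alg_prob_sum_with_two_rv_in_range,prop:alg_mr_dist_from_pair_unif}, the two helper routines compute \emph{exactly} the per-node factors that appear in that proof. Concretely, $\textsc{ProbUnifXMRDist}(m,p)$ equals $\prob{x,y\gets U_m}{x+y\in\range{mp}}\cdot\mr\!\left(U_{mp},\,2\cdot U_m|_{\range{mp}}\right)$, which is the exact quantity bounded analytically in \cref{eq:c-var-4-3,eq:c-var-4-9} by $p(1-p)^{-1}(1+30/(mp^{\log k}))$; and $\textsc{ProbUnifXMRDistPair}(m,p)$ equals the exact product $\prob{w,x,y\gets U_m}{w+x,\,w+y\in\range{mp}}\cdot\mr\!\left(U_{mp}\otimes U_{mp},D\right)$ bounded analytically in \cref{eq:c-var-4-5,eq:c-var-4-11}. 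Write $A$ and $B$ for these two values at a given scale $m$.

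Next I would re-derive the recursion of \cref{eq:prop-c-var-6} while keeping these factors exact. The second moment is the sum over difference patterns $s\in\bset^k$, and at the top level of the tree the pattern restricted to the pair of leaves feeding a single merge-node takes one of three shapes: both coordinates agree (an ``equal'' node, contributing the factor $A$), both differ (independent executions, contributing $A^2$), or exactly one differs (a ``mixed'' node, contributing $B$). Summing the index-pair counts of \cref{eq:prop-c-var-2} against these factors over one merge-node produces a quadratic whose independent and mixed parts are precisely $A^2 n^4$ and $2Bn^3$ (the $2$ from the two mixed orientations, the powers $n^4,n^3$ from the $n^{k+wt(s)}$ bookkeeping), that is, exactly $(n')^2$ in \cref{alg:second_moment_ub}. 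I would then prove, by induction on $\log k$, the invariant $\textsc{SecondMomentUB}(m,k,n,p)=\textsc{SecondMomentUB}(mp,k/2,n',p)$: one merge level shrinks the range by $p$ (hence $m\mapsto mp$), halves the number of lists, and replaces the effective list size by $n'$, with the diagonal $s=0^k$ contributions absorbed into the ``$+1$'' of the base case. That base case $k=1$ returns $(nu)(nu+1)$ with $u=1/(2\floor{m}+1)$, which is exactly the single-list second moment $\sum_{i,j}\prob{}{L[i]=L[j]=0}=nu+n(n-1)u^2$ after the standard over-count $n(n-1)\le n^2$, matching the ``square-plus-linear'' shape $(\tfrac{T(k/2)}{\mu\nu}+1)^2-1$ of \cref{eq:prop-c-var-6}.

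Tightness relative to \cref{prop:c-var} then follows because every step of that proof replaced one of the exact quantities above by an explicit analytic over-estimate (the filtering probability, each MR distance, and the mixed-node bound), whereas the ways these quantities are combined here — products, the sum inside $(n')^2$, the squaring, and the outer map $x\mapsto x(x+1)$ — are all monotone nondecreasing in their inputs, and the index-pair counts $n^4,n^3$ used by the algorithm are the same upper bounds $n^{k+wt(s)}$ used in \cref{eq:prop-c-var-2}. Hence substituting each analytic bound by the exact subroutine value can only decrease the final number, so the computed value UB satisfies $\exp{C^2}\le \text{UB}$ and UB is at most the bound of \cref{prop:c-var}. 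The running time is immediate: the recursion has depth $\log k$ since $k$ halves each call, each level makes a constant number of calls to subroutines that run in $\polylog$ time, giving $\polylog(m,k,n)$ overall.

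The main obstacle I anticipate is the inductive identity itself: one must verify that collapsing an entire merge level — with its three node types and the full $n^{k+wt(s)}$ weighting — into the single substitution $n\mapsto n'=\sqrt{A^2n^4+2Bn^3}$ is exactly valid, i.e. that the ``effective list size'' $n'$ reproduces the generating sum $\sum_{s} n^{wt(s)}\mu^{stex(s)}(\mu\nu)^{s^{\log k}_1}$ of \cref{eq:prop-c-var-6} at every level, and in particular that the diagonal term and the ``$+1$''/squaring bookkeeping propagate correctly from the base case up through all $\log k$ levels rather than only at the top. Matching the algorithm's reparametrization to the quadratic $T$-recursion cleanly is where the bulk of the careful work lies.
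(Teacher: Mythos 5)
Your overall route is the same as the paper's: treat the subroutines as \emph{exact} evaluations of the per-node quantities $\FuncA{m}{p}\FuncB{m}{p}$ (via \textsc{ProbUnifXMRDist}) and $\FuncC{m}{p}\FuncD{m}{p}$ (via \textsc{ProbUnifXMRDistPair}), group the difference patterns $s\in\bset^k$ by their one-level contraction $s' = ex(s)$, collapse one merge level into a recursive call at $(mp, k/2, n')$, verify the $k=1$ base case $(nu)(nu+1)$, and obtain tightness from monotonicity. This is precisely the paper's $f$/$g$ recursion in (\ref{eq:alg_second_moment_ub_g_function})--(\ref{eq:alg_second_moment_ub-main-6}), including your observation that the base case is the exact single-list second moment after the over-count $n(n-1)\le n^2$.

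However, the step you defer as the ``main obstacle'' --- and your proposed mechanism for it --- is wrong as stated, and it is the crux of the proof. The level collapse is \emph{not} an identity, and the equal-node contributions are \emph{not} absorbed into the ``$+1$'' of the base case. After grouping by $s'$ and applying the binomial theorem as in (\ref{eq:alg_second_moment_ub-main-5}), a position with $s'_i = 1$ contributes exactly $(\FuncA{m}{p}\FuncB{m}{p}\, n^2)^2 + 2\,\FuncC{m}{p}\FuncD{m}{p}\, n^3 = (n')^2$, but a position with $s'_i = 0$ --- an equal node, which occurs in any $s$ with a zero pair, not only the diagonal $s=0^k$ --- contributes the factor $\FuncA{m}{p}\FuncB{m}{p}\, n^2$, not $n'$. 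The recursion's uniform use of $n'$ at every position is justified only by the inequality $\FuncA{m}{p}\FuncB{m}{p}\, n^2 \le n'$, which holds because $2\,\FuncC{m}{p}\FuncD{m}{p}\, n^3 \ge 0$; this, together with $n-1 \le n$, is exactly how the paper passes from (\ref{eq:alg_second_moment_ub-main-5}) to (\ref{eq:alg_second_moment_ub-main-6}). The base case's ``$+1$'' accounts only for the single all-zero pattern threading to the root (the $s^{\log{k}}_1$ term), not for partially-equal patterns at intermediate levels. Correspondingly, your inductive invariant should be the inequality $f(m,k,n,p) \le \textsc{SecondMomentUB}(m,k,n,p)$, propagated using monotonicity of $f$ in its list-size argument --- not the equality $\textsc{SecondMomentUB}(m,k,n,p)=\textsc{SecondMomentUB}(mp,k/2,n',p)$, which is tautologically the algorithm's own recursive call and carries no content. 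Once the invariant is restated this way, the rest of your sketch (exact subroutine values, base case, tightness relative to \cref{prop:c-var}, and the $O(\polylog(m,k,n))$ runtime via recursion depth $\log{k}$) matches the paper's proof.
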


\begin{proof}
    Algorithm~\ref{alg:second_moment_ub} employs a recursive variant of the calculation presented in the proof of Proposition~\ref{prop:c-var} to compute the upper bound on the second moment. For consistency, we adhere to the notations defined in the proof of Proposition~\ref{prop:c-var}.
    
    We begin by refining several expressions from the proof of Proposition~\ref{prop:c-var} to enable recursive calculation. Recall the expression we want to calculate in (\ref{eq:c-var-2}):
    \[
    \exp{C^2} = \sum_{\bar{\ell},\bar{\ell}'\in[n]^k}{ \exp{C_{\bar{\ell}} \cdot C_{\bar{\ell}'}}} = \sum_{s\in \bset^k} \sum_{\bar{\ell},\bar{\ell}': \diff(\bar{\ell},\bar{\ell}') = s}{\exp{C_{\bar{\ell}} \cdot C_{\bar{\ell}'}}}.
    \]

    We continue to use the notations $(s^1, \dots, s^{\log{k}}) \gets tex(s)$ defined before~\cref{claim:c-var-2}. Unless overridden in the context, we write $s^d$ by assuming it is one of the corresponding outputs generated by $tex(s^0)$ for some $s^0$ either implicitly or explicitly. We further define $ex:\bset^{2^*} \ra \bset^{2^*/2}$ on input $s\in\bset^{2^*}$ that is of length power of $2$, outputs $s' \in \bset^{2^*/2}$ with half the length, where its $i^{\text{th}}$ bit is defined as follows, for $i\in[k/2^d]$:
    \begin{align*}
      s^{'}_{i} = \max(s_{2i-1},s_{2i}).
    \end{align*}
    Then we refine (\ref{eq:c-var-claim-2}) and provide its extended notation $\chi_d(s)$ with $s \in \bset^{\frac{k}{2^{d}}}$ as input, where for $\chi_0$, the input $s$ corresponds to $\diff(\bar{\ell},\bar{\ell}')$:
    \[
        \exp{C_{\bell} \cdot C_{\bell'}} = \chi_0(\delta(\bell,\bell')).
    \]
    The inclusion of input $s$ is by revisiting the definition of $\chi_d$ in the proof of Proposition~\ref{prop:c-var} and verifying that in each step of the recursive calculations, the corresponding $s^d$ is sufficient and necessary to determine the value of $\chi_d$. Putting this back to (\ref{eq:c-var-2}), we have the following expression:
    \begin{align}
        \label{eq:alg_second_moment_ub-main}
        \exp{C^2} = \sum_{s\in \bset^k} \sum_{\bar{\ell},\bar{\ell}': \diff(\bar{\ell},\bar{\ell}') = s}{\chi_0(s)}.
    \end{align}

    Then we recall the inequality stated by Claim~\ref{claim:c-var-4} and its precise form:
    \begin{align}\label{eq:c-var-4-refined}
        \chi_d(s^d) &\leq \chi_{d+1}(s^{d+1}) \cdot (\FuncA{mp^d}{p}\FuncB{mp^d}{p})^{\sum_{i \in \left[\frac{k}{2^{d+1}}\right]} \FuncInd{s^d_{2i-1} = s^d_{2i} = 0}} \nonumber \\
        &\cdot (\FuncA{mp^d}{p}\FuncB{mp^d}{p})^{2 \cdot \sum_{i \in \left[\frac{k}{2^{d+1}}\right]} \FuncInd{s^d_{2i-1} = s^d_{2i} = 1}} \cdot (\FuncC{mp^d}{p}\FuncD{mp^d}{p})^{\sum_{i \in \left[\frac{k}{2^{d+1}}\right]} \FuncInd{s^d_{2i-1} \neq s^d_{2i}}},
    \end{align}
    where $s^{d+1} = ex(s^d)$ and the notations $\FuncA{mp^d}{p}$, $\FuncB{mp^d}{p}$, $\FuncC{mp^d}{p}$, and $\FuncD{mp^d}{p}$ are defined as the shorthands for the following values:
    \begin{align*}
        \FuncA{mp^d}{p} &= \prob{x,y\gets U_{mp^d}}{(x + y) \in \range{mp^{d+1}}}, \\
        \FuncB{mp^d}{p} &= \mr\left(U_{mp^{d+1}}, 2\cdot U_{mp^d}|_{\range{mp^{d+1}}} \right), \\
        \FuncC{mp^d}{p} &= \prob{w,x,y\gets U_{mp^d}}{(w + x) \in \range{mp^{d+1}} \wedge (w + y) \in \range{mp^{d+1}}}, \\
        \FuncD{mp^d}{p} &= \mr(D^0,D^1) \textnormal{ in (\ref{eq:c-var-4-11})}.
    \end{align*}

    Recall that in the proof of Claim~\ref{claim:c-var-4}, we consider three cases regarding the values of $s^d_{2i-1}$ and $s^d_{2i}$: $s^d_{2i-1} = s^d_{2i} = 0$, $s^d_{2i-1} = s^d_{2i} = 1$ and $s^d_{2i-1} \neq s^d_{2i}$.  And in (\ref{eq:c-var-4-refined}), the base terms corresponding to the three cases are by putting together (\ref{eq:c-var-4-3}) and (\ref{eq:c-var-4-9}), (\ref{eq:c-var-4-4}) and (\ref{eq:c-var-4-10}), (\ref{eq:c-var-4-5}) and (\ref{eq:c-var-4-11}) respectively. For simplification, we define the following functions:
    \begin{align*}
        \gamma_{0}(s) &= \sum_{i \in \left[|s|/2\right]} \FuncInd{s_{2i-1} = s_{2i} = 0}, \\
        \gamma_{1}(s) &= \sum_{i \in \left[|s|/2\right]} \FuncInd{s_{2i-1} = s_{2i} = 1}, \\
        \gamma_{\neq}(s) &= \sum_{i \in \left[|s|/2\right]} \FuncInd{s_{2i-1} \neq s_{2i}}.
    \end{align*}

    We would then rewrite the recursive upper bound (\ref{eq:c-var-4-refined}) in terms of the following function $g$:
    \begin{align}\label{eq:alg_second_moment_ub_g_function}
        g(s, m, p) &= \begin{cases}
            \left(\prob{x\gets U_{mp^{\log{k}}}}{x = 0}\right)^{1 + wt(s)} & \text{if } \text{length}(s) = 1,\\
            g(ex(s), m \cdot p, p) \cdot \delta_{s,m,p} & \text{otherwise},
        \end{cases}
    \end{align}
    where $\delta_{s,m,p} = (\FuncA{m}{p}\FuncB{m}{p})^{\Gammazero{s}} \cdot (\FuncA{m}{p}\FuncB{m}{p})^{2 \cdot \Gammaone{s}} \cdot (\FuncC{m}{p}\FuncD{m}{p})^{\Gamman{s}}$. 
    
    It is easy to see that for $\chi_0(s^0)$, the $\delta_{s^0,m,p}$ in $g(s^0, m, p)$ is exactly the RHS of (\ref{eq:c-var-4-refined}) with $d = 0$. And in the subsequent $g(s^1, m \cdot p, p)$ (i.e., $\chi_1(ex(s^0))$) where $s^1 \gets \text{ex}(s^0)$, $\delta_{s^1,mp,p}$ equals
    \[
        (\FuncA{mp}{p}\FuncB{mp}{p})^{\Gammazero{s^1}} \cdot (\FuncA{mp}{p}\FuncB{mp}{p})^{2 \cdot \Gammaone{s^1}} \cdot (\FuncC{mp}{p}\FuncD{mp}{p})^{\Gamman{s^1}}.
    \]
    It is evident that the calculations align with (\ref{eq:c-var-4-refined}) in each inductive step. The base case when $\text{length}(s) = 1$ is derived from (\ref{eq:c-var-claim-3}), where the value of $\chi_{\log{k}}(s)$ can be computed precisely as
    \[
    \prob{x\gets U_{mp^{\log{k}}}}{x = 0} \cdot \prob{y\gets U_{mp^{\log{k}}}}{ y = 0 }
    \]
    if $s = 1$ and
    \[
    \prob{y\gets U_{mp^{\log{k}}}}{y = 0}
    \]
    otherwise.

    Now we substitute $\chi_0(s^0)$ in (\ref{eq:alg_second_moment_ub-main}) with the recursive upper bound $g(s^0, m, p)$ to obtain
    \begin{align}
        \label{eq:alg_second_moment_ub-main-2}
        \exp{C^2} &\leq \sum_{s^0\in \bset^k} \sum_{\bar{\ell},\bar{\ell}': \diff(\bar{\ell},\bar{\ell}') = s^0}{g(s^0, m, p)} \nonumber\\
        &= \sum_{s^0\in \bset^k} n^{(k - wt(s^0))} \cdot \left(n(n-1)\right)^{\text{wt}(s^0)} \cdot g(s^0, m, p) \nonumber\\
        &\leq \sum_{s^0\in \bset^k} n^{(k - wt(s^0))} \cdot \left(n^2\right)^{\text{wt}(s^0)} \cdot g(s^0, m, p).
    \end{align}

    To conclude our computation of the second moment upper bound recursively, we define a function $f(m, k, n, p)$ to capture (\ref{eq:alg_second_moment_ub-main-2}) more generally:
    \begin{align}
        \label{eq:f_function}
        f(m, k, n, p) &= \sum_{s\in \bset^k} n^{(k - wt(s))} \cdot \left(n^2\right)^{\text{wt}(s)} \cdot g(s, m, p).
    \end{align}
    Corresponding to the base case of $g$, the base case of $f$ is when $k=1$:
    \begin{align*}
        f(m, 1, n, p) &= n \cdot g(0, m, p) + n^2 \cdot g(1, m, p) \\
        &= \frac{n}{2 \cdot \floor{m/2} + 1}\left(\frac{n}{2 \cdot \floor{m/2} + 1} + 1\right).
    \end{align*}

    We now rewrite (\ref{eq:alg_second_moment_ub-main-2}) in a slightly different form:

    \begin{align}\label{eq:alg_second_moment_ub-main-4}
        f(m, k, n, p) &= \sum_{s\in \bset^{k}} \sum_{\bar{\ell},\bar{\ell}' \in [n]^{k}: \diff(\bar{\ell},\bar{\ell}') = s}{g(s,m,p)} \\
        &= \sum_{s'\in \bset^{\frac{k}{2}}} \sum_{s: \text{ex}(s) = s'} \sum_{\bar{\ell},\bar{\ell}': \diff(\bar{\ell},\bar{\ell}') = s}{g(s,m,p)} \nonumber\\
        &= \sum_{s'\in \bset^{\frac{k}{2}}} \sum_{s: \text{ex}(s) = s'} \sum_{\bar{\ell},\bar{\ell}': \diff(\bar{\ell},\bar{\ell}') = s}{g(s',mp,p) \cdot \delta_{s,m,p}} \nonumber\\
        &= \sum_{s'\in \bset^{\frac{k}{2}}} g(s',mp,p) \cdot \sum_{s: \text{ex}(s) = s'} \sum_{\bar{\ell},\bar{\ell}': \diff(\bar{\ell},\bar{\ell}') = s}{\delta_{s,m,p}} \nonumber\\
    \end{align}
    In the sum $\sum_{s: \text{ex}(s) = s'} \sum_{\bar{\ell},\bar{\ell}': \diff(\bar{\ell},\bar{\ell}') = s}{\delta_{s,m,p}}$, the value of $\delta_{s,m,p}$ is determined by the values of $s$, $m$ and $p$. We then count the total number of $\bell$ and $\bell'$ that satisfy $\diff(\bell, \bell') = s$ such that $ex(s) = s'$ then multiply the result by $\delta_{s,m,p}$:
    \begin{align}
        \label{eq:alg_second_moment_ub-main-5}
        \sum_{s: \text{ex}(s) = s'}{\delta_{s,m,p}} \sum_{\bar{\ell},\bar{\ell}': \diff(\bar{\ell},\bar{\ell}') = s}{1} &= 
        &&\sum_{s: \text{ex}(s) = s'}{\delta_{s,m,p}} \cdot (n^2)^{\Gammazero{s}}(n^2(n-1)^2)^{\Gammaone{s}}(n^2(n-1))^{\Gamman{s}}\\
        &\leq &&\sum_{s: \text{ex}(s) = s'}(\FuncA{m}{p}\FuncB{m}{p} \cdot n^2)^{\Gammazero{s}} \nonumber\\
        & &&\cdot (\FuncA{m}{p}\FuncB{m}{p} \cdot n^2)^{2 \cdot \Gammaone{s}} \cdot (\FuncC{m}{p}\FuncD{m}{p} \cdot n^3)^{\Gamman{s}}\\
        &= &&\sum_{i = 0}^{wt(s')}\binom{wt(s')}{i} \cdot (\FuncA{m}{p}\FuncB{m}{p} \cdot n^2)^{k/2 - wt(s')} \nonumber \\ 
        & &&\cdot (\FuncA{m}{p}\FuncB{m}{p} \cdot n^2)^{2 \cdot i} \cdot (2 \cdot \FuncC{m}{p}\FuncD{m}{p} \cdot n^3)^{wt(s') - i} \label{eq:alg_second_moment_ub-main-5-2}\\
        &= &&(\FuncA{m}{p}\FuncB{m}{p} \cdot n^2)^{k/2 - wt(s')} \nonumber \\
        & &&\cdot ((\FuncA{m}{p}\FuncB{m}{p} \cdot n^2)^2 + 2 \cdot \FuncC{m}{p}\FuncD{m}{p} \cdot n^3)^{wt(s')}.  \nonumber
    \end{align}
    The first equality is by counting the number of $\bell$ and $\bell'$ that satisfy $\diff(\bell, \bell') = s$ for a fixed $s$ and the following inequality is by upper bounding $n-1$ by $n$. The equality in (\ref{eq:alg_second_moment_ub-main-5-2}) is by counting the number of possible $s$ given $s'$, where $s'_i = 0$ implies $s_{2i-1} = s_{2i} = 0$ ($\Gammazero{s}$) and $s'_i = 1$ implies $s_{2i-1} = s_{2i} = 1$ ($\Gammaone{s}$) or $s_{2i-1} \neq s_{2i}$ ($\Gamman{s}$). Then we denote by $i$ the number of case $s_{2i-1} = s_{2i} = 1$ and sum over all possible $i \in [0, wt(s')]$. Finally, the last equality is by summing the binomial coefficients with the last two terms then applying Binomial Theorem. Combining the result with (\ref{eq:alg_second_moment_ub-main-4}), we have:
    \begin{align}
        \label{eq:alg_second_moment_ub-main-6}
        f(m, k, n, p) \leq& \sum_{s'\in \bset^{\frac{k}{2}}} g(s',mp,p) \cdot (\FuncA{m}{p}\FuncB{m}{p} \cdot n^2)^{k/2 - wt(s')} \nonumber \\
        &\qquad\qquad\qquad \cdot ((\FuncA{m}{p}\FuncB{m}{p} \cdot n^2)^2 + 2 \cdot \FuncC{m}{p}\FuncD{m}{p} \cdot n^3)^{wt(s')}  \nonumber \\
        \leq& f(mp, k/2, ((\FuncA{m}{p}\FuncB{m}{p} \cdot n^2)^2 + 2 \cdot \FuncC{m}{p}\FuncD{m}{p} \cdot n^3)^{1/2}, p).
    \end{align}

    Finally, we note that Algorithm~\ref{alg:prob_unif_x_mr_dist} and Algorithm~\ref{alg:prob_unif_x_mr_dist_pair} computes the products of $\FuncA{m}{p}$ and $\FuncB{m}{p}$, and $\FuncC{m}{p}$ and $\FuncD{m}{p}$. Combining all these elements, the function $f$ precisely follows the procedure defined in \textsc{SecondMomentUB} (Algorithm~\ref{alg:second_moment_ub}). Since in each recursive call, the total running time of all operations is $O(\polylog (m,k,n))$, the running time of the algorithm is determined by multiplying the number of total recursive calls, which gives $O(\log k \cdot \polylog (m,k,n)) = O(\polylog (m,k,n))$. 
\end{proof}
\fi
\begin{algorithm}[h!]
    \caption{\textsc{ProbBounds}}
    \label{alg:prob_bounds}
    \begin{algorithmic}[1]
    \Input $m$, $k$, $n$, $p$
    \Output Bounds on the success probabilities
    \State $\alpha_u, \alpha_l \gets \textsc{FirstMomentBounds}(m, k, n, p)$
    \State $\beta \gets \textsc{SecondMomentUB}(m, k, n, p)$
    \State $\text{UB} \gets \min(\alpha_u, 1)$
    \State $\text{LB} \gets \frac{\alpha_l^2}{\beta}$
    \State \Return $\text{UB}$, $\text{LB}$
    \end{algorithmic}
\end{algorithm}

\iflncs\else
\begin{theorem}
    \label{thm:alg_prob_bounds}
    Given parameters $m$, $k$, $n$, $p$ satisfying the hypotheses in Theorem~\ref{thm:ktree} as inputs, \textsc{ProbBounds} (Algorithm~\ref{alg:prob_bounds}) computes upper and lower bounds UB and LB on the success probability of the $\ktree$ algorithm. And the running time of the algorithm is $O(\polylog (m,k,n))$.
\end{theorem}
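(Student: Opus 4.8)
The plan is to treat \textsc{ProbBounds} (\cref{alg:prob_bounds}) as a thin wrapper that composes the outputs of its two subroutines, whose correctness is already guaranteed by \cref{prop:alg_first_moment_bounds,prop:alg_second_moment_ub}, together with precisely the two moment inequalities used in the proof of \cref{thm:ktree}. Recall that the success probability of $\ktree$ equals $\pr{C > 0}$, where $C = C_{k,n,m,p}$ counts the occurrences of $0$ in the final list $L^{\log{k}}_1$. Hence it suffices to show that the returned value $\text{UB}$ upper-bounds $\pr{C>0}$, that $\text{LB}$ lower-bounds it, and that the whole computation runs in $\polylog(m,k,n)$ time. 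The hypotheses of \cref{thm:ktree} are exactly the ones shown (in the proof of \cref{thm:ktree}) to imply the hypotheses of \cref{prop:c-exp,prop:c-var}, and therefore also those required by their algorithmic analogues, so the two subroutine propositions apply directly.

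For the upper bound, I would invoke \cref{prop:alg_first_moment_bounds}, which certifies $\exp{C} \leq \alpha_u$. Markov's inequality then gives $\pr{C \geq 1} \leq \exp{C} \leq \alpha_u$, exactly as in the upper-bound portion of the proof of \cref{thm:ktree}. Since any probability is at most $1$, the returned value $\text{UB} = \min(\alpha_u,1)$ is a valid upper bound on $\pr{C>0}$.

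For the lower bound, I would apply the Paley--Zygmund inequality (\cref{lem:paley-zygmund}) with $\theta = 0$, again mirroring the proof of \cref{thm:ktree}, to obtain $\pr{C>0} \geq \exp{C}^2 / \exp{C^2}$. By \cref{prop:alg_first_moment_bounds} we have $0 \leq \alpha_l \leq \exp{C}$, so $\exp{C}^2 \geq \alpha_l^2$; and by \cref{prop:alg_second_moment_ub} we have $\exp{C^2} \leq \beta$. Combining these yields $\pr{C>0} \geq \alpha_l^2/\beta = \text{LB}$. The one point deserving a word of care is that squaring preserves the first-moment inequality, which is valid precisely because $\alpha_l$ is non-negative (it is a product of probabilities and Max-Ratio-distance ratios, all at least $0$); I would state this explicitly to justify passing from $\alpha_l \leq \exp{C}$ to $\alpha_l^2 \leq \exp{C}^2$.

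The running-time claim then follows immediately: \cref{prop:alg_first_moment_bounds,prop:alg_second_moment_ub} each run in $\polylog(m,k,n)$ time, and the remaining steps of \cref{alg:prob_bounds} are a single $\min$, a squaring, and a division, each realizable in $\polylog$-time arithmetic. I expect no genuine obstacle in this statement itself: the entire substance has been pushed into the two subroutine propositions, and what remains is the routine verification that their conclusions chain through Markov and Paley--Zygmund. The only thing to watch is the bookkeeping of which inequalities point in which direction (an \emph{upper} bound on $\exp{C^2}$ but matching \emph{lower} and \emph{upper} bounds on $\exp{C}$), so that $\text{LB}$ and $\text{UB}$ come out as genuine bounds rather than being inadvertently reversed.
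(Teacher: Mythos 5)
Your proposal matches the paper's own proof essentially step for step: the paper likewise treats \textsc{ProbBounds} as a wrapper around \textsc{FirstMomentBounds} and \textsc{SecondMomentUB}, obtaining the upper bound via Markov's inequality applied to the first-moment upper bound and the lower bound via Paley--Zygmund with $\theta = 0$ combined with the first-moment lower bound squared over the second-moment upper bound, with the running time following from the $\polylog(m,k,n)$ cost of the two subroutines. Your explicit note that $\alpha_l \geq 0$ justifies passing from $\alpha_l \leq \exp{C}$ to $\alpha_l^2 \leq \exp{C}^2$ is a small point the paper leaves implicit, but it is the same argument.
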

    
\begin{proof}[Proof Sketch]
\textsc{ProbBounds} (Algorithm~\ref{alg:prob_bounds}) combines the results from \textsc{FirstMomentBounds} and \textsc{SecondMomentUB} to compute the final bounds by first computing the upper and lower bounds $\text{UB}_{1}$ and $\text{UB}_{2}$ of the first moment and the upper bound $\text{UB}_{2}$ of the second moment. Then applying the Paley-Zygmund inequality to obtain LB $= \pr{C > 0} \geq \frac{\exp{C}^2}{\exp{C^2}} \geq \frac{\text{LB}_{1}^2}{\text{UB}_{2}}$ and Markov's inequality to obtain UB = $\pr{C \geq 1} \leq \exp{C} \leq \text{UB}_{1}$.
\end{proof}
\fi


\begin{algorithm}[h!]
\caption{\textsc{SizeBoundInductFactors}}
\label{alg:size_bound_induct_factors}
\begin{algorithmic}[1]
\Input $m$, $k$, $p$, $d$, $t$ \Comment{Input parameters}
\Output $\text{UB}$, $\text{LB}$ \Comment{Upper and lower bounds}
\State $s \gets mp^t$
\State $\alpha \gets \textsc{ProbSumInRange}(s, p)$
\State $\beta \gets \textsc{MRDistFromUnif}(s, p)$
\State $(\alpha \cdot \beta)^{2^{d-t-1}}$
\State $(\alpha / \beta)^{2^{d-t-1}}$
\State \Return $\text{UB}$, $\text{LB}$
\end{algorithmic}
\end{algorithm}




\begin{algorithm}[h!]
\caption{\textsc{SizeBounds}}
\label{alg:size_bounds}
\begin{algorithmic}[1]
\Input $m$, $k$, $n$, $p$ \Comment{Input parameters}
\Output $\text{UB}$, $\text{LB}$ \Comment{Upper and lower bounds}
\State $\text{UB} \gets 0$, $\text{LB} \gets 0$
\For{$d \gets 0$ to $\log k$}
    \If{$d = 0$}
        \State $\text{UB} \gets \text{UB} + k \cdot n$
        \State $\text{LB} \gets \text{LB} + k \cdot n$
    \ElsIf{$d = 1$}
        \State $\alpha \gets \frac{k}{2} \cdot \textsc{ProbSumInRange}(m, p) \cdot n^2$
        \State $\text{UB} \gets \text{UB} + \alpha$
        \State $\text{LB} \gets \text{LB} + \alpha$
    \Else
        \State $s' \gets mp^{d-1}$
        \State $\gamma_u \gets \textsc{ProbSumInRange}(s', p)$, $\gamma_l \gets \textsc{ProbSumInRange}(s', p)$
        \For{$t \gets 0$ to $d-2$}
            \State $\beta_u, \beta_l \gets \textsc{SizeBoundInductFactors}(m, k, p, d, t)$
            \State $\gamma_u \gets \gamma_u \cdot \beta_u$
            \State $\gamma_l \gets \gamma_l \cdot \beta_l$
        \EndFor
        \State $\text{UB} \gets \text{UB} + \frac{k}{2^d} \cdot n^{2^d} \cdot \gamma_u$
        \State $\text{LB} \gets \text{LB} + \frac{k}{2^d} \cdot n^{2^d} \cdot \gamma_l$
    \EndIf
\EndFor
\State \Return $\text{UB}$, $\text{LB}$
\end{algorithmic}
\end{algorithm}

\iflncs\else
\begin{proposition}[Proposition~\ref{prop:l}]
\label{prop:alg_size_bounds}
Given parameters $m$, $k$, $n$ and $p$ satisfying the hypotheses in Proposition~\ref{prop:l} as inputs, \textsc{SizeBounds} (Algorithm~\ref{alg:size_bounds}) computes upper and lower bounds UB and LB on the expected total size of all lists involved in the execution of the $k$-Tree algorithm, such that $\text{LB} \leq \mathbb{E}[\Lambda] \leq \text{UB}$ and
\begin{align*}
    \text{UB} &= nk + \frac{kn^2}{2} \cdot \prob{x_1,x_2 \gets U_{m}}{x_1+x_2\in \range{mp}} \\
    &+ \sum_{d\in[2,\log{k}]} \frac{k}{2^d} \cdot n^{2^{d}} \cdot \prob{x_1,x_2 \gets U_{mp^{d-1}}}{x_1+x_2\in \range{mp^d}} \cdot \left[ \alpha \cdot \beta \right]^{\sum_{t=0}^{d-2} 2^{d-t-1}},\\
    \text{LB} &= nk + \frac{kn^2}{2} \cdot \prob{x_1,x_2 \gets U_{m}}{x_1+x_2\in \range{mp}} \\
    &+ \sum_{d\in[2,\log{k}]} \frac{k}{2^d} \cdot n^{2^{d}} \cdot \prob{x_1,x_2 \gets U_{mp^{d-1}}}{x_1+x_2\in \range{mp^d}} \cdot \left[ \alpha / \beta \right]^{\sum_{t=0}^{d-2} 2^{d-t-1}},\\
\end{align*}
where
\begin{align*}
    \alpha &= \prob{x^d_1,\dots,x^d_{2^{d-t}}\gets U_{mp^t}}{ \forall i\in \left[2^{d-t-1}\right]: (x^d_{2i-1}+x^d_{2i}) \in \range{mp^{t+1}} }, \\
    \beta &= \Delta_{MR}(U_{mp^{t+1}}, 2 \cdot U_{mp^t}|_{\range{mp^{t+1}}})^{2^{d-t-1}}.
\end{align*}
$\Lambda$ is the sum of the sizes of all lists generated during the algorithm's execution (as defined above in Proposition~\ref{prop:l}). And the running time of the algorithm is $O(\polylog (m,k,n))$.
\end{proposition}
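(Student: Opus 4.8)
The plan is to show that \textsc{SizeBounds} (\cref{alg:size_bounds}) executes exactly the chain of estimates (\ref{eq:l-1})--(\ref{eq:l-10}) underlying the proof of \cref{prop:l}, except that every place where that proof invoked \cref{prop:tools-2,prop:tools-3} and then rounded the resulting quantity into the symbolic factor $[\,p(1\pm p)(1\pm 30/mp^{\log k})\,]$ is instead replaced by the \emph{exact} value returned by the corresponding subroutine. Concretely, \textsc{ProbSumInRange} returns the exact probability $\prob{x,y\gets U_s}{x+y\in\range{sp}}$ (\cref{prop:alg_prob_sum_in_range}) and \textsc{MRDistFromUnif} returns the exact distance $\mr(U_{sp},2\cdot U_s|_{\range{sp}})$ (\cref{prop:alg_mr_dist_from_unif}). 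The goal is therefore to verify, level by level, that the three arithmetic branches of the outer loop reproduce the three pieces of (\ref{eq:l-10}), and that the factors multiplied in are precisely the exact two-sided versions of the bound in \cref{claim:l-1}.

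First I would handle the base contributions. The $d=0$ branch adds $kn$, matching the $\exp{\Lambda_0}=n$ term of (\ref{eq:l-2}) weighted by $k/2^0$. The $d=1$ branch adds $\tfrac{k}{2}\cdot\textsc{ProbSumInRange}(m,p)\cdot n^2$, which is exactly $\tfrac{k}{2^1}\cdot n^{2^1}\cdot\xi_0$ with $\xi_0=\prob{x_1,x_2\gets U_m}{x_1+x_2\in\range{mp}}$ as in the $d=1$ specialization of (\ref{eq:l-5}); here no induction factors appear because the inner loop over $t$ is empty. For each $d\ge 2$ I would then unwind the inner loop: the variable $\gamma_u$ (resp. $\gamma_l$) is initialized to $\textsc{ProbSumInRange}(mp^{d-1},p)=\xi_{d-1}$, the exact base case (\ref{eq:l-5}), and then multiplied across $t=0,\dots,d-2$ by the factors returned by \textsc{SizeBoundInductFactors}, namely $(\alpha\beta)^{2^{d-t-1}}$ and $(\alpha/\beta)^{2^{d-t-1}}$ with $\alpha=\textsc{ProbSumInRange}(mp^t,p)$ and $\beta=\textsc{MRDistFromUnif}(mp^t,p)$.

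The crux is to re-derive, exactly as in the proof of \cref{claim:c-exp-1} transcribed to the $\xi$'s, the two-sided sandwich $\alpha^{2^{d-t-1}}\beta^{-2^{d-t-1}}\,\xi_{t+1}\le \xi_t\le \alpha^{2^{d-t-1}}\beta^{2^{d-t-1}}\,\xi_{t+1}$. The factor $\alpha^{2^{d-t-1}}$ is the exact first term (the analogue of (\ref{eq:c-exp-5})) obtained by applying \cref{prop:tools-2} independently to each of the $2^{d-t-1}$ merge nodes at this level, while $\beta^{\pm 2^{d-t-1}}$ is the exact max-ratio perturbation supplied by \cref{fact:mr-1,fact:mr-2} together with \cref{prop:tools-3}. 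Telescoping these sandwiches over $t$ yields $\gamma_l\le\xi_0=\exp{\Lambda_{d,\bell}}\le\gamma_u$; multiplying by $n^{2^d}$ (the number of tuples $\bell$, cf. (\ref{eq:l-9})) and by $k/2^d$ (the number of level-$d$ nodes, cf. (\ref{eq:l-1})) reproduces the claimed summand, and summing over all $d$ gives $\text{LB}\le\exp{\Lambda}\le\text{UB}$ in the stated closed forms.

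The main obstacle is precisely this bookkeeping of the two-sided bound: in the analytic proof the per-node max-ratio factor was immediately absorbed into a single symbolic $(1\pm\cdot)$ and the exponents collapsed to $2^d-2$ via $\sum_{t=0}^{d-2}2^{d-t-1}$, whereas here I must keep the upper and lower products separate and confirm that the exact subroutine outputs still bracket the true ratio $\xi_t/\xi_{t+1}$ on both sides (the upper side using $\alpha\beta$, the lower using $\alpha/\beta$), so that tightness relative to \cref{prop:l} is genuinely preserved rather than merely asymptotically matched. Once the sandwich is verified the closed forms follow immediately. Finally, the running time is $O(\polylog(m,k,n))$ since the outer loop runs $\log k+1$ times, each inner loop at most $\log k$ times, and every subroutine call costs $O(\polylog(m,k,n))$ by the claims that establish them.
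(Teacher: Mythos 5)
Your proposal is correct and takes essentially the same approach as the paper's own proof: you map the three branches of \textsc{SizeBounds} to (\ref{eq:l-2}), (\ref{eq:l-5})--(\ref{eq:l-6}), and the exact two-sided version of \cref{claim:l-1} (via \cref{prop:tools-2,prop:tools-3} and \cref{fact:mr-1,fact:mr-2}), then telescope and scale according to (\ref{eq:l-9}) and (\ref{eq:l-10}). Your write-up is in fact more explicit than the paper's proof sketch about the upper/lower sandwich $\alpha/\beta \le \xi_t/\xi_{t+1}^{1/2^{d-t-1}} \cdots$ being tracked separately, but this is the same argument, not a different route.
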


\begin{proof}[Proof Sketch]
Similar to the proof of Proposition~\ref{prop:l}, the proof of the property of \textsc{SizeBoundInductFactors} (Algorithm~\ref{alg:size_bounds}) can be referred to the proof sketch of Proposition~\ref{prop:alg_first_moment_bounds}. In Algorithm \textsc{SizeBounds}, the total size of all lists is computed by summing the expected sizes of the lists generated at each level of the $k$-Tree algorithm. For $d=0$, the size is $k \cdot n$ as each list has $n$ elements as in (\ref{eq:l-2}). For $d=1$, the size is computed based on the probability of the sum being in the specified range as in (\ref{eq:l-6}). For $d > 1$, the size is calculated using the induction factors computed by \textsc{SizeBoundInductFactors}, with the initial value $\gamma_u, \gamma_l$ ($\xi_{d-1}$) calculated by \textsc{ProbSumInRange} following (\ref{eq:l-5}). Then the results are scaled according to (\ref{eq:l-9}) and (\ref{eq:l-10}).
\end{proof}
\fi


\iflncs
\clearpage
\else
\begin{algorithm}[h!]
\caption{\textsc{MainTheorem}}
\label{alg:main_theorem}
\begin{algorithmic}[1]
\Input $m$, $k$, $n$
\Output Bounds on the probabilities and sizes
\State $p \gets m^{-\frac{1}{\log k + 1}}$
\State $\text{UB}_{success}, \text{LB}_{success} \gets \textsc{ProbBounds}(m, k, n, p)$
\State $\text{UB}_{size}, \text{LB}_{size} \gets \textsc{SizeBounds}(m, k, n, p)$
\State \Return $\text{UB}_{success}$, $\text{LB}_{success}$, $\text{UB}_{size}$, $\text{LB}_{size}$
\end{algorithmic}
\end{algorithm}

The properties of \textsc{MainTheorem} (Algorithm~\ref{alg:main_theorem}) follows directly from Theorem~\ref{thm:alg_prob_bounds} and Proposition~\ref{prop:alg_size_bounds}, and the proof is omitted for brevity.
\fi


\iflncs\else
\subsection{Computed Bounds for $\Int_m$}

To compute bounds for the $k$-SUM problem over $\Int_m$, we provide several slightly modified subroutines to account for the differences in probability calculations when working with modular arithmetic. The key modifications are based on the observations in Section~\ref{sec:zm}, particularly for the first iteration of the algorithm.

Notice that in the computed bounds, the only subroutines affected are \textsc{ProbSumInRange} (Algorithm~\ref{alg:prob_sum_in_range}), \textsc{MRDistFromUnif} (Algorithm~\ref{alg:mr_dist_from_unif}), \textsc{ProbSumWithTwoRVInRange} (Algorithm~\ref{alg:prob_sum_with_two_rv_in_range}) and \textsc{MRDistFromPairUnif} (Algorithm~\ref{alg:mr_dist_from_pair_unif}) during the first iteration of the algorithm. We first introduce the following modifications to \textsc{ProbSumInRange}:

\begin{algorithm}[h!]
\caption{\textsc{ProbSumModSInRange}}
\label{alg:prob_sum_in_range_mod}
\begin{algorithmic}[1]
\Input $s$, $p$ \Comment{Modulus and $p$}
\Output Probability of sum being in specified range
\State \Return $\frac{2 \cdot \floor{sp/2}+1}{s}$
\end{algorithmic}
\end{algorithm}

\textsc{ProbSumModSInRange} (Algorithm~\ref{alg:prob_sum_in_range_mod}) follows direct from (\ref{eq:zm-1}) in Section~\ref{sec:zm}, and we will omit a formal statement of its correctness here. Regarding \textsc{ProbSumWithTwoRVInRange} (Algorithm~\ref{alg:prob_sum_with_two_rv_in_range}), its $\Int_m$ calculation is simply the square of \textsc{ProbSumModSInRange}. And for \textsc{MRDistFromUnif} (Algorithm~\ref{alg:mr_dist_from_unif}) and \textsc{MRDistFromPairUnif} (Algorithm~\ref{alg:mr_dist_from_pair_unif}), we relies on the conclusion from Section~\ref{sec:zm} that the max-ratio distance from the uniform distribution is always $1$. Then we can simply replace any calls to \textsc{MRDistFromUnif} and \textsc{MRDistFromPairUnif} with constant $1$ in the first iteration of the algorithm.

To realize the modified calculations during the first iteration, we introduce the following modification to \textsc{FirstMomentInductFactors}, \textsc{SecondMomentUB}, \textsc{ProbUnifXMRDist}, \textsc{SizeBoundInductFactors} and \textsc{SizeBounds}.

\begin{algorithm}[h!]
    \caption{\textsc{FirstMomentInductFactorsMod}}
    \label{alg:first_moment_induct_factors_mod}
    \begin{algorithmic}[1]
    \Input $m$, $k$, $p$, $d$ \Comment{Input parameters}
    \Output $\text{UB}$, $\text{LB}$ \Comment{Upper and lower bounds}
    \State $s \gets m \cdot p^d$
    \If{$d = 0$}
        \State $\alpha \gets \left(\textsc{ProbSumModSInRange}(s, p)\right)^{\frac{k}{2^{d+1}}}$
        \State $\beta \gets 1$
    \Else
        \State $\alpha \gets \left(\textsc{ProbSumInRange}(s, p)\right)^{\frac{k}{2^{d+1}}}$
        \State $\beta \gets \left(\textsc{MRDistFromUnif}(s, p)\right)^{\frac{k}{2^{d+1}}}$
    \EndIf
    \State $ \text{UB} \gets \alpha \cdot \beta$
    \State $ \text{LB} \gets \alpha / \beta$
    \State \Return $ \text{UB}$, $ \text{LB}$
    \end{algorithmic}
\end{algorithm}

\begin{algorithm}[h!]
    \caption{\textsc{SecondMomentUBMod}}
    \label{alg:second_moment_ub_mod}
    \begin{algorithmic}[1]
        \If{$k = 1$}
        \State $u \gets \frac{1}{2\cdot\floor{m} + 1}$
        \State \Return $(n \cdot u) \cdot (n \cdot u + 1)$
        \Else
        \State $m' = m \cdot p$
        \State $k' = k/2$
        \State $n' = (\left(\textsc{ProbUnifXMRDist}(m, p, b)\right)^2 \cdot n^4 + 2 \cdot \left(\textsc{ProbUnifXMRDistPair}(m, p, b)\right) \cdot n^3)^\frac{1}{2}$
        \State \Return $\textsc{SecondMomentUB}(m', k', n', p, 0)$
        \EndIf
    \end{algorithmic}
\end{algorithm}

\begin{algorithm}[h!]
\caption{\textsc{ProbUnifModMXMRDist}}
\label{alg:prob_unif_x_mr_dist_mod}
\begin{algorithmic}[1]
\Input $m$, $p$, $b$ \Comment{Input parameters}
\Output Product of uniform probability and max-ratio distance
\If{$b = 0$}
    \State \Return $\textsc{ProbSumInRange}(m, p) \cdot \textsc{MRDistFromUnif}(m, p)$
\Else
    \State \Return $\textsc{ProbSumModSInRange}(m, p)$
\EndIf
\end{algorithmic}
\end{algorithm}

\begin{algorithm}[h!]
    \caption{\textsc{ProbUnifXMRDistPair}}
    \label{alg:prob_unif_x_mr_dist_pair_mod}
    \begin{algorithmic}[1]
    \Input $m$, $p$, $d$ \Comment{Input parameters}
    \Output Product of uniform probability and max-ratio distance for pairs
    \If{$b = 0$}
        \State \Return $\textsc{ProbSumWithTwoRVInRange}(m, p) \cdot \textsc{MRDistFromPairUnif}(m, p)$
    \Else
        \State \Return $(\textsc{ProbSumModSInRange}(m, p))^2$
    \EndIf
    \end{algorithmic}
\end{algorithm}

The modification in \textsc{SecondMomentUBMod} is realize by passing a flag $b$ to indicate the first iteration of the algorithm. Only the initial call to \textsc{SecondMomentUB} will be given $b = 1$, and it will remain $0$ in all the subsequent recursive calls.

\begin{algorithm}[h!]
    \caption{\textsc{SizeBoundInductFactorsMod}}
    \label{alg:size_bound_induct_factors_mod}
    \begin{algorithmic}[1]
    \Input $m$, $k$, $p$, $d$, $t$ \Comment{Input parameters}
    \Output $\text{UB}$, $\text{LB}$ \Comment{Upper and lower bounds}
    \State $s \gets mp^t$
    \If{$t = 0$}
        \State $\alpha \gets \left(\textsc{ProbSumModSInRange}(s, p)\right)$
        \State $\beta \gets 1$
    \Else
        \State $\alpha \gets \textsc{ProbSumInRange}(s, p)$
        \State $\beta \gets \textsc{MRDistFromUnif}(s, p)$
    \EndIf
    \State $(\alpha \cdot \beta)^{2^{d-t-1}}$
    \State $(\alpha / \beta)^{2^{d-t-1}}$
    \State \Return $\text{UB}$, $\text{LB}$
    \end{algorithmic}
\end{algorithm}

\begin{algorithm}[h!]
    \caption{\textsc{SizeBoundsMod}}
    \label{alg:size_bounds_mod}
    \begin{algorithmic}[1]
    \Input $m$, $k$, $n$, $p$ \Comment{Input parameters}
    \Output $\text{UB}$, $\text{LB}$ \Comment{Upper and lower bounds}
    \State $\text{UB} \gets 0$, $\text{LB} \gets 0$
    \For{$d \gets 0$ to $\log k$}
        \If{$d = 0$}
            \State $\text{UB} \gets \text{UB} + k \cdot n$
            \State $\text{LB} \gets \text{LB} + k \cdot n$
        \ElsIf{$d = 1$}
            \State $\alpha \gets \frac{k}{2} \cdot \textsc{ProbSumModSInRange}(m, p) \cdot n^2$ \Comment{Modified for $\Int_m$}
            \State $\text{UB} \gets \text{UB} + \alpha$
            \State $\text{LB} \gets \text{LB} + \alpha$
        \Else
            \State $s' \gets mp^{d-1}$
            \State $\gamma_u \gets \textsc{ProbSumInRange}(s', p)$, $\gamma_l \gets \textsc{ProbSumInRange}(s', p)$
            \For{$t \gets 0$ to $d-2$}
                \State $\beta_u, \beta_l \gets \textsc{SizeBoundInductFactorsMod}(m, k, p, d, t)$
                \State $\gamma_u \gets \gamma_u \cdot \beta_u$
                \State $\gamma_l \gets \gamma_l \cdot \beta_l$
            \EndFor
            \State $\text{UB} \gets \text{UB} + \frac{k}{2^d} \cdot n^{2^d} \cdot \gamma_u$
            \State $\text{LB} \gets \text{LB} + \frac{k}{2^d} \cdot n^{2^d} \cdot \gamma_l$
        \EndIf
    \EndFor
    \State \Return $\text{UB}$, $\text{LB}$
    \end{algorithmic}
\end{algorithm}

These modifications address the calculation of \textsc{ProbSumInRange} and \textsc{MRDistFromUnif} when the input $s = m$, corresponding to the first iteration of the algorithm over $\Int_m$. They allow us to compute bounds for the $k$-SUM problem over $\Int_m$ that share the same tightness as the bounds for the $k$-SUM problem over integers.

\fi

\subsection{Visualizations and Interpretations}

In this section, we present examples of our computing bounds for various parameter settings. These visualizations help us better understand the tightness of the upper and lower bounds of the success probability as key parameters, such as $m$ and $k$, change. We will also directly compare our bounds with our experimental results to validate the trends we observe in this section (see \iflncs the full version \else \cref{sec:experiments} \fi for details).


\begin{figure}[h!]
    \centering
    \begin{subfigure}[b]{0.48\textwidth}
        \includegraphics[width=\textwidth]{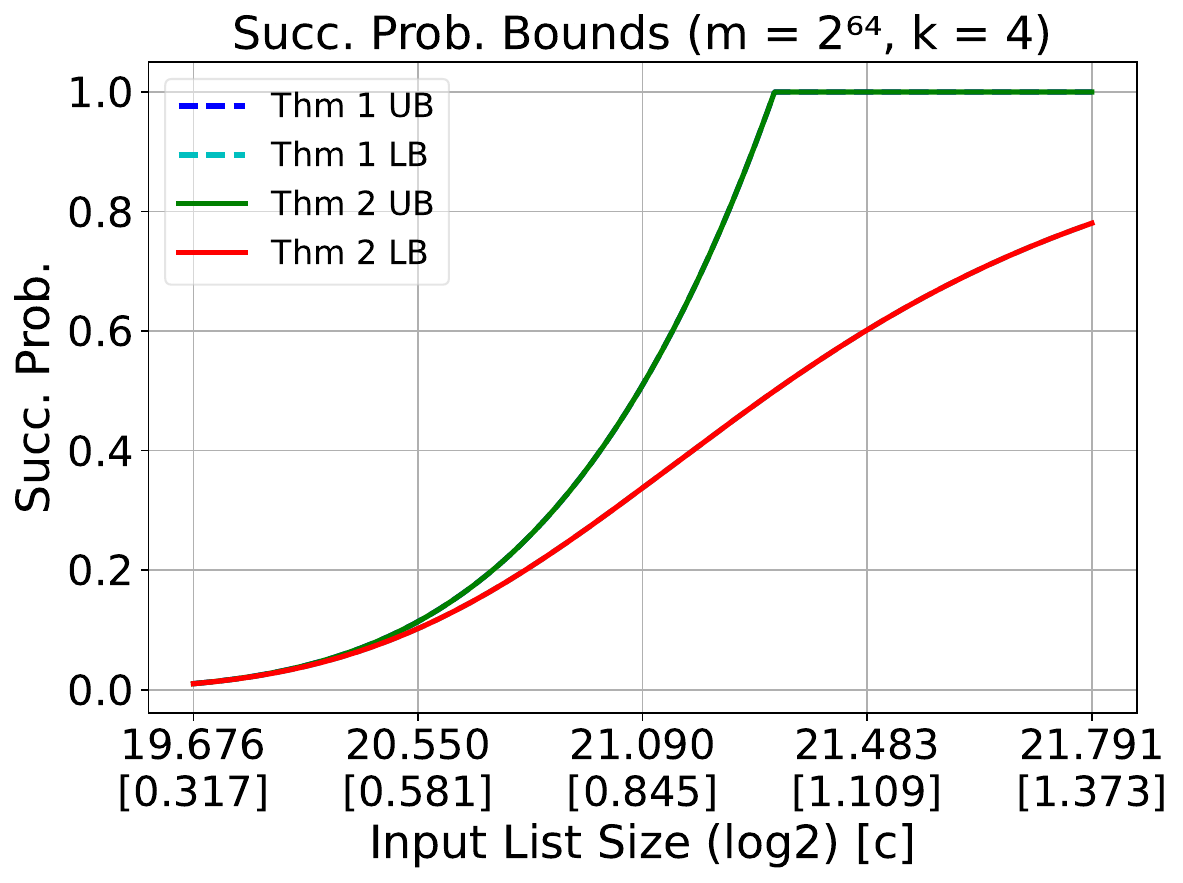}
        \caption{
            }
        \label{fig:comp_bound_type1-64-4}
    \end{subfigure}
    \hfill
    \begin{subfigure}[b]{0.48\textwidth}
        \includegraphics[width=\textwidth]{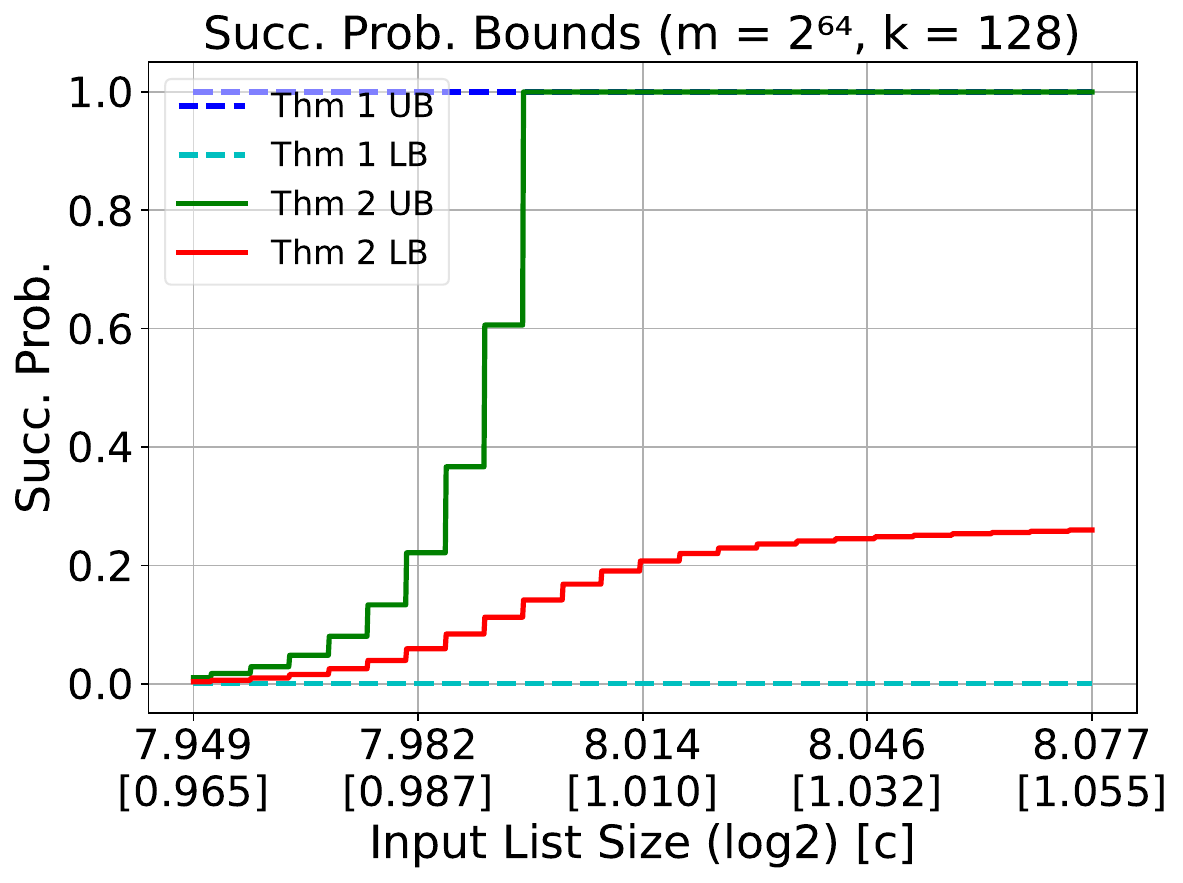}
        \caption{
            }
        \label{fig:comp_bound_type1-64-128}
    \end{subfigure}
    \vspace{0.5em}
    \begin{subfigure}[b]{0.48\textwidth}
        \includegraphics[width=\textwidth]{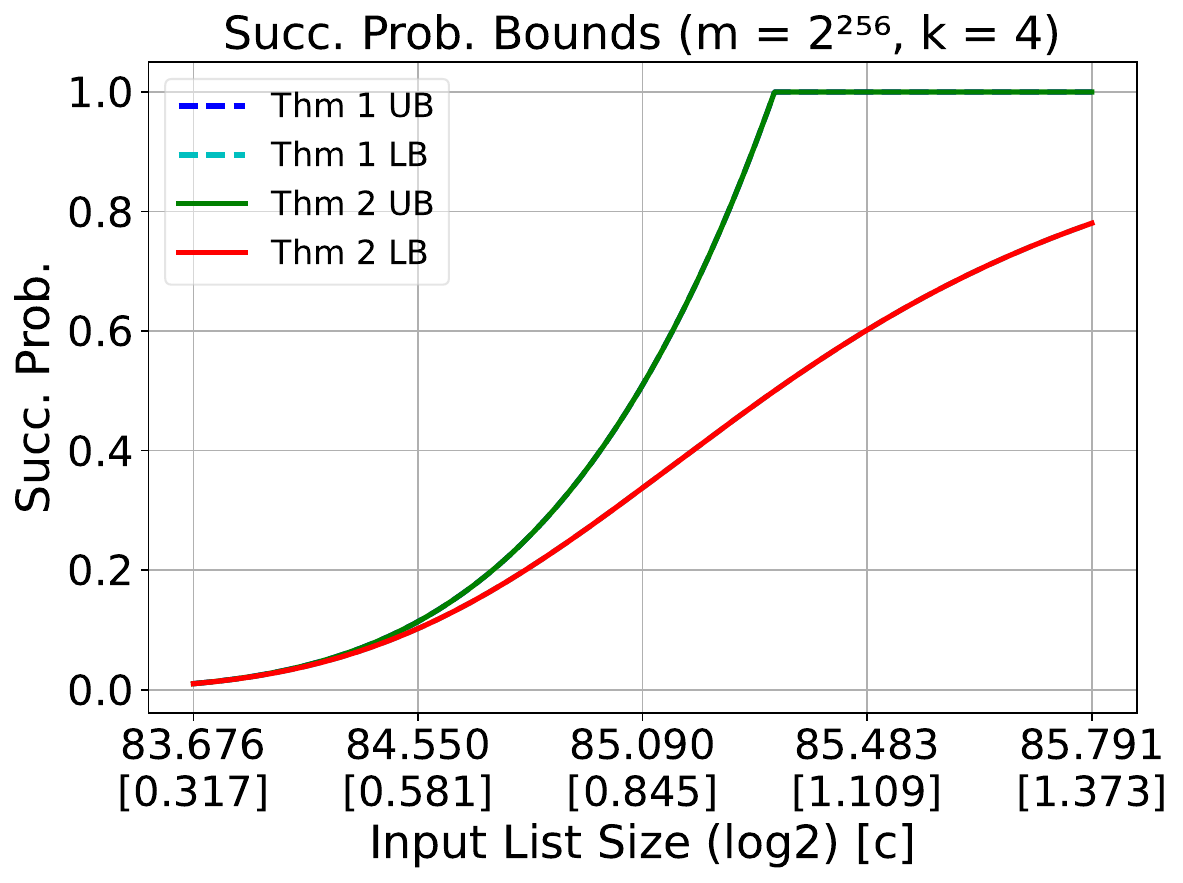}
        \caption{
            }
        \label{fig:comp_bound_type1-256-4}
    \end{subfigure}
    \hfill
    \begin{subfigure}[b]{0.48\textwidth}
        \includegraphics[width=\textwidth]{figures/intro/type1_m256_k1024.pdf}
        \caption{
            }
        \label{fig:comp_bound_type1-256-1024}
    \end{subfigure}
    \caption{Success probability upper and lower bounds when varying the input list size under fixed $m$'s and $k$'s. The values in the square brackets report $c = n/m^{1/(\log{k}+1)}$ as described in~\cref{infthm:ktree}. Note that the cascading patterns of the bounds for certain values of $m$ and $k$ are due to rounding the input list size to the nearest integer.}
    \label{fig:comp_bound_type1}
\end{figure}

\textbf{Success Probability Bounds.}
In Figures~\ref{fig:comp_bound_type1}, the most notable observation is that for a fixed $m$, the lower bounds on the success probability become looser as $k$ increases. This trend will be further illustrated when we later compare the bound with our empirical measurements of success probabilities. Such phenomenon is clearly visualized in the case where $m = 2^{64}$ and $k = 256$ (see Figure~\ref{fig:comp_bound_type1-64-128}). In this example, the lower bound flattens early and is far from $1$, indicating a quick relaxation of the bound as $k$ increases. This behavior shows that when $k$ is large, the success probability bounds are less effective for moderately small values of $m$.

However, as $m$ increases, the bounds become significantly tighter for all values of $k$. For instance, in the case where $m = 2^{256}$, the bounds $k = 1024$ (Figures~\ref{fig:comp_bound_type1-256-1024}) exhibits much better tightness compared to when $m = 2^{64}$ and $k = 128$. This improvement in the tightness of the bounds align with our conclusion \iflncs in Section~\ref{sec:intro} \else in Section~\ref{sec:analysis} \fi that our bounds become asymptotically tight as $m$ approaches infinity.

Overall, the visualizations indicate that larger values of $m$ lead to more refined bounds on success probability. And the impact of increasing $k$ is more pronounced at smaller values of $m$, while the bounds become more stabilized as $m$ increases.

\begin{figure}[h!]
    \centering
    \begin{subfigure}[b]{0.48\textwidth}
        \includegraphics[width=\textwidth]{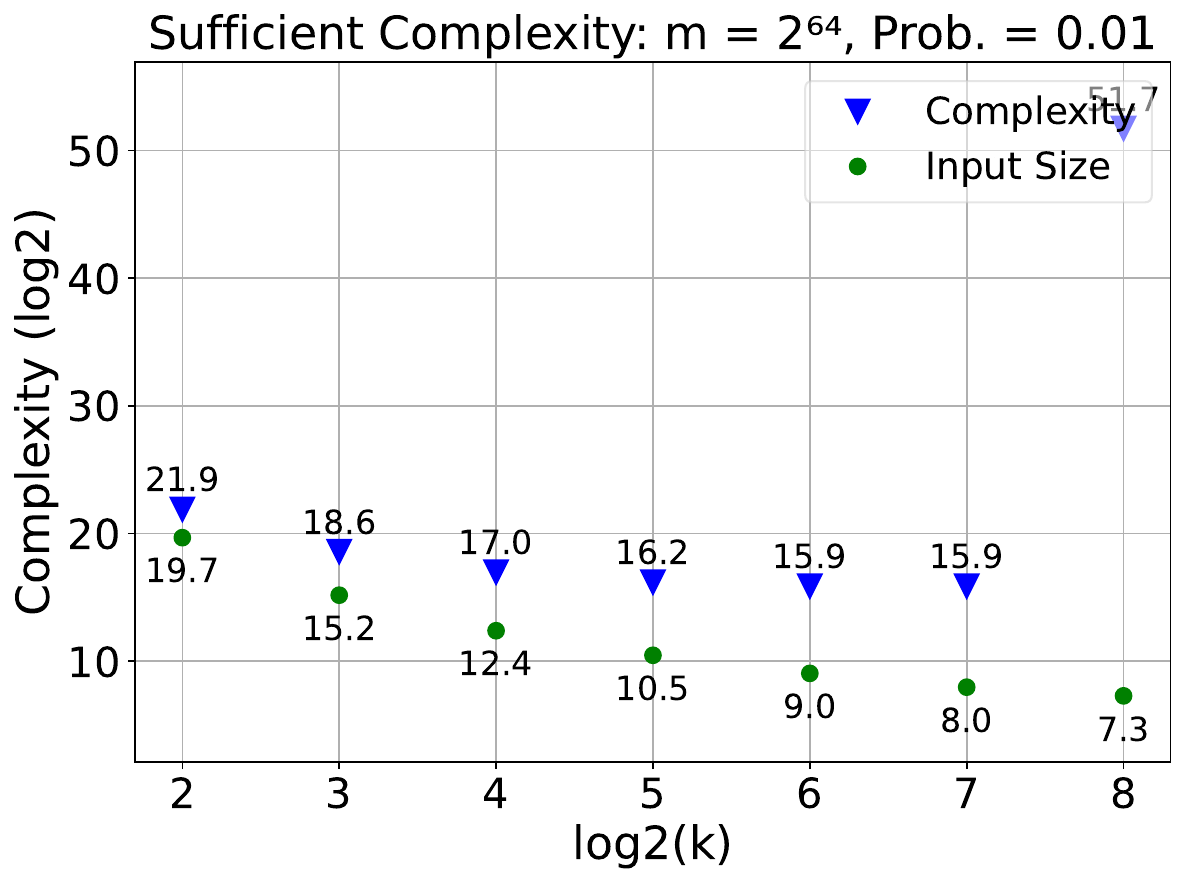}
        \caption{
            }
        \label{fig:comp_bound_type2-64-0.01-ub}
    \end{subfigure}
    \hfill
    \begin{subfigure}[b]{0.48\textwidth}
        \includegraphics[width=\textwidth]{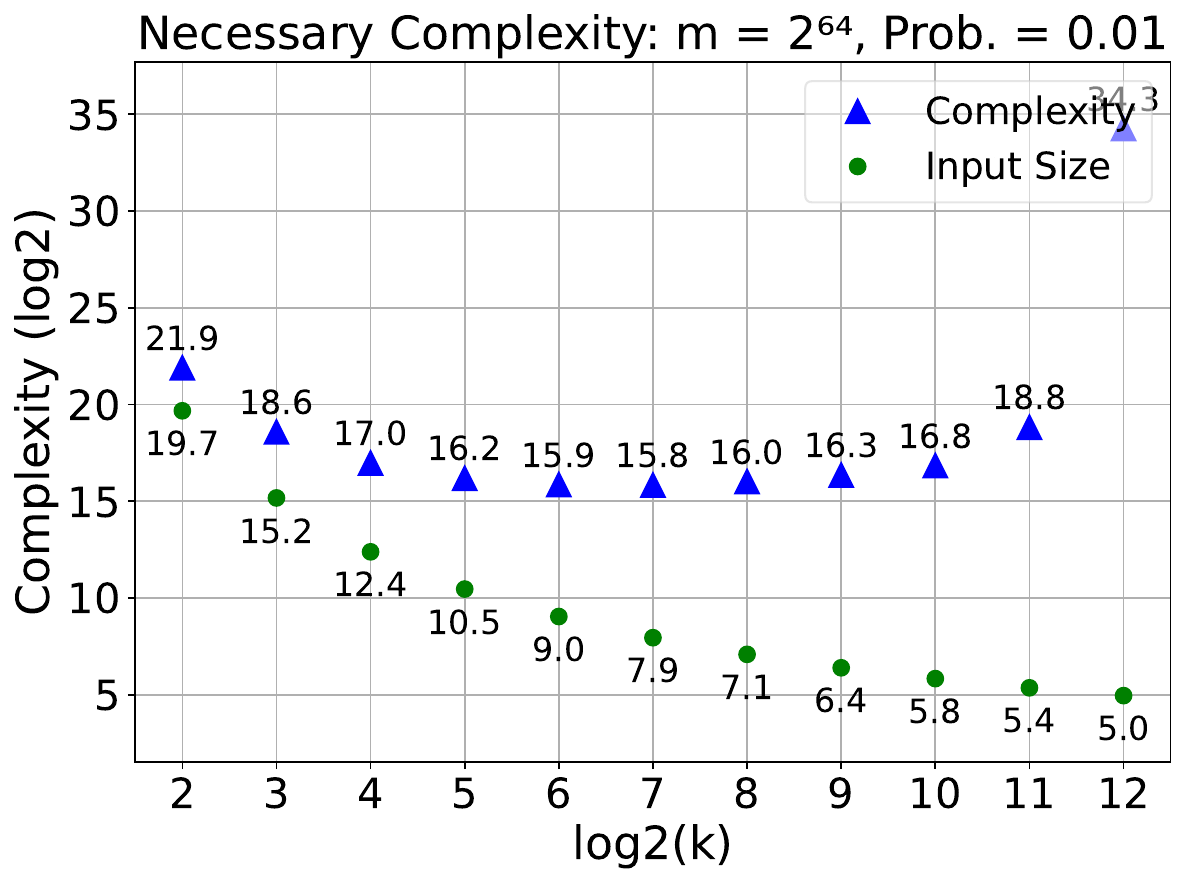}
        \caption{
            }
        \label{fig:comp_bound_type2-64-0.01-lb}
    \end{subfigure}
    \vspace{0.5em}
    \begin{subfigure}[b]{0.48\textwidth}
        \includegraphics[width=\textwidth]{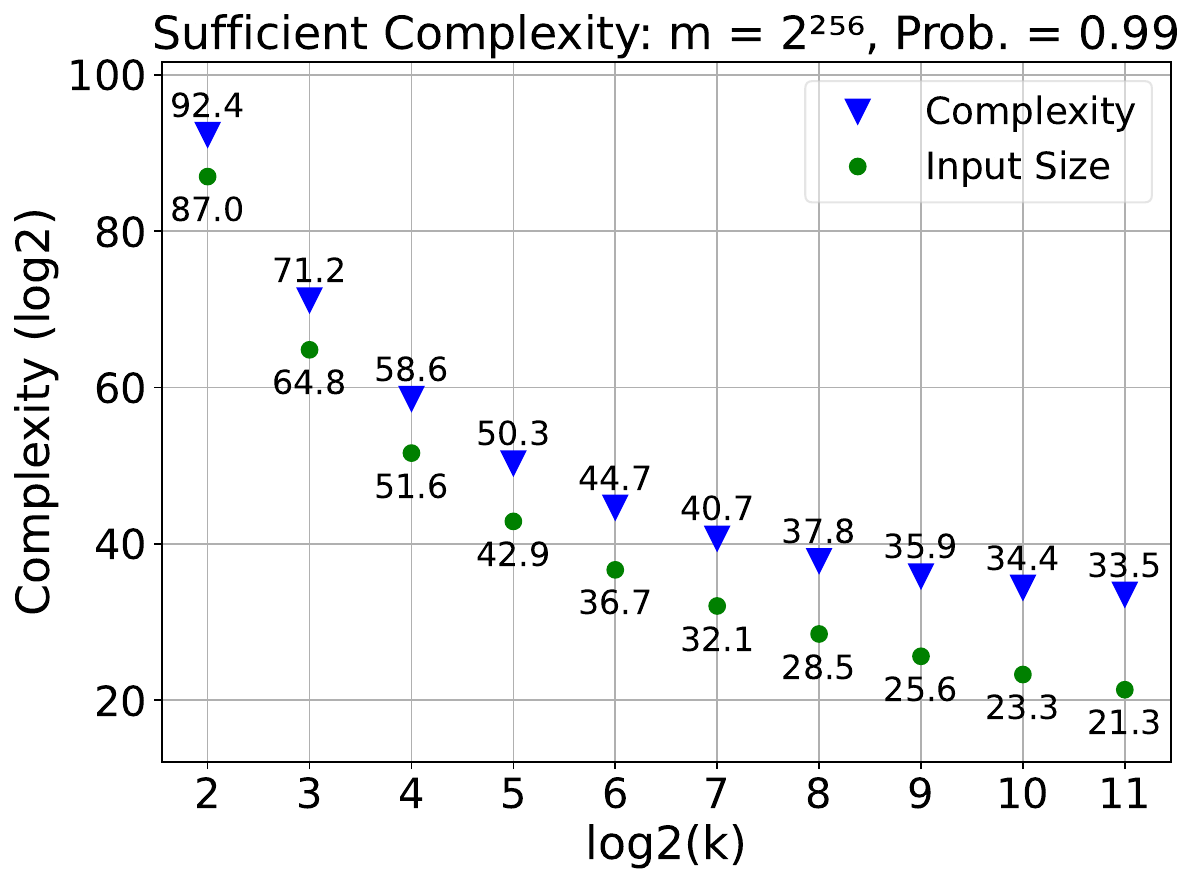}
        \caption{
            }
        \label{fig:comp_bound_type2-128-0.99-ub}
    \end{subfigure}
    \hfill
    \begin{subfigure}[b]{0.48\textwidth}
        \includegraphics[width=\textwidth]{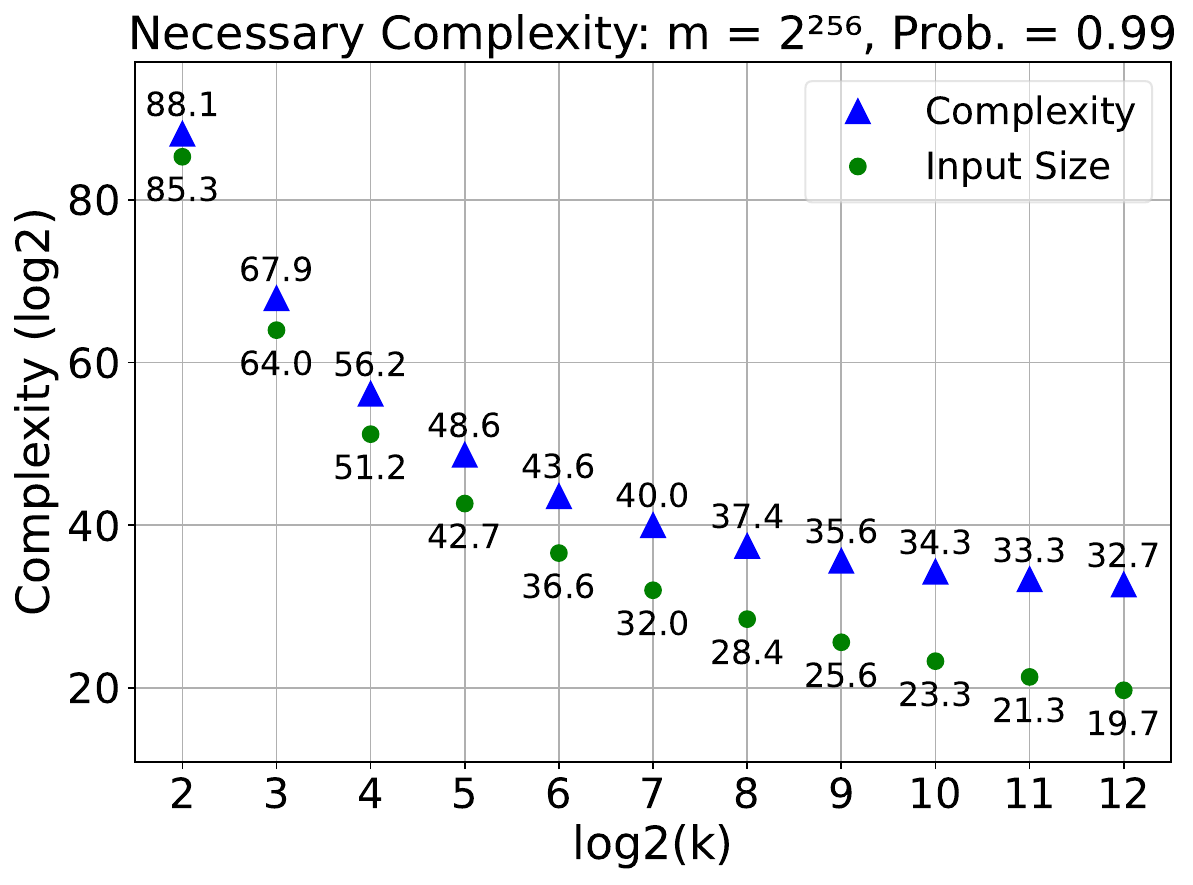}
        \caption{
            }
        \label{fig:comp_bound_type2-128-0.99-lb}
    \end{subfigure}
    \caption{Size upper and lower bounds when varying $k$ for fixed $m$'s (the fewer amount of $k$ in some of the Sufficient Complexity plots is due to the lower bound cannot reach the probability threshold even for very large $c$).}
    \label{fig:comp_bound_type2}
\end{figure}

\textbf{Complexity Bounds.}
In the examples shown in Figure~\ref{fig:comp_bound_type2}, we present the sufficient and necessary complexities (total list sizes) for achieving a target success probability. For the sufficient complexity, given a fixed $m$, we compute the minimum input list size via binary search such that our success probability lower bound equals or slightly larger than the specified probability threshold. We then use this minimum input list size as an input to compute our upper bound on the expected total size of all lists. For the necessary complexity, we first search for the minimum input list size such that our success probability upper bound equals or slightly smaller than the specified probability threshold. We then use this minimum input list size to compute our lower bound on the expected total size of all lists.

As observed in these figures, when $k$ increases, the upper bound on the total list size first decreases accordingly then increases again as $k$ continues to grow. This trend is more pronounced for small $m$'s, as shown in Figures~\ref{fig:comp_bound_type2-64-0.01-ub}. This behavior is consistent with the results shown in the Figures~\ref{fig:comp_bound_type1}, where the lower bounds on the success probability become looser as $k$ increases. The immediate consequence is that the input list size has to be substantially larger to achieve the same success probability when $k$ is large. That accounts for the surge of total list size observed in Figure~\ref{fig:comp_bound_type2-64-0.01-ub}. As $m$ increases, the total list sizes become more stable. In \iflncs our experiments\else Section~\ref{sec:experiments}\fi , we see a similar trend of decrease-then-increase of the complexities in our empirical measurements.

Overall, these visualizations provide a straightforward understanding of the efficacy of our method in estimating the computational resource for achieving a target success probability. \iflncs\else We will further validate these results through empirical evaluation in Section~\ref{sec:experiments}.\fi

\subsection{Practical Implications}

The precise bounds computed by our implementation have several significant implications for the practical use of the $\ktree$ algorithm:

\begin{enumerate}
    \item \textbf{Accurate Performance Prediction}: The tighter bounds allow for more accurate predictions of the algorithm's performance on real-world problem sizes. This is particularly crucial when deciding whether to employ the $\ktree$ algorithm for a specific task, as it provides a more reliable estimate of the required computational resources and expected success probability.

    \item \textbf{Optimal Parameter Selection}: With our precise bounds, practitioners can make better-informed decisions when selecting algorithm parameters. For instance, the optimal choice of the number of lists ($k$) can be more accurately determined to achieve desired success probabilities without actually running the algorithm for various $k$.

    \item \textbf{Confidence in Cryptographic Applications}: In cryptographic settings, where the $\ktree$ algorithm might be used for attacks or analysis, having precise bounds allows for more accurate estimation of the computational effort required for potential attacks, leading to better-informed decisions about key sizes and security parameters.
\end{enumerate}

\iflncs
\else
In summary, these computing bounds enhance our understanding of the $\ktree$ algorithm's behavior in practical settings and provide a crucial bridge between theoretical analysis and real-world application. As we move forward to the evaluation of the $\ktree$ algorithm, these computed bounds will serve as a valuable reference point and will be further validated against empirical observations.
\fi

\subsection{Limitations}



For small values of $m$, the tightness of our computed bounds degrade noticeably even when $k$ is in a reasonable range, despite they are still much tighter than the analytical bounds. This limitation can also be inferred from our analytical results, where our bounds loosen considerably after the point around $k > m^{1/(\log k + 1)}$.

Besides, for very large values of $m$ (e.g., $2^{512}$), our current Python implementation may encounter numerical stability issues due to overflow or underflow. To address this limitation, one can update our implementation by using Python Decimal library that handles large numbers with arbitrary precision with reasonable overhead.

\section{Experiments}
\label{sec:experiments}




The theoretical analysis and bounds computed in the previous sections are to offer insights into the expected performance of the $\ktree$ algorithm. However, empirical experiments are crucial to understand the algorithm's behavior under practical conditions and to validate these theoretical predictions. In this section, our primary objectives are as follows:
\begin{enumerate}
    \item Evaluate the behavior of the $\ktree$ algorithm under various parameter configurations;
    \item Compare the empirical results with our computed bounds.
\end{enumerate}

The results we demonstrate in this section provide empirical evidence on how different parameters influence the algorithm's success probability, time, and space complexity.

\subsection{Experimental Setup}

In discussions below, we use symbols defined in the context of the description of the $\ktree$ algorithm in \iflncs \cref{fig:ktree-intro}\else \cref{fig:ktree}\fi .

\paragraph{Algorithm Implementation.} We implemented the $\ktree$ algorithm in C++, closely following the description provided in \iflncs \cref{fig:ktree-intro}\else \cref{fig:ktree}\fi . The Merge subroutine was implemented by conducting a binary search over the sorted second list for each element in the first list.

\paragraph{Parameter Configurations.} We investigate the impact of varying the following parameters:

\begin{itemize}
    \item $m$: taking values from $\{2^{64}, 2^{96}, 2^{128}\}$. Due do limitations in computational resources, we were unable to explore larger values of $m$.
    \item $k$: The number of input lists, with values ranging from $4$ to $1024$, varied in powers of $2$. Note that not all $k$'s are feasible for all $m$'s. For example, when $m=2^{128}$, any $k$ smaller than $512$ results in a large input list size that exceeds the computational resources available. We report all the feasible results we have obtained.
    \item $n$: The size of each input list, dynamically adjusted via binary search to achieve target success probabilities varied from $0.01$ to $1.0$. Notice that not all the target probabilities are feasible due to either the nature of the problem (e.g., when $m$ is relatively small and $k$ is large, increasing the input list size by $1$ will increase the success probability from below $0.2$ to above $0.8$) or the limit of computational resources. We report all the feasible results we have obtained.
\end{itemize}


\paragraph{Evaluation Metrics.}
\begin{itemize}
    \item \textbf{Success Probability:} The fraction of trials where the algorithm successfully identifies a set of indices such that the indexed elements of the $k$ input lists sum to $0$.
    \item \textbf{Total Size}: The sum of the sizes of all lists generated when running the $\ktree$ algorithm, measured in terms of the number of elements:
    $$\sum_{d = 0}^{\log k} \sum_{i \in \left[\frac{k}{2^d}\right]} \left| L^d_i \right|.$$
    We use this measurement as a proxy for the running time of the algorithm (see \cref{rem:complexity}).
    \item \textbf{Max Level Size}: The maximum sum of the sizes of all lists generated at one level when running the $\ktree$ algorithm:
    $$\max_{d} \sum_{i \in \left[\frac{k}{2^d}\right]} \left| L^d_i \right|.$$
    This measurement corresponds to the space complexity of the algorithm.
\end{itemize}


\paragraph{Hardware and Software Environment.} All experiments were conducted on a machine running Ubuntu 24.04 LTS (64-bit) system on an Intel Core i7-12700 CPU @ 4.90GHz with 16 GB of RAM. The $\ktree$ algorithm was implemented in C++ 17, and the experimental results were managed and analyzed using Python 3.9 with libraries such as NumPy and Matplotlib for plotting results.



\subsection{Success Probability}









\paragraph{Objectives.} This section is to explore and measure the effect of varying the input list size on the success probability of the $\ktree$ algorithm under different configurations of $m$ and $k$.

\paragraph{Methodology.} For each combination of $m$ and $k$, we conducted a series of trials with varying input list sizes. We started with a list size corresponding to $c=1$, where $c = n/m^{1/(\log{k}+1)}$ as defined in Theorem~\ref{infthm:ktree}. We then used binary search to iteratively adjust the list size, exploring a range of $c$ values that span a typical set of success probabilities (i.e., percentage points). For each parameter configuration, we performed 1000 independent trials to calculate the empirical success probability. Then we further compare the empirical results with our theoretical bounds.

\begin{figure}[h!]
    \centering
    \begin{subfigure}[b]{0.48\textwidth}
        \includegraphics[width=\textwidth]{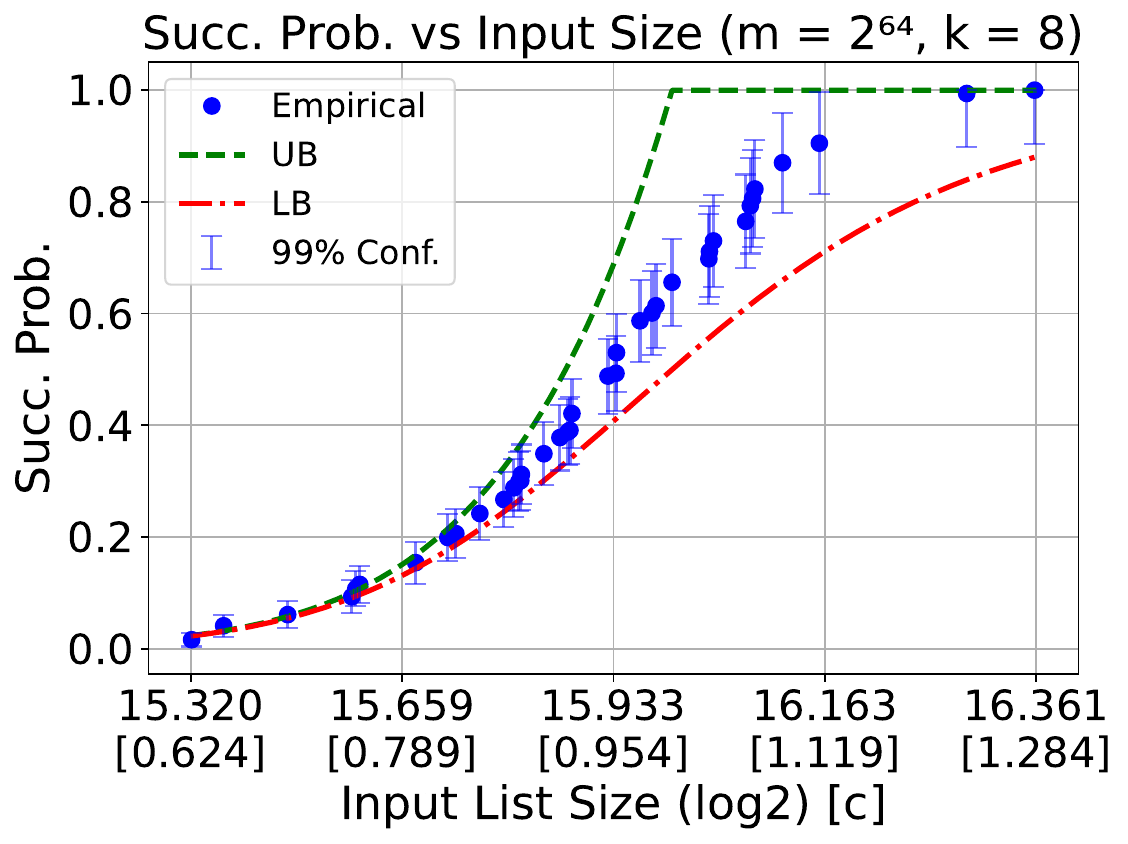}
        \caption{
            }
        \label{fig:type1-64-8}
    \end{subfigure}
    \hfill
    \begin{subfigure}[b]{0.48\textwidth}
        \includegraphics[width=\textwidth]{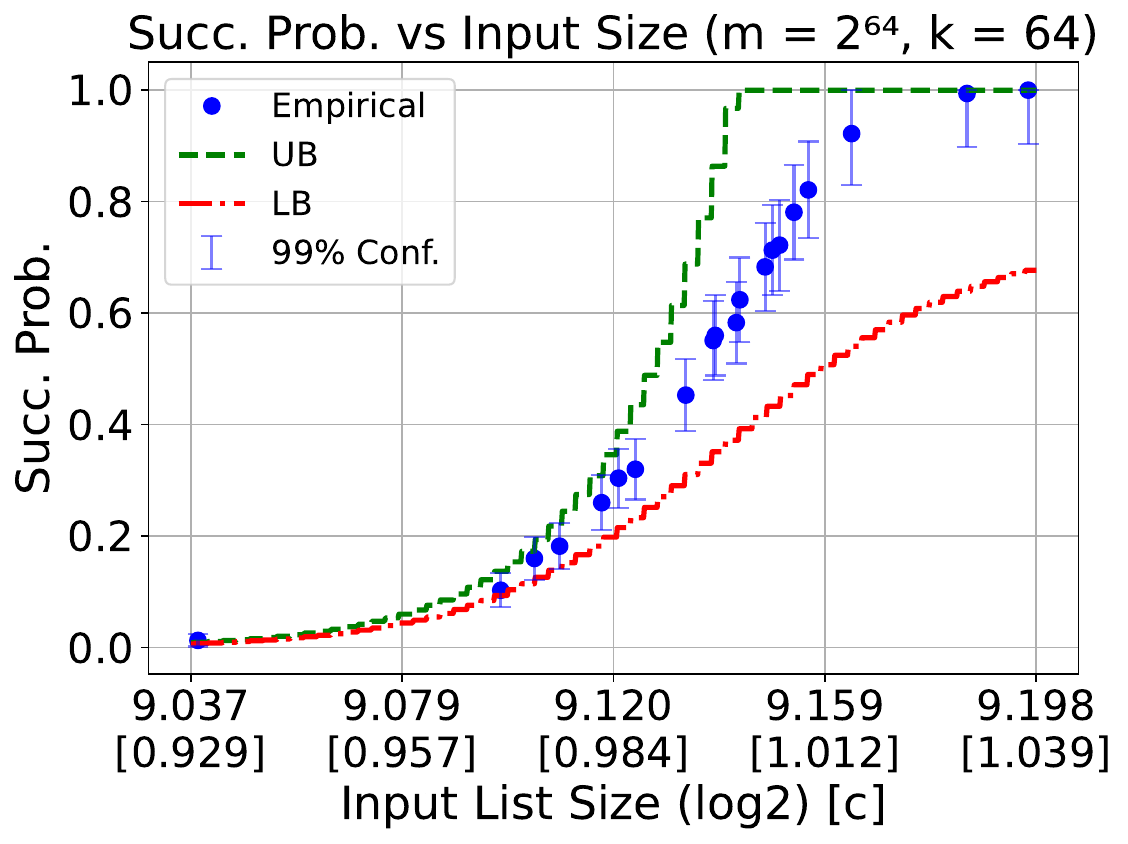}
        \caption{
            }
        \label{fig:type1-64-64}
    \end{subfigure}

    \vspace{1em}

    \begin{subfigure}[b]{0.48\textwidth}
        \includegraphics[width=\textwidth]{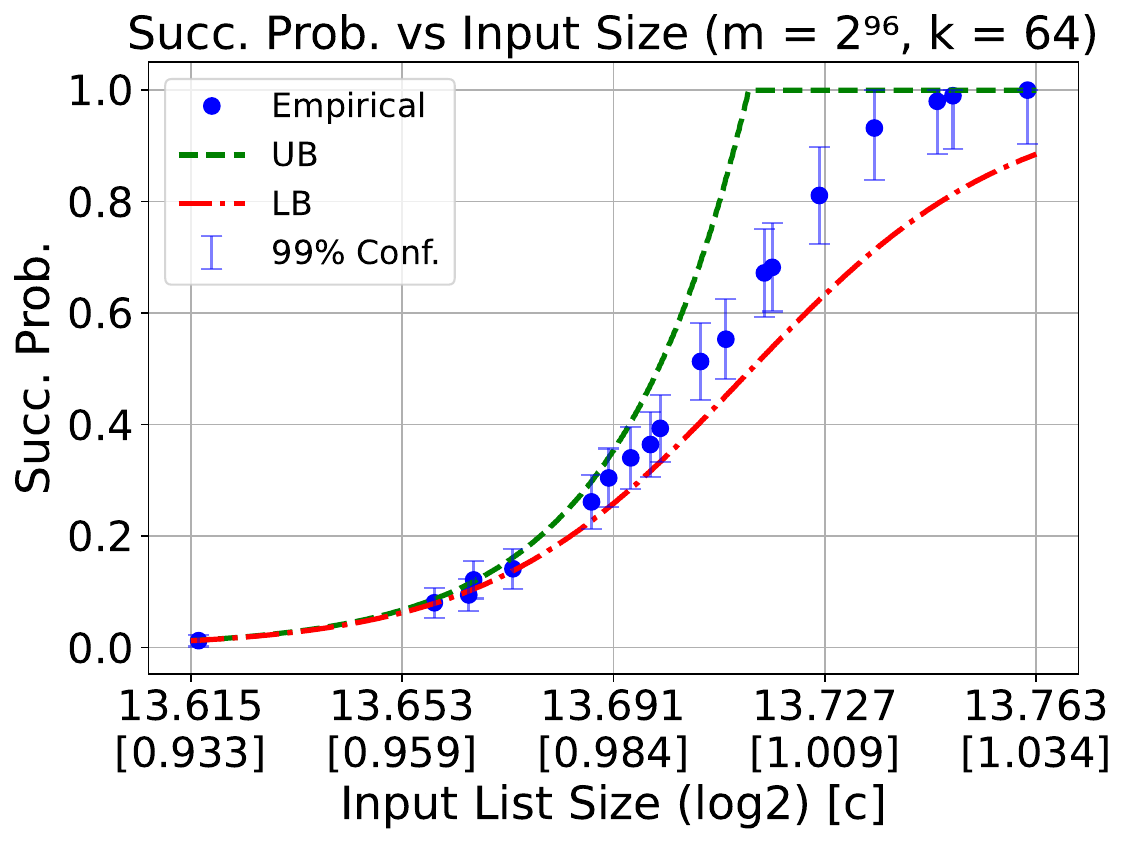}
        \caption{
            }
        \label{fig:type1-96-64}
    \end{subfigure}
    \hfill
    \begin{subfigure}[b]{0.48\textwidth}
        \includegraphics[width=\textwidth]{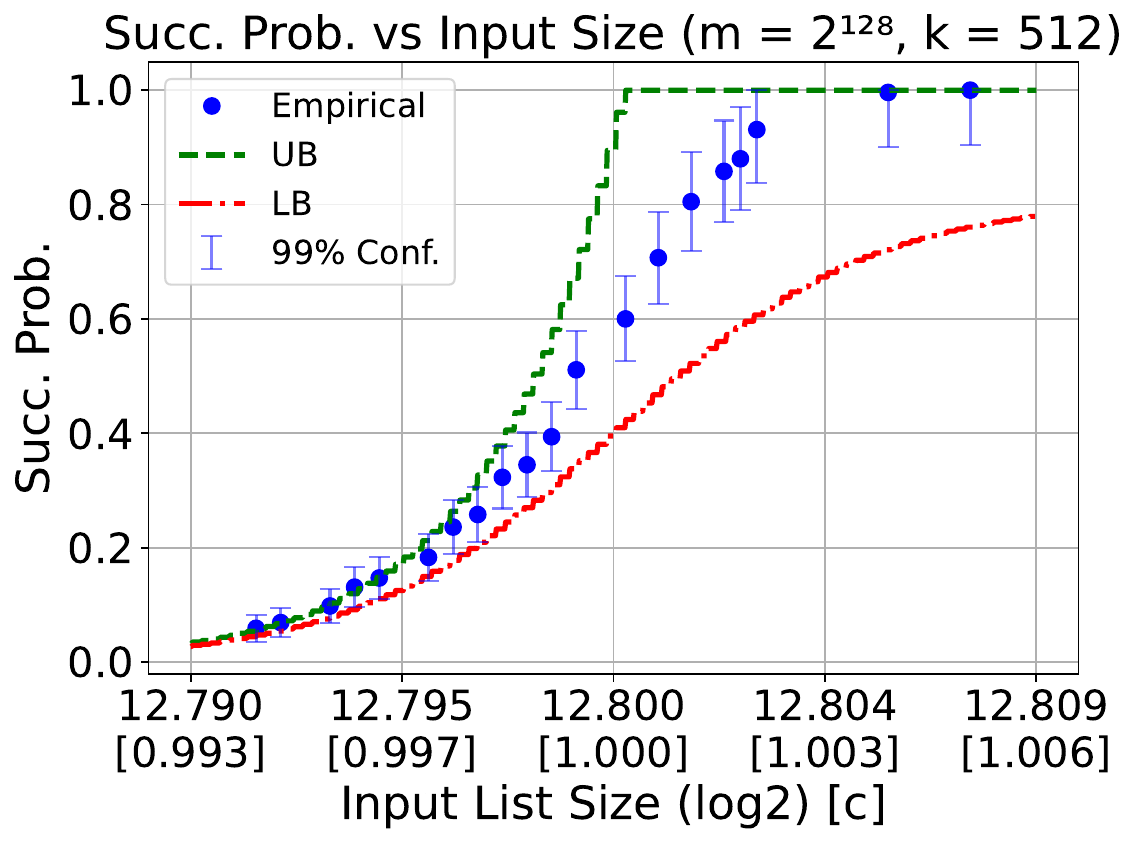}
        \caption{
            }
        \label{fig:type1-128-512}
    \end{subfigure}
    \caption{Success probability when varying the input list size under fixed $m$'s and $k$'s. The values in the square brackets report $c = n/m^{1/(\log{k}+1)}$ as described in~\cref{infthm:ktree}.}
    \label{fig:type1}
\end{figure}

\paragraph{Results.} Figure~\ref{fig:type1} presents the results of our experiments. The blue dots represents the measured success probability, while the dashed red and green lines represent the upper and lower bounds derived from our theoretical analysis, respectively. The error bars around the empirical measurements indicates the 99\% confidence interval, calculated using the Chernoff inequality. The parameter $c$, as defined earlier, serves as a normalized measure of the input list size relative to $m$ and $k$. It allows us to compare results across different parameter configurations and relates directly to our theoretical bounds. These plots enable us to visualize how the success probability transitions from near 0 to near 1 as the input list size increases, and how this transition varies for different values of $m$ and $k$. They also allow us to assess the tightness of our theoretical bounds under various conditions.

The results shown in Figure~\ref{fig:type1} illustrate several key trends in the success probability of the $\ktree$ algorithm as the input list size varies. We report both the actual input list size $n$ and the corresponding $c$ value.
We make the following general observations:

\begin{itemize}
    \item Generally, after fixing reasonable values for $m$ and $k$, the actual success probability approaches $0$ and $1$ at the left and right extremes when the $c$ value is slightly below or above $1$.

    \item For a fixed $m$, as $k$ increases, the success probability converges to $0$ and $1$ more rapidly as $c$ is decreased or increased linearly.

    \item In terms of the bounds we provided, for a fixed $m$, the tightness of the bounds improves as $k$ decreases. Conversely, for a fixed $k$, increasing $m$ significantly improves the tightness of our bounds, which aligns with the earlier conclusion that our bounds are asymptotically tight.

\end{itemize}

As expected from our theoretical model, the $c$ value has a direct and significant impact on the success probability. When $c$ is below $1$, the success probability approaches $0$ gradually. Conversely, when $c$ exceeds $1$, the success probability converges to $1$. When $m$ is fixed, increasing $k$ accelerates the convergence of the success probability toward $0$ and $1$. Larger $k$ increases the rate at which the success probability transitions between failure and success. This faster convergence for larger $k$ highlights a sharper threshold between the regions where the algorithm succeeds and fails, which could be beneficial in applications where precise control over success rates is desired.

The empirical results closely align with our theoretical predictions regarding the behavior of the $\ktree$ algorithm. Specifically, the success probability's dependence on the $c$ value and the role of $k$ in determining the speed of convergence are both well-captured by the theoretical bounds. The observation that the bounds are tighter for smaller $k$ (for fixed $m$) and for larger $m$ (for fixed $k$) is consistent with the asymptotic analysis provided earlier. Moreover, the empirical data validates the hypothesis that our bounds are asymptotically tight: as $m$ grows relative to $k$, the bounds become increasingly accurate, which confirms that the algorithm's performance can be effectively predicted using the theoretical framework developed.

\subsection{Complexity}

\begin{figure}[h!]
    \centering
    \begin{subfigure}[b]{0.48\textwidth}
        \includegraphics[width=\textwidth]{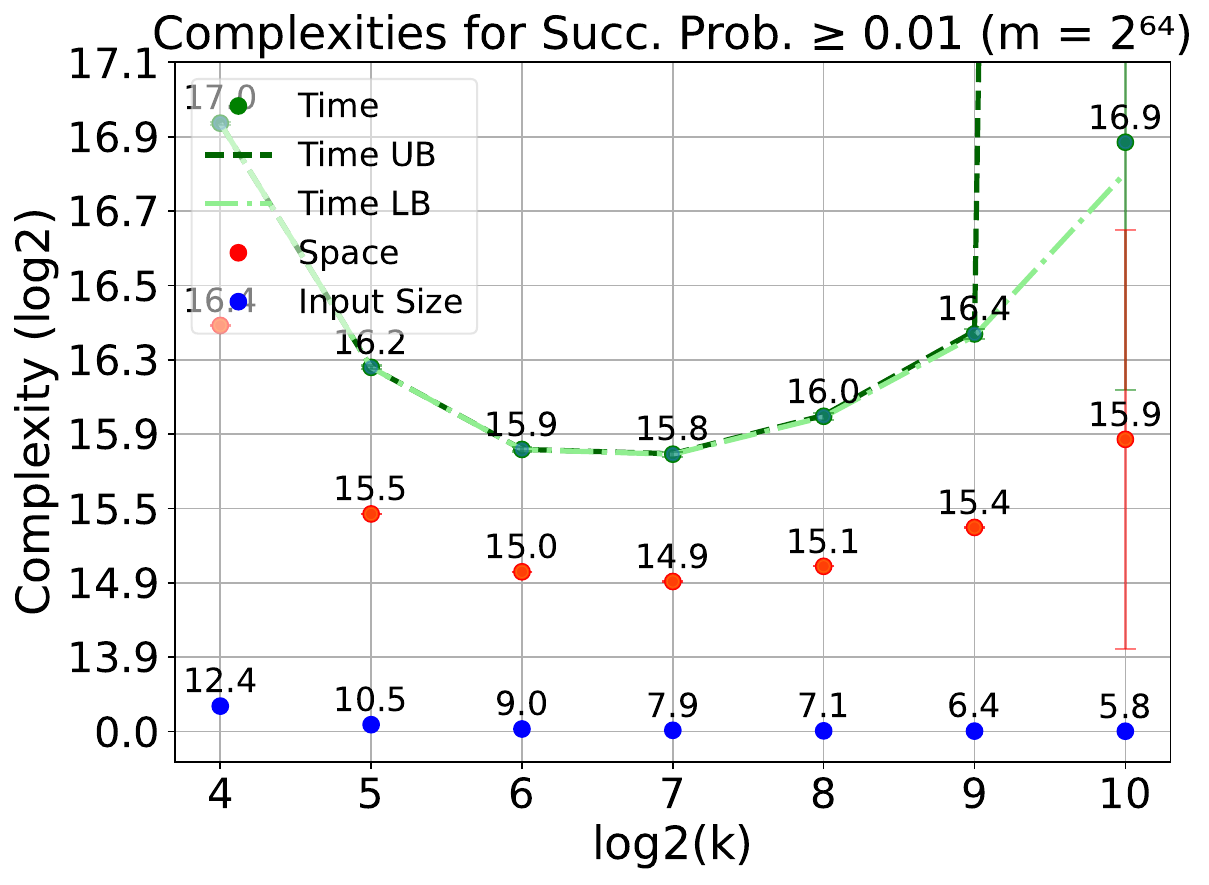}
        \caption{
        }
        \label{fig:type2-64-0.01}
    \end{subfigure}
    \hfill
    \begin{subfigure}[b]{0.48\textwidth}
        \includegraphics[width=\textwidth]{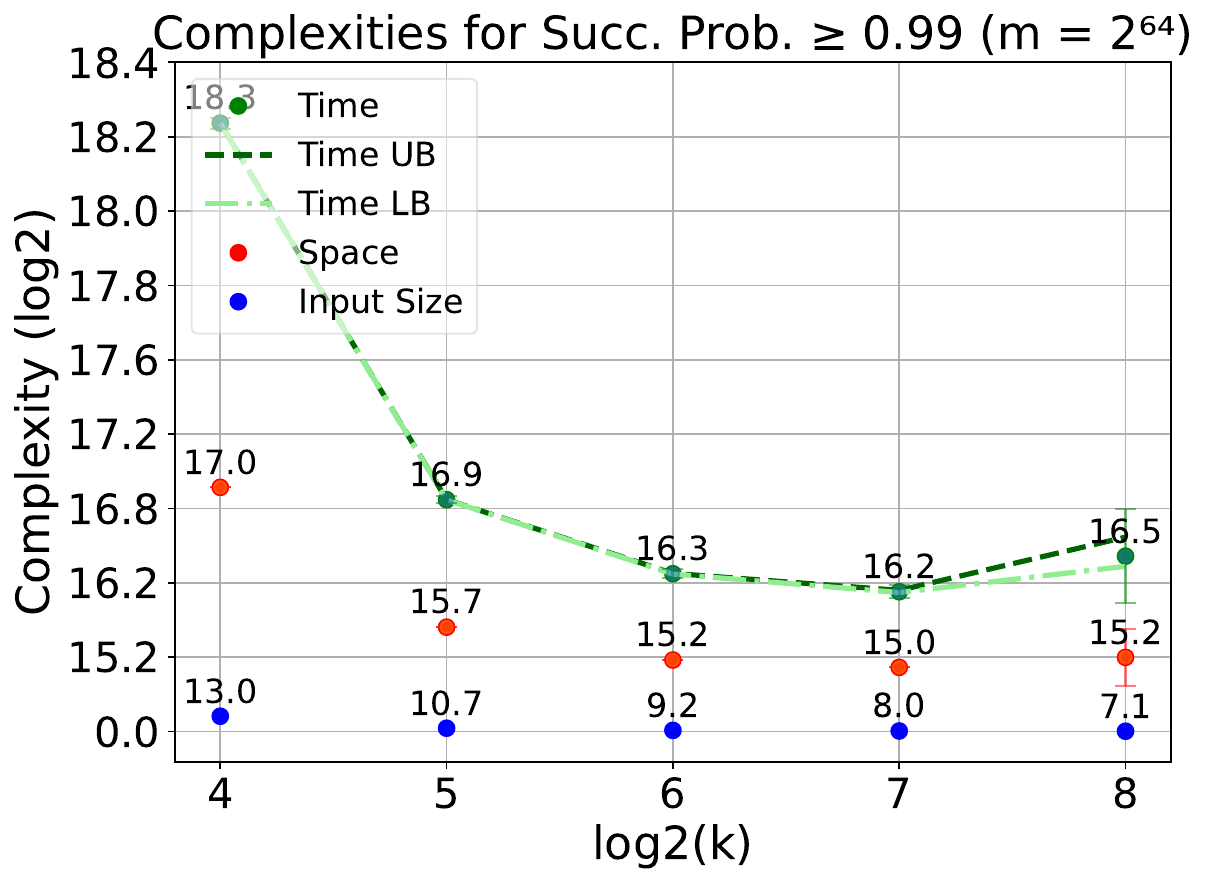}
        \caption{
            }
        \label{fig:type2-64-0.99}
    \end{subfigure}

    \vspace{1em}

    \begin{subfigure}[b]{0.48\textwidth}
        \includegraphics[width=\textwidth]{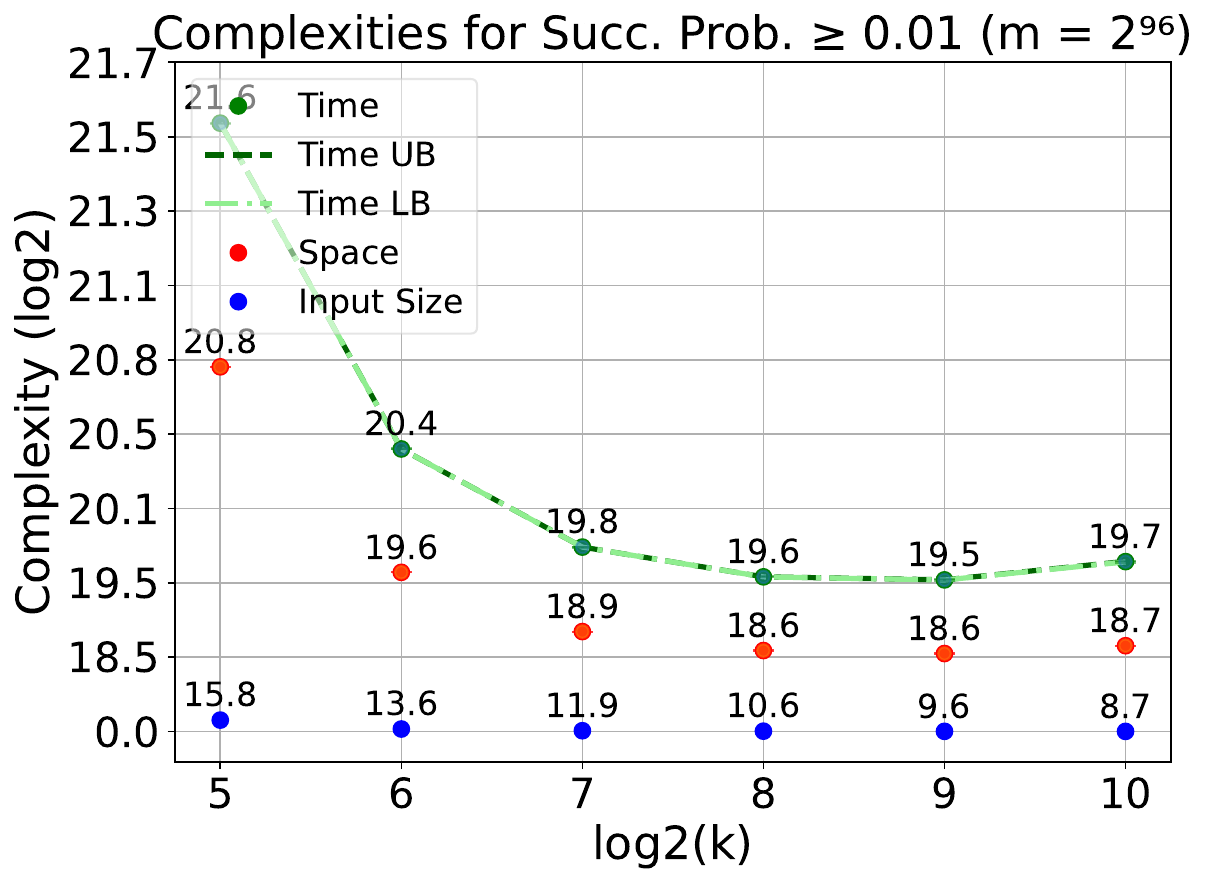}
        \caption{
            }
        \label{fig:type2-96-0.01}
    \end{subfigure}
    \hfill
    \begin{subfigure}[b]{0.48\textwidth}
        \includegraphics[width=\textwidth]{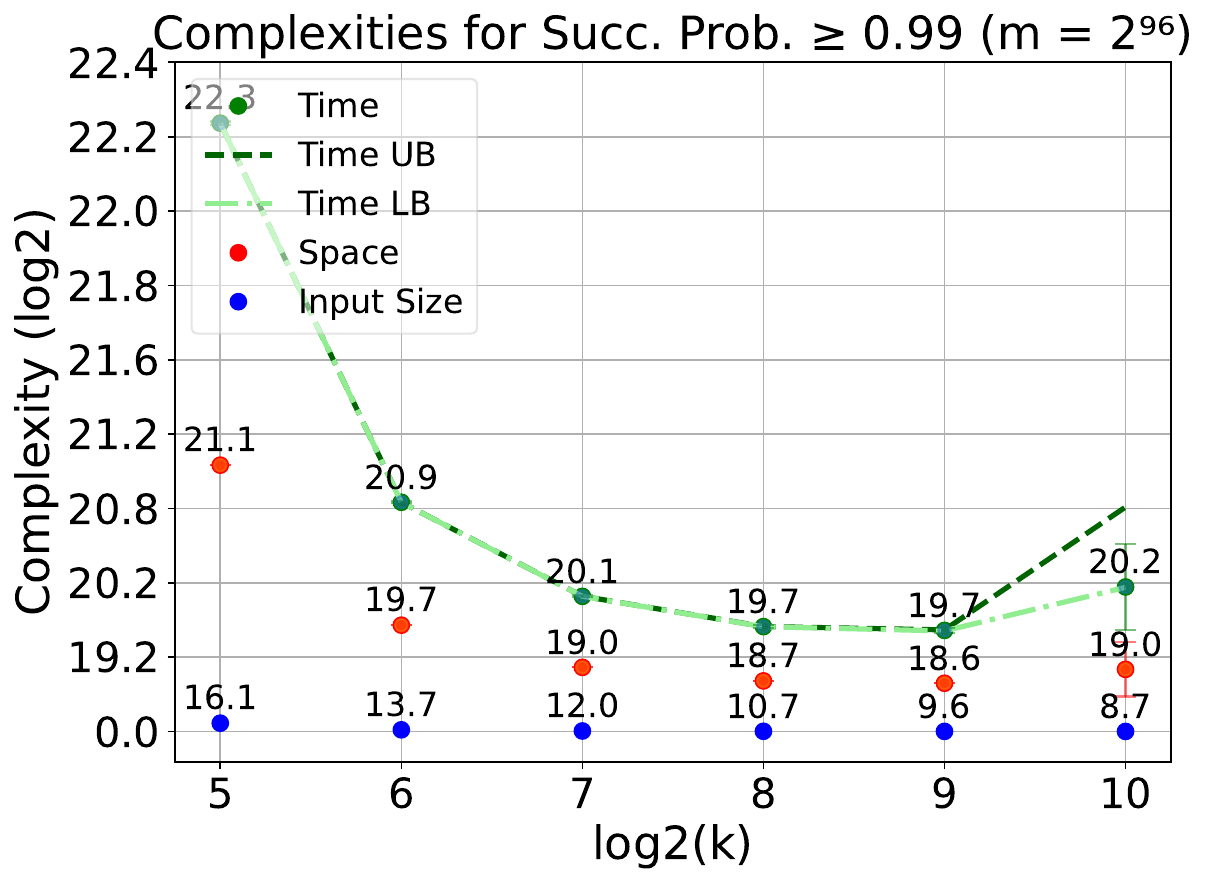}
        \caption{
            }
        \label{fig:type2-96-0.99}
    \end{subfigure}
    \caption{Measurements of $\ktree$'s complexities for different $k$ under fixed $m$'s. Due to the limitation of computational resources, we were only able to measure the $k=512$ and $k=1024$ cases for $m=128$. Therefore its plot is not included.}
    \label{fig:type2}
\end{figure}

\paragraph{Objectives.} The primary objectives of this section are to measure and analyze the time and space complexity of the $\ktree$ algorithm under various parameter configurations. Specifically, we aim to:
\begin{itemize}
    \item Observe how the total list size (a proxy for time complexity) and maximum level size (space complexity) vary with different values of $k$ for fixed $m$ and success probability.
    \item Investigate the impact of different success probability thresholds on the algorithm's complexity.
    \item Identify optimal parameter configurations that minimize time and space usage.
\end{itemize}

\paragraph{Methodology.} For each $m$ ($2^{64}$ and $2^{96}$) and success probability threshold ($0.01$ and $0.99$), we varied $k$ from $4$ to $1024$ in the powers of 2. For each configuration:
\begin{enumerate}
    \item We used binary search to find the minimum input list size $n$ that achieves a success probability above the specified threshold.
    \item We measured the total size and the maximum level size to represent the time and space complexity.
    \item We conduct $1000$ trials and report the average and standard deviation to ensure reliability.
\end{enumerate}

\paragraph{Results.} Figure~\ref{fig:type2} provide a detailed analysis of the algorithm's complexities. For fixed $m$ and a given probability threshold, we observe that both time and space complexities initially decrease and then increase as $k$ grows. This reflects a trade-off: from the results we observed in Figure~\ref{fig:type1}, increasing $k$ boosts the value of $p = m^{\frac{-1}{\log k + 1}}$ and significantly decrease the threshold $c = np$ that achieve success probability $1$. This further leads to a drastically decrease in the input list size. Nevertheless, at the same time, a larger $k$ increases the number of total lists and the number of levels in the $\ktree$ algorithm. This indicates the importance of finding an optimal $k$ value that minimizes complexity.

The impact of the probability threshold (0.01 vs 0.99) on the optimal complexity is relatively small, particularly for larger $k$ values. This suggests that achieving a constantly higher success probability doesn't necessarily incur a significant additional cost. This can also be inferred from the analytical bounds, where the success probability changes exponentially (where $k$ is the exponent) as $n$ vary.

We also observe that as $m$ increases, the minimum complexity (in both time and space) occurs at a higher value of $k$. This indicates that for larger $m$, the optimal choice of $k$—in terms of minimizing both time and space complexity—shifts upwards. This implies that for larger problem sizes, using more input lists can be beneficial up to a certain point. And the insight we obtained here allows users of the algorithm to select the most efficient parameter configuration using our bounds.

  
\section*{Acknowledgements}

This work was supported by the National Research Foundation, Singapore, under its NRF Fellowship programme, award no. NRF-NRFF14-2022-0010.


\iflncs
\bibliographystyle{splncs04.bst}
\else
\bibliographystyle{alpha}
\fi
\bibliography{refs}

\end{document}
